\begin{document}
\title{sGrapp: Butterfly Approximation in Streaming  Graphs}

\author{Aida Sheshbolouki}
\affiliation{%
  \institution{University of Waterloo}}
\email{aida.sheshbolouki@uwaterloo.ca}

\author{M. Tamer {\"O}zsu}
\affiliation{%
  \institution{University of Waterloo}}
\email{tamer.ozsu@uwaterloo.ca}

\begin{abstract}
We study the fundamental problem of butterfly (i.e. (2,2)-bicliques) counting in bipartite streaming graphs. Similar to triangles in unipartite graphs, enumerating butterflies is crucial in understanding the structure of bipartite graphs. This benefits many applications where studying the cohesion in a graph shaped data is of particular interest. Examples include investigating the structure of computational graphs or input graphs to the algorithms,  as well as dynamic phenomena and analytic tasks over complex real graphs. Butterfly counting is computationally expensive, and known techniques do not scale to large graphs; the problem is even harder in streaming graphs. In this paper, following a data-driven methodology, we first conduct an empirical analysis to uncover temporal organizing principles of butterflies in real streaming graphs and then we introduce an approximate adaptive window-based algorithm, sGrapp, for counting butterflies as well as its optimized version sGrapp-x. sGrapp is designed to operate efficiently and effectively over any graph stream with any temporal behavior. Experimental studies of sGrapp and sGrapp-x show superior performance in terms of both accuracy and efficiency.
\end{abstract}

\maketitle

\section{Introduction}\label{sec:intro}
In this paper we address the problem of counting butterfly patterns in large, bipartite streaming graphs. 
A butterfly  (also called (2,2)-biclique or rectangle) is a complete bipartite subgraph with two vertices of one type and two vertices of another type (rightmost in Figure \ref{fig:patterns}). Similar to the triangles in unipartite graphs, butterflies are the simplest and most local form of a cycle in bipartite graphs. Enumerating butterflies is important in measuring graph cohesion and clustering or community structure~\cite{aksoy2017measuring}. Clustering or community structure is measured by the transitivity/clustering coefficient that is computed as the fraction of three-paths (called caterpillars-- left four in Figure \ref{fig:patterns})  which form a butterfly ~\cite{PhysRevE.72.056127, zhang2008clustering, aksoy2017measuring}. Graph cohesion can be measured by the number of butterflies-per-vertex and by the local clustering coefficient. Study of such local structural measures unveils hidden ordering and hierarchies in graphs displaying structural deviations from uncorrelated random connections~\cite{caldarelli2004structure, ravasz2003hierarchical, newman2003structure}. A recent study investigates the predictive performance of deep neural networks by means of clustering coefficient~\cite{you2020graph}. Other applications are realistic graph models~\cite{aksoy2017measuring, kim2012multiplicative} and representative graph sampling~\cite{zhang2017clustering}. The study of different phenomena in complex graphs such as social collective behaviours~\cite{david2020herding}, synchronization ~\cite{sheshbolouki2015feedback, ziaeemehr2020emergence}, information propagation ~\cite{PhysRevE.72.066116}, and epidemic spreading ~\cite{PhysRevE.69.066116} rely on clustering coefficient. Moreover, clustering coefficient plays an important role in graph analytics tasks such as link prediction ~\cite{huang2010link} and community detection ~\cite{zhang2008clustering}, and in general any graph processing algorithm relying on counting the mutual neighbors or Jaccard similarity. The distribution of local clustering coefficient is used as a feature to uncover statistical differences between normal and fraudulent data in applications such as spam detection~\cite{becchetti2008efficient}.

We study the problem in the context of streaming graphs, because the graphs that are used in many modern applications are not static and not available to  algorithms in their entirety; rather the graph vertices and edges are streamed and the graph ``emerges'' over time. These are called \emph{streaming graphs} and they differ from dynamic graphs that are fully available but undergo changes over time. A driving example is the stream of user-product interactions in e-commerce services. Alibaba has reported that customer purchase activities during a heavy period in 2017 resulted in generation of 320 PB of log data in a six hour period, and it had to deal with a high velocity stream of data that incurred a processing rate of 470 million event logs per second. Other e-commerce sites have similar activity albeit at somewhat lower levels. Other applications such as web recommenders, fraud detection, and social network analysis rely on butterfly counting over streaming graphs. 

\begin{figure}[t]
    \centering
    \subfigure{\includegraphics[width=0.08\textwidth]{ 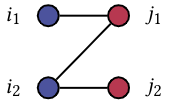}}
    \subfigure{\includegraphics[width=0.08\textwidth]{ 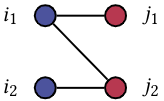}} 
    \subfigure{\includegraphics[width=0.08\textwidth]{ 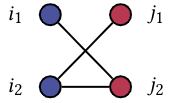}}
    \subfigure{\includegraphics[width=0.08\textwidth]{ 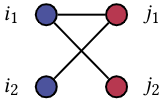}}
    \subfigure{\includegraphics[width=0.08\textwidth]{ 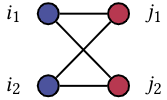}}
    \caption{Caterpillar and butterfly (rightmost) patterns.} 
    \label{fig:patterns}
\end{figure}

Bipartite graphs that model  networks with two disjoint sets of vertices are prevalent in real applications: interaction graphs that model the interactions (e.g. comments, reviews, purchases, ratings, etc) between users and items, affiliation graphs that model the membership of actors/people in groups, authorship graphs that model the links between authors and their works, text graphs that model the occurrence of words in documents, and feature graphs that model the assignment of features to entities. In particular, user-product graphs are currently recognized as the most common graphs in industry that require attention. 
It is important to study the underlying patterns and structures of bipartite graphs, and in this paper we focus on butterfly patterns. A natural question that arises is why the bipartite graph cannot be projected into a unipartite graph on which the existing approaches to count the triangles are used? The answer is that the projected graph is misleading and counting on it is inefficient. First, the projected unipartite graph loses fine-grained pattern information~\cite{sariyuce2018peeling,latapy2006basic}, since the one-to-many relationship information are projected to pairwise relationships and the projection is not bijective. Second, the projected unipartite graph will have significantly more edges than the bipartite graph since each $i-(j-)$vertex $v$ with degree $d_v$  produces $d_v(d_v-1)/2$ homogeneous edges. That is, the number of edges in the original bipartite graph is $\Sigma_v d_v$ while in the projected graph it is $\Sigma_v$ $d_v \choose 2$. It has been shown that projection can lead to an edge inflation of $200\times$ ~\cite{latapy2006basic}. In the case of streaming bipartite graphs that already have a high number of edges, the projection will exacerbate the computational footprint. Finally, the patterns that emerge in the projected unipartite graph are not reliable signals of the original bipartite graph since the edge inflation artificially changes the patterns. For instance, it has been shown that the clustering coefficient is high in the projected mode~\cite{newman2001random, guillaume2004bipartite} and unipartite projection misleads the community detection analysis~\cite{guimera2007module, barber2007modularity}. Due to these issues, it is important to devise techniques to directly study bipartite graphs. 

Exact butterfly counting  is feasible only when the entire graph is available to the processing algorithm. As noted earlier, this is not possible in streaming graphs (and even in massive static graphs~\cite{lyu2020maximum}). The alternative is approximating. One such approach is to use random sampling/sparsification~\cite{buriol2006counting, sanei2018butterfly}, which requires determining the sampling probability, reservoir size, and scaling factor. The sampling process is done several times and can be a potential overhead lowering the processing throughput. Another approach in streaming graphs is to batch the incoming graph vertices and edges into a window and process them when the window moves; this is what we follow. Most existing streaming proposals ~\cite{buriol2006counting, bar2002reductions, buriol2007estimating} assume that (a) all the edges incident to a vertex arrive together (i.e. incidence streams) and (b)  vertex degrees are bounded. Neither of these are likely to hold in real-life streaming graphs. We propose a butterfly counting algorithm that can efficiently return an accurate answer over any graph stream without these unrealistic assumptions. It has been shown that the space lower bound for an approximate butterfly count that bounds the relative error to $0<\delta<0.01$ is $O(n^2)$ where $n$ is the number of vertices~\cite{sanei2019fleet}. This is not feasible in streaming systems. We analyze the computational and error bounds of our proposed algorithm. We also validate our algorithm's accuracy and efficiency empirically.

We follow a data-driven approach to algorithm design: we conduct a deep empirical analysis of a number of real graphs with varying temporal/structural characteristics to determine the temporal occurrence of connectivities. We formulate this as a power law (Section \ref{sec:butterflyemergence}) that grounds our algorithm,  \textit{sGrapp}, to exploit these patterns. Data-driven approach has previously been used to design a graph generator/model preserving the mined patterns in a set of unipartite real graphs~\cite{leskovec2005graphs}. However, to the best of our knowledge, this is the first time  this approach is followed for designing a graph processing algorithm.  sGrapp is a \textbf{s}treaming \textbf{gr}aph \textbf{app}roximation algorithm for butterfly counting in bipartite graphs (Section \ref{sec:approx}) and is based on (a) our novel stream processing framework, which uses time-based windows that can adapt to the  temporal distribution of the stream (Section \ref{subsec:adaptivewin}) and (b) our algorithm for exact butterfly counting in streaming graph snapshots (Section \ref{subsec:butterflydensification}). Our experimental analysis (Section \ref{sec:experiments}) shows that sGrapp achieves $160\times$ higher throughput and $0.02\times$ lower estimation error than baselines and can process $1.5\times10^6$ edges-per-second. It can achieve an average window error of less than $0.05$ in graph streams with almost uniform temporal distribution. We introduce optimizations that lower the average window error to less than $0.14$ in graph streams with non-uniform temporal distribution without affecting the throughput. sGrapp handles graph streams with both high number of edges and high average degree with a sublinear memory footprint, which is lower than that of the baselines. Empirical analysis shows that the performance of sGrapp is independent of its input data, hence can be applied to any real graph stream.
\section{Background}\label{sec:background}
\subsection{Preliminaries}\label{sec:overview}
We define a graph $G$ as a pair of vertex and edge sets $G=(V , E)$. Since $G$ is a bipartite graph, $V = V_i \cup V_j$ and $V_i \cap V_j=\emptyset$. We use user-item bipartite graphs in which $V_i$  (called i-vertices) represents users and $V_j$ (called j-vertices) represents items.

\begin{definition}[Streaming Graph Record]\label{def:sgr}
A streaming graph record (sgr) $r=(\tau, p)$ is a pair  where $\tau$ is the event (application) timestamp of the record assigned by the data source, and payload $p=\langle e/v , op\rangle$ indicates an edge $e \in E$ or a vertex $v \in V $ of the [property] graph $G$, and an operation $op \in \{ insert, delete, update\}$ that defines the type of the record.
\end{definition}

In this paper, the operations are limited to edge insertion. If there are duplicate edge arrivals, the algorithm ignores the duplicates. 

\begin{definition}[Streaming Graph]\label{def:insertstream}
 
\sloppy A streaming graph $S$ is an unbounded sequence of streaming graph records $S= \langle r^1, r^2, \cdots \rangle$ in which each record $r^m$ arrives at a particular time $t^m$ ($t^m \leq t^n$ for $m<n$). 
\end{definition}

\begin{definition}[Time-based Window]\label{def:twin}
A time-based window $W$  over a streaming graph $S$ is denoted by time interval $[W^b,W^e)$ where $W^b$ and $W^e$ are the beginning and end times of window $W$ and $W_e - W_b = |W|$. The window contents 
is the multiset of sgrs where the timestamp $\tau_i$ of each record $r^i$ is in the window interval.
\end{definition}

\begin{definition}[Time-based Sliding Window]\label{def:window}
A time-based sliding window $W$ with a slide interval $\beta$ is a time-based window that progresses every $\beta$ time units.
At any time point $\tau$, a time-based sliding window $W$ with a slide interval $\beta$ defines a time interval $(W^b,W^e]$ where $W^e = \lfloor \tau / \beta \rfloor \cdot \beta$ and $W^b = W^e - |W|$.
\end{definition}

\begin{definition}[Time-based Tumbling Window]\label{def:tumblingwindow}
A tumbling window is a time-based window where, for two subsequent windows $W_{i}$ and $W_{i+1}$, $W_{i+1}^{b} = W_{i}^{e}$ and $W_{i+1}^{e} = W_{i+1}^{b}+|W_{i+1}|$. Simply, when subsequent sliding windows are disjoint, they are called tumbling windows.
\end{definition}

\begin{definition}[Time-based Landmark Window]\label{def:landmarkwindow}
A landmark window is a constantly expanding time-based window denoted by a pair $\langle W^b,|W| \rangle$ where, $ W^b$ is the fixed beginning time and $|W|$ is the expansion size.  For two subsequent windows $W_i$ and $W_{i+1}$, $W^{b}_{i+1} = W^{b}_i$ and $W^{e}_{i+1} = W^{e}_i+|W_{i+1}|$. Simply, when the beginning border is fixed and the end border moves forward, the window is called landmark.
\end{definition}

\begin{definition}[Streaming Graph Snapshot]\label{def:streaminggraphsnapshot}
 A \textit{streaming graph snapshot} $G_{W,t}$ is the graph formed by the records in the time-based window $W$ at time $t$.
\end{definition}

Table \ref{tab:notations} lists the notations used in the paper.

\begin{table*}[h]
\caption{Frequent notations. Similar notations stand for j-vertices where applicable.}

    \begin{tabular}{|p{2.75cm}|p{10.5cm}|}\hline
        
        Notation & Description \\ \hline\hline
       
        
        $r^m=(\tau, p)$ &  A streaming graph record (sgr) with timestamp $\tau$, payload $p$, and arrival time $t_m$  \\ \hline
        
        $\tau$ & sgr timestamp (real time-label) \\ \hline
        
        $t$ & Computational time point or time of sgr arrival at the computational system  \\ \hline
        
        $\mathcal{R}$ & Average stream rate  \\ \hline
        
        $p=\langle e/v , op\rangle$ &  An edge $e \in E$ or a vertex $v \in V$ , and an operation $op \in \{ insert, delete, update\}$\\ \hline
       
        $W_i := [W_i^b,W_i^e)$ &   $i$th time-based window $W$ as an interval of width $|W|$\\ \hline
        
        $\beta$ &  Slide size for a sliding window  \\ \hline
        
        $G_{W,t}=( V(t),E(t))$ &   A graph snapshot formed by window $W$ at time $t$ \\ \hline
        
        $deg(i)$ & Degree of vertex $i$   \\ \hline
        
        $N_i$ & Neighborhood of vertex $i$   \\ \hline
        
        $P$/$\gamma$/$M$ & FLEET's sampling probability/subsampling probability/reservoir capacity   \\ \hline
      
        $K_i$ & Average degree of i-vertices   \\ \hline
        
        $\eta$/$\alpha$ & Butterfly densification power law exponent for all/inter-window butterflies   \\ \hline
        
        
        $N_{hub}(t)$ & Number of hubs at time $t$   \\ \hline
       
        $N_t$ & Number of unique timestamps in data stream   \\ \hline
       
        $B(t)$ & The number of butterflies since the initial time point until $t$    \\ \hline
        
        $B_i$ & Butterfly support of vertex $i$   \\ \hline
        
        $B^{W_k}$ &   Number of butterflies introduced by at least one vertex in the window $W_k$\\ \hline
        
        $\hat{B}(t=W_k^e)=\hat{B_k}$ &  Estimation of number of butterflies at time $t=W_k^e$   \\ \hline
        
        
        $B_G^{W_k}$ &  Number of butterflies in graph  corresponding to window $W_k$  \\ \hline
        
        $B^{interW} \&~\hat{B}^{interW}$ &  Number of inter-window butterflies \& its estimate   \\ \hline
        
        
        $N_t^w$ & Number of unique timestamps per window \\ \hline
        
         
        $K_{i,W_k}$ & the lower bound of degree of i(j)-vertices in window $W_k$\\ \hline
        
        $V_{i,W_k}$/$E_{W_k}$ & Set of i-vertices/edges in the interval $[W_k^b,W_k^e)$ \\ \hline

        $E_k$ & Set of edges in the interval $[W_0^b,W_k^e)$ \\ \hline

        $Pr(N_{iHub}^t\ge 1)$ & Probability of having at least one i-hub in the butterflies at time $t$ \\ \hline
        
\end{tabular} \label{tab:notations}
\normalsize
\end{table*}
\subsection{Related Work}\label{sec:relatedworks}
he existing works in butterfly counting can be classified  along three dimensions: graph characteristic (bipartite/unipartite), data location (disk-resident/in-memory) and graph availability (static/ dynamic/streaming). Detailed coverage of each design point is beyond the scope of this paper; we focus on two particular design points that are most relevant to our work: static bipartite graphs and streaming bipartite graphs.

\subsubsection{Counting in Static Bipartite Graphs}\label{subsec:relatedworksexact}
The literature on counting (bi)cliques in static bipartite graphs~\cite{sariyuce2018peeling, wang2014rectangle, wang2019vertex, sanei2018butterfly} and static unipartite graphs~\cite{wang2010triangulation, hellings2012efficient} is quite rich. A major challenge in this context is the massive size of these graphs. Some studies have focused on disk-resident data and optimized I/O access patterns for counting the exact number of cliques~\cite{hellings2012efficient, chu2011triangle, becchetti2008efficient, hu2014efficient, hu2013massive, pagh2014input}. 
Other studies consider in-memory algorithms and use random sampling so that the induced graph can fit in main memory for estimating the number of (bi)cliques~\cite{buriol2006counting, sanei2018butterfly}. There are studies that propose scaling out computation by parallelization~\cite{kim2014opt, arifuzzaman2013patric}.  

Butterfly counting algorithms in bipartite graphs follow either vertex-centric or edge-centric processing. One straightforward edge-centric approach is to take each pair of disjoint edges $(e_{i_1,j_1},e_{i_2,j_2})$ in the graph (Figure \ref{fig:exactmethods}a)
and check for the existence of the two other edges that complete the butterfly pattern. The complexity of this approach is $\mathcal{O}(|E|^2)$ which is too expensive for graphs  with a high number of edges. Another edge-centric approach~\cite{chiba1985arboricity} takes an edge $e_{i_1,j_1}$ and examines the existence of the three complementary edges. That is, the algorithm checks the connections between neighbors of $i_1$ and neighbors of $j_1$  denoted as $j_2$ and $i_2$, respectively to see whether they are connected by an edge $e_{i_2,j_2}$ (Figure \ref{fig:exactmethods}b).
This approach can be implemented with an algorithm that has complexity  $\mathcal{O}(\sum_{\langle i_1,j_1 \rangle\in E} Min(deg(i_1),deg(j_1)))$, which is not appropriate for dense graphs with high number of edges and high average degrees. The state-of-the-art approach~\cite{wang2014rectangle, wang2019vertex, sanei2018butterfly} is vertex-centric that takes a vertex $v_i$ and traverses all two-hop neighbors to identify triples $\langle i_1, j_1, i_2 \rangle$ and $\langle i_1, j_2, i_2 \rangle$. That is, it finds all triples (i.e. two-paths) with common end vertices (i.e. the same two-hop neighbor) and then combines them to get the number of all butterflies (Figure \ref{fig:exactmethods}c). 
The complexity of this approach is $\mathcal{O}(\sum_{i_1\in V_i}\sum_{j_1\in N_{i_1}}deg(j_1))$, which is challenging for graphs with high average i- and j-degrees as a result of traversing two hop neighbors~\cite{wang2019vertex}.

\begin{figure}[t]
   
    \subfigure[]{\includegraphics[width=0.1\textwidth]{  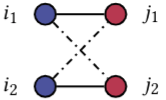}}
    \subfigure[]{\includegraphics[width=0.1\textwidth]{  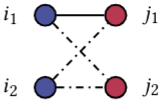}}
    \subfigure[]{\includegraphics[width=0.1\textwidth]{  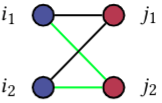}}
    \subfigure[]{\includegraphics[width=0.1\textwidth]{  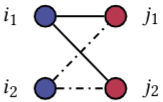}}
    \caption{Butterfly counting methods.}
    \label{fig:exactmethods}
\end{figure}

\subsubsection{Counting in Streaming Bipartite Graphs}\label{subsec:relatedworksapprox}

In the streaming graph context, the literature is also rich for counting in unipartite graphs ~\cite{wang2017approximately, wang2010triangulation, buriol2006counting, bar2002reductions, becchetti2008efficient, buriol2007estimating, bera2017towards, braverman2013hard}. However, to the best of our knowledge, the only butterfly counting study over bipartite streaming graphs
is FLEET ~\cite{sanei2019fleet}, which introduces a suite of algorithms. FLEET1 samples the edges of a window with probability $P$ into a reservoir with fixed capacity $M$ to bound the memory consumption and increments the butterfly count by the number of incident butterflies for each sampled edge. When the size of reservoir exceeds $M$, the edges are sub-sampled with probability $\gamma$ and the butterfly count is set to the exact number of butterflies in the reservoir. The sampling probability is then multiplied by $\gamma$ for the following edges. FLEET2 avoids re-computing the exact number of butterflies in the reservoir during the sub-sampling iterations. FLEET3 avoids re-computation and also updates the estimate before sampling the edges into the reservoir. FLEETSSW uses count-based sliding windows with limited graph size in each window, and FLEETTSW uses time-based sliding windows with fixed window length across windows. To overcome the variable number of edges inside each window, FLEETTSW assumes an upper-bound for the number of edges in a window on top of a FIFO-based sampling scheme. As we  discuss in Section \ref{sec:approx}, there exist a number of inter-window butterflies in the stream that are missed by the FLEET algorithms. Moreover, FLEET requires determining a sub-sampling probability and a normalization factor to scale-up the estimation computed over the sampled edges, and the specification of a time when the result is ready to be returned. FLEET requires a sufficiently large amount of memory to guarantee a desired level of accuracy. 
\section{Analysis of Graph Characteristics}\label{sec:butterflyemergence}

In this section, we  present our investigations into the emergence  of butterfly patterns in graph streams and on the underlying contributors to these patterns. We use the insights provided by this analysis to introduce an approximation algorithm for butterfly counting in streaming graphs in Section \ref{sec:approx}. The analysis results themselves are also important as they expose how butterfly patterns exist in real world graphs.


\subsection{Graph stream data}\label{subsec:graphstreamdataset}
We study a set of real world graphs and make use of a set of synthetic graphs to explore additional features. Table \ref{tab:graphs} provides the statistics about the graphs we study; these graphs are also used in the experiments discussed in Section \ref{sec:experiments}.

\begin{table*}[t]\caption{Bipartite and temporal graph datasets used. $\langle k_i\rangle$ and $\langle k_j\rangle$ denote the average degree of i-vertices and j-vertices, respectively. $N$ and $m=m_0$ are parameters of BA graphs and refer to the total number of vertices and average degree in the unipartite BA graph, respectively. $N_t$ denotes the number of unique timestamps. $B_G$ denotes the number of butterflies in the graph.}
\small
    \begin{tabular}{|p{3cm}|p{1.5cm}|p{1.5cm}|p{1.5cm}|p{0.6cm}|p{0.6cm}|p{1cm}|p{1cm}|p{1.5cm}|p{2.4cm}|}\hline
          \makecell{Graph dataset} & \makecell{$|V_i|$} & \makecell{$|V_j|$} & \makecell{$|E|$} & \makecell{$\langle k_i\rangle$} & \makecell{$\langle k_j\rangle$}& \makecell{$N$} & \makecell{$m=m_0$}& \makecell{$N_t$} & \makecell{$B_G$} \\  \hline\hline
       
         \makecell{Epinions\\BA+Epinions stamps\\BA+random stamps} & 
         \makecell{$22,164$\\$22,514$\\$22,514$} & \makecell{$296,277$\\$21,455$\\$21,455$} & \makecell{$922,267$\\$922,254$\\$922,254$} & \makecell{$41$\\$41$\\$41$} & \makecell{$3$\\$43$\\$43$} & \makecell{\\$22,515$\\$22,515$} & \makecell{\\$41$\\$41$}  & \makecell{$4,318$\\$4,318$\\$921,159$}& \makecell{$170,303,771,005$\\$$\\$$}  \\ \hline
        
         \makecell{MovieLens1m\\BA+ML1m stamps\\BA+random stamps} &
         \makecell{$6,040$\\$6,106$\\$6,106$} & \makecell{$3,706$\\$6,022$\\$6,022$} & \makecell{$1,000,210$\\$999,901$\\$999,901$} & \makecell{$166$\\$164$\\$164$}& \makecell{$270$\\$166$\\$166$} & \makecell{\\$6,107$\\$6,107$} & \makecell{\\$166$\\$166$} & \makecell{$458,455$\\$458,312$\\$994,467$}& \makecell{$16,671,201,295$\\$$\\$$}  \\ \hline
        
        \makecell{MovieLens100k\\BA+ML100k stamps\\BA+random stamps} & 
        \makecell{$943$\\$995$\\$995$} & \makecell{$1,682$\\$982$\\$982$} & \makecell{$100,000$\\$99,905$\\$99,905$} & \makecell{$106$\\$100$\\$100$}& \makecell{$59$\\$100$\\$100$} & \makecell{\\$966$\\$966$} & \makecell{\\$106$\\$106$} & \makecell{$49,282$\\$49,254$\\$996,555$}& \makecell{$220,548,028$\\$$\\$$}  \\ \hline

        \makecell{MovieLens10m} &\makecell{69,878} & \makecell{10,677} & \makecell{10,000,054} & \makecell{143} & \makecell{937} & \makecell{} & \makecell{} & \makecell{7,096,905} & \makecell{1,197,019,065,804} \\ \hline 
        
        \makecell{edit-frwiki} &\makecell{288,275} & \makecell{3,992,426} & \makecell{46,168,355} & \makecell{160} & \makecell{$11$}& \makecell{}& \makecell{} & \makecell{39,190,059} & \makecell{$601.2\times 10^{9}$} \\ \hline
        
        \makecell{edit-enwiki} &\makecell{262,373,039} & \makecell{266,665,865} & \makecell{266,769,613} & \makecell{70} & \makecell{12} & \makecell{}& \makecell{}  & \makecell{134,075,025} & \makecell{$2\times 10^{12}$} \\ \hline
        
    \end{tabular}\label{tab:graphs}
\end{table*}

\begin{table*}[ht]\caption{$R^2$ and RMSE of ten fitting functions for the temporal evolution of butterfly frequency in three real-world graph streams. Filled cells decode increasing function and best fits are highlighted in gray cells.}
\small
    \begin{tabular}{|p{1.5cm}|p{1cm}|p{1.3cm}|p{1cm}|p{1.5cm}|p{1cm}|p{1.4cm}|p{1.4cm}|p{1.4cm}|p{1.4cm}|p{1.4cm}|}\hline
          \makecell{$R^2$\\RMSE} & \makecell{Linear} & \makecell{Quadratic} & \makecell{Cubic} & \makecell{4th degree\\ polynomial} & \makecell{Quintic} &
          \makecell{6th degree\\ polynomial}  & \makecell{7th degree\\ polynomial}  & \makecell{8th degree\\ polynomial} 
          & \makecell{9th degree\\ polynomial}  & \makecell{10th degree\\ polynomial} \\ \hline\hline
         \makecell{Epinions} & \cellcolor{pink!40}\makecell{$0.9947$\\$1.481e^4$} & \cellcolor{pink!40}\makecell{$0.9951$\\$1.435e^4$} & \cellcolor{pink!40}\makecell{$0.9951$\\$1.432e^4$} & \makecell{$0.9975$\\$1.028e^4$} & \cellcolor{pink!40}\makecell{$0.9977$\\$9751$} &
         \cellcolor{pink!40}\makecell{$0.9977$\\$9716$} & \makecell{$0.9978$\\$9598$}& 
         \cellcolor{pink!40}\makecell{$0.9984$\\$8130$}& \cellcolor{gray!40}\makecell{$\boldsymbol{0.9987}$\\$\boldsymbol{7409}$} & \makecell{$0.9987$\\ $7386$}  \\ \hline
        
         \makecell{ML100k} & \cellcolor{pink!40}\makecell{$0.931$\\$2.31e^6$} &\cellcolor{pink!40} \makecell{$0.9977$\\$4.18e^5$} & \cellcolor{pink!40}\makecell{$0.9978$\\$4.167e^5$} & \makecell{$0.9978$\\$4.126e^5$} & \makecell{$0.9983$\\$3.673e^5$}  & \cellcolor{pink!40}\makecell{$0.9983$\\$3.584e^5$} & \cellcolor{gray!40}\makecell{$\boldsymbol{0.9993}$ \\ $\boldsymbol{2.286e^5}$ } & \cellcolor{gray!40}\makecell{$\boldsymbol{0.9993}$\\$\boldsymbol{2.286e^5}$} & \makecell{$0.9997$\\$1.552e^5$} & \makecell{$0.9997$\\$1.552e^5$} \\ \hline
        
        \makecell{ML1m} & \cellcolor{pink!40}\makecell{$0.8751$\\$2.119e^{6}$} & \cellcolor{pink!40}\makecell{$0.9951$\\$4.196e^{5}$} & \cellcolor{pink!40}\makecell{$0.9953$\\$4.111e^5$} & \makecell{$0.9977$\\$2.895e^5$} & \makecell{$0.9989$\\$1.976e^5$} &
        \makecell{$0.9989$\\$1.961e^5$} & \makecell{$0.999$\\$1.94e^5$} & \makecell{$0.999$\\$1.937e^5$} & \cellcolor{pink!40}\makecell{$0.999$\\$1.933e^5$} & \cellcolor{gray!40}\makecell{$\boldsymbol{0.999}$\\ $\boldsymbol{1.933e^5}$}\\ \hline
        
        \makecell{ML10m} & \cellcolor{pink!40}\makecell{$0.8943$\\$3.223e^6$} & \cellcolor{pink!40}\makecell{$0.9983$\\$4.034e^5$}& \cellcolor{pink!40}\makecell{$0.999$\\$3.149e^5$}& \makecell{$0.9992$\\$2.841e^5$}& \makecell{$0.9993$\\$2.701e^5$}& \makecell{$0.9993$\\$2.699e^5$}& \cellcolor{gray!40}\makecell{$\boldsymbol{0.9993}$\\$\boldsymbol{2.605e^5}$}& \makecell{$0.9994$\\$2.493e^5$}& \makecell{$0.9996$\\$1.868e^5$}& \makecell{$0.9997$\\$1.781e^5$}\\ \hline
        
         \makecell{Edit-FrWiki} & \makecell{$0.9228$\\$8.09e^4$} & \cellcolor{pink!40}\makecell{$0.9932$\\$2.408e^4$} &  \cellcolor{pink!40}\makecell{$0.9932$\\$2.397e^4$} &\cellcolor{pink!40}\makecell{$0.9953$\\$1.998e^4$} & \cellcolor{gray!40}\makecell{$\boldsymbol{0.9966}$\\$\boldsymbol{1.693e^4}$} & \makecell{$0.9968$\\$1.653e^4$} & \makecell{$0.9979$\\$1.319e^4$} & \makecell{$0.9988$\\$1.01e^4$} &  \makecell{$0.9988$\\$9928$} & \makecell{$0.9989$\\$9725$} \\ \hline
         
         \makecell{Edit-EnWiki} & \cellcolor{pink!40}\makecell{$0.971$\\$1990$}  & \cellcolor{pink!40}\makecell{$.9879$\\$1288$} & \cellcolor{pink!40}\makecell{$0.9879$\\$1285$} & \cellcolor{pink!40}\makecell{$0.9903$\\$1150$} & \cellcolor{pink!40}\makecell{$0.9918$\\$1060$} & \makecell{$0.9928$\\$990$} & \makecell{$0.9951$\\$821.3$} & \cellcolor{pink!40}\makecell{$0.9957$\\$769.9$} & \cellcolor{gray!40}\makecell{$\boldsymbol{0.9964}$\\$\boldsymbol{696.5}$} & \makecell{$0.9967$\\$671.7$} \\ \hline

    \end{tabular}\label{tab:rmser2reals}
\end{table*}


\textbf{Real-world graphs } -- In this study, we use six real world graphs: four rating graphs including Epinions
, MovieLens100k
, MovieLens1m
, MovieLens10m
, and two Wikipedia edit networks in English
 and French
obtained from the KONECT repository~\cite{kunegis2013konect}. All of these networks include information generated from interaction of a set of users  with a set of items (products, movies, or wikipedia pages). These datasets cover graphs with different edge density levels and are suitable for deep analysis and evaluations.

\textbf{Synthetic graphs } -- In addition to the real world graphs, we use six synthetic random graphs in this study to bolster the analysis of real world graphs. In fact synthetic graphs are configurable and have known structural properties that ease the understanding of their patterns. We use these synthetic graphs to better understand and explain what is happening in real world graphs through the comparisons and contradictory case investigations. These synthetic graphs are generated with respect to the three real world graphs (Epinions, MovieLens100k, and MovieLense1m) in that the the synthetic graphs and the corresponding real world graphs have (roughly) same structural statistics. We use the Barabasi-Albert (BA) model ~\cite{barabasi1999emergence} to generate the structure of random graphs as the baseline for analyzing real world graphs. We chose this model because it is a popular and widely adopted model for generating scale free graphs ~\cite{huang2016leopard, hadian2016roll, ma2019linc, talukdar2010automatically, liu2008towards, mondal2012managing, yang2012towards, jin2010computing,   chang2012exact, lee2020measurements, bernaschi2019spiders}. Given the total number of vertices $N$, the initial number of vertices $m_0$ and the number of connections of new vertices $m$ ($m\leq m_0$) as inputs, the BA graph model applies the rich-get-richer preferential attachment rule to generate a unipartite scale-free random graph. Precisely, this graph model creates an initial complete graph with $m_0$ vertices and keeps adding $N-m_0$ new vertices to this initial graph. The new vertices are connected to $m$ existing vertices with higher probability of attachment dictated by the attachment rule. The BA preferential attachment rule states that the probability is determined based on the degree of the vertex, therefore the higher the degree (i.e. the older the vertex), the higher the probability of attachment. The original BA model produces growing unipartite graphs with no timestamps. Therefore, we extended the model to generate bipartite and temporal graphs with respect to a given real graph such that the structure is dynamic but the timestamps are static. We introduce a three-step procedure to create a bipartite and temporal scale-free BA graph as a baseline for a given real-world graph:

 \begin{figure}[ht]
   \centering
   \includegraphics[width=\linewidth]{ 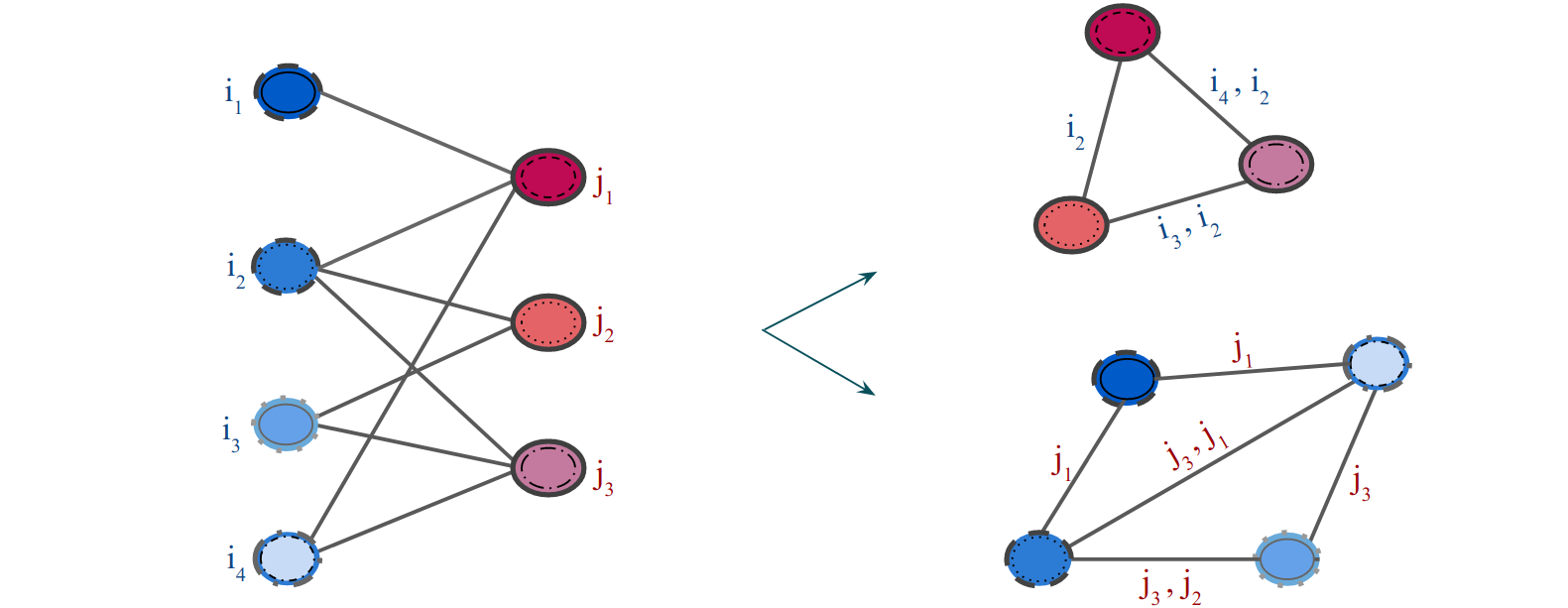}
   \caption{Projecting a bipartite graph to two unipartite graphs. There is a link between two vertices in unipartite mode if they have any common neighbors in the bipartite mode. Edge labels in the unipartite graph reflect the common neighbors.}
  \Description{Projecting a bipartite graph to two unipartite graphs. There is a link between two vertices in unipartite mode if they have any common neighbors in the bipartite mode. Edge labels in the unipartite graph reflect the common neighbors.}\label{fig:projection}
 \end{figure}
 \begin{enumerate}
    \item \textbf{Create Unipartite BA graph} -- The input parameters to the BA model (i.e. $N$, $m$, and $m_0$) should be set such that the average degree of i-vertices and the number of total edges ($|E|$) in real-world and synthetic graphs are (roughly) the same. That is because of the edge-centric nature of our intended analysis. Therefore, we set the parameters $m=m_0$ equal to the average degree of i-vertices (i.e. users) in the real-world graph and determine the value of $N$ in a way that it satisfies the equation for the number of edges in BA graph, that is $m_0(m_0-1)/2 + (N-m_0)m=|E|$ . Given the input parameters, the edge list of the scale-free unipartite directed graph is generated.
     \item \textbf{Project the graph to bipartite mode} -- A common approach to project a bipartite graph $BG=(V, E_{ij}, \Sigma, \psi, \phi)$ to unipartite modes $G_i=(V_i, E_i, \Sigma, \psi, \phi)$ and $G_j=(V_j, E_j, \Sigma, \psi, \phi)$ is to connect a pair of vertices if they have a common neighbor (Figure \ref{fig:projection}). That is,  $(i_m,i_n)\in E_i$ if $\exists j\in V_j$ : $(i_m,j)\in E_{ij} \And (i_n,j)\in E_{ij}$ and the same connection rule for j-vertices. Accordingly, we propose a \textbf{reverse-engineering technique for projecting the unipartite graphs to bipartite mode}. Precisely, given a unipartite BA graph $G_i$ with $N_i$ vertices (assuming the vertices as i-vertices), the bipartite mode $BG$ is generated by the procedure below:
     \begin{enumerate}[\indent(a)]
         \item Assign $N_j$ labels $\{L_k|1\leq k\leq N_j\}$ to arbitrary edges in $G_i$.
         \item Create a set of $N_j$ j-vertices.
         \item Project each edge $(i_m, i_n)\in  E_i$ with label $L_k$ into two edges $(i_m,j_k)$ and $(i_n,j_k)$.
     \end{enumerate}
    Clearly, this procedure can yield a bipartite BA graph with a pre-specified number of i- and j-vertices. Therefore, it can mimic the number of vertices in the real-world graph exactly. However, the number of edges in the output bipartite BA graph does not match that of the unipartite BA graph and if we create a unipartite BA graph with specific number of edges, then the number of i-vertices would be affected accordingly. Therefore, this projection method can not yield bipartite BA graphs that have specific number of edges and vertices at the same time and solely adjusting the number of edges will affect the number of vertices. On the other hand, the intended analysis in this work is edge-centric, therefore it is important to create synthetic bipartite graphs with the same number of edges as the real-world graphs. 
\begin{figure*}[t]
\centering
\includegraphics[width=\linewidth]{ 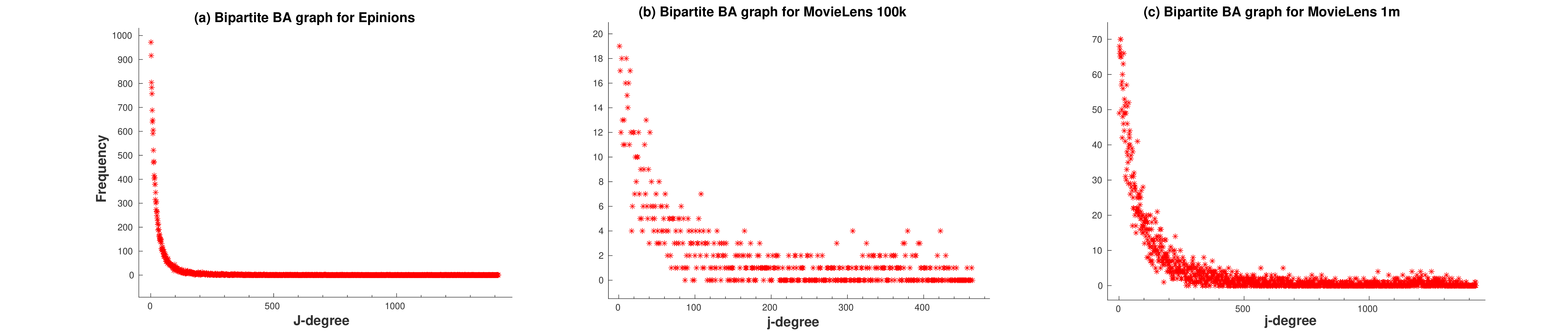}
\caption{The j-degree distribution of Projected Bipartite BA graphs for three real-world graphs}
\Description{}\label{fig:jdegdist}
\end{figure*}

     To address this problem, we follow a simple projection method. Given the list of directed edges in the unipartite BA graph, the sources of edges are treated as i-vertices and the destinations as the j-vertices. Hence, the BA graph is projected to bipartite mode with same number of edges as that of the unipartite and the corresponding real-world graph. The number of i-vertices in the projected bipartite BA graph (equal to the $N$ of unipartite BA graph) is very close to that of the real-world graph. In spite of different number of j-vertices in the projected and real-world graphs, this projection method is preferable as it solves the aforementioned issue. Moreover, this method preserves the scale-free characteristic of the uni-partite graph since the j-degree (i-degree) distribution in bipartite graph is equivalent to the in-degree (out-degree) distribution of vertices in the unipartite graph and the j-degree distribution is scale-free (see Figure \ref{fig:jdegdist}).
   
    \item \textbf{Assign timestamps to the synthetic edge}s -- Given the timestamps of the a real-world graph and the bipartite structure of the corresponding random graph, timestamps are assigned to the edges in two ways: 
     \begin{enumerate}[\indent(a)]
         \item Each BA edge is randomly assigned a timestamp within the range of timestamps of the corresponding real-world graph and the resulting graph is called \textit{BA+random stamps}.
         \item The un-ordered timestamps of the corresponding real-world graph are assigned to arbitrary BA edges and the resulting graph is called \textit{BA+real stamps}. This method guarantees same temporal distribution for the edges of BA and real-world graphs and supports fair comparisons.
     \end{enumerate}
 \end{enumerate}
 
\begin{figure*}[t]
  \centering
  \includegraphics[width=\linewidth]{ 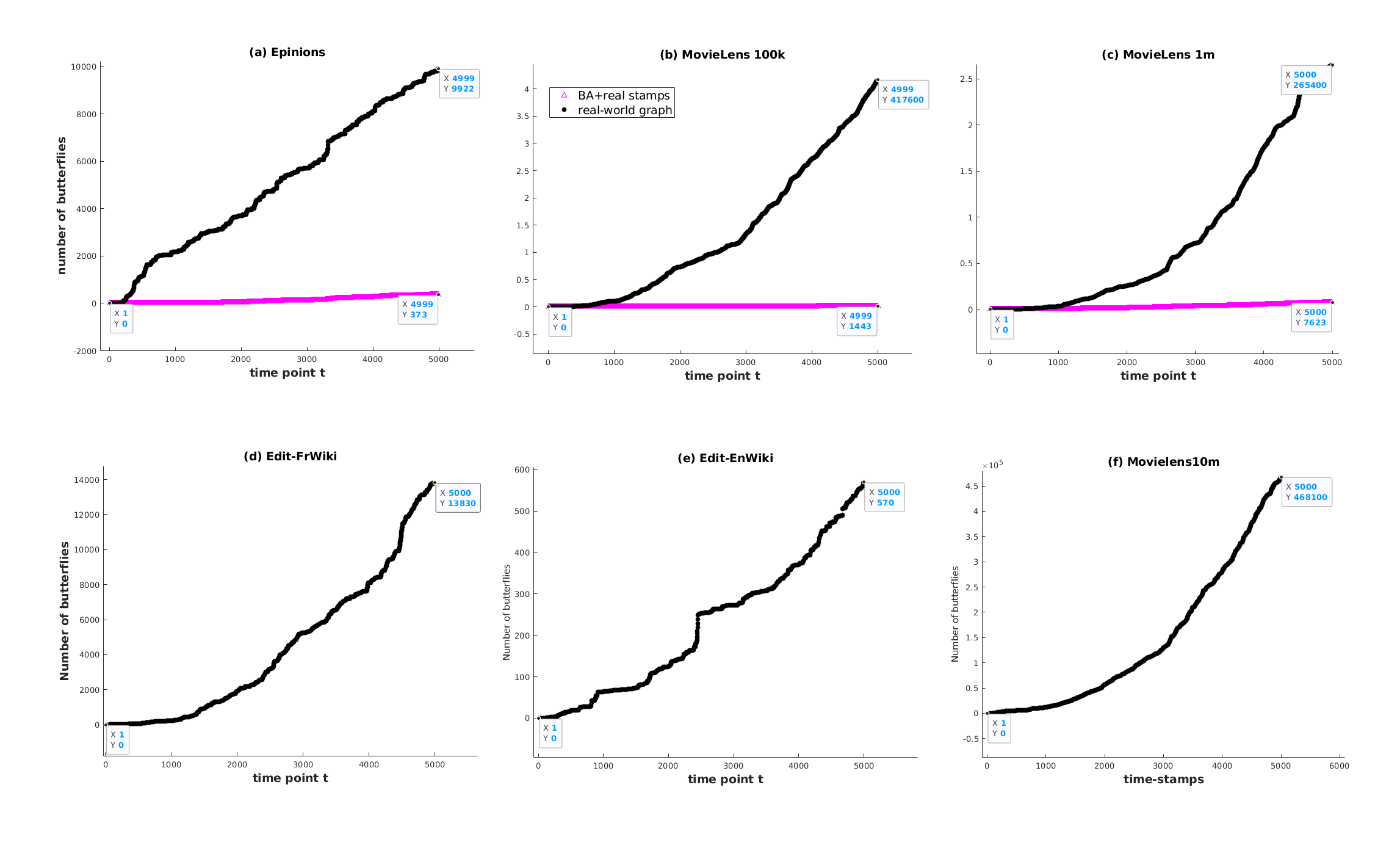}
  \caption{ [Best viewed in colored.] Temporal evolution of butterfly frequency .}
  \Description{Temporal evolution of butterfly frequency.}\label{fig:cliquesnum}
\end{figure*}

\begin{figure*}[t]
    \centering
    \subfigure{\includegraphics[width=0.15\textwidth]{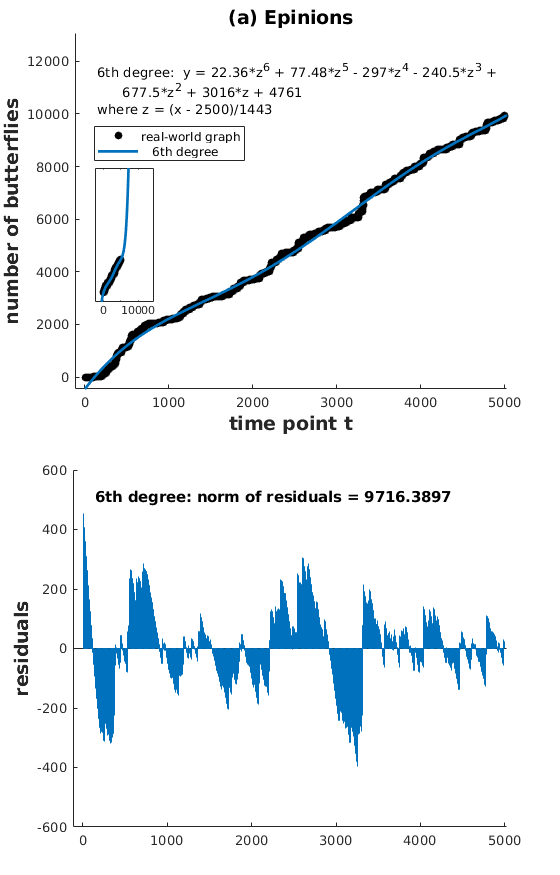}} 
    \subfigure{\includegraphics[width=0.15\textwidth]{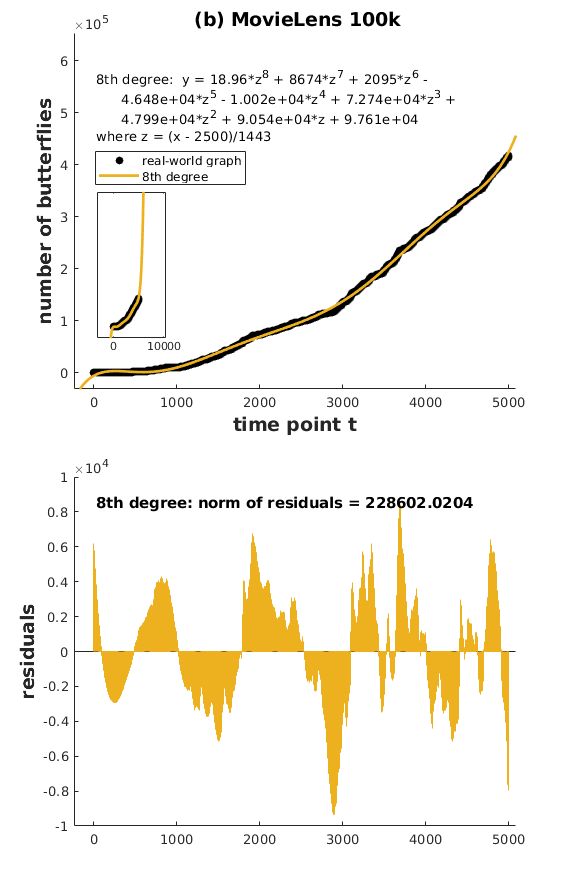}} 
    \subfigure{\includegraphics[width=0.15\textwidth]{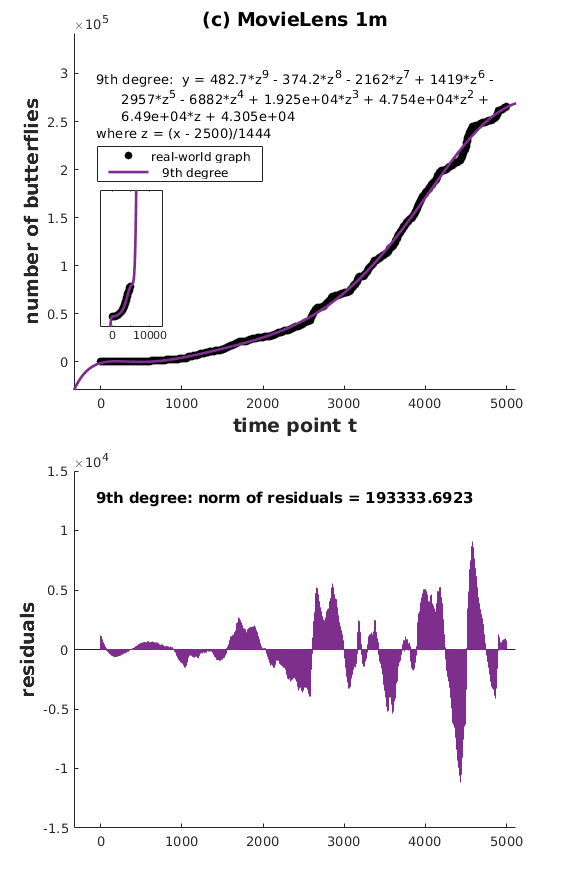}}
    \subfigure{\includegraphics[width=0.15\textwidth]{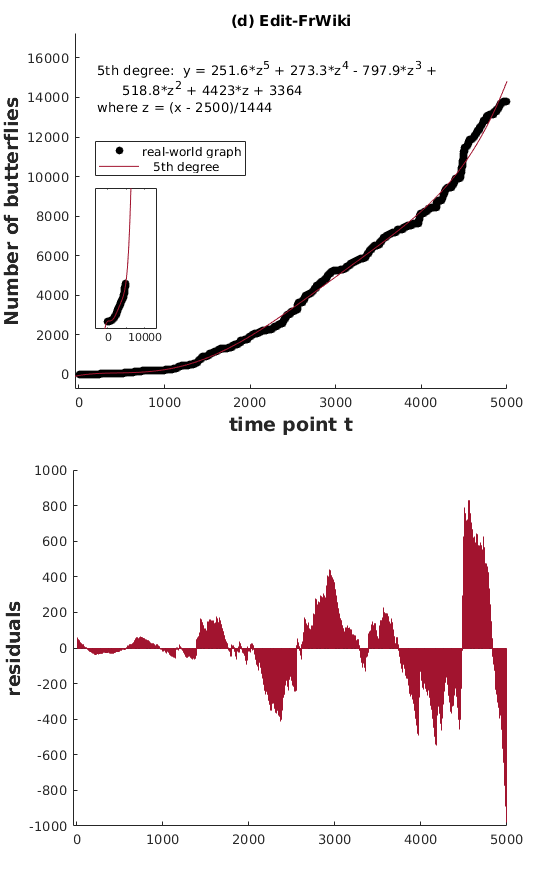}}
    \subfigure{\includegraphics[width=0.15\textwidth]{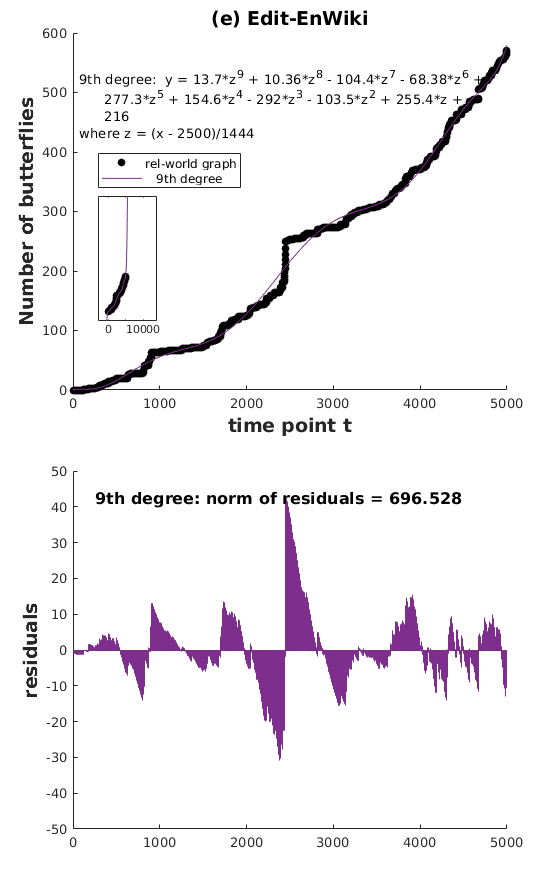}}
    \subfigure{\includegraphics[width=0.15\textwidth]{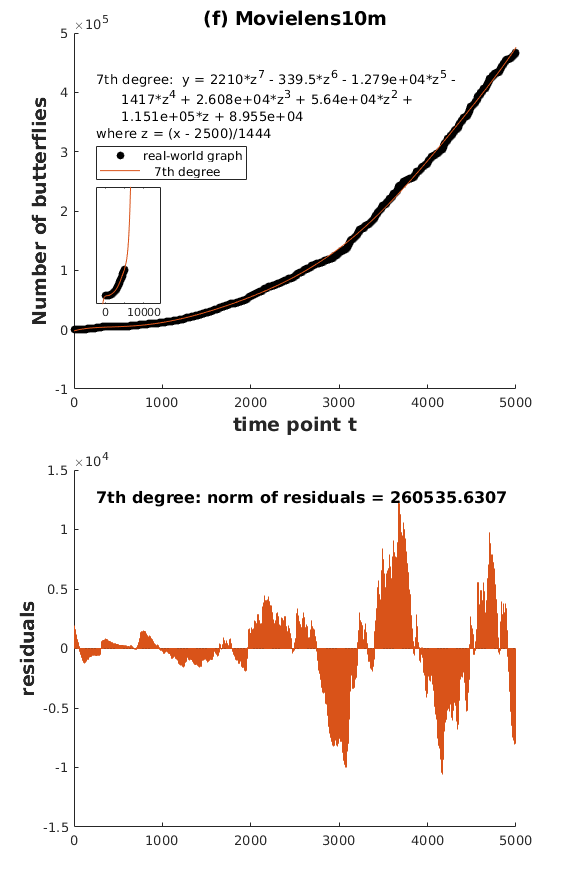}}
    
    \caption{[Best viewed in colored.] Best Fitting functions for the temporal evolution of butterfly frequency (top) and the residual errors of the estimated fitting function (bottom).}
    \label{fig:cliquesfitting}
\end{figure*} 
 
All the edge lists (real and synthetic) are sorted based on the timestamps to simulate the streaming graph records in the analysis.

\subsection{Butterfly 
Emergence Patterns}\label{subsec:butterflydensification}
Network motifs are ``patterns of interconnections occurring in complex networks at numbers that are significantly higher than those in randomized networks'' ~\cite{motifs}. Identifying the motifs helps characterize the graph and also benefits graph querying systems that are based on subgraph-centric programming model (i.e. operates on subgraphs rather than vertices or edges) and can be optimized by indexing the network motifs. That is, network motifs represent the regularities in the graph data and are helpful in building indexes over frequent and regular graph structures (structural indexing) ~\cite{yan2004graph, zhao2007graph, sasaki2020structural}.  On the other hand, the frequent butterflies in a graph is a sign of high clustering coefficient. While butterflies are known to be motifs in static graphs, their temporal emergence patterns is not well studied. Therefore, we study the number of butterflies emerging in the real-world graphs over time. Also, we compare these with the occurrence patterns in randomized graphs to see if the occurrence frequency is higher in real-world graphs. This is required for a sound and complete recognition of butterflies as temporal motifs, since motif definition requires such comparison.

For this analysis we use an exact butterfly counting algorithm for graph \emph{snapshots} (called \emph{countButterflies(G)} -- Algorithm \ref{alg:countB}). Given a bipartite graph snapshot $G_{W,t}=( V(t),E(t))$ at a time point $t$, the goal is to compute $B(t)$ as  the number of all quadruples $\langle i_1, i_2, j_1, j_2 \rangle$ in $G_{W,t}$ such that they form a butterfly via four edges $\{e_{i_1,j_1} , e_{i_1,j_2} , e_{i_2,j_1} , e_{i_2,j_2}\}$ (Figure \ref{fig:patterns}--rightmost). 

Algorithm \ref{alg:countB} follows a vertex-centric approach that does not require accessing  two-hop neighbors (i.e. it is not triple-based) and can be  computed by looping over either i-vertices or j-vertices depending on their average degree (denoted by $K_i$ and $K_j$). The algorithm takes a vertex $i_1$ (provided that $K_i\leq K_j$) and considering each pair of j-neighbors $j_1$ and $j_2$, identifies the common i-neighbors of $j_1$ and $j_2$, i.e. vertices such as $i_2$ (Figure \ref{fig:exactmethods}.d). We use sublists to avoid iterating over repeated j-neighbors (lines 6-8 in Algorithm \ref{alg:countB}) and we identify the common neighbors by iterating over the lower degree j-vertex (line 10 in Algorithm \ref{alg:countB}).  

\begin{algorithm}\caption{countButterflies(G)}\label{alg:countB}
\DontPrintSemicolon
 \KwInput{ 
    $G=\langle V_i\cup V_j , E_{ij} \rangle$,  static graph \\ 
    }
    \KwOutput{$B_G$, The number of butterflies in G}
    
    $Butterflies\gets \emptyset$ \tcp*{An empty hashSet of quadruples}
    $jNeighbors \gets \emptyset$   \tcp*{An empty Set}
     $vi_2s \gets \emptyset$ \tcp*{An empty Set}
    \tcc{loop over $i_1 \in V_i$ if $K_i< K_j$, otherwise loop over $j_1 \in V_j$}
    \For{$i_1 \in V_i$}{   
        $jNeighbors \gets N_{i_1}$ \tcp*{j-neighbors of vertex $i_1$}
        \For{$index1\in [1, size(jNeighbors)]$}{
            $j_1\gets jNeighbors[index1]$\\
            \For{$index2\in [index1+1, size(jNeighbors)]$}{
                $j_2\gets jNeighbors[index2]$\\
                $vi_2s \gets  N_{j_1} \cap  N_{j_2}$  \tcp*{common i-neighbors}
                $Butterflies.add([i_1,j_1,i_2,j_2])$
            }
        }
    }
    $B_G\gets size(Butterflies)$
    
\end{algorithm}

It is important to calculate the exact number of butterflies to make sure that analysis are correct and the identified patterns are reliable. Hence, we adopt an eager computation model where the exact number of butterflies is computed after each edge is added (connecting new/existing vertices) (Algorithm \ref{alg:countB}). We do this in the time period $0$ to $5000$ due to the computational expenses of the computation model. Note that the frequency distribution of edge insertions occurring in time-intervals of variant sizes follows the same shape. This means that the distribution with respect to scaling across time scales is invariant (i.e. self-similar~\cite{wang2002data}). Therefore, we can rely on the analysis on a fraction of the subsequent streaming edges. 

To compare the numbers with that of a random graph (see the definition of network motifs), we just use the corresponding BA graph with the same real timestamp. This enables fair comparison of structural evolution of real-world and synthetic random graphs.

As shown in Figure \ref{fig:cliquesnum},  real-world graphs display rapid temporal evolution of the number of butterflies. To further investigate the growth pattern of butterfly frequency in these graphs, we examine ten polynomial functions of degree one to ten to fit the data points of temporal butterfly frequency evolution  (black lines in Figure \ref{fig:cliquesnum}) and picked the best fitting function (Table \ref{tab:rmser2reals}). The best fitting function satisfies three conditions: (i) it has the lowest RMSE; (ii) it has the highest coefficient of determination ($R^2$); and (iii) it is a non-decreasing function. $RMSE$ quantifies the estimation error, while  $R^2$ quantifies the linear correlation between the estimated fitting function and the data points. Figure \ref{fig:cliquesfitting} illustrates the best fitting function and its estimation errors (residuals) used in calculation of the RMSE. Note that high RMSE values are due to the increasing function giving rise to high residuals. We do not compare the RMSE of different graphs, instead we compare the RMSE of different fitting functions for each graph. Therefore, the absolute value of RMSE is not as important as its relative value for different functions. As shown in Figure \ref{fig:cliquesfitting} all the plots are properly fitted to polynomial functions of degree above 5 (best fitted to 5th, 7th, 9th and 10th degrees). We term this the \textbf{butterfly densification power-law} (following the power-law terminology ~\cite{leskovec2005graphs}): the number of butterflies at time point $t$ (i.e. $B(t)$) follows a power law function of the number of edges at $t$ (i.e. $B(t)\propto f(|E(t)|^\eta$), $\eta>1 $). Moreover, the outstanding frequency of butterflies in the real-world graphs compared to that of random graphs suggests that butterflies are network motifs \emph{across the time line}.

\subsection{Bursty Butterfly Formation} \label{subsec:burstybutterflyformation}
In the previous subsection we observed the densification of butterflies as network motifs. Now, we study how these motifs are formed over time. To this end, we check the distribution of inter-arrival time of a pair of edges forming a butterfly. That is, for any pair of edges $\langle e_1,e_2 \rangle$ with time stamps $\tau_1$ and $\tau_2$ that co-exist in a butterfly, the inter-arrival time is $|\tau_1-\tau_2|$. We adopt a lazy computation model to compute the inter-arrival distribution once after adding $5000$ edges to the graph (i.e. at the time point $t=5000$).

\begin{figure*}[h]
    \centering
    \includegraphics[width=0.15\textwidth]{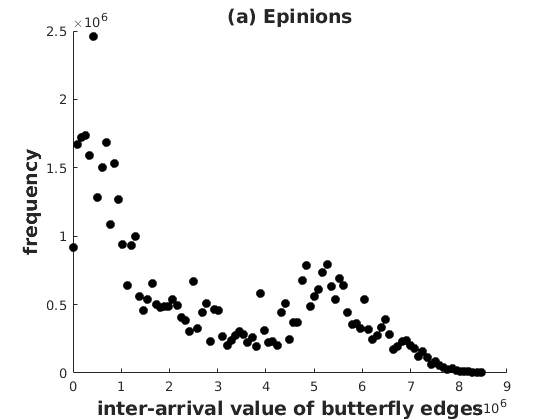} 
    \subfigure{\includegraphics[width=0.15\textwidth]{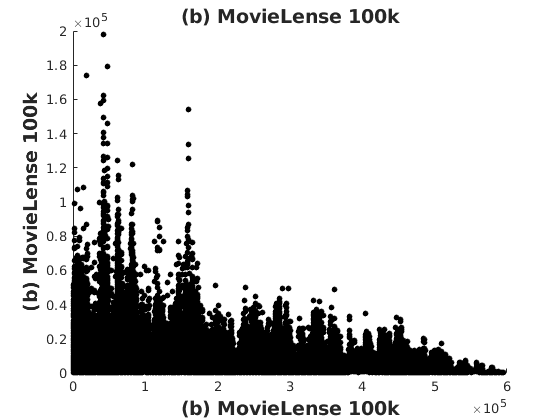}} 
    \subfigure{\includegraphics[width=0.15\textwidth]{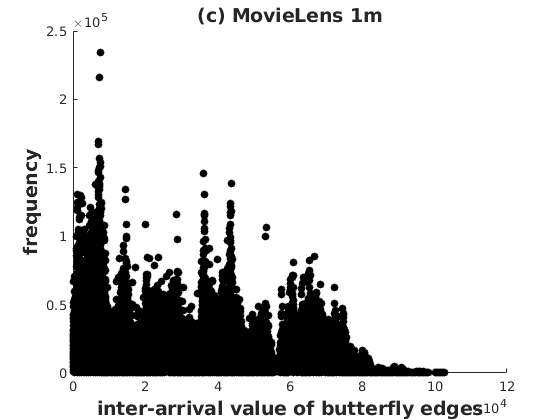}}
    \subfigure{\includegraphics[width=0.15\textwidth]{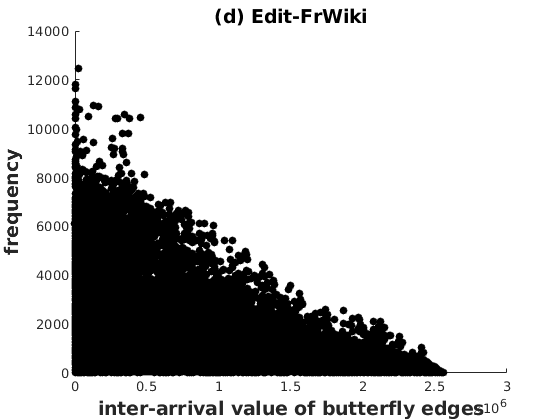}}
    \subfigure{\includegraphics[width=0.15\textwidth]{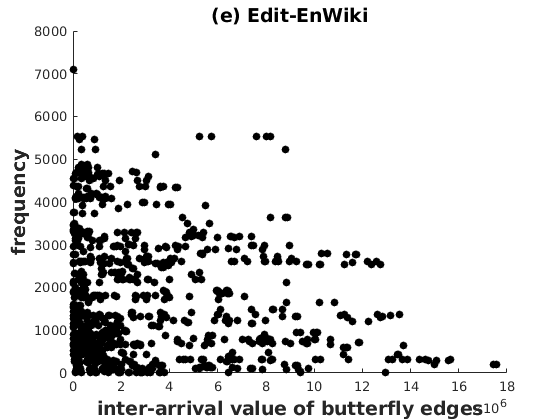}}
    \subfigure{\includegraphics[width=0.15\textwidth]{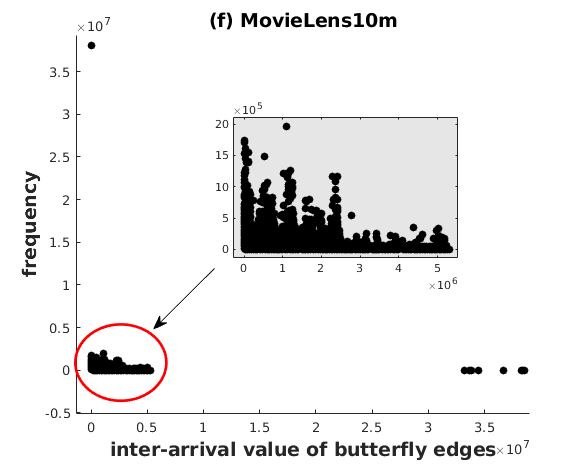}}
    
    \caption{Distribution of inter-arrival time of edges forming butterflies in real-world graphs.}
    \label{fig:interarrivalreal}
\end{figure*}

\begin{figure*}[h]
    \centering
    \subfigure{\includegraphics[width=0.15\textwidth]{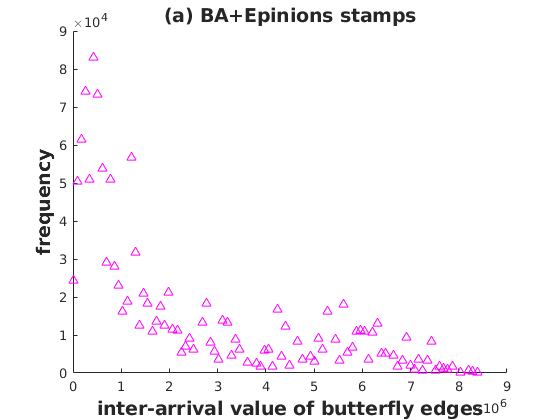}} 
    \subfigure{\includegraphics[width=0.15\textwidth]{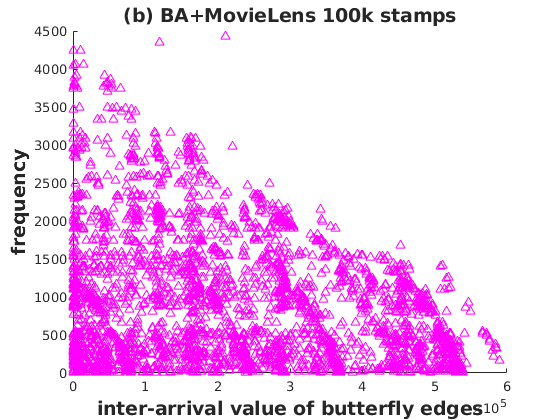}} 
    \subfigure{\includegraphics[width=0.15\textwidth]{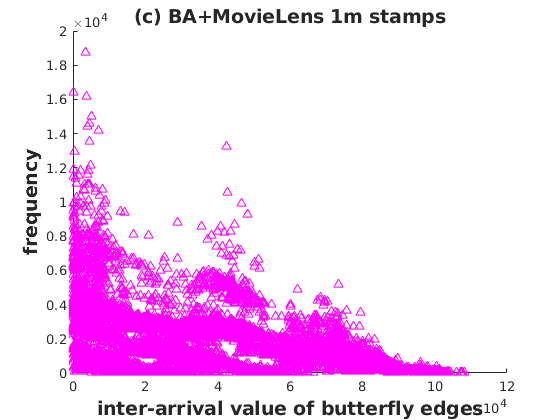}}
    
    \caption{Distribution of inter-arrival time of edges forming butterflies in BA+real stamps graphs.}
    \label{fig:interarrivalBA}
\end{figure*}

As shown in the Figures \ref{fig:interarrivalreal} and \ref{fig:interarrivalBA}, the distribution of inter-arrival values is skewed to the right. The left peaks and the heavy tail of the distribution reveal different patterns. The leftmost peaks highlight that many butterflies are formed by edges with close timestamps. On the other hand, according to  Figure \ref{fig:cliquesnum}, the number of butterflies increase significantly over time. It can be inferred that \textit{butterflies are formed in a bursty fashion}. 

Next, we investigate the vertices that form the butterflies to see (a) whether the bursty butterfly generation is contributed by hubs (i.e. vertices with degree above the average of unique vertex degrees) or normal vertices (Subsection \ref{subsubsec:hubcontribution}), and (b) if hubs are the main contributors, are they young, old, or both? (Subsection \ref{subsubsec:hubagecontribution}).

\subsubsection{Hubs contribution to butterfly emergence}\label{subsubsec:hubcontribution}

We hypothesize that butterflies are contributed by hubs and to test this, we study following items:
\begin{itemize}
    \item The probability of forming butterflies by hubs
    \item The correlation between degree and support of vertices
    \item The connection patterns of hubs
\end{itemize}

\textbf{The probability of forming butterflies by hubs --}
We enumerate butterflies formed at time $t=0$ to $t=5000$ and check the fraction of butterflies formed by zero to four hubs  (Table \ref{tab:hubs}) and the fraction of butterflies formed by zero, one, or two i-/j-hubs (Table \ref{tab:ijhubs}). It is evident that, butterflies mostly include one or, with higher probability, two hubs which are usually i-hubs. 

\begin{table*}[h]\caption{Fraction of butterflies including zero, one, two, three, or four hub(s) at after applying 5000 edge-insertions sgrs.}
    \begin{tabular}{|p{2.9cm}|p{1cm}|p{1cm}|p{1cm}|p{1cm}|p{1cm}|}\hline
        
        \makecell{Fraction} & \makecell{0 hub} & \makecell{1 hub} & \makecell{2 hubs} & \makecell{3 hubs} & \makecell{4 hubs} \\ \hline\hline
        
        \makecell{Epinions\\BA+Epinions stamps} &\makecell{$0.09$\\$0.11$} & \makecell{$0.29$\\$0.44$} &\makecell{$0.55$\\$0.39$}&\makecell{$0.07$\\$0.06$} & \makecell{$0$\\$0$}\\ \hline
        
        \makecell{ML100k\\BA+ML100k stamps} & \makecell{$0.07$\\$0.24$} & \makecell{$0.35$\\$0.28$} &\makecell{$0.48$\\$0.28$}&\makecell{$0.09$\\$0.15$} & \makecell{$0.01$\\$0.05$} \\ \hline
         
        \makecell{ML1m\\BA+ML1m stamps} &\makecell{$0.07$\\$0.01$} & \makecell{$0.38$\\$0.33$} &\makecell{$0.48$\\$0.6$}&\makecell{$0.07$\\$0.06$} & \makecell{$0$\\$0$} \\ \hline
        
        \makecell{ML10m} &\makecell{$0.09$} & \makecell{$0.34$} &\makecell{$0.37$}&\makecell{$0.17$} & \makecell{$0.03$} \\ \hline
        
        \makecell{Edit-Frwiki} &\makecell{$0.08$} & \makecell{$0.29$} &\makecell{$0.53$}&\makecell{$0.1$} & \makecell{$0$} \\ \hline
        
        \makecell{Edit-Enwiki} &\makecell{$0.1$} & \makecell{$0.48$} &\makecell{$0.41$}&\makecell{$0.01$} & \makecell{$0$} \\ \hline
        
    \end{tabular}\label{tab:hubs}
\end{table*}

\begin{table*}[h]\caption{Fraction of butterflies including zero, one, or two i-hub(s) or j-hub(s) at after applying 5000 edge-insertions sgrs.}
    \begin{tabular}{|p{2.9cm}|p{1cm}|p{1cm}|p{1.2cm}|p{1cm}|p{1cm}|p{1.2cm}|}\hline
        
        \makecell{Fraction} & \makecell{0 i-hub} & \makecell{1 i-hub} & \makecell{2 i-hubs} & \makecell{0 j-hub} & \makecell{1 j-hub} & \makecell{2 j-hubs} \\ \hline\hline
        
        \makecell{Epinions\\BA+Epinions stamps} &\makecell{$0.11$\\$0.19$} & \makecell{$0.35$\\$0.56$} & \makecell{$0.54$\\$0.25$} &
                             \makecell{$0.85$\\$0.7$} & \makecell{$0.13$\\$0.25$} &\makecell{$0.02$\\$0.05$} \\ \hline
        
        \makecell{ML100k\\BA+ML100k stamps} &\makecell{$0.10$\\$0.48$} & \makecell{$0.46$\\$0.39$} & \makecell{$0.44$\\$0.13$} &
                           \makecell{$0.75$\\$0.37$} & \makecell{$0.21$\\$0.41$} &\makecell{$0.04$\\$0.23$} \\ \hline
         
        \makecell{ML1m\\BA+ML1m stamps} & \makecell{$0.1$\\$0.01$} & \makecell{$0.43$\\$0.36$} & \makecell{$0.47$\\$0.63$} &
                          \makecell{$0.84$\\$0.9$} & \makecell{$0.15$\\$0.1$} &\makecell{$0.01$\\$0$} \\ \hline
                          
        \makecell{ML10m} & \makecell{$0.25$} & \makecell{$0.54$} & \makecell{$0.21$} &
                          \makecell{$0.47$} & \makecell{$0.33$} &\makecell{$0.2$} \\ \hline
                          
        \makecell{Edit-Frwiki} & \makecell{$0.11$} & \makecell{$0.35$} & \makecell{$0.54$} &
                          \makecell{$0.81$} & \makecell{$0.18$} &\makecell{$0.01$} \\ \hline
                         
        \makecell{Edit-Enwiki} & \makecell{$0.1$} & \makecell{$0.5$} & \makecell{$0.4$} &
                          \makecell{$0.97$} & \makecell{$0.03$} &\makecell{$0$} \\ \hline
        
    \end{tabular}\label{tab:ijhubs}
\end{table*}

\textbf{The correlation between degree and support of vertices --} We study the correlation between degree $deg(i)$ and butterfly support $B_i$, where $B_i$ is defined as the number of butterflies incident to each vertex. For computing the $B_i$, we extend \emph{countButterflies(G)} (Algorithm \ref{alg:countB}) to obtain \emph{ButterflySupport(G)} (Algorithm \ref{alg:bSupport}).

\begin{algorithm}\caption{ButterflySupport(G)}\label{alg:bSupport}
\DontPrintSemicolon
 \KwInput{ 
    $G=\langle V_i\cup V_j , E_{ij} \rangle$,  static graph \\ 
    }
    \KwOutput{$vSupport$, butterfly support of vertices}
    $vSupport\gets \emptyset$ \tcp*{An empty hashMap}
    $Butterflies\gets \emptyset$ \tcp*{An empty hashSet of quadruples}
    $jNeighbors \gets \emptyset$   \tcp*{An empty Set}
     $vi2s \gets \emptyset$ \tcp*{An empty Set}
    \tcc{loop over $i_1 \in V_i$ if $K_i<K_j$, otherwise loop over $j_1 \in V_j$}
    \For{$v_{i1} \in V_i$}{   
        $jNeighbors \gets N_{i_1}$ \tcp*{j-neighbors of vertex $i_1$}
        \For{$index1\in [1, size(jNeighbors)]$}{
            $j_1\gets jNeighbors[index1]$\\
            \For{$index2\in [index1+1, size(jNeighbors)]$}{
                $j_2\gets jNeighbors[index2]$\\
                $vi_2s \gets  N_{j_1} \cap  N_{j_2}$  \tcp*{common i-neighbors}
                \For{$i_2\in vi_2s$}{
                    \If{$[i_1,j_1,i_2,j_2] \not \in Butterflies$}{
                        $Butterflies.add([i_1, j_1,i_2,j_2])$\\
                         $vSupport.put(i_1, vSupport.get(i_1)+1)$\\
                         $vSupport.put(j_1, vSupport.get(j_1)+1)$\\
                         $vSupport.put(i_2, vSupport.get(i_2)+1)$\\
                         $vSupport.put(j_2, vSupport.get(j_2)+1)$
                    }
                }
                
            }
        }
    }
    
\end{algorithm}

\begin{figure*}[t]
  \centering
  \includegraphics[width=\linewidth]{ 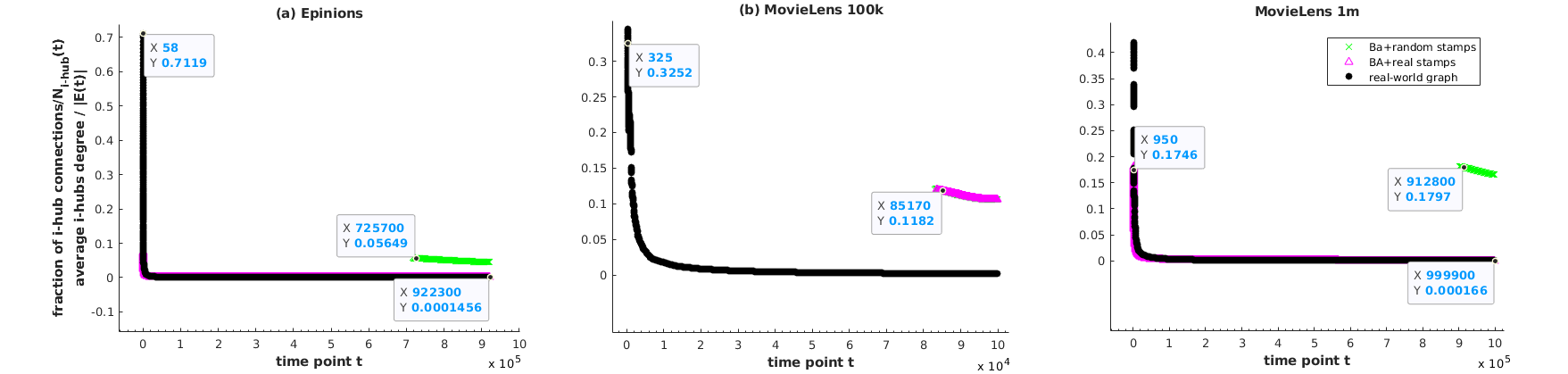}
  \caption{[Best viewed in colored.] Temporal evolution of the normalized fraction of i-hub connection (average i-hub degree).}
  \Description{Temporal evolution of the normalized fraction of i-hub connection (average i-hub degree).}\label{fig:nihubdegree}
\end{figure*}

\begin{figure*}[t]
  \centering
  \includegraphics[width=\linewidth]{ 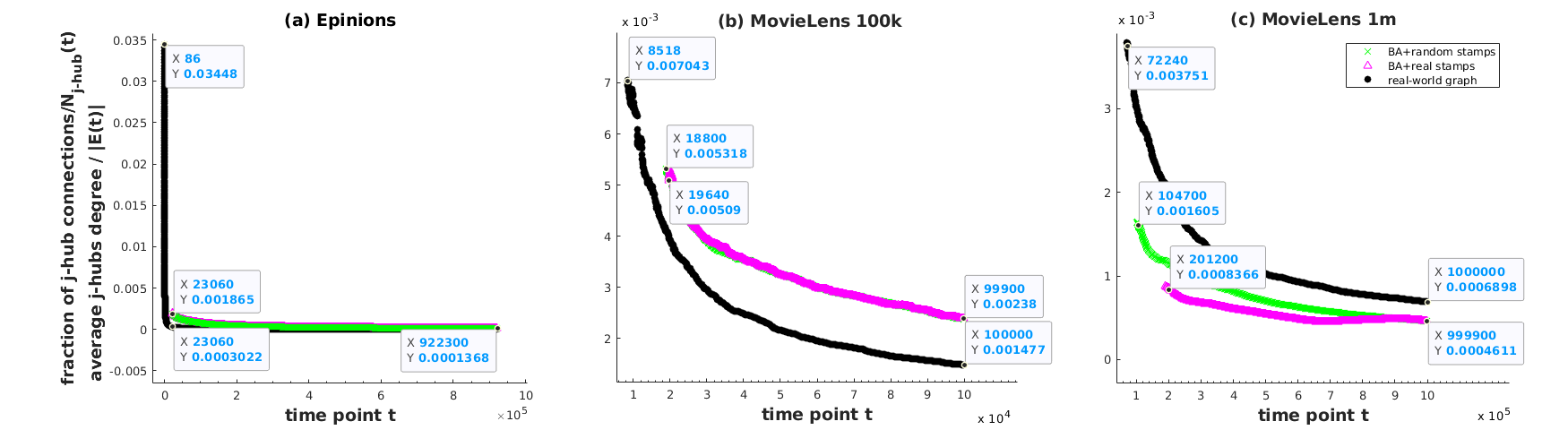}
  \caption{[Best viewed in colored.] Temporal evolution of the normalized fraction of j-hub connection (average j-hub degree).}
  \Description{Temporal evolution of the normalized fraction of j-hub connection (average j-hub degree).}\label{fig:njhubdegree}
\end{figure*}

We refer to the correlation computed over the i-vertices and j-vertices as i-correlation (equation \ref{eq:cor}) and j-correlation (similarly computed), respectively. We use the Pearson correlation coefficient at time point $t=5000$ for all the $N=|V_i|$ or $|V_j|$ seen i-(j-)vertices in the graph snapshot. It should be noted that a positive correlation coefficient means $deg(i)$ and $B_i$ increase or decrease together, while a negative correlation means increasing one quantity implies decreasing the other one. Values close to $1$ demonstrate strong correlation.

{\footnotesize
\begin{equation}\label{eq:cor}
   \frac{N\sum_{i\in V_i} deg(i)B_i-\sum_{i\in V_i} deg(i)\sum_{i\in V_i} B_i}{\sqrt{[N\sum_{i\in V_i}deg(i)^2-(\sum_{i\in V_i} deg(i))^2][N\sum_{i\in V_i} B_i^2-(\sum_{i\in V_i} B_i)^2]}}
\end{equation}
}

As provided in Table \ref{tab:cor}, there is a strong positive correlation between the degree and the support of vertices in real-world graphs. i.e. the higher the degree, the higher the butterfly support and vice versa. This highlights the impact of hubs in the emergence of enormous number of  butterflies in the real-world graphs.

\textbf{The connection patterns of hubs --} We quantify the extent to which i-(j-)hubs  dominate the edges over time by means of two equivalent measures: (i) the fraction of i-(j-)hub connections (denoted by $\frac{\sum_{i=1}^{N_{hub}(t)} (deg(hub_i))}{E(t)}$) normalized over the number of hubs  at time point $t$ (denoted by $N_{hub}(t)$), and (ii) the average degree of i-(j-)hubs (denoted by $\frac{\sum_{i=1}^{N_{hub}(t)} (deg(hub_i))}{N_{hub}(t)}$) normalized over the total number of edges at time point $t$ (denoted by $|E(t)|$).  Both quantities are calculated by $\frac{\sum_{i=1}^{N_{hub}(t)} (deg(hub_i))}{E(t)*N_{hub}(t)}$ at any given time point $t$. We adopt an eager computation model to compute this value when a new edge is added. The time point $t$ can be interpreted as the number of edges added to the graph since the initial time point $t=0$. 

As shown in the Figures \ref{fig:nihubdegree} and \ref{fig:njhubdegree}, while the number of edges added to the graph increases, the normalized fraction of i-(j-)hub connections (average degree of i-(j-)hubs) decreases over time in both real-world and BA graphs. 


Figures \ref{fig:nihubdegree} and \ref{fig:njhubdegree} also reveal that (a) unlike real-world graphs, i- and j-hubs emerge later in the BA graphs (originated by the BA's preferential attachment rule), and (b) the average degree of hubs in early time points is higher in real-world graphs than that of BA graphs. This is due to the bursty characteristic of graph stream (i.e. arrival of a bunch of edges with same time-stamp and same i- or j- vertex). In summary,  early in the stream, the BA graphs have lower number of hubs with lower degrees compared to the real-world graphs. Figure \ref{fig:cliquesnum} also illustrates the low number of butterflies in BA graphs earlier in the stream when there are no hubs in these graphs or the average hub degree is low. On the other hand, real world graphs have high number of hub connections and high number of butterflies. These observations again verify the contribution of hubs to the emergence of butterflies; When the number of hubs is low and the average degree of hubs is also low, the number of butterflies is also low (as seen in BA graphs). Also, when the number of hubs and their average degree is high, the number of butterflies is high (as seen in real-world graphs). 

\begin{table}[h]\caption{Correlation between the butterfly support and the degree of i-vertices (i-correlation) and j-vertices (j-correlation).}
    \begin{tabular}{|p{3.6cm}|p{2cm}|p{2cm}|}\hline
          \makecell{} & \makecell{i-correlation} & \makecell{j-correlation} \\ \hline\hline
        \makecell{Epinions\\BA+Epinions stamps} & \makecell{$0.86$\\$0.56$} & \makecell{$0.73$\\$0.72$} \\ \hline
        
        \makecell{MovieLens1m\\BA+MovieLens1m stamps} & \makecell{$0.98$\\$0.92$} & \makecell{$0.92$\\$0.89$} \\ \hline
        
        \makecell{MovieLens100k\\BA+MovieLens100k stamps} & \makecell{$0.95$\\$0.63$} & \makecell{$0.93$\\$0.88$} \\ \hline
        
        \makecell{MovieLens10m} & \makecell{$0.83$} & \makecell{$0.93$} \\ \hline
        
        \makecell{Edit-Frwiki} & \makecell{$0.91$} & \makecell{$0.85$} \\ \hline
        
        \makecell{Edit-Enwiki} & \makecell{$0.89$} & \makecell{$0.62$} \\ \hline
        
    \end{tabular}\label{tab:cor}
\end{table}

\subsubsection{Contribution of hub age  to butterfly emergence}\label{subsubsec:hubagecontribution}
We hypothesize that butterflies are contributed by old hubs and to test this, we study following items:
\begin{itemize}
    \item The evolution of young and old hubs
    \item The inter-arrival of butterfly edges
\end{itemize}

\textbf{The evolution of young and old hubs --} To further investigate how the age of hubs contribute to the emergence of butterflies, we first check the evolution of young and old hubs. 
As mentioned before, we define the i-(j-)hub as any i-(j-)vertex whose degree is above the average of unique i-(j-)degrees in the graph. Accordingly, young(old) hubs are defined as any hub whose timestamp is in the last(first) $25\%$ of ordered set of already seen timestamps. The vertex timestamps are determined as the timestamp of the sgr by which the vertex has been added to the graph for the first time. For instance, if a vertex $i$ arrives via the inserting edges $e_1=\langle i,j_1 \rangle$ and  $e_2=\langle i,j_2 \rangle$, the time stamp of vertex $i$ is set to the timestamp of $e_1$, which has  arrived before $e_2$ (assuming subscript identify order of arrival). We adopt a lazy  computation model to compute the number of young/old i-(j-)hubs using a time-based landmark window where the computation is done over a growing graph generated by the edges in the append-only window following each expansion. Window expansion lengths are set to cover $0.1*N_t$ unique timestamps in each window in Epinions, ML100k, ML1m, and ML10m. In the larger graph streams Edit-EnWiki and Edit-FrWiki, this value is equal to $0.01*N_t$. $N_t$ is the number of unique timestamps in data stream.

As shown in the Figure \ref{fig:younghubs}, young i-hubs and/or j-hubs are formed in the real-world graphs over time, while in BA graphs with random timestamps the number of young i-(j-)hubs is always zero. In BA graphs with real timestamps, the timestamp of hubs are shuffled, therefore the old hubs are identified as young hubs that should be ignored. Figure \ref{fig:oldhubs} demonstrates that old hubs increase over time in BA graphs, which is not always the case for real-world graphs. Moreover the number of old hubs in real world graphs is less than that of BA graphs. 

\begin{figure*}[h]
    \centering
    \subfigure{\includegraphics[width=0.14\textwidth]{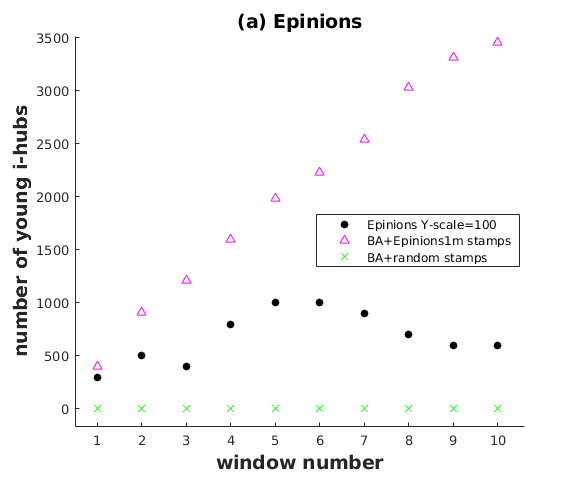}} 
    \subfigure{\includegraphics[width=0.14\textwidth]{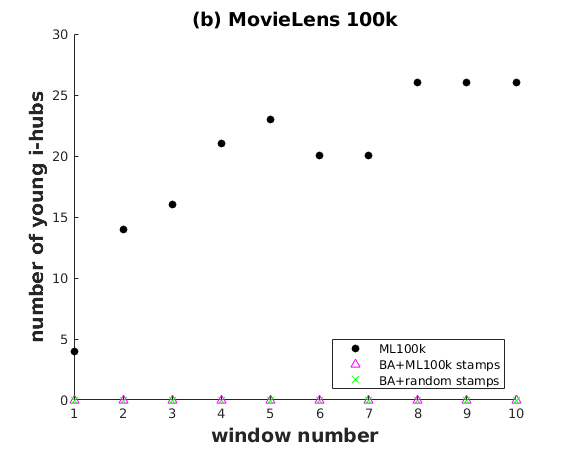}} 
    \subfigure{\includegraphics[width=0.14\textwidth]{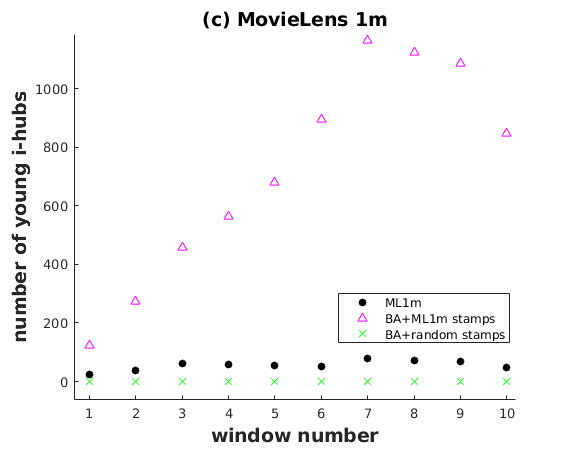}}
    \subfigure{\includegraphics[width=0.14\textwidth]{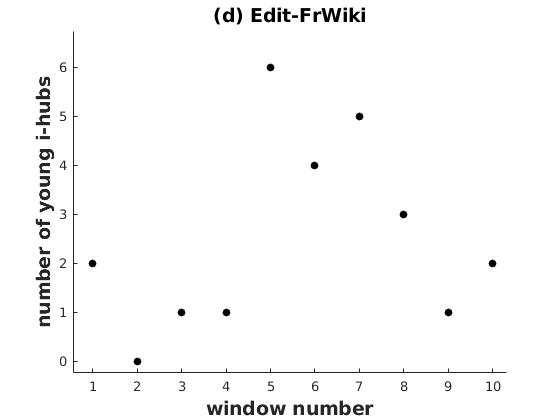}} 
    \subfigure{\includegraphics[width=0.14\textwidth]{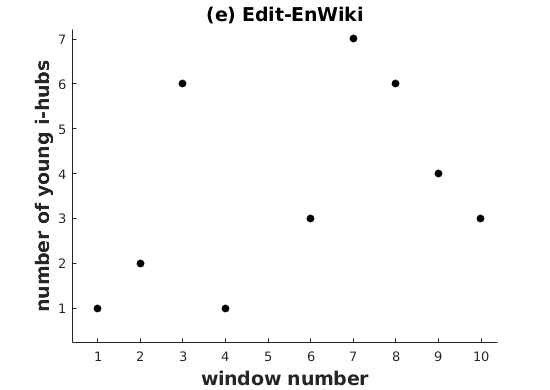}} 
    \includegraphics[width=0.14\textwidth]{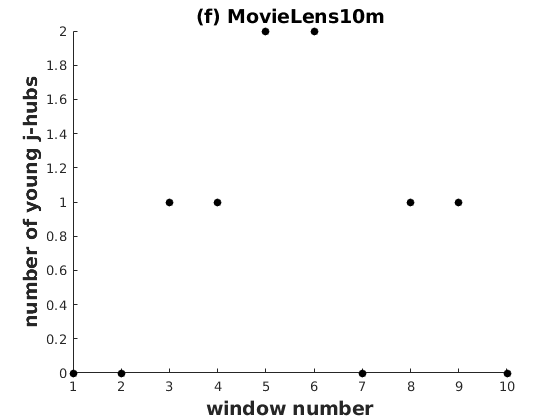}
    
    \subfigure{\includegraphics[width=0.14\textwidth]{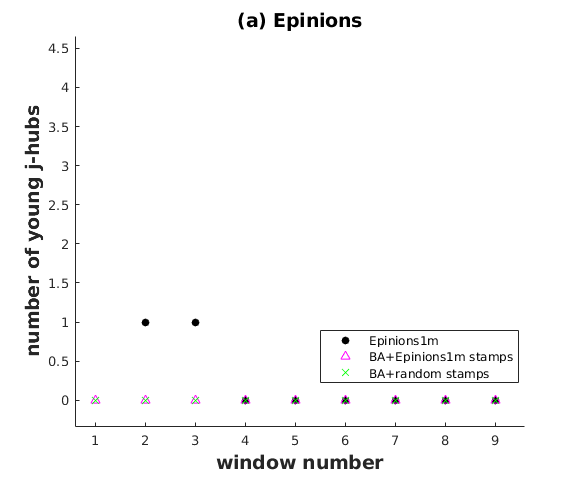}} 
    \subfigure{\includegraphics[width=0.14\textwidth]{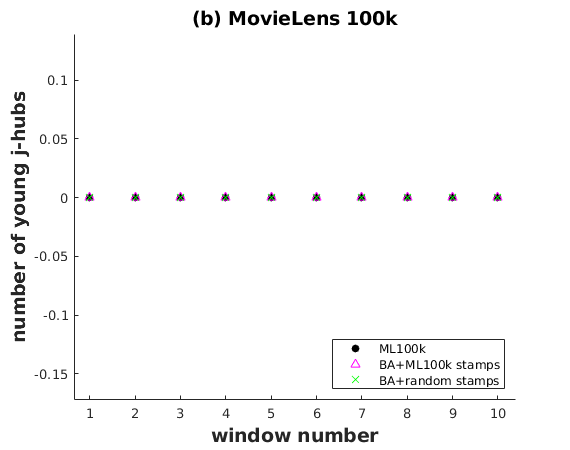}} 
    \subfigure{\includegraphics[width=0.14\textwidth]{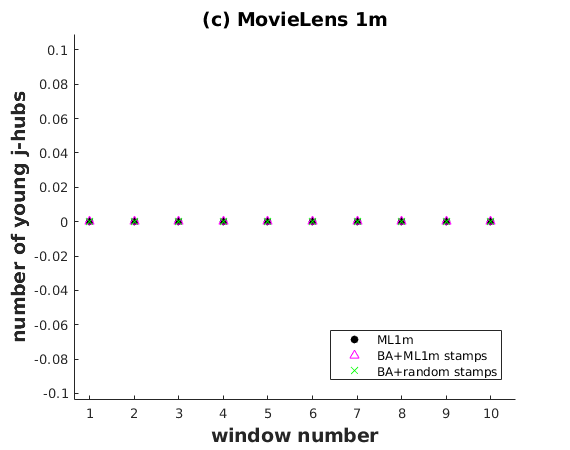}}
    \subfigure{\includegraphics[width=0.14\textwidth]{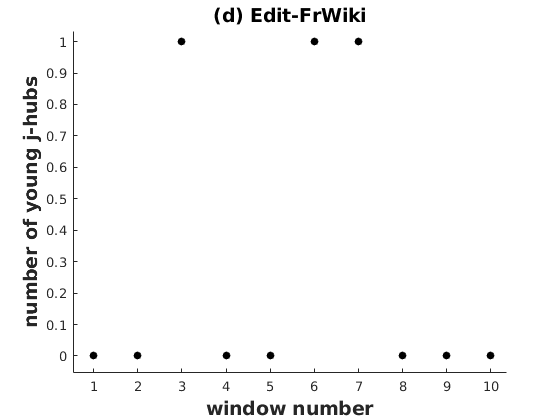}} 
    \subfigure{\includegraphics[width=0.14\textwidth]{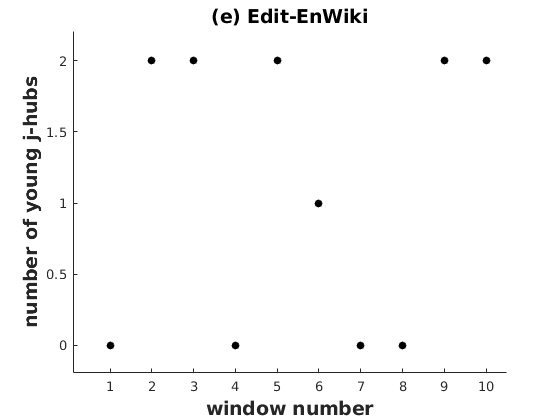}} 
    \subfigure{\includegraphics[width=0.14\textwidth]{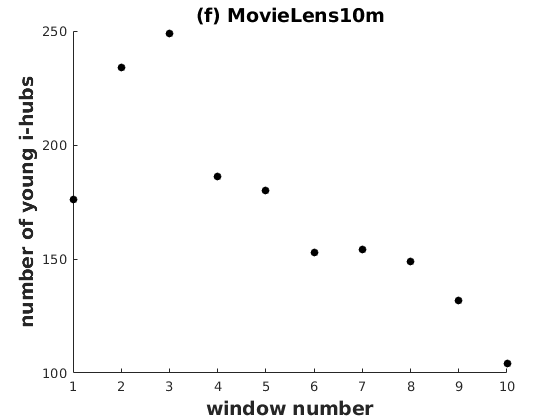}}
    
    \caption{The number of young (top) i-hubs and (bottom) j-hubs after arrival of each batch of edge insertion sgrs.}\label{fig:younghubs}
\end{figure*}


\begin{figure*}[h]
    \centering
    \subfigure{\includegraphics[width=0.14\textwidth]{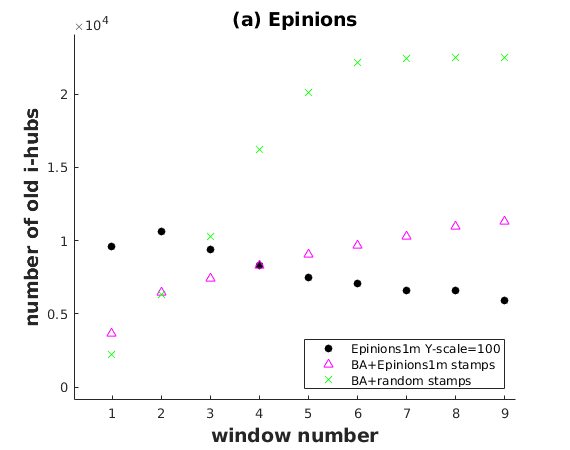}} 
    \subfigure{\includegraphics[width=0.14\textwidth]{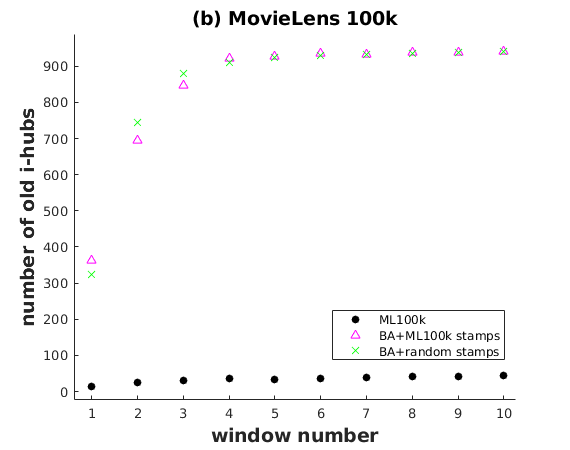}} 
    \subfigure{\includegraphics[width=0.14\textwidth]{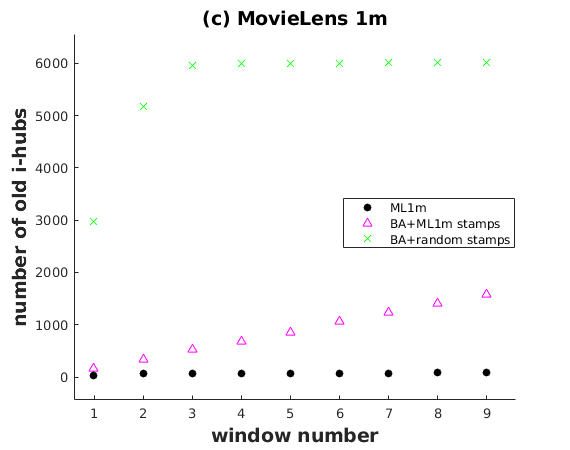}}
    \subfigure{\includegraphics[width=0.14\textwidth]{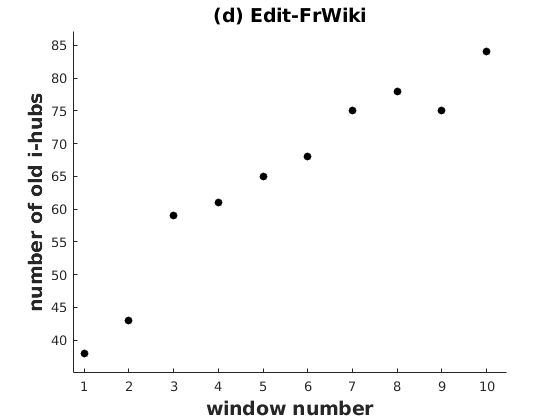}} 
    \subfigure{\includegraphics[width=0.14\textwidth]{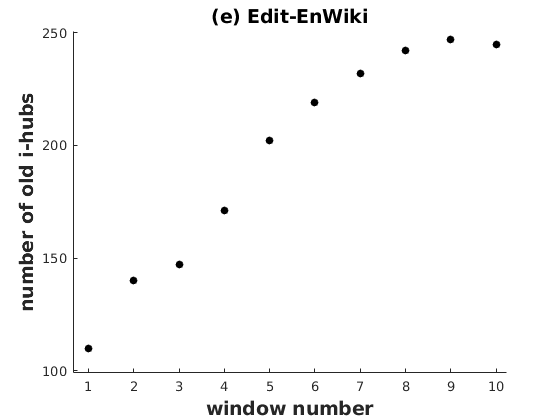}} 
    \subfigure{\includegraphics[width=0.14\textwidth]{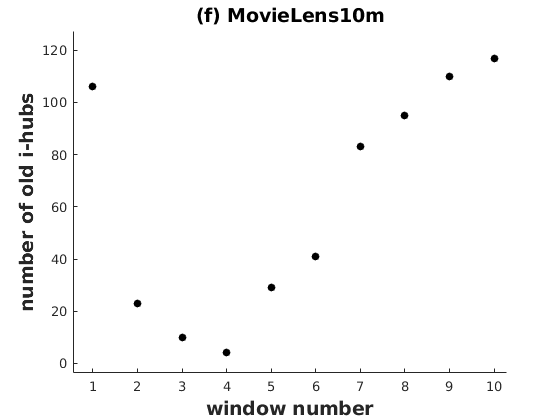}}
    
    \subfigure{\includegraphics[width=0.14\textwidth]{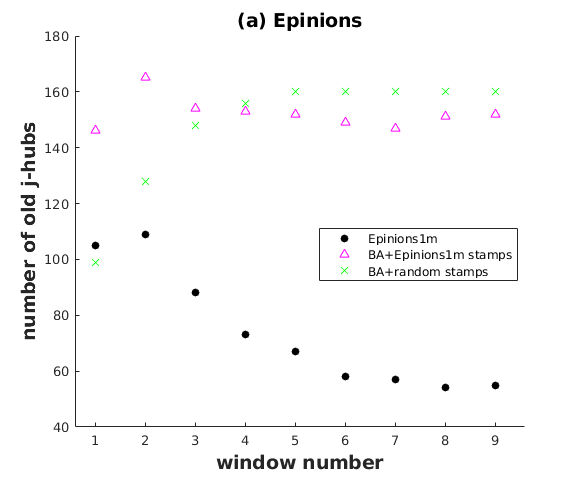}} 
    \subfigure{\includegraphics[width=0.14\textwidth]{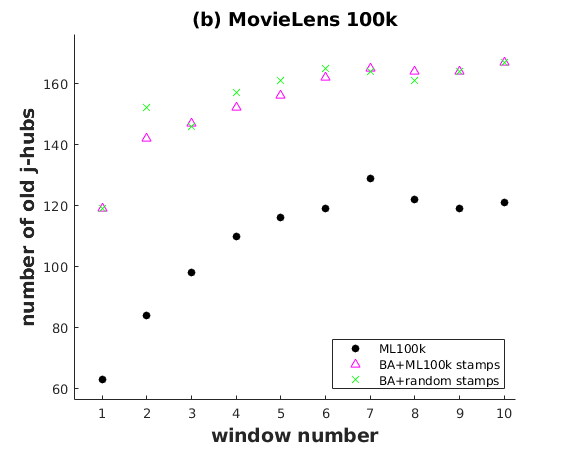}} 
    \subfigure{\includegraphics[width=0.14\textwidth]{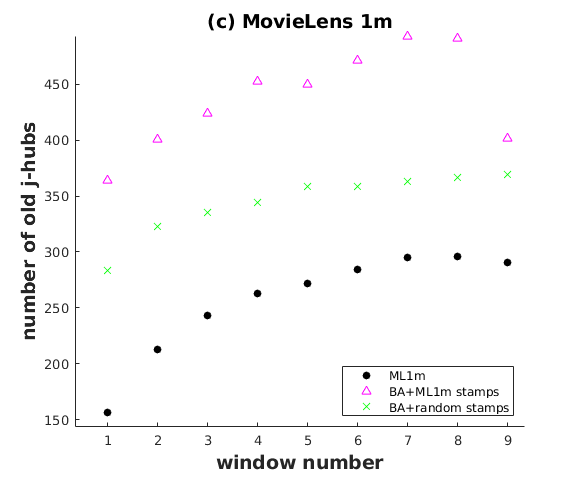}}
    \subfigure{\includegraphics[width=0.14\textwidth]{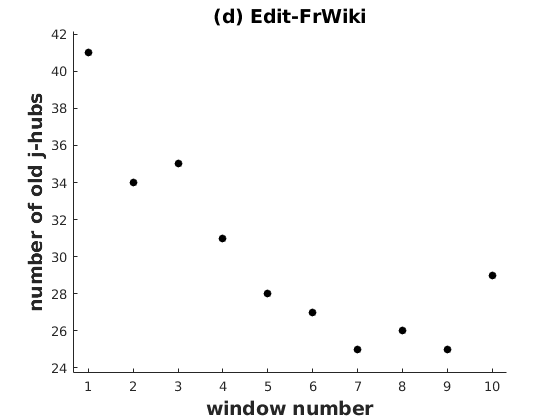}} 
    \subfigure{\includegraphics[width=0.14\textwidth]{oldihubs_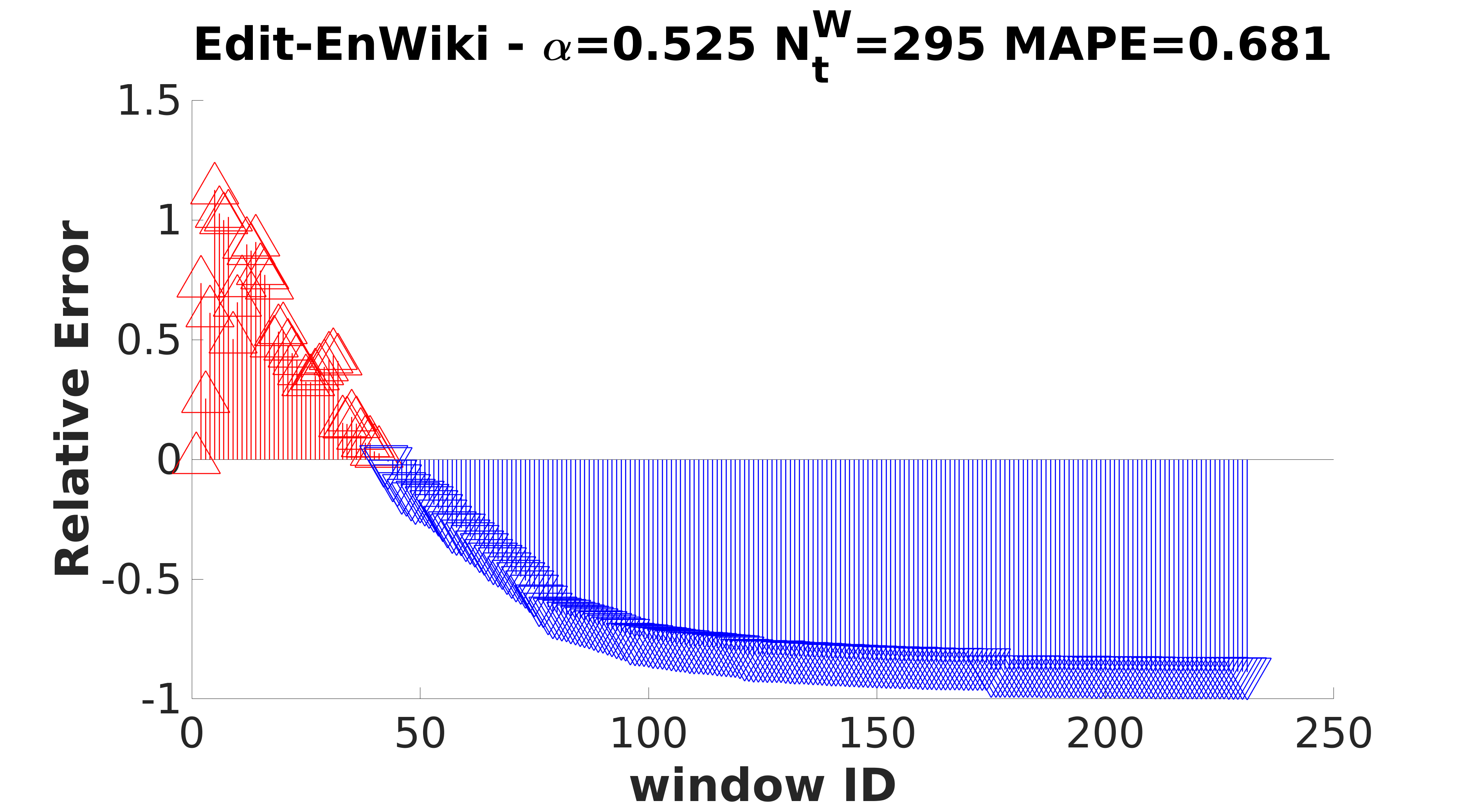}} 
    \subfigure{\includegraphics[width=0.14\textwidth]{oldihubs_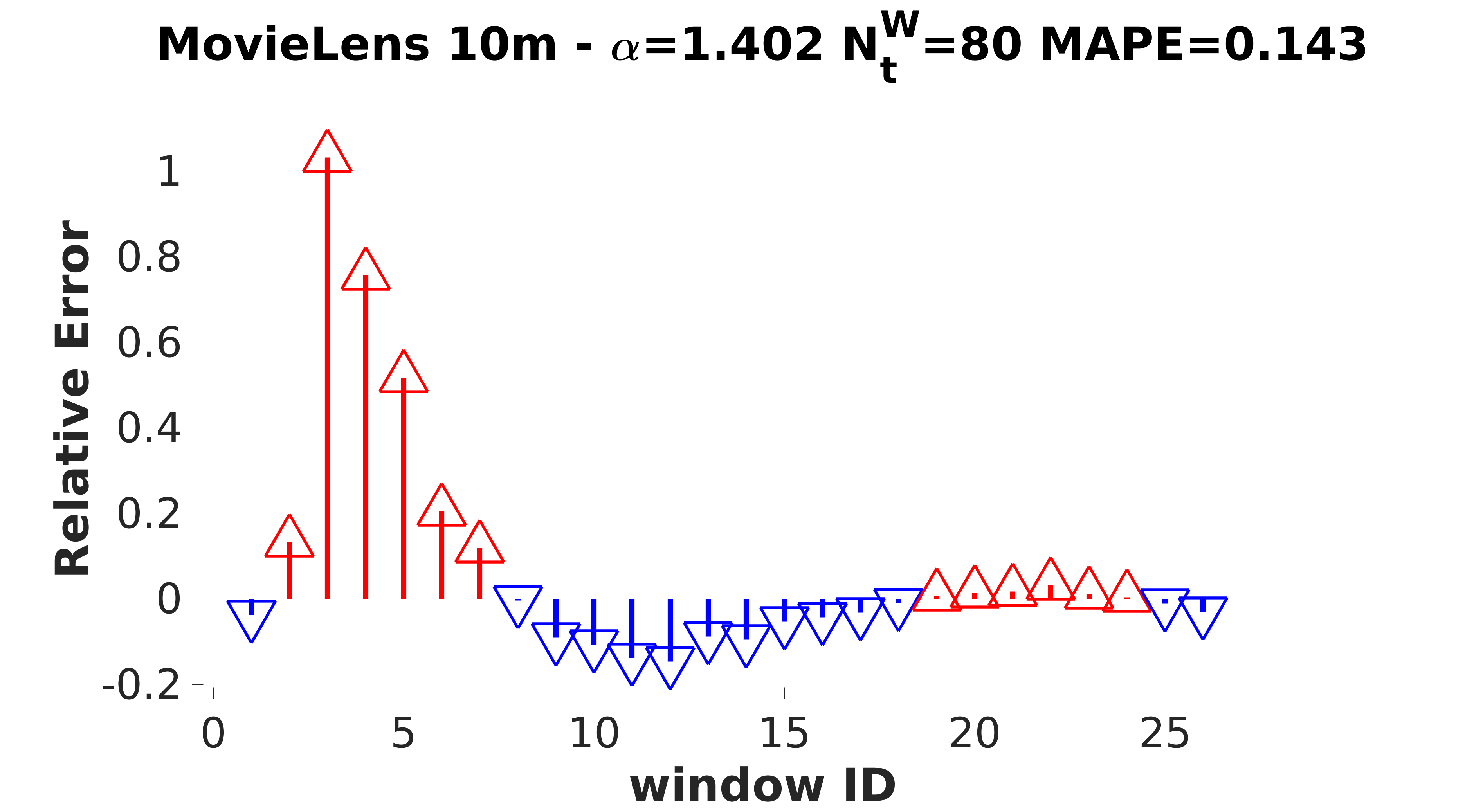}}
    
    \caption{The number of old (top) i-hubs and (bottom) j-hubs after arrival of each batch of edge insertion sgrs.}\label{fig:oldhubs}
\end{figure*}

\textbf{The inter-arrival of butterfly edges --}
Finally, we recheck the heavy tail of the inter-arrival distribution which is over-represented in BA graphs (Figure \ref{fig:interarrivalBA}). The heavy tail is related to the butterfly edges with high inter-arrival times. These highly frequent butterfly edges with high inter-arrivals reflect the connection between the young vertices and old vertices. We hypothesize that young vertices are ordinary vertices and old ones are hubs and we prove it since (a) we proved in the previous subsection that hubs are main contributors to butterfly emergence; and (b) the hubs forming the butterflies cannot be young hubs as BA graphs would be contradiction; BA graphs do not have young hubs (Figure \ref{fig:oldhubs}), while they have many butterfly edges with high inter-arrival(Figure \ref{fig:cliquesnum}), so butterflies cannot originate from young hubs. Therefore, old hubs signify the bursty butterfly emergence. Young hubs can exist, but they are not the hubs dominating the butterflies.

\subsection{Discussion}
In this section, we summarize our findings in this study of the emergence of butterflies in streaming graphs.  We observed that butterflies are network patterns across the time line of sgr arrivals since the number of butterflies increases significantly over time in real-world streaming bipartite graphs, and at each time point the number of butterfly occurrences in real-world graphs are significantly higher than random graphs. We formulated the emergence of butterfly interconnections as the \emph{butterfly densification power law}, stating that the number of butterflies at any time point $t$ is a power law function of the size of stream prefix seen until $t$.

In terms of \emph{how} these enormous number of butterflies emerge over time, our
 studies reveal the contribution of hubs in the streaming graphs. Further  investigation of the impact of hubs in terms of their age reveal that the older hubs contribute more to the densification of butterflies. 

An efficient streaming algorithm for butterfly counting can only deal with a subset of  the stream at any given point in time. Also, a precise streaming algorithm demands taking into account all existing butterflies regardless of how long they take to form and how much memory is available. The statistical analysis uncover the temporal organizing principles of butterflies that impact the identification of any potential butterfly that should be counted by the algorithm. Specifically, our study reveal the dominant contribution of old hubs with young neighbours on shaping butterfly structures over time. That is, a butterfly takes a long time to form as it takes a while before newly added vertices get connected to old hubs and the butterfly structure completes. In  window-based algorithms such as ours, care is required in windowing as butterflies may be split across windows, affecting the butterfly count -- it is important to take into account the butterflies that may fall between windows. Moreover, when counting the number of multiple-window-spanning butterflies, it is important to take advantage of the butterfly densification power law that  quantifies the butterfly count with respect to the number of edges seen so far. The total number of received edges is easy to track in streaming graphs. Analysis of real-world graph streams as we have done enabled us to design a data-driven butterfly counting algorithm discussed in next section. 
\section{ \texorpdfstring{\MakeLowercase{s}G\MakeLowercase{rapp}}{sGrapp}} \label{sec:approx}
The analysis results presented in the previous section, in particular the contribution of old hubs in bursty butterfly densification (the heavy tail of  the distribution of inter-arrival values in Figure \ref{fig:interarrivalreal}), provide insights to butterfly counting in streaming graphs. In view of these, the precise problem definition reads as  follows: \emph{Given a sequence of streaming graph records ordered by their timestamps, the goal is to compute  the total number of butterflies in  emerging graph $G$ at  time point $t$ -- denoted as $B(t)$.} In other words, the count is over the snapshot corresponding to the prefix of the stream seen so far. Computing $B(t)$ over a streaming graph is not feasible, since the stream is unbounded. It is known that without knowing the size of the streaming input data, it is not possible to determine the memory required for processing the data~\cite{arasu04}, and unless there is unbounded memory, it is not possible to compute exact answers for this data stream problem~\cite{babcock2002}. 
Butterfly counting is an example of streaming problems that are provably intractable if the available space is sub-linear in the number of stream elements~\cite{mcgregor14}. Windowing addresses this fundamental problem by providing an approximate result. However, as data enters and leaves the window as the graph emerges, the result is approximate. 
Approximation has been recognized as an important method for processing high speed data streams, and windows are known as a natural approximation method over data streams~\cite{babcock2002}.

Consequently, in this section we develop an approximate butterfly counting algorithm called sGrapp that uses windowing. The algorithm uses tumbling windows in order to avoid double counting of repeated butterflies. As defined in Section \ref{sec:overview}, tumbling windows do not overlap when windows move, thus avoiding the double-counting problem.
We  adopt a lazy time-based tumbling window model to compute the number of butterflies introduced by each window of disjoint edge insertions, $W_k$, at the end time of the window denoted by $B^{W_k}$, and increment the cumulative value accordingly: $B(t=W_k^e)=B(t=W_{k-1}^e)+B^{W_k}$. This processing is incremental. An issue that has to be addressed is that there may exist some butterflies that are formed by the edges with large inter-arrival times (heavy and long tail in Figure \ref{fig:interarrivalreal}). These butterflies are not captured within one window (unless it is sufficiently large) and we refer to these as \emph{inter-window butterflies}. However, setting the window length to a big value to cover the inter-window butterflies implies a high computational footprint in terms of memory and time. This conflicts with the goal of using a windowed approach to  lower this footprint by performing incremental processing over subsets of sgrs. 
sGrapp addresses this issue by not requiring lengthy windows but using tumbling windows with adaptive lengths. 

sGrapp estimates the number of butterflies from the beginning of the first window $t=W_0^b$ until the end of $k$th window denoted as $\hat{B}(t=W_k^e)=\hat{B_k}$ by counting the exact number of butterflies in the graph corresponding to the current window $W_k$ as $B_G^{W_k}$ and approximating the number of inter-window butterflies ($\hat{B}^{interW}$). The estimated cumulative value would be $\hat{B_k}=\hat{B}_{k-1}+B_G^{W_k}$+$\delta(k\neq0)\hat{B}^{interW}$, where the function $\delta(\cdot)$ returns 1 for true input and 0, otherwise. Note that the first window $W_0$ has no inter-window butterflies and hence the corresponding term would become zero by means of the delta function. In the following, we introduce our adaptive window framework to perform the butterfly approximation (Subsection \ref{subsec:adaptivewin}). Next, we explain how sGrapp approximates the $\hat{B}^{interW}$ and consequently $\hat{B_k}$  (Subsection \ref{subsec:approxalg}). Afterwards,  we discuss optimizations to sGrapp (Subsection \ref{subsec:optimization}). We end this section by analyzing the computational complexity and error bounds of sGrapp (Subsection \ref{subsec:bounds}).

\subsection{Adaptive time-based sliding windows}\label{subsec:adaptivewin}
A main challenge with time-based windows is \textit{how to set the length of windows}? A common approach in stream processing is setting the length of a window using a predetermined value $L$ ($|W_i|=L$, $\forall i$). However, different graph streams have different temporal distributions (frequency distribution of sgr timestamps -- Figure \ref{fig:temporaldist}) and the number of arrived sgrs is not uniform across all time intervals. Therefore, this approach would result in windows of sgrs that cover differing numbers of timestamps, which imposes unbalanced loads on the processing algorithms, particularly in the case of sgr arrivals with bursty characteristics and non-uniform temporal distribution.


\begin{figure*}[h]
    \centering
    \subfigure{\includegraphics[width=0.26\textwidth, height=0.11\textheight]{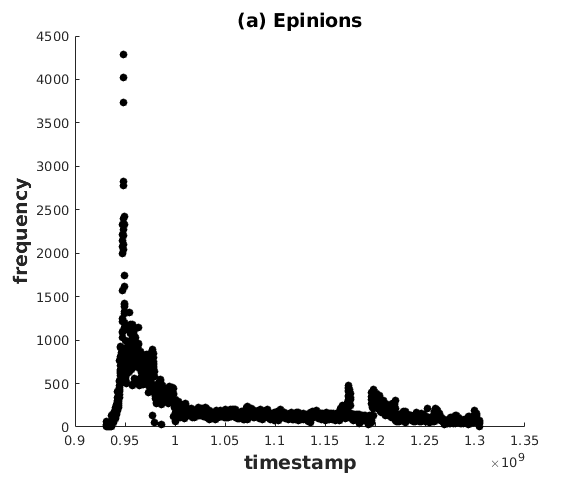}} 
    \subfigure{\includegraphics[width=0.26\textwidth, height=0.11\textheight]{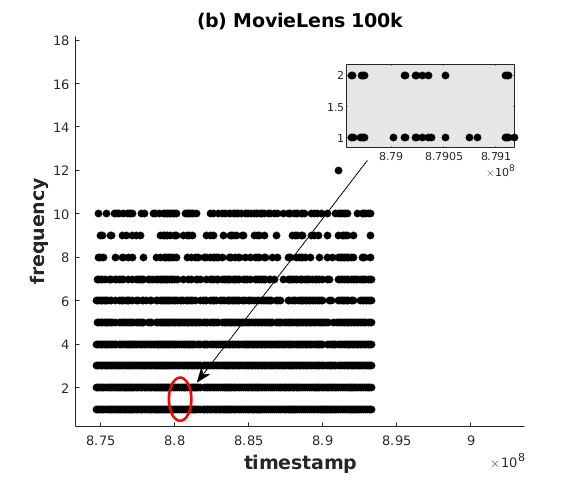}} 
    \subfigure{\includegraphics[width=0.26\textwidth, height=0.11\textheight]{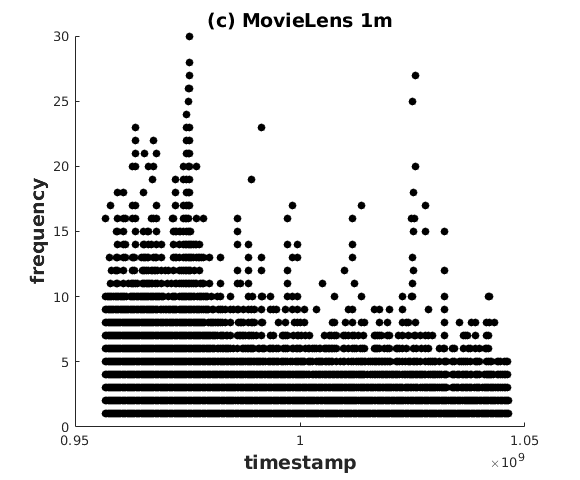}}
    \subfigure{\includegraphics[width=0.26\textwidth, height=0.11\textheight]{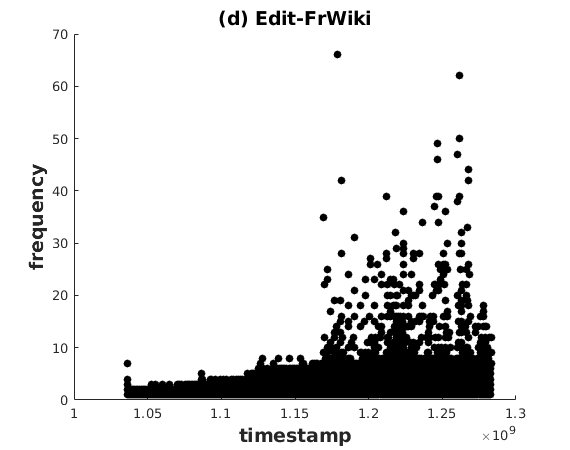}}
    \subfigure{\includegraphics[width=0.26\textwidth, height=0.11\textheight]{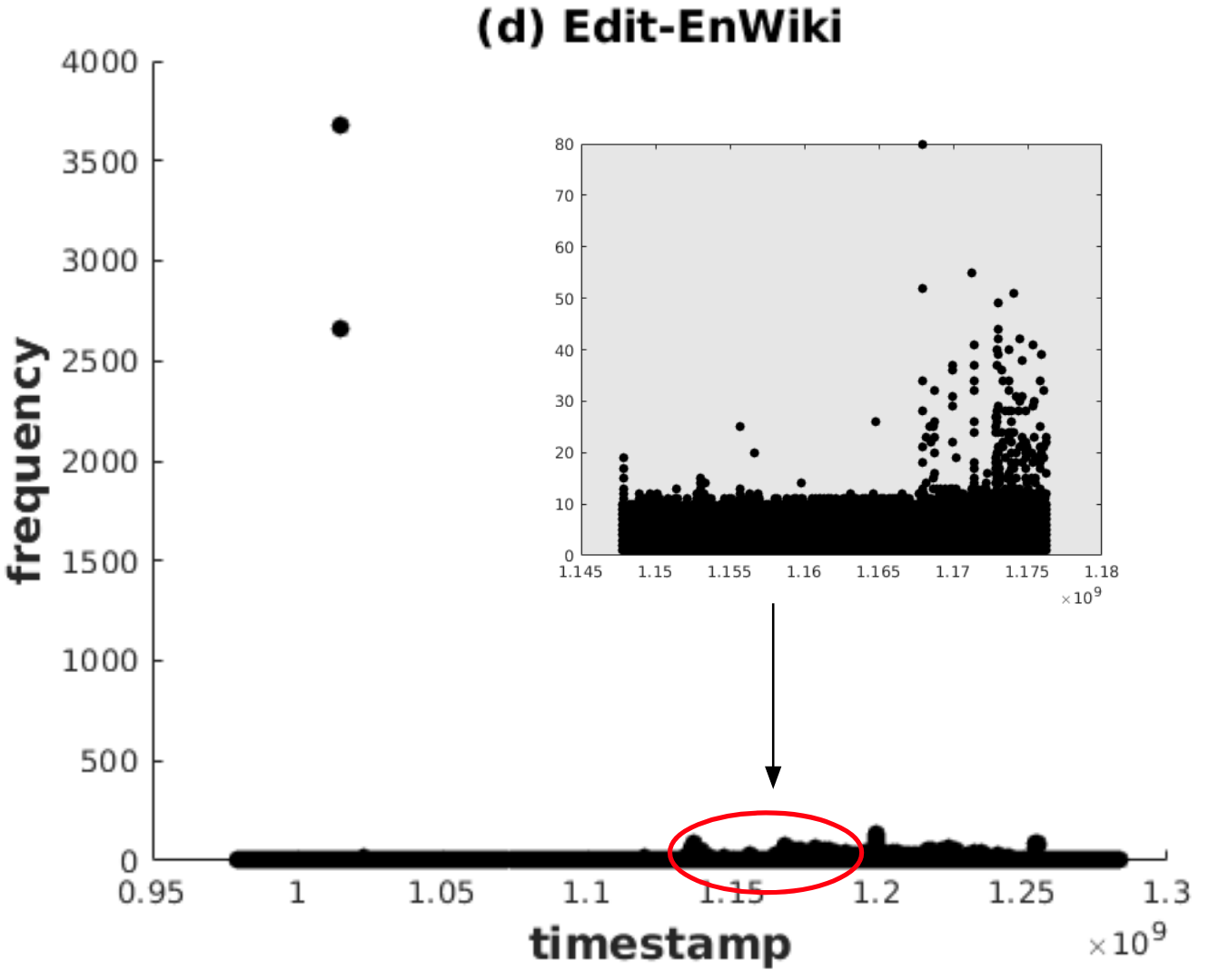}}
    \subfigure{\includegraphics[width=0.26\textwidth, height=0.11\textheight]{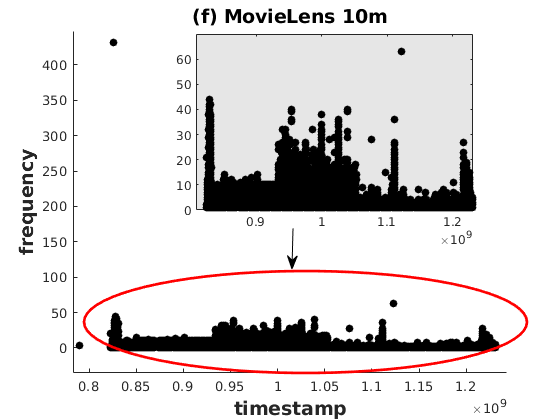}}
    
    \caption{ Temporal distribution of real-world graph streams. 
    }
   \label{fig:temporaldist}
\end{figure*}

To tackle this issue, we introduce an adaptive approach to set the window length. This approach determines the window length according to the timestamps of the graph stream and adapts to the temporal distribution of the stream (Algorithm \ref{alg:adaptivewin}) with no assumption about the order and number of arriving sgrs per time unit.Hence, graph streams with differing arrival rates and temporal distributions can be accommodated. Precisely, we use a number of time-based tumbling windows each including a variable number of sgrs but a certain number of \emph{unique} timestamps in the graph stream, $N_t^w$. For instance, in Subsection \ref{subsubsec:hubagecontribution} we used 10 windows each including variant number of sgrs that cover $10\%$ of unique timestamps ($N_t^w=0.1*N_t$) (Figures \ref{fig:younghubs} and \ref{fig:oldhubs} ). That is, given the number of unique timestamps per window $N_t^w$, we ingest sgrs to the window (lines $8$ to $11$ in Algorithm \ref{alg:adaptivewin}). When $N_t^w$ timestamps are seen, we close the window and perform the intended analysis over the corresponding snapshot (lines $12-13$ in Algorithm \ref{alg:adaptivewin}). The outputs of the analysis are streamed out correspondingly. 
Next, the window slides forward (line $14$ in Algorithm \ref{alg:adaptivewin}) and the retired edges are deleted from the computational graph (lines $15-16$ in Algorithm \ref{alg:adaptivewin}). In tumbling windows, all the edges are retired when the window slides, and the graph snapshot is renewed. 
The time-step is incremented and the algorithm continues until there is a sgr (i.e. continuously in real world streams).

This may appear as a count-based window, but it is not. A count-based window would contain a fixed number of sgrs, while we only fix the number of unique timestamps in the window, not the sgrs. Therefore, ours is time-based with adaptive width since the window borders adapt to the temporal distribution of the stream. In fact adaptive windowing would reduce to count-based windowing, if and only if the temporal distribution of stream is uniform and unique timestamps occur with equal frequency numbers. Therefore our windowing mechanism is general and conforms to real streams. Sequential adaptive windows cover the same fraction of distribution of the sgrs (load-balanced windows for efficient analysis) and also enables comparing the analysis over different windows of a graph stream as well as analysis over different graph streams having different temporal distributions (time-based windows for the accuracy of temporal analysis).

\begin{algorithm}\caption{Adaptive tumbling windows}\label{alg:adaptivewin}
  \DontPrintSemicolon
   \KwData{
    $\{r^i\}$,  sequence of time-ordered sgrs}
     \KwInput{ 
    \\$N_t^W$,  Number of unique timestamps in stream 
    }
    \KwOutput{$x$,  Analysis output collection}
    
    $G \gets \langle V=\emptyset,E=\emptyset \rangle$  \tcp*{initial empty graph}
    $t \gets 0$           \tcp*{time-step}
    $unqt \gets \emptyset$ \tcp*{an empty hashSet}
    $x \gets \emptyset$    \tcp*{output collection}
    $k \gets 0$            \tcp*{window number}
    $W_k^b \gets \tau^0$  \tcp*{begining time of $k$th window}
    \While{true}{
        $r^t=(\tau^t,p) \gets sgrIngest()$ \\
        \If{ $r^t \neq \emptyset$}{
            $unqt$.add($\tau^t$)\\   
            $G \gets updateG(r^t,G)$ 
        }

        \If{$unqt.size()==N_t^W$}{
                $x[k] \gets analysis(G)$ \\
            $k \gets k+1$\\
            $W_k^b \gets \tau^t$ 
            \For{$e \in G : e.timestamp \leq W_k^b$}{
                $G \gets Delete(e,G)$ 
            }
            
        }
        $t \gets t+1$
    }
\end{algorithm}

\subsection{Approximating the number of inter-window butterflies}\label{subsec:approxalg}
\begin{algorithm}[ht]\caption{sGrapp}\label{alg:approx}
  \DontPrintSemicolon
   \KwData{
    $\{r^i\}$,  sequence of time-ordered sgrs}
     \KwInput{ 
    \\$N_t^W$,  Number of unique timestamps per window \\
    $\alpha$, Approximation exponent 
    }
    \KwOutput{$timestep-Bcount$, Approximated number of butterflies at the end of each window}
    
    $G \gets \langle V=\emptyset,E=\emptyset \rangle$  \tcp*{initial empty graph}
    $t \gets 0$           \tcp*{time-step}
    $unqt \gets \emptyset$ \tcp*{an empty hashSet}
    $k \gets 0$            \tcp*{window number}
    
    $timestep-Bcount \gets \emptyset$ \tcp*{an empty hashMap}
    $B_G^{W_k} \gets 0$               \tcp*{number of butterflies in the graph of $k$th window}
    $\hat{B}_K \gets 1$                   \tcp*{cumulative number of butterflies until $t=W_k^e$}
    $E \gets 0$                     \tcp*{total number of edges since $t=0$}
    \While{true}{
        $r^t=(\tau^t,p) \gets sgrIngest()$ \\
        \If{ $r^t \neq \emptyset$}{
            $unqt$.add($\tau^t$)   \\
            $G \gets updateG(r^t,G)$ \\
            $E \gets updateE(r^t,E)$ 
        }
        
        \If{$unqt.size()==N_t^W$}{
                $B_G^{W_k} \gets countButterflies(G)$ \\
                $\hat{B}_K\gets B + B_G^{W_k} + \delta(k\neq0)E^\alpha$\\
                $timestep-Bcount.put(t,B_k)$
            $k \gets k+1$\\
            \tcc{Retire all the edges in the processing graph.}
            $G \gets \langle V=\emptyset,E=\emptyset \rangle$\\
        }
        $t \gets t+1$
    }
\end{algorithm}

Algorithm \ref{alg:approx} describes how sGrapp uses the adaptive windowing framework (Algorithm \ref{alg:adaptivewin}) to estimate the number of butterflies in the streaming graph. Note that sGrapp uses tumbling windows, therefore instead of checking the timestamp of windowed edges to decide on the retirement (lines $15-16$ of Algorithm \ref{alg:adaptivewin}), the processing graph is renewed in sGrapp (line $19$ of Algorithm \ref{alg:approx}). As mentioned earlier in this section the total number of butterflies (line $17$ of Algorithm \ref{alg:approx}) is calculated as total number of butterflies computed at the end of previous window plus the exact number of butterflies in the current window (computed by invoking Algorithm \ref{alg:countB} in line $16$ of Algorithm \ref{alg:approx}) plus the estimated number of inter-window butterflies contributed by current window. According to the butterfly densification power law discussed in the previous subsection, the number of butterflies follows a power-law function of the number of existing edges in the graph. Moreover, recall the observation that butterflies are formed by hubs. Thus, we propose to approximate the number of inter-window butterflies as $\hat{B}^{interW}=|E(t=W_k^e)|^\alpha$, where $|E(t=W_k^e)|$ is the total number of edges since $t=W_0^b$ until $t=W_k^e$. The total number of added edges are updated at ingestion time (line $14$ Algorithm \ref{alg:approx}) as $E$ is increased when the sgr is an edge insertion and decreased when sgr is an edge deletion  and $\alpha$ is the approximation exponent.



\subsection{Optimization}\label{subsec:optimization}
The approximation exponent used in sGrapp (Algorithm \ref{alg:approx}) is constant over windows. However, as we show in the experimental studies in Section \ref{sec:experiments}, the estimated number of butterflies using static exponent can be over or under the true value in subsequent windows. The reason is that the number of edges connecting to old hubs varies across different windows and consequently the estimation should not increase linearly with respect to the number of edges. 

To address this problem, we optimize sGrapp by changing the exponent over windows. To this end, we modify the unsupervised algorithm of sGrapp to a semi-supervised algorithm that we call sGrapp-x. We provide the algorithm with true value of butterflies for an initial subset of the stream. Based on the true value, in the corresponding window $W_K$ we compute the relative error $\frac{\hat{B_K}-B_K}{B_K}$ (line $27$ in Algorithm \ref{alg:approxOptimized}). If the relative error is lower than a user-specified negative tolerance value (in the experiments we use $-0.05$), that means there is an underestimation, therefore we increase the exponent by $0.005$ (line 23-24 in Algorithm \ref{alg:approxOptimized}). Similarly we decrease the exponent in case the relative error is above positive tolerance value to avoid over-estimation in the next window (line 21-22 in Algorithm \ref{alg:approxOptimized}). The exponent is stabilized when the error is tolerable and after the supervised search for the exponent is finished. In summary, the optimized version of sGrapp is an adaptive algorithm using reinforcement learning that learns the most accurate approximation exponent for any given window parameter $N_t^W$ in a subset of stream and generalizes the learned exponent to the rest of stream. sGrapp-x is semi-supervised with outstanding performance given limited ground truth. 

\begin{algorithm}[ht]\caption{sGrapp-x}\label{alg:approxOptimized}
  \DontPrintSemicolon
   \KwData{
    $\{r^i\}$,  sequence of time-ordered sgrs \\
    $B$, ground truths}
     \KwInput{ 
    \\$N_t^W$,  Number of unique timestamps per window \\
    $\alpha$, Approximation exponent
    }
    \KwOutput{$timestep-Bcount$, Approximated number of butterflies at the end of each window}
    
    $G \gets \langle V=\emptyset,E=\emptyset \rangle$\\ 
    $t \gets 0$\\          
    $unqt \gets \emptyset$\\ 
    $k \gets 0$ \\          
    $timestep-Bcount \gets \emptyset$\\ 
    $B_G^{W_k} \gets 0$  \\            
    $\hat{B}_K \gets 1$ \\                  
    $E \gets 0$  \\                  
    $error_0 \gets 0$            \tcp*{relative error for window $W_0$}
    \While{true}{
        $r^t=(\tau^t,p) \gets sgrIngest()$ \\
        \If{ $r^t \neq \emptyset$}{
            $unqt$.add($\tau^t$)   \\
            $G \gets updateG(r^t,G)$ \\
            $E \gets updateE(r^t,E)$ 
        }
        
        \If{$unqt.size()==N_t^W$}{
                $B_G^{W_k} \gets countButterflies(G)$ \\
                
                \If{$t<size(B) \And error>0.05$}{
                   $\alpha -= 0.005$ 
                }
                \If{$t<size(B) \And error<-0.05$}{
                   $\alpha += 0.005$ 
                }
                
                $\hat{B}_K\gets B + B_G^{W_k} + \delta(k\neq0)E^\alpha$\\
                $timestep-Bcount.put(t,B_k)$\\
                
                \If{$t<size(B)$}{
                   $error \gets \frac{\hat{B_K}-B_K}{B_K} $
                }
                 

            $k \gets k+1$\\
            $G \gets \langle V=\emptyset,E=\emptyset \rangle$\\
        }
        $t \gets t+1$
    }
\end{algorithm}

 \subsection{Analysis}\label{subsec:bounds}

Previous study of space bounds has shown that any butterfly counting algorithm, either randomized or deterministic, that returns an accurate (exact/approximate) answer (i.e. bounds the relative error to a small value $0<\delta<0.01$ for each computation round) requires storing the entire graph in $\theta(n^2)$ bits, where $n$ is the number of vertices~\cite{sanei2019fleet}. On the other hand, it is not possible to determine the size of stream (i.e. $n$) in real world streaming graphs. Hence, it is not possible to determine the memory required for processing the data without knowing the size of data~\cite{arasu04}. In the following we analyze the properties of our estimator in terms of computational and error bounds.
\subsubsection{Computational Bound}
 \begin{theorem}
 The upper bound of computational complexity of sGrapp for each window $W_k$ is $\mathcal{O}(\frac{K_{i,W_k}(K_{i,W_k}-1)}{2}K_{j,W_k}\mathcal{R}N_t^{W_k}) $, where $\mathcal{R}$ is the average stream rate and $K_{i,W_k}$($K_{j,W_k}$) is the lower bound of degree of i(j)-vertices in $W_k$.
 \end{theorem}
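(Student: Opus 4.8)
The plan is to show that, inside any window, every operation other than the butterfly enumeration is cheap, so the per-window cost is governed entirely by the single invocation of \emph{countButterflies} (Algorithm \ref{alg:countB}) on the snapshot $G_{W_k}$. First I would account for the streaming bookkeeping: each ingested sgr triggers a constant-work sequence (insertion of its timestamp into $unqt$, an edge update in $G$, and an increment of the edge counter $E$), so across the whole window these operations contribute $\mathcal{O}(|E_{W_k}|)$ in total. Since the window keeps admitting sgrs until $N_t^{W_k}$ distinct timestamps have been observed and the stream delivers edges at average rate $\mathcal{R}$, the number of edges collected in $W_k$ is $|E_{W_k}| = \mathcal{R}\,N_t^{W_k}$; this is the quantity I carry forward, and it will turn out to be dominated by the enumeration cost.

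Next I would analyze the three nested loops of \emph{countButterflies} and charge a cost to each level. The outer loop ranges over the i-vertices of the snapshot, so it executes $|V_{i,W_k}|$ times; since every i-vertex is incident to at least one edge, $|V_{i,W_k}| \le |E_{W_k}| = \mathcal{R}\,N_t^{W_k}$, which I take as the outer factor. For a fixed $i_1$, the two middle loops range over unordered pairs $\{j_1,j_2\}$ of its j-neighbors, giving $\binom{deg(i_1)}{2} = \tfrac{deg(i_1)(deg(i_1)-1)}{2}$ iterations, which I control by the governing i-degree $K_{i,W_k}$ to obtain $\tfrac{K_{i,W_k}(K_{i,W_k}-1)}{2}$. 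The innermost step computes $N_{j_1}\cap N_{j_2}$ by scanning the shorter adjacency list (line 10), costing $\mathcal{O}(\min(deg(j_1),deg(j_2)))$, which I charge as $\mathcal{O}(K_{j,W_k})$ using the governing j-degree. Multiplying the three factors yields $\mathcal{O}\!\left(\mathcal{R}\,N_t^{W_k}\cdot \tfrac{K_{i,W_k}(K_{i,W_k}-1)}{2}\cdot K_{j,W_k}\right)$, the claimed bound, and the bookkeeping term $\mathcal{O}(\mathcal{R}\,N_t^{W_k})$ is absorbed since the two remaining factors are at least one.

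The hard part will be making the degree substitutions rigorous and in the right direction. The outer factor is a genuine upper bound because $|V_{i,W_k}|\le|E_{W_k}|$, but replacing the per-vertex quantities $\binom{deg(i_1)}{2}$ and $\min(deg(j_1),deg(j_2))$ by expressions in $K_{i,W_k}$ and $K_{j,W_k}$ requires these parameters to dominate the relevant vertex degrees in $W_k$, which sits in tension with their description as lower bounds. I would resolve this either by adopting the convention (consistent with the notation table and with the ordering test $K_i\le K_j$ that fixes which side is looped over) that $K_{i,W_k}$ and $K_{j,W_k}$ are the controlling degree scales of the snapshot, or by tightening the argument: bound the summed middle cost $\sum_{i_1}\binom{deg(i_1)}{2}$ directly and relate it to $|E_{W_k}|$ under a near-regular degree assumption. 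A secondary point is justifying $|E_{W_k}| = \mathcal{R}\,N_t^{W_k}$, which I would fold into the \emph{definition} of the average rate $\mathcal{R}$ so that the identity holds by construction rather than as an additional estimate.
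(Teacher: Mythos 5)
Your proof follows essentially the same route as the paper's: the per-window cost is dominated by the single call to \emph{countButterflies}, whose three loop levels are charged $|V_{i,W_k}|\le|E_{W_k}|$, $\tfrac{K_{i,W_k}(K_{i,W_k}-1)}{2}$, and $K_{j,W_k}$ respectively, with $|E_{W_k}|\approx \mathcal{R}N_t^{W_k}$ (the paper writes this as an approximation for adaptive tumbling windows rather than folding it into the definition of $\mathcal{R}$) closing the argument. The tension you flag --- $K_{i,W_k}$ and $K_{j,W_k}$ being called \emph{lower} bounds of the degrees yet used to dominate the iteration and intersection costs --- is present verbatim in the paper's own proof, which substitutes these quantities without comment, so your explicit caveat makes your version, if anything, the more careful one.
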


 \begin{proof}
sGrapp's computations at each window are dominated by the exact counting algorithm as calculating the number of inter-window butterflies is negligible and we ignore it as well as the summations. When i-vertices are the vertex set with lower average degree, the computational complexity of the core exact counting algorithm is the following. 

 \begin{equation}
     \mathcal{O}(\sum_{i_1\in V_i}\sum_{j_1,j_2\in N_{i_1}} Min(deg(j_1) , deg(j_2)))
 \end{equation}

 Let us assume that the lower bound i-degree and j-degree in the graph snapshot corresponding to the tumbling window $W_k$ are $K_{i,W_k}$ and $K_{j,W_k}$, respectively. Accordingly, the computational complexity for this window would be $O(\frac{K_{i,W_k}(K_{i,W_k}-1)}{2}K_{j,W_k}|V_{i,W_k}|)$, where $V_{i,W_k}$ denotes the set of i-vertices in the window $W_k$. Since the stream can include edges connecting already existing vertices, the total number of edges in $W_k$, denoted as 
 $E_{W_k}$, is greater than equal the total number of i-vertices in $W_k$, i.e. $|V_{i,W_k}|\leq |E_{W_k}|$. Therefore,

 \begin{equation}\label{eq:}
     \mathcal{O}(\frac{K_{i,W_k}(K_{i,W_k}-1)}{2}K_{j,W_k}|V_{i,W_k}|)   \leq \mathcal{O}(\frac{K_{i,W_k}(K_{i,W_k}-1)}{2}K_{j,W_k}|E_{W_k}|)    
     \end{equation}
sGrapp uses tumbling windows with adaptive lengths, therefore $|E_{W_k}| \approx \mathcal{R}N_t^{W_k}$, where $\mathcal{R}$ is the average stream rate (i.e. number of edges per timestamp) and $N_t^W$ is the number of unique timestamps in $W_k$. Hence, the upper bound of computational complexity of sGrapp for a tumbling window $W$ at $t$ is $\mathcal{O}(\frac{K_{i,W_k}(K_{i,W_k}-1)}{2}K_{j,W_k}\mathcal{R}N_t^{W_k}) $. Note that this stands for all sequential windows.
 \end{proof}

\subsubsection{Error Bound} 
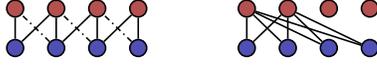
\begin{figure}[h]{}
\definecolor{myblue}{RGB}{80,80,180}
\definecolor{myred}{RGB}{180,80,80}


\resizebox{2cm}{!}{
\begin{tikzpicture}[thick,
  every node/.style={draw,circle},
  fsnode/.style={fill=myblue},
  ssnode/.style={fill=myred},
  every fit/.style={ellipse,draw,inner sep=-1pt,text width=1cm}
]

\begin{scope}[start chain=going right,node distance=4mm]
\foreach \i in {1,2,3,4}
  \node[fsnode,on chain] (f\i) {};
\end{scope}

\begin{scope}[xshift=0cm,yshift=0.7cm,start chain=going right,node distance=4mm]
\foreach \i in {1,2,3,4}
  \node[ssnode,on chain] (s\i) {};
\end{scope}

\draw (f1) -- (s1);
\draw (f1) -- (s2);
\draw [dash dot] (f2) -- (s1);
\draw (f2) -- (s2);
\draw (f2) -- (s3);
\draw [dash dot](f3) -- (s2);
\draw (f3) -- (s3);
\draw (f3) -- (s4);
\draw [dash dot](f4) -- (s3);
\draw (f4) -- (s4);
\end{tikzpicture}
}
\hspace{1cm}
\resizebox{2cm}{!}{
\begin{tikzpicture}[thick,
  every node/.style={draw,circle},
  fsnode/.style={fill=myblue},
  ssnode/.style={fill=myred},
  every fit/.style={ellipse,draw,inner sep=-1pt,text width=1cm}
]

\begin{scope}[start chain=going right,node distance=4mm]
\foreach \i in {1,2,3,4}
  \node[fsnode,on chain] (f\i) {};
\end{scope}

\begin{scope}[xshift=0cm,yshift=0.7cm,start chain=going right,node distance=4mm]
\foreach \i in {1,2,3,4}
  \node[ssnode,on chain] (s\i) {};
\end{scope}

\draw (f1) -- (s1);
\draw (f1) -- (s2);
\draw (f2) -- (s1);
\draw (f2) -- (s2);
\draw (f3) -- (s1);
\draw (f3) -- (s2);
\draw (f4) -- (s1);
\draw (f4) -- (s2);
\end{tikzpicture}
}

    \caption{Schematic butterfly formation. 
    i(j)-vertices are blue (red) in the bottom (top)
    .}\label{fig:minBmaxB}
\end{figure}
\begin{theorem}
The absolute error of sGrapp at the end of each window $W_k$ is bounded as $\Sigma_{l=1}^{k} |E_l|^\alpha - \binom{|V_{i,W_k}|}{2} \leq Err \leq \Sigma_{l=1}^{k} |E_l|^\alpha-|E_{W_k}|+2|V_{i,W_k}|$ where $E_k$, $E_{W_k}$, and $V_{i,W_k}$ denote the number of edges in the interval $[W_0^b,W_k^e)$, the number of edges in the interval $[W_k^b,W_k^e)$, and the number of i-vertices in the interval $[W_k^b,W_k^e)$, respectively.
\end{theorem}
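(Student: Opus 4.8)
The plan is to turn the error bound into a purely combinatorial estimate on the number of \emph{inter-window} butterflies, and then to sandwich that quantity between the two extremal bipartite configurations of Figure~\ref{fig:minBmaxB}. First I would write the signed error as $Err = \hat{B}_k - B_k$ and unroll the recursion $\hat{B}_k = \hat{B}_{k-1} + B_G^{W_k} + \delta(k\neq 0)|E_k|^\alpha$ from the first window. Because sGrapp uses tumbling (disjoint) windows, the exact per-window counts $B_G^{W_l}$ together capture precisely those butterflies whose four edges lie inside a single window, so the estimate unrolls to the closed form $\hat{B}_k = \sum_{l=0}^{k} B_G^{W_l} + \sum_{l=1}^{k}|E_l|^\alpha$. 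Decomposing the true count the same way gives $B_k = \sum_{l=0}^{k} B_G^{W_l} + B^{interW}$, where $B^{interW}$ collects every butterfly whose edges straddle a window boundary. The within-window sums cancel, leaving the identity $Err = \sum_{l=1}^{k}|E_l|^\alpha - B^{interW}$, which reduces the theorem to the two-sided bound $|E_{W_k}| - 2|V_{i,W_k}| \le B^{interW} \le \binom{|V_{i,W_k}|}{2}$.

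Next I would establish the two inequalities separately. For the upper bound on $B^{interW}$ (the lower bound on $Err$) I would invoke the maximal configuration on the right of Figure~\ref{fig:minBmaxB}: the straddling count is largest when the participating i-vertices all attach to one common pair of j-vertices, so each of the $\binom{|V_{i,W_k}|}{2}$ i-vertex pairs closes exactly one rectangle and no configuration with the same i-vertex set does better. For the lower bound on $B^{interW}$ (the upper bound on $Err$) I would use the minimal ``ladder'' on the left of the figure, in which consecutive i-vertices share a j-pair and every edge beyond a tree-like backbone is forced to close at most one new rectangle; counting edges against vertices in this sparsest butterfly-bearing structure yields $B^{interW} \ge |E_{W_k}| - 2|V_{i,W_k}|$. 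Substituting both into the identity from the first step produces the claimed interval for $Err$.

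The hard part will be justifying that these two schematic graphs are genuinely extremal for the straddling count rather than merely illustrative, and the upper-bound direction is the delicate one: a single pair of i-vertices can in principle close several butterflies through different shared j-pairs, so I must argue why, for the inter-window contribution under tumbling windows, the single-shared-pair configuration dominates and why the bound is phrased through the current window's i-vertices $|V_{i,W_k}|$ rather than the cumulative vertex set. I would close this gap using the structural conclusion of Section~\ref{subsec:burstybutterflyformation} --- that inter-window butterflies are completed when young vertices arriving in $W_k$ attach to old hubs --- which forces the two ``new'' i-endpoints of each straddling rectangle to lie in $V_{i,W_k}$ and lets the pair-counting argument go through. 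The lower bound is comparatively routine edge-versus-vertex counting, so the bulk of the effort is the extremality argument for the maximal case.
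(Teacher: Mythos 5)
Your proposal follows essentially the same route as the paper's proof: unroll the recursion to $\hat{B}_k = \sum_{l=0}^{k} B_G^{W_l} + \sum_{l=1}^{k}|E_l|^\alpha$, cancel the within-window counts against the true total, and reduce the theorem to the paper's Lemma~\ref{lem:trueVals}, which sandwiches $B^{interW}$ between $|E_{W_k}|-2|V_{i,W_k}|$ (the caterpillar configuration, via pigeonhole) and $\binom{|V_{i,W_k}|}{2}$ (the shared j-pair configuration), exactly the two extremal cases of Figure~\ref{fig:minBmaxB}. The extremality gap you flag in the upper bound is genuine, but be aware the paper does not close it either --- its lemma simply asserts these two configurations as the minimum and maximum --- so your planned patch via the empirical observations of Section~\ref{subsec:burstybutterflyformation} goes beyond what the paper proves (and, being data-driven rather than combinatorial, would not rigorously repair it anyway).
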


\begin{proof}
sGrapp estimates the total number of butterflies at the end of each window $W_k$, $\forall k>0$, as 
$\hat{B}_k= \hat{B}_{k-1} + B_G^{W_k} +|E_k|^\alpha$ with initial term $\hat{B}_0=B_G^{W_0}$. 
Expanding this recursive equation would yield $\hat{B}_k= \Sigma_{l=0}^{k} B_G^{W_l} + \Sigma_{l=1}^{k} E_l^\alpha$.
On the other hand, according to the lemma \ref{lem:trueVals}, the true value of the total number of butterflies at the end of each window $W_k$, $\forall k>0$, denoted as $B_k$ lies in the range 
$\Sigma_{l=0}^{k} B_G^{W_l} + E_k-2|V_{i,W_k}| < B_k < \Sigma_{l=0}^{k} B_G^{W_l} + \binom{|V_{i,W_k}|}{2} $, where $V_{i,W_k}$ is the set of all seen i-vertices in the interval $[W_k^b,W_k^e)$.
Therefore, the absolute error of sGrapp $Err=|B_k-\hat{B}_k|$ falls in the range $\Sigma_{l=1}^{k} |E_l|^\alpha - \binom{|V_{i,W_k}|}{2} \leq Err \leq \Sigma_{l=1}^{k} |E_l|^\alpha-|E_{W_k}|+2|V_{i,W_k}|$.
\end{proof}

\begin{lemma}\label{lem:trueVals}
The exact number of inter-window butterflies at the end of each window $W_k$, $\forall k>0$, denoted as $B^{interW}$ is bounded as 
$|E_{W_k}|-2|V_{i,W_k}| \leq B^{interW} \leq \binom{|V_{i,W_k}|}{2} $, where $V_{i}$ is the set of all i-vertices in the $W_k$.
\end{lemma}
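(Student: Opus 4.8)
My plan is to treat $B^{interW}$ as an extremal quantity over the bipartite configurations that live on the $n := |V_{i,W_k}|$ i-vertices and $m := |E_{W_k}|$ edges participating in the cross-boundary structure of $W_k$, and to pin down the two extreme configurations drawn in Figure \ref{fig:minBmaxB}. The natural starting point is the exact counting identity $B^{interW} = \sum_{\{i_1,i_2\}\subseteq V_{i,W_k}} \binom{c(i_1,i_2)}{2}$, where $c(i_1,i_2)$ is the number of common j-neighbours of the pair $\{i_1,i_2\}$. Since every butterfly is charged to exactly one unordered pair of i-vertices together with one unordered pair of their shared j-neighbours, the whole problem reduces to controlling how the shared-neighbour counts $c(i_1,i_2)$ can be distributed, which is exactly what the two schematic graphs encode.

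For the upper bound I would argue that $B^{interW}$ is maximised by the configuration on the right of Figure \ref{fig:minBmaxB}, in which all $n$ i-vertices attach to a single common pair of j-vertices (the two old j-hubs isolated in Section \ref{subsec:burstybutterflyformation}). There each of the $\binom{n}{2}$ i-pairs realises exactly one butterfly, giving $B^{interW} = \binom{n}{2}$. The essential sub-claim is that in the inter-window regime no pair can do better, i.e. $c(i_1,i_2)\le 2$, so that $\binom{c(i_1,i_2)}{2}\le 1$ for every pair and hence $B^{interW}\le \binom{n}{2}$. I would derive this cap from the hub structure: the j-neighbours that close a cross-window butterfly are the old j-hubs, and the schematic captures the saturating case of two such hubs shared by all i-vertices of the window.

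For the lower bound I would exhibit the minimising configuration on the left of Figure \ref{fig:minBmaxB}: a chain in which consecutive i-vertices overlap in precisely two j-neighbours, which is the sparsest way to keep every edge inside the cross-window structure. A degree-counting argument then delivers the estimate: allot to each of the $n$ i-vertices two ``skeleton'' edges (the minimum for an i-vertex to enter any butterfly), which consumes at most $2n$ edges, and observe that each of the remaining $m-2n$ edges must raise some common-neighbour count from $1$ to $2$ (or higher), creating at least one new butterfly. This yields $B^{interW}\ge m-2n = |E_{W_k}|-2|V_{i,W_k}|$, and combining the two directions gives the claimed range.

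The hard part is certifying extremality rather than merely evaluating the two schematic graphs. For the upper bound the delicate point is the cap $c(i_1,i_2)\le 2$: in an unrestricted bipartite graph a pair of i-vertices may share arbitrarily many j-neighbours and generate $\binom{c}{2}$ butterflies, so the argument must genuinely invoke the inter-window structure (two dominating old j-hubs) to exclude this — without that structural input, $\binom{n}{2}$ is not a universal maximum. For the lower bound the subtlety is that the naive edge-charging can be off by an additive constant (the chain of Figure \ref{fig:minBmaxB} actually realises $n-1$ butterflies against the stated $m-2n=n-2$), so I would phrase the skeleton/excess-edge accounting carefully as a valid lower bound and check that it never double-counts the $2n$ skeleton edges. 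These two structural justifications, and not the arithmetic, are where the real effort lies.
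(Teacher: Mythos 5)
Your route is recognizably the same as the paper's: exhibit the two extremal configurations of Figure~\ref{fig:minBmaxB} (a caterpillar chain for the minimum, all i-vertices sharing a single pair of j-vertices for the maximum), evaluate each, and conclude the range. Your pair-counting identity $B^{interW}=\sum_{\{i_1,i_2\}\subseteq V_{i,W_k}}\binom{c(i_1,i_2)}{2}$ is a useful formalization the paper never writes down, and the two ``delicate points'' you isolate are precisely where the paper's own argument is informal. The problem is that the fixes you sketch would not go through. For the upper bound, the cap $c(i_1,i_2)\le 2$ cannot be ``derived from the hub structure'': the hub analysis of Section~\ref{subsec:burstybutterflyformation} is an empirical observation about particular datasets, not a hypothesis of Lemma~\ref{lem:trueVals}, and nothing in the inter-window setting prevents a pair of i-vertices from sharing three or more j-neighbours across window boundaries, in which case that pair alone contributes $\binom{c}{2}>1$ butterflies and $B^{interW}$ can exceed $\binom{|V_{i,W_k}|}{2}$. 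The paper obtains the cap differently and for free: its maximizing scenario stipulates that every i-vertex of $W_k$ has exactly two j-neighbours (at least one outside $W_k$), so degree two forces $c\le 2$. In other words, the paper's upper bound is the value of a designated best-case configuration, not a universal maximum, and a blind proof must either import that configurational restriction explicitly or concede that no deductive cap exists.

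For the lower bound, your charging step --- ``each of the remaining $m-2n$ edges must raise some common-neighbour count from $1$ to $2$, creating at least one new butterfly'' --- fails without the paper's uniformity premise. An excess edge can attach an i-vertex to a fresh j-vertex shared with no other i-vertex, or all of $E_{W_k}$ can pile onto a single i-vertex (a star), in which case no i-pair exists at all; then $B^{interW}=0$ while $|E_{W_k}|-2|V_{i,W_k}|>0$, contradicting the unconditional claim. The paper's pigeonhole argument is likewise valid only under its stated assumption that the window's edges are ``uniformly distributed,'' each i-vertex receiving at least two j-neighbours so that every excess edge completes a caterpillar. Your off-by-one observation is correct and worth keeping: the drawn chain has $m=3n-2$ edges and realises $n-1$ butterflies against $m-2n=n-2$, so the schematic does not attain equality even though the paper asserts it does --- a minor slip on the paper's side. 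In summary: same decomposition and same extremal configurations as the paper, with the genuine gaps honestly flagged, but your proposed repairs (an empirically motivated hub cap, and unconditional edge-charging) are exactly the steps that break; the lemma as the paper proves it is a conditional statement about the two configurations it describes, and any complete proof must state those configurational assumptions as hypotheses rather than attempt to establish the bounds over arbitrary bipartite streams.
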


\begin{proof}
The number of inter-window butterflies contributed by window $W_k$ denoted as  $B^{interW}$, is minimum when the $W_k$'s edges $E_{W_k}$ are uniformly distributed over vertices by connecting each i-vertex in $W_k$ to at least 2 j-neighbors in $W_k$ and previous windows forming a series of caterpillars (solid edges in Figure~\ref{fig:minBmaxB}--left). In this case, according to the pigeonhole principle, the number of edges that complete the caterpillars (dashed edges in Figure~\ref{fig:minBmaxB}--left) will determine the number of inter-window butterflies: $B^{interW}=|E_{W_k}|-2|V_{i,W_k}|$. $B^{interW}$ is maximum when all of the $W_k$'s i-vertices are connected to two j-vertices such that at least one of them is not in $W_k$ (Figure~\ref{fig:minBmaxB}--right).  (Note, when all of j-neighbors are in previous windows, there wouldn't be any in-window butterfly in $W_k$). In this case, the number of inter-window butterflies reduces to the number of ways we can choose two i-vertices from the entire set of i-vertices:  $B^{interW}=\binom{|V_{i,W_k}|}{2}$. Therefore, $|E_{W_k}|-2|V_{i,W_k}| \leq B^{interW} \leq \binom{|V_{i,W_k}|}{2} $.
\end{proof}
\section{Experiments}\label{sec:experiments}
We test the effectiveness and efficiency of sGrapp and its optimized version sGrapp-x where x is the percentage of the available ground truth. We use x=25, 50, 75, and 100. The ground truths are obtained by running the exact counting Algorithm \ref{alg:countB} over the graph streams. Due to the computational expense of Algorithm \ref{alg:countB}, we collect the truth values over a limited number of sgrs: $72344$ in Epinions, $12259$  in ML100k, $21696$  in ML1m, $21778$  in ML10m, $75000$  in Edit-EnWiki, and $75000$  in Edit-FrWiki. 
The data sets that we use are described in Section \ref{subsec:graphstreamdataset}.

We report the effectiveness and efficiency of sGrapp and sGrapp-x in Sections \ref{subsec:effectiveness} and \ref{subsec:efficiency}, respectively. We also compare the performance of our algorithms with that of baselines in Subsection \ref{subsec:comparison}. Our experiments as well as the analysis in Section \ref{sec:butterflyemergence} are conducted on a machine with $15.6$ GB native memory and Intel Core $i7-6770HQ CPU @ 2.60GHz * 8$ processor. We have implemented FLEET algorithms and sGrapp algorithms in Java (OpenJDK version $1.8.0-252$, OpenJDK Runtime Environment build $1.8.0-252-8u252-b09-1~16.04-b09$). 

\subsection{Effectiveness Evaluation}\label{subsec:effectiveness}
\subsubsection{sGrapp}
We compute the Mean Absolute Percentage Error (MAPE) of sGrapp for windows with variable number of unique timestamps ($N_t^W$, $y$ axis) and  different exponent values ($\alpha$, $x$ axis). These are shown in the Figure \ref{fig:mapes}. The number of unique timestamps per window, $N_t$, varies in different graph streams, therefore we set the value of $N_t^W$ differently for each graph stream. We cross-validated the values of $\alpha$ and $N_t^W$ to explore the region including the best accuracy (lowest MAPE illustrated by the lightest color) for sGrapp. $MAPE=\frac{1}{n}\Sigma\frac{|B_k-\hat{B_k}|}{B_k}$, where $B_k$ is the ground truth computed over the growing graph at $t=W_k^e$ by Algorithm \ref{alg:countB} and $\hat{B_k}$ is the approximated value at $t=W_k^e$, and n is the number of windows. The data tips in the figures demonstrate the pair of $\alpha$ and $N_t^W$ yielding the lowest MAPE. 

We observe that the approximation accuracy of sGrapp is not sensitive to window length and the exponent, since there exists a combination of approximation exponent and window length for each graph steam that yields appropriate MAPE (Figure \ref{fig:mapes}). In fact, the best MAPE of sGrapp is significantly lower than $0.1$ in all of the rating graph streams, demonstrating that sGrapp is a good approximator of actual butterfly count.

When the approximation exponent is high and the window is compact (bottom right corners in Figure \ref{fig:mapes}), the error is high. In this case, sGrapp overestimates the number of inter-window butterflies due to high exponent value. Also, when the exponent is low and the window includes a large number of sgrs (top left corner in Figure \ref{fig:mapes}), the error is high. The reason in this case is that sGrapp underestimates the number of inter-window butterflies. An appropriate parameter region to gain a reasonable accuracy is where $\alpha$ and $N_t^W$ are both high or low (middle diameter from top right corner to bottom left corner in Figure \ref{fig:mapes}). The best accuracy is always obtained for higher exponent values. For rating networks, an appropriate exponent value for sGrapp is $\alpha=1.4$.
        
        
        
       
        

As we investigated the contribution of hubs to the emergence of butterflies (Section \ref{sec:butterflyemergence}), we relate the value of approximation exponent to the probability of having at least one i-hub ($P(N_{iHub}^t>=1)$) plus the probability of having at least one j-hub ($P(N_{jHub}^t>=1)$) in the butterflies at  time $t$, i.e. $\alpha=P(t)=P(N_{iHub}^t=1)+P(N_{iHub}^t=2)+P(N_{jHub}^t=1)+P(N_{jHub}^t=1)$ (Table \ref{tab:ijhubs}). That is, the value of $\alpha$ can be determined based on the probability of i- or j-hubs forming butterflies at a certain time point $t$. The time point $t$ is likely a \emph{tipping point} where the number of hub connections in the graph is stabilized (Figures \ref{fig:nihubdegree} and \ref{fig:njhubdegree}). To check this, we calculate the value of $P(t)$ for $t\in \{ 1000, 2000, .., 9000, 10000\}$  in the Epinions graph stream. We compute the value of MAPE for sGrapp($N_t^W$, $\alpha$). We set $\alpha =P(t)$ and $N_t^W\in \{ 0.006N_t, 0.007N_t, 0.008N_t, 0.009N_t, 0.01N_t\}$. 
In Table \ref{tab:mapePhubs}, we report the value of MAPE for the approximations with different exponent values and different fraction of unique timestamp per adaptive window. 
We observe that, at $t=6000$, where the exponent is equal to $\alpha=P(t=6000)=\sim1.03$, the approximation error is the lowest. This time point is a \emph{tipping point} where the fraction of average hub degree is steadily low afterward and high backward (Figures \ref{fig:nihubdegree} and \ref{fig:njhubdegree}). Moreover, in Figure \ref{fig:mapes}, we  see that the best accuracy is obtained when the exponent is equal to $P(t=6000)=1.03$. We leave further investigation of the significance of these values as future work.
 
\begin{table*}[ht]\caption{Epinions - The approximation MAPE for different adaptive window lengths (columns) and different exponents calculated as the probability of one or two i-hub plus the probability of one or two j-hub at different time points (rows).}

    \begin{tabular}{|p{3.5cm}|p{1.5cm}|p{1.5cm}|p{1.5cm}|p{1.5cm}|p{1.5cm}|}\hline
        
        \makecell{MAPE} & \makecell{$0.006*N_t$}& \makecell{$0.007*N_t$} & \makecell{$0.008*N_t$} & \makecell{$0.009*N_t$} & \makecell{$0.01*N_t$}\\ \hline\hline 
        
        \makecell{$\alpha=P(t=1k)=1.2178$} &\makecell{$3.0036$} &\makecell{$2.5461$} &\makecell{$2.5005$} &\makecell{$2.2996$} &\makecell{$2.2602$} \\ \hline 
        
        \makecell{$\alpha=P(t=2k)=1.077$} &\makecell{$0.4472$} &\makecell{$0.3291$} &\makecell{$0.3318$} &\makecell{$0.2359$} &\makecell{$0.2632$} \\ \hline 
         
        \makecell{$\alpha=P(t=3k)=1.1274$} &\makecell{$1.0295$} &\makecell{$0.8281$} &\makecell{$0.8212$} &\makecell{$0.6954$} &\makecell{$0.7079$} \\ \hline 
        
        \makecell{$\alpha=P(t=4k)=1.0806$}&\makecell{$0.4778$} &\makecell{$0.3551$} &\makecell{$0.3574$} &\makecell{$0.2597$} &\makecell{$0.2864$} \\ \hline 
        
        \makecell{$\alpha=P(t=5k)=1.0389$} &\makecell{$0.14286$} &\makecell{$0.1016$} &\makecell{$0.0778$} &\makecell{$0.0864$} &\makecell{$0.0456$} \\ \hline 
        
       \makecell{\textbf{$\alpha=P(t=6k)=1.0296$}} &\makecell{\textbf{0.0953}} &\makecell{\textbf{0.0723}} &\makecell{\textbf{0.524}} &\makecell{\textbf{0.0709}} &\makecell{\textbf{0.0315}} \\ \hline
        
        \makecell{$\alpha=P(t=7k)=1.0438$} &\makecell{$0.1760$}&\makecell{$0.1176$} &\makecell{$0.1054$} &\makecell{$0.1014$} &\makecell{$0.0597$}  \\ \hline 
        
        \makecell{$\alpha=P(t=8k)=1.0591$} &\makecell{$0.2897$} &\makecell{$0.1950$} &\makecell{$0.2000$} &\makecell{$0.1525$} &\makecell{$0.1446$} \\ \hline 
        
        \makecell{$\alpha=P(t=9k)=1.0546$} &\makecell{$0.2553$} &\makecell{$0.1658$} &\makecell{$0.1713$} &\makecell{$0.1370$} &\makecell{$0.1188$} \\ \hline 
         
        \makecell{$\alpha=P(t=10k)=1.0420$} &\makecell{$0.1639$} &\makecell{$0.1189$} &\makecell{$0.0953$} &\makecell{$0.0959$} &\makecell{$0.0508$} \\ \hline 
        
    \end{tabular}\label{tab:mapePhubs}
\end{table*} 
 
After  evaluating sGrapp in terms of the average window errors (MAPE), we delve into its performance evolution over windows so that we can track the origins of the accuracy gain. We pick the most accurate $\alpha$ and $N_t^W$ (highlighted data points in Figure \ref{fig:mapes}) and plot the signed value of relative error $\frac{|B_k-\hat{B_k}|}{B_k}$ for each window $W_k$ in the Figure \ref{fig:realtiveerrors}. Depending on the value of $N_t^W$, the number of windows vary in different graph streams. Positive errors (depicted by red upward triangles ) reflect over-estimations and negative errors (depicted by blue downward triangles) reflect under-estimations. In ML10m, Edit-EnWiki and Edit-FrWiki, the approximation begins with over-estimation and ends up with under-estimation. The underlying reason is the static exponent over sequential windows with different number of connections to the old hubs and consequently different number of inter-window butterflies. 
 
\subsubsection{sGrapp-x} 
We also evaluate the accuracy of sGrapp-x in terms of MAPE  in the region that sGrapp displays lowest errors in Figures \ref{fig:mapes25} --
\ref{fig:mapes100}. This enables a fair comparison of sGrapp with its optimized version sGrapp-x. Note that, sGrapp-x begins with a given exponent value and ends up with a modified value after the supervision phase reaches an error below $0.05$. Therefore we fed sGrapp-x with same input values of $\alpha$ and $N_t^W$ as sGrapp. The values shown in Figures \ref{fig:mapes25} -- 
\ref{fig:mapes100} reflect the inputs. 

It is evident from these figures that sGrapp-x improves the accuracy, which can be summarized as (a) improving the minimum MAPE (Figure \ref{fig:mapemin}), (b) improving the maximum MAPE (Figure \ref{fig:mapemax}), as well as (c) expanding the coverage of MAPE$\leq0.15$ and MAPE$\leq0.2$ (Figures \ref{fig:mapeprob0.15} and \ref{fig:mapeprob0.2}). As illustrated in Figure \ref{fig:mapemin}, the minimum MAPE value in the studied parameter space is roughly the same for both sGrapp and sGrapp-x  $x=25-100$ in all rating graph streams. sGrapp-x lowers  the minimum MAPE with respect to sGrapp in Edit-EnWiki graph from $0.681$ to $0.376$ (via $x=25$), $0.105$ (via $x=75$), $0.101$ (via $x=50$), and $0.097$ (via $x=100$); in Edit-FrWiki graph from $0.201$ to $0.235$ (via $x=25$), $0.137$ (via $x=100$), $0.134$ (via $x=75$), and $0.130$ (via $x=50$). That is, the minimum MAPE is lowered ranging from $44.79\%$ to $85.76\%$ in Edit-EnWiki and $31.84\%$ to $35.32\%$ in Edit-FrWiki. As illustrated in Figure \ref{fig:mapemax},the maximum MAPE related to the over-estimations (bottom right corners in Figures \ref{fig:mapes25} -- \ref{fig:mapes100}) is notably decreased in all graph streams. The most significant decrease corresponds to Edit-FrWiki stream with the highest change from $2$ to $0.26$ (via $x=75,100$) and Edit-EnWiki stream with highest change from $0.715$ to $0.15$  (via $x=100$).

In Figures \ref{fig:mapeprob0.15} and \ref{fig:mapeprob0.2}, we present the probability of approximation with MAPE$\leq0.15$ and MAPE$\leq 0.2$ ($P(MAPE\leq 0.15(0.2))$) by calculating the fraction of approximations that satisfy MAPE$\leq 0.15$ and MAPE$\leq 0.2$. That is the relative coverage of light blue areas in Figures \ref{fig:mapes} -- \ref{fig:mapes100}. When the approximation MAPE is above $0.15$ or $0.2$ the corresponding bars are omitted in Figures \ref{fig:mapeprob0.15} and \ref{fig:mapeprob0.2}. Since sGrapp-100 approximates the number of butterflies in Edit-EnWiki with highest MAPE equal to $0.15$, the corresponding bar has a height of $1$. sGrapp-25 improves the accuracy of sGrapp in MovieLens10m better than other sGrapp-x versions. For the other graph streams, when $x\geq 50$, sGrapp-x displays fairly well accuracy improvement as the probability of accurate approximation (i.e. average window error below 0.15 and 0.2) is amplified. As expected sGrapp-100 has the most improvement, however sGrapp-75 and sGrapp-50 are reliable improvement alternatives for Edit-FrWiki and the rest of graph streams, respectively. sGrapp-x, $x=25, 50, 75,$ and $100$ can achieve the $P(MAPE\leq 0.15(0.2))$  equal to $67.13\%$ (78.53\%), $60.94\%$ ($94.55\%$), $79.74\%$ ($84.27\%$), and $99.31\%$ ($100\%$). Most notably, sGrapp-50(75) increases $P(MAPE\leq 0.2)$ from $0$ to $94.55(100)\%$ in Edit-EnWiki.

We check the evolution of the signed value of relative error over windows for the data points with the lowest sGrapp-x MAPE. 
As shown in Figures \ref{fig:realtiveerrors25}, \ref{fig:realtiveerrors50}, \ref{fig:realtiveerrors75}, and \ref{fig:realtiveerrors100}, 
dynamically changing the approximation exponent heals the under/over-estimation problem; Hence the average window error is diminished. There is always a value of x by which sGrapp-x can yield average approximation error less than equal $0.05$ in rating graphs and $0.14$ in Wikipedia graphs.
\begin{center}
\begin{table*}
\caption{Throughput of different algorithms for \boldsymbol{$\gamma$}=0.7.}
\small
    \begin{tabular}{|p{1.7cm}||
    p{1.3cm}|p{1cm}|
    p{1.3cm}|p{1cm}|
    p{1.3cm}|p{1cm}|
    p{1.3cm}|p{1cm}|
    p{1.3cm}|p{1.5cm}|}\hline
        
        Throughput & FLEET2 M=75k & FLEET3 M=75k &
        FLEET2 M=150k & FLEET3 M=150k &
        FLEET2 M=300k & FLEET3 M=300k &
        FLEET2 M=600k & FLEET3 M=600k &
        sGrapp & sGrapp-100 \\ \hline\hline
       
        Epinions & 89 575 & 137 411 &
        59 336 & 53 077 &
        16 912 & 16 360 &
        11 028 & 10 907 & 
        \textbf{182 427} & 166 895 \\ \hline
        
        ML100k & 3 664 & 5 652  &
        4 691 & 4 717 &
        3 509 & 3 424 & 
        4 268 & 4 378 & 
        8 026 & \textbf{8 629} \\ \hline
        
        ML1m & 23 490 & 23 292 &
        12 038 & 7 355 &
        2 383 & 1 673 &
        1 004 & 857 &
        \textbf{26 698} & 26 487 \\ \hline
        
        ML10m & 147 665 & 72 918 &
        62 905 & 23 536 &
        16 719 & 5 358 & 
        4 410 & 2 337 &
        \textbf{234 571} & 228 021 \\ \hline
        
        Edit-FrWiki &  554 741 & 155 343 &
        298 019 & 57 477 &
        116 917 & 16 856 & 
        41 051 & 6 240 &      
        \textbf{1 000 861} & 985 265 \\ \hline
       
        Edit-EnWiki &  \textbf{2 564 565} & 719 375 &
        1 373 708 & 305 347 &
        911 170 & 114 806 & 
        324 183 & 34 283 &
        1 085 185 & 1 098 382 \\ \hline
        
\end{tabular} \label{tab:baselineThroughputs}
\normalsize
\end{table*}
\begin{table*}
\caption{MAPE of different algorithms for \boldsymbol{$\gamma$}=0.7 and M=0.1S and same \boldsymbol{$N_t^W$}.}
\small

    \begin{tabular}{|p{1.7cm}||
    p{1.3cm}|p{1.3cm}|p{1.3cm}|
    p{1.3cm}|p{1.7cm}|p{1.7cm}|p{1.7cm}|p{1.7cm}|}\hline
        
        \makecell{MAPE} & 
        FLEET1 & FLEET2 & FLEET3 &
        sGrapp & sGrapp-25 & sGrapp-50 & sGrapp-75 & sGrapp-100 \\ \hline\hline
       
        Epinions &  
        0.058 & 13.789 & 0.336 &
          \textbf{0.022} &   \textbf{0.022} &   0.028 &   0.028 &   0.028 
        \\ \hline
        
        ML100k &  
        0.959 &  2.287 & 0.399 &
          \textbf{0.009} &   \textbf{0.009} &   \textbf{0.009} &   \textbf{0.009} &   \textbf{0.009}
        \\ \hline
        
        ML1m & 
        0.085 & 5.261 & 0.047 &
          0.043 &   \textbf{0.043} &   \textbf{0.053} &   0.067 & 0.055 
        \\ \hline
        
        ML10m &  
        0.156 &  0.839 & \textbf{0.086} &
          0.143 &   0.247 &   0.162 &   0.180 &   0.170
        \\ \hline
        
        Edit-FrWiki &  
        1.575 &  49.165 & 57.563 &
          0.201 &   0.313 &   0.217 &   \textbf{0.134} &   0.137 
        \\ \hline
       
        Edit-EnWiki &   
        2.689 & 467.747 & 178.702 &
        0.684 &  0.494 & 0.161 & 0.141 & \textbf{0.137} 
        \\ \hline
        
\end{tabular} \label{tab:baselineErrors}
\normalsize
\end{table*}
\end{center}
\subsection{Efficiency Evaluation}\label{subsec:efficiency} 
We evaluate the efficiency of sGrapp and sGrapp-100 by averaging over 50 independent cases. We do not report the efficiency metrics for sGrapp-x for $x<100$ since their efficiency is close to that of sGrapp-100. For each graph stream we study the performance for the parameter settings that yield the best accuracy (highlighted data points in Figures \ref{fig:mapes} and \ref{fig:mapes100}) to see the overhead of a highly accurate approximation. Note that parameter values do not affect the efficiency. 

We check the latency of sGrapp and sGrapp-100 for each processing window (Figures \ref{fig:wlatency} and \ref{fig:wlatency100}). We observe that the window latency of all the graph streams (except the Epinions) is not decreasing.  The window latency of each graph stream follows its temporal distribution pattern (Figure \ref{fig:temporaldist}). Therefore, to omit the effect of temporal distribution, we study the performance by considering both the processing time (latency) and the number of processed elements. To this end, at the end point of each window, we check the window throughput (i.e. the number of processed edges in the window divided by the elapsed time in seconds, Figures \ref{fig:wthroughput} and \ref{fig:wthroughput100})) as well as the total throughput (i.e. the total number of processed edges since the first window until the end of the current window divided by the total elapsed time in seconds, Figures \ref{fig:totalthroughput} and \ref{fig:totalthroughput100}). 

The window throughput displays fluctuations due to variant number of sgrs in each window; however in overall it is higher in later windows for both sGrapp and sGrapp-100. The total throughput of both sGrapp and sGrapp-100 displays an increasing pattern. As mentioned in previous section, the old hubs are the main contributors to the butterfly formation. Since old hubs occur in the early windows, the later windows mostly include butterfly vertices with lower degree. That is, there are fewer windowed butterflies in later windows  than the inter-window butterflies. Therefore, the exact counting algorithm that computes the number of windowed butterflies finishes quicker. Also, rapid approximation of the inter-window butterflies plays the main role in reducing the processing time, enhancing the total throughput. An evidence is the throughput for MovieLens100k that has almost uniform temporal distribution: we observe an increasing total throughput over windows. This is important since the number of sgrs in the windows is not decreasing while the throughput is increasing. This confirms (1) the algorithm's power is independent of the structural/temporal characteristics of the input data and (2) the algorithm is efficient particularly in dense graph streams. 
\subsection{Comparison with Baselines}\label{subsec:comparison}
We compare the effectiveness and efficiency of sGrapp suit and FLEET suit.  Experimental results of FLEET suit show that FLEET3, FLEET2 and FLEET1 have the best performance (in that order), so we use those as baselines. While sGrapp has the $\alpha$ (approximation exponent) and $N_t^W$ (number of unique timestamps per window) parameters,  FLEET has the $M$ (reservoir size) and $\gamma$ (sub-sampling probability) parameters. Since the performance of FLEET algorithms is sensitive to its parameters, we compare our algorithms against the FLEET settings which achieve the best performance. We set the sub-sampling probability as $\gamma =0.7$ as suggested by FLEET authors \cite{sanei2019fleet}.

We observe that when the reservoir size $M$ is greater than the entire stream, latency is negatively impacted since sub-sampling does not occur and all the edges are added to the reservoir and for each new edge the exact butterfly counting is executed. Hence, for evaluating the accuracy over the prefix of a stream, we set $M=0.01S$, where $S$ is the size of available stream. For evaluating the efficiency, we also use a range of values $M\in \{75k, 150k, 300k, 600k \}$ to examine the throughput over the entire stream; these values are the ones offered in the original paper \cite{sanei2019fleet}. We use the approximation exponent values yielding lowest MAPE in sGrapp, which do not necessarily yield the best MAPE in the optimized variant sGrapp-x. Since FLEET algorithms use different window semantics than sGrapp, we use virtual time-based adaptive windows over FLEET algorithms to extract the estimated values at the end of virtual windows for accuracy evaluations only (not for efficiency tests). We use the same value of $N_t^W$ for sGrapp and FLEET suits to compute  MAPE: $N_t^W\in[42, 912, 1050, 80, 290, 500]$ for Epinions, ML100k, Ml1m, Ml10m, Edit-EnWiki, and Edit-FrWiki, respectively. For efficiency comparisons, we used the same value used in effectiveness experiments since our goal is to check the efficiency cost of the most accurate approximation. For each $N^W_t$, there exists an alpha yielding a high precision estimate. $N^W_t$ does not affect accuracy.

In Table \ref{tab:baselineThroughputs}, we report the total throughput over the entire graph streams for sGrapp and FLEET suits. Since FLEET1's throughput is very low, we do not include it in this experiment. 
By increasing the size of reservoir the throughput of all FLEET algorithms decreases since the frequency of exact butterfly counting per edge increases. It is always the case that $M=75k$ and $M=600k$ yields the highest and the lowest throughput, respectively. sGrapp outperforms FLEET for every setting: minimum (maximum) ratios of sGrapp to FLEET throughput are $1.32$ ($16.7$), $1.5$ ($2.5$), $1.13$ ($31.1)$, $1.58$  ($100.3$), $1.8$ ($160.4$), and $0.4$  ($32$) in Epinions, ML100k, ML1m, ML10m, Edit-FrWiki, and Edit-EnWiki, respectively. sGrapp and its optimized version outperforms FLEET suit within a range of $[1.13$ $160.4]$, with the performance improvement increasing as graph streams become larger (i.e., Edit-FrWiki, ML10m, and Edit-Enwiki).
\begin{figure}[ht]
 \begin{tikzpicture}[node distance=1cm]
    \node [draw, circle, thick](1) {$\gamma$};
    \node [draw, circle, thick, right=of 1](2) {$P$};
    \node [draw, circle, thick, fill=cyan, right=of 2](3) {$\hat{B}$};
    \node [draw, circle, thick, below=of 1](4) {$M$};
    \node [draw, circle, thick, below=of 3](5) {$F$};

    \draw [thick, ->] (1) -- (2);
    \draw [thick, ->] (4) -- (2);
   \draw [thick, ->] (2) -- (3);
    \draw [thick, ->] (2) -- (5);
    \draw [thick, ->] (5) -- (3);
\end{tikzpicture}
    
    \caption{Impact of FLEET parameters on estimate.}
   \label{fig:fleetParameters}
\end{figure}
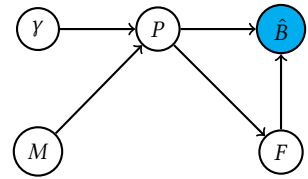
In Table \ref{tab:baselineErrors}, we report accuracy (in terms of MAPE) of sGrapp and FLEET suits over the subset of stream with available true values. We observe that sGrapp and sGrapp-x achieve MAPE values equal to $0.022$, $0.009$, $0.043$, $0.143$, $0.134$, and $0.137$ in Epinions, ML100k, ML1m, ML10m, Edit-FrWiki, and Edit-EnWiki which are significantly lower than those of FLEET -- sGrapp errors are $0.38\times$, $0.02\times$, $0.91\times$, $1.66\times$, $0.08\times$, and $0.05\times$ of FLEET for these graphs.
Table \ref{tab:baselineErrors}  (Table \ref{tab:baselineThroughputs})  shows that for ML10m, FLEET3’s accuracy (throughput) is $0.057$ better (up to $100x$ lower) than sGrapp explaining the high computational cost of FLEET3 in this specific dataset. FLEET3 updates the estimate for each new edge by enumerating butterflies incident to that edge. This increases the probability of detecting the incident butterflies by a factor of $P$ (i.e. sampling probability), however the computations are much increased. This technique is more impactful in ML10m with high butterfly density.
Butterfly estimate $\hat{B}$ is updated as soon as an edge arrives in FLEET3 or during the sampling and (or) sub-sampling phase in FLEET1 (FLEET2).  
 In FLEET1, when $P$ is not high or $M$ is small and $\gamma$ is low, $\hat{B}$ is not frequently updated and error goes up. In FLEET2, 
 many butterflies are missed due to sampling. 
 Moreover, FLEET has poor accuracy when the butterflies are distributed across the edges uniformly (e.g. Edit-EnWiki with a low butterfly density of $9.1\times 10^{-21}$ according to the statistics in \cite{sanei2019fleet}). The reason is that $\hat{B}$ is updated for some edges only. 
In summary, the accuracy of FLEET algorithms highly depend on $M$, $\gamma$, and the frequency of updating $\hat{B}$, because $\hat{B}$ is updated wrt the $P$; and $P$ is updated as $p \gets p*\gamma$ in each sampling round, which in turn increases $\hat{B}$ more. As depicted in Figure \ref{fig:fleetParameters}, 
$M$ and $\gamma$ (confounding variables) impact $P$ and $P$ impacts $\hat{B}$ directly through the formula and indirectly through the frequency of updates. A high frequency of butterfly counting and high sub-sampling come at the cost of low throughput. A large $M$ comes at the cost of memory consumption as well as latency issues. FLEET suit cannot guarantee both efficiency and effectiveness at the same time. sGrapp does not suffer from the aforementioned issues since it does not rely on exact counting and sampling; rather it relies on counting the inter-window butterflies. sGrapp keeps the computational footprint of exactly counting the in-window butterflies low by means of the load-balanced adaptive windows and then, effectively estimates the number of inter-window butterflies which are the dominant butterflies based on the butterfly densification power law formalism.
\begin{figure*}[h]\centering
    \subfigure{\includegraphics[width=0.3\textwidth]{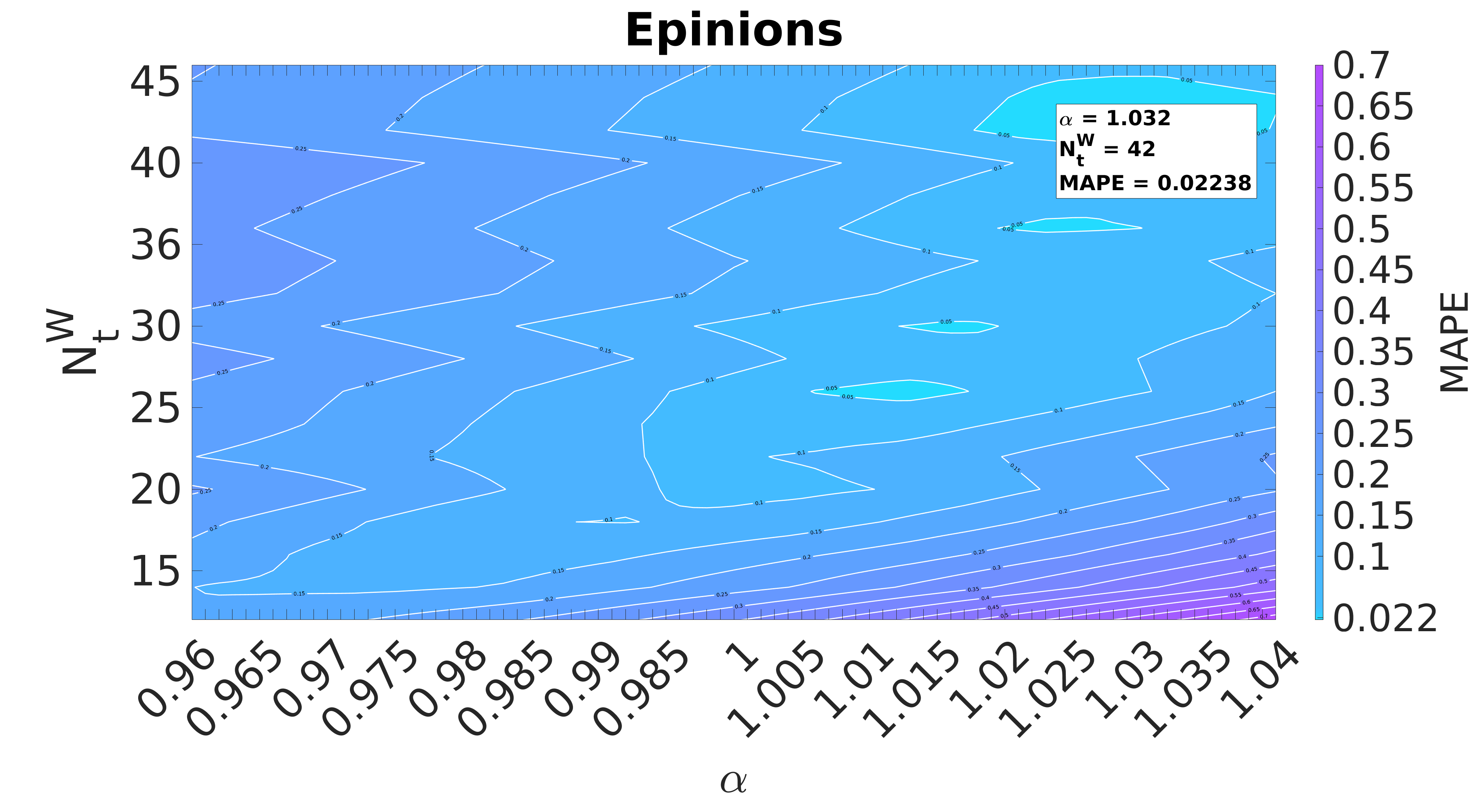}} 
    \subfigure{\includegraphics[width=0.3\textwidth]{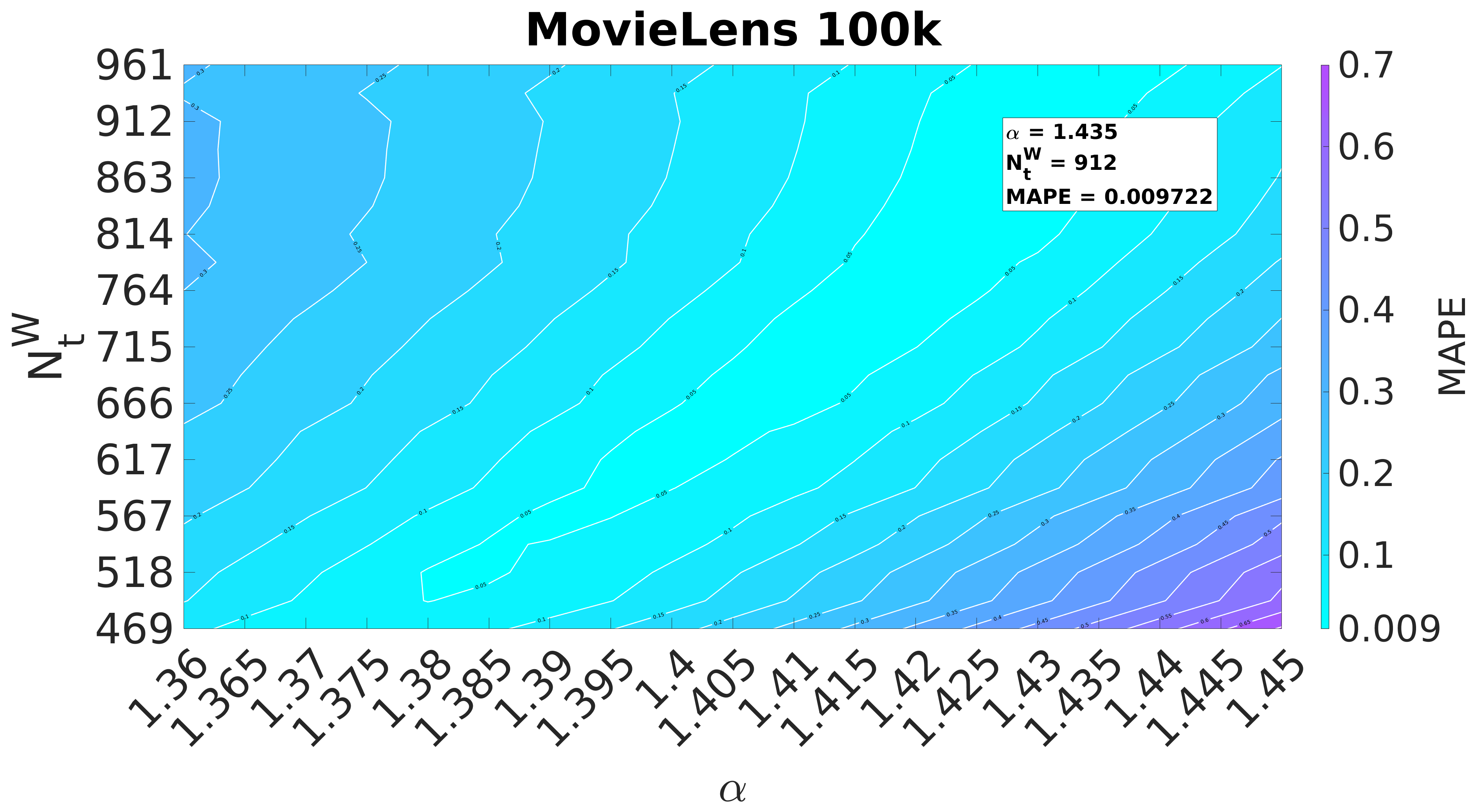}} 
    \subfigure{\includegraphics[width=0.3\textwidth]{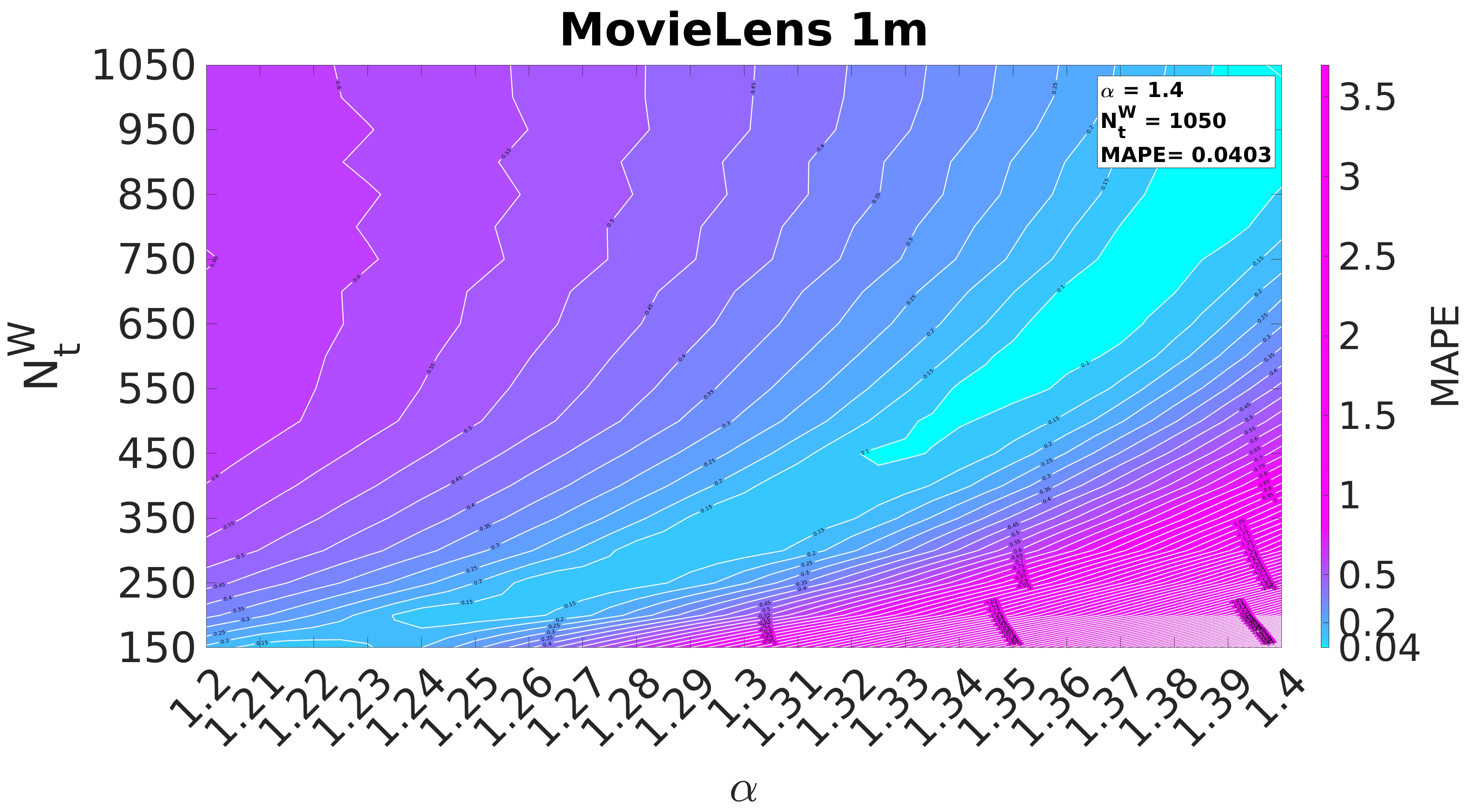}}
    \subfigure{\includegraphics[width=0.3\textwidth]{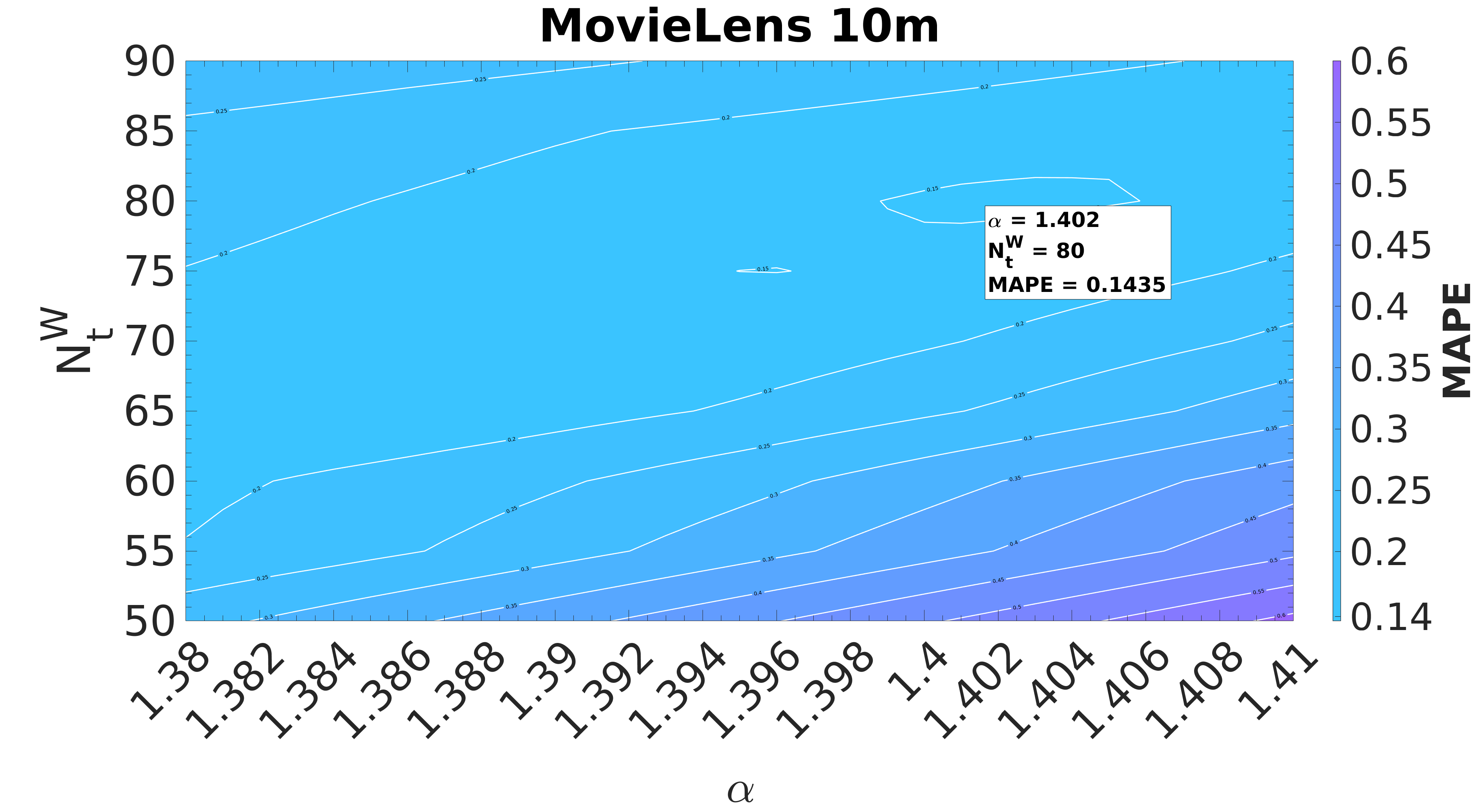}}
    \subfigure{\includegraphics[width=0.3\textwidth]{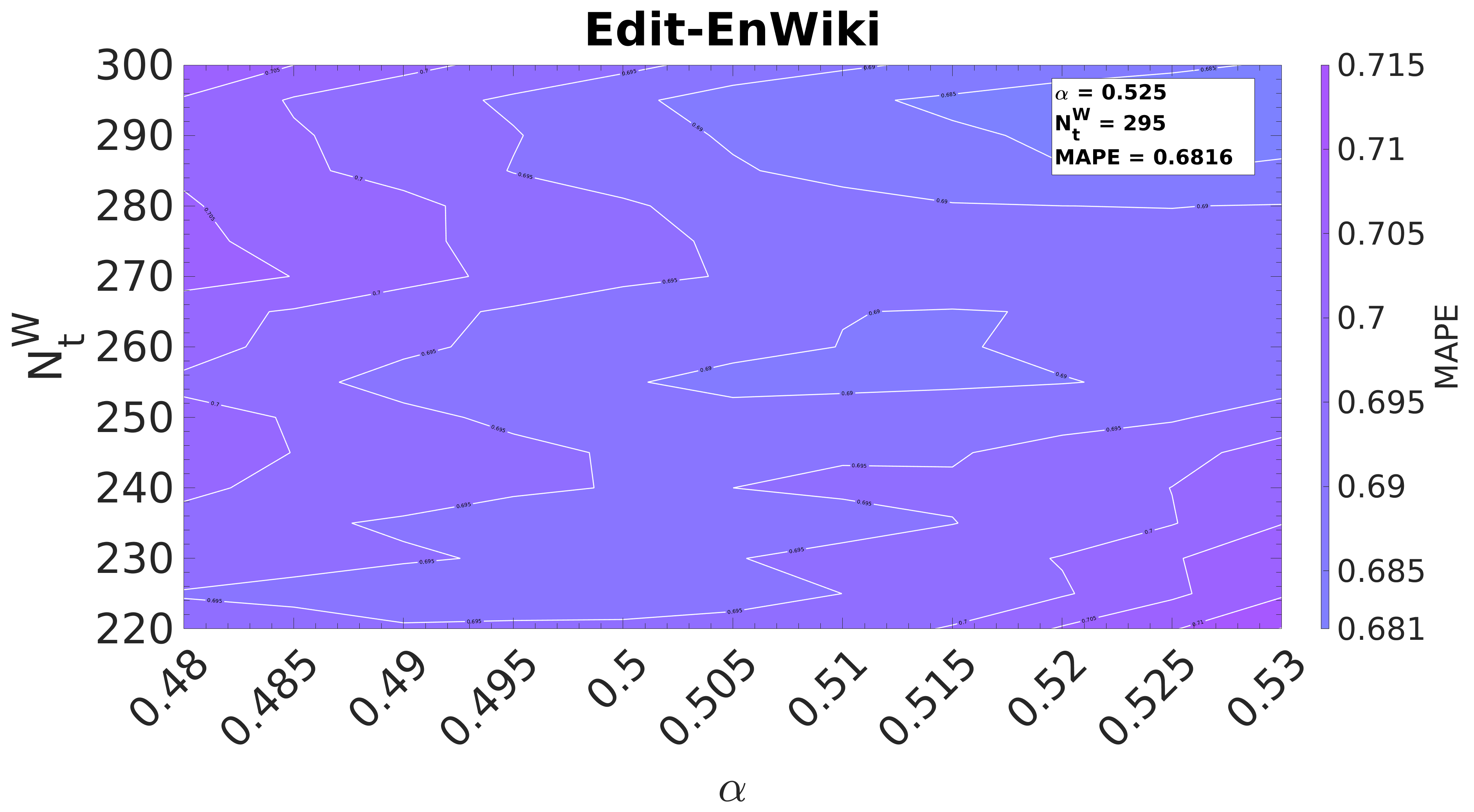}}
    \subfigure{\includegraphics[width=0.3\textwidth]{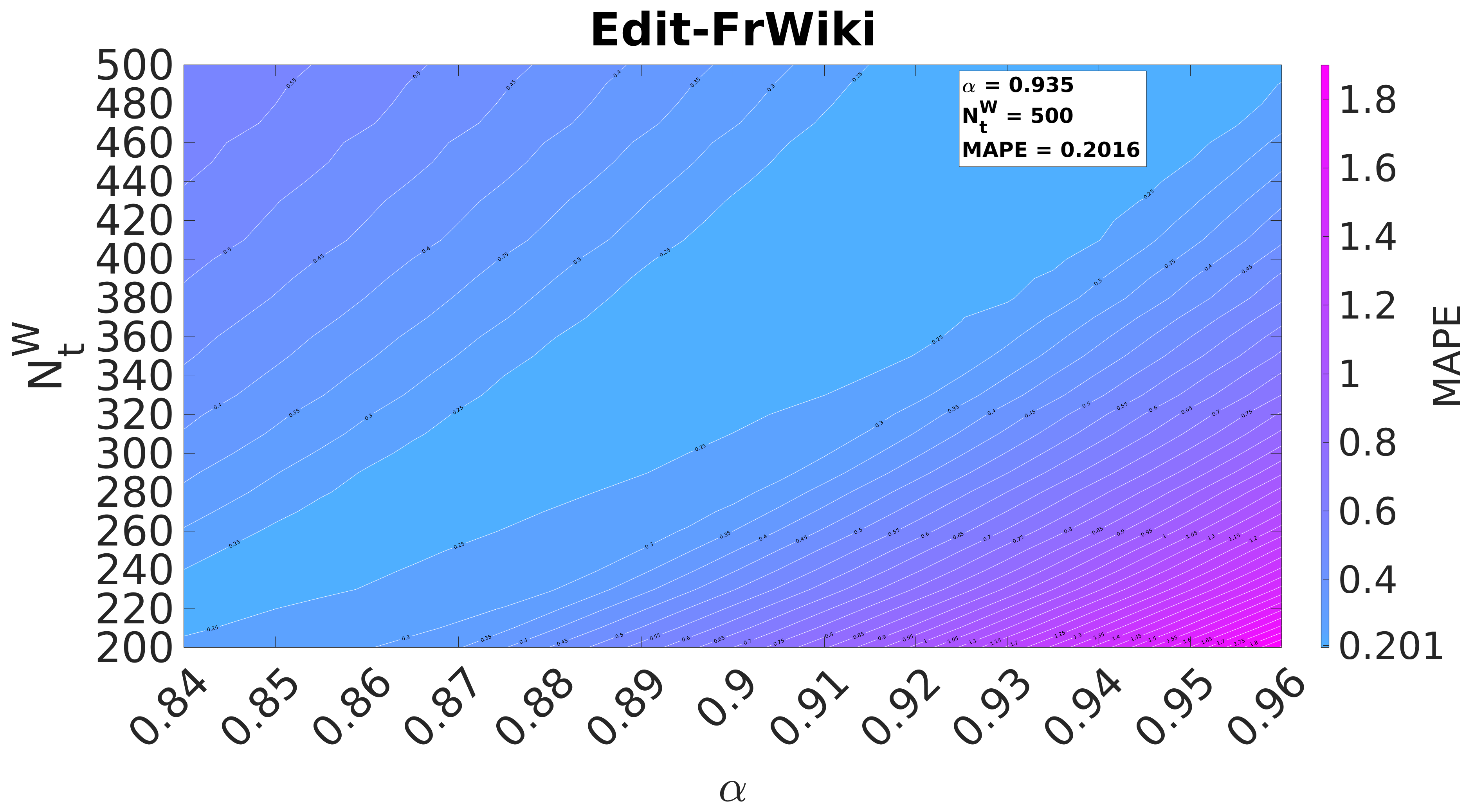}}
    \caption{[Best viewed in colored.] Accuracy of sGrapp for different values of $\alpha$ and $N_t^W$ }\label{fig:mapes}\end{figure*}
\begin{figure*}[h]\centering
    \subfigure{\includegraphics[width=0.3\textwidth]{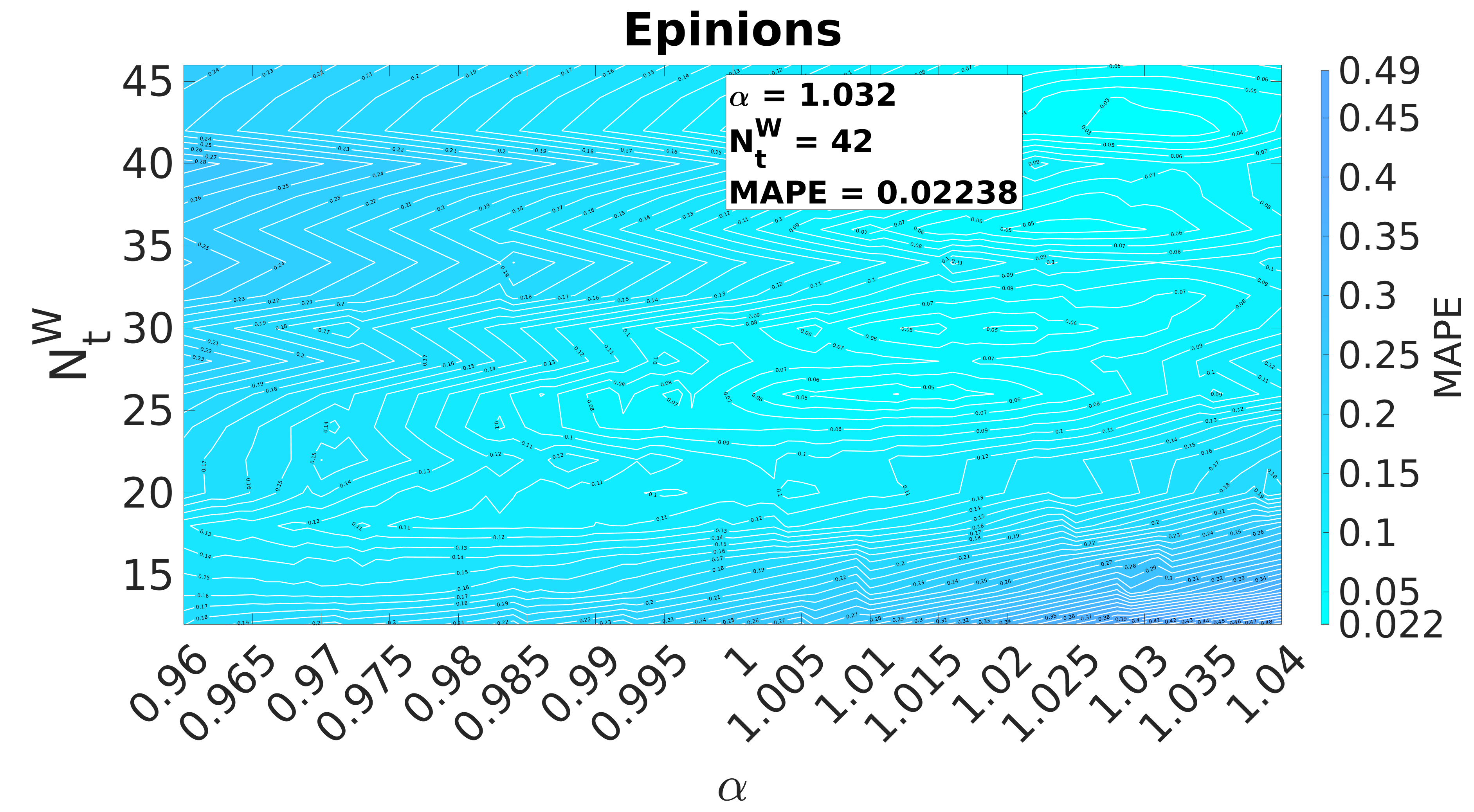}} 
    \subfigure{\includegraphics[width=0.3\textwidth]{ 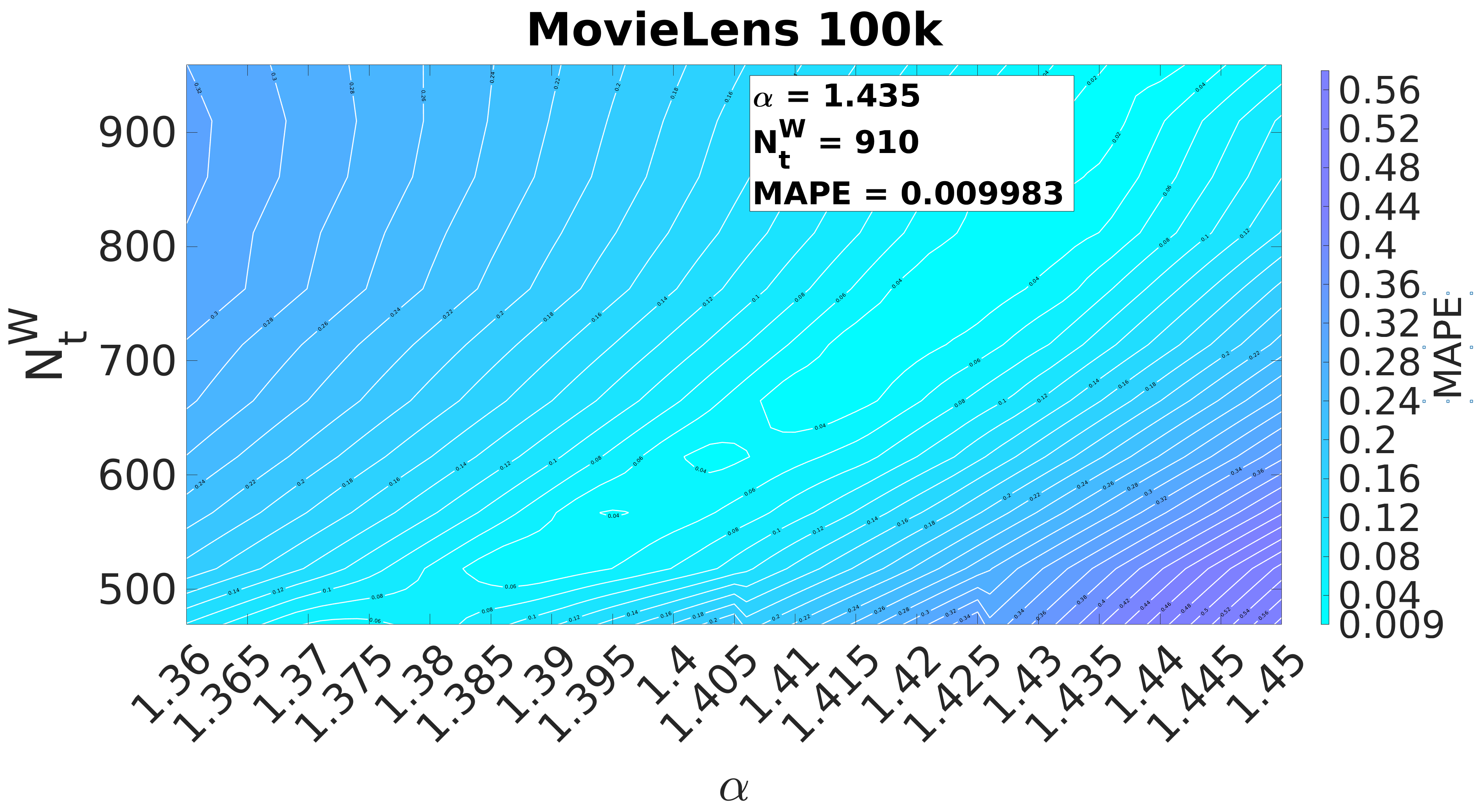}} 
    \subfigure{\includegraphics[width=0.3\textwidth]{ 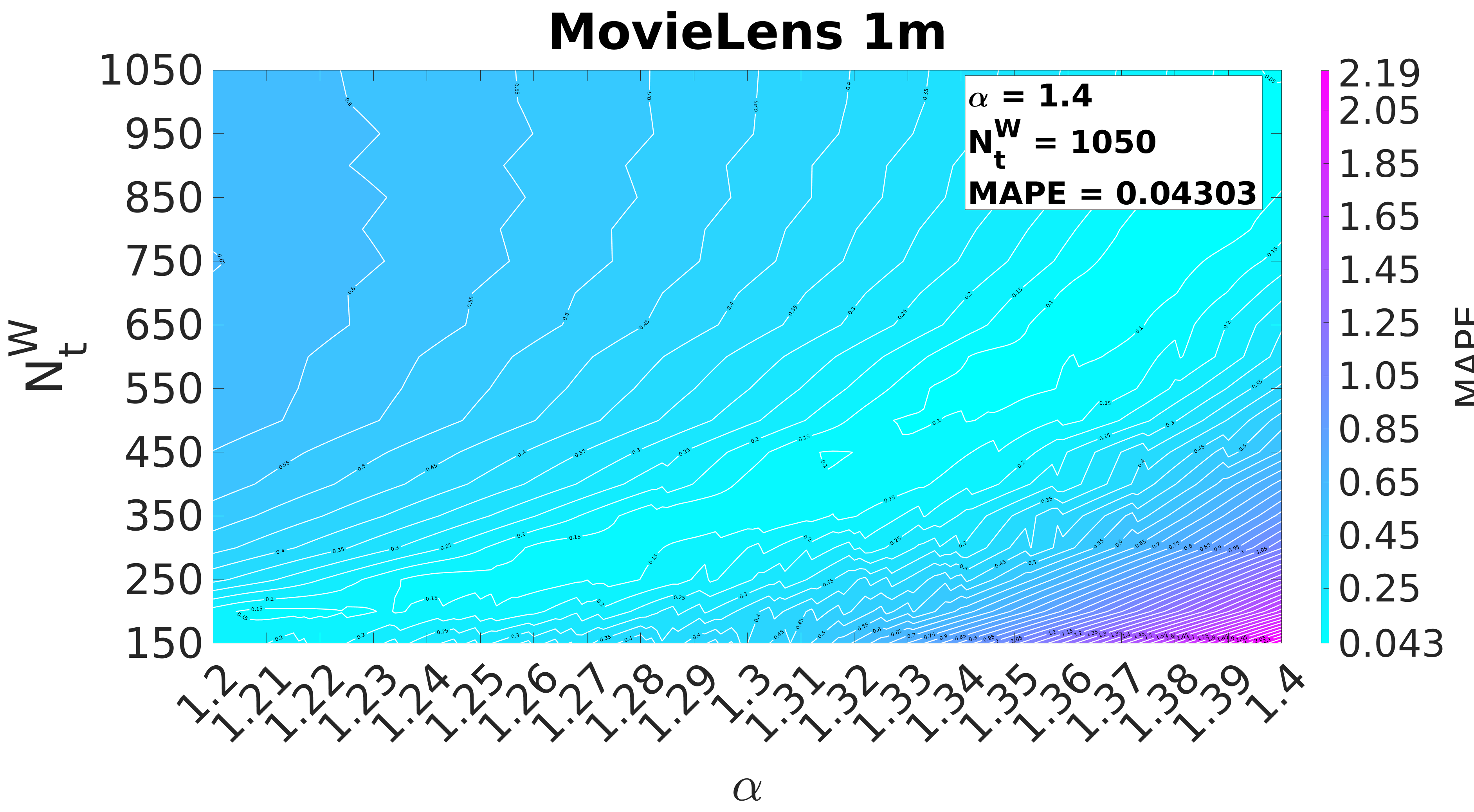}}
    \subfigure{\includegraphics[width=0.3\textwidth]{ 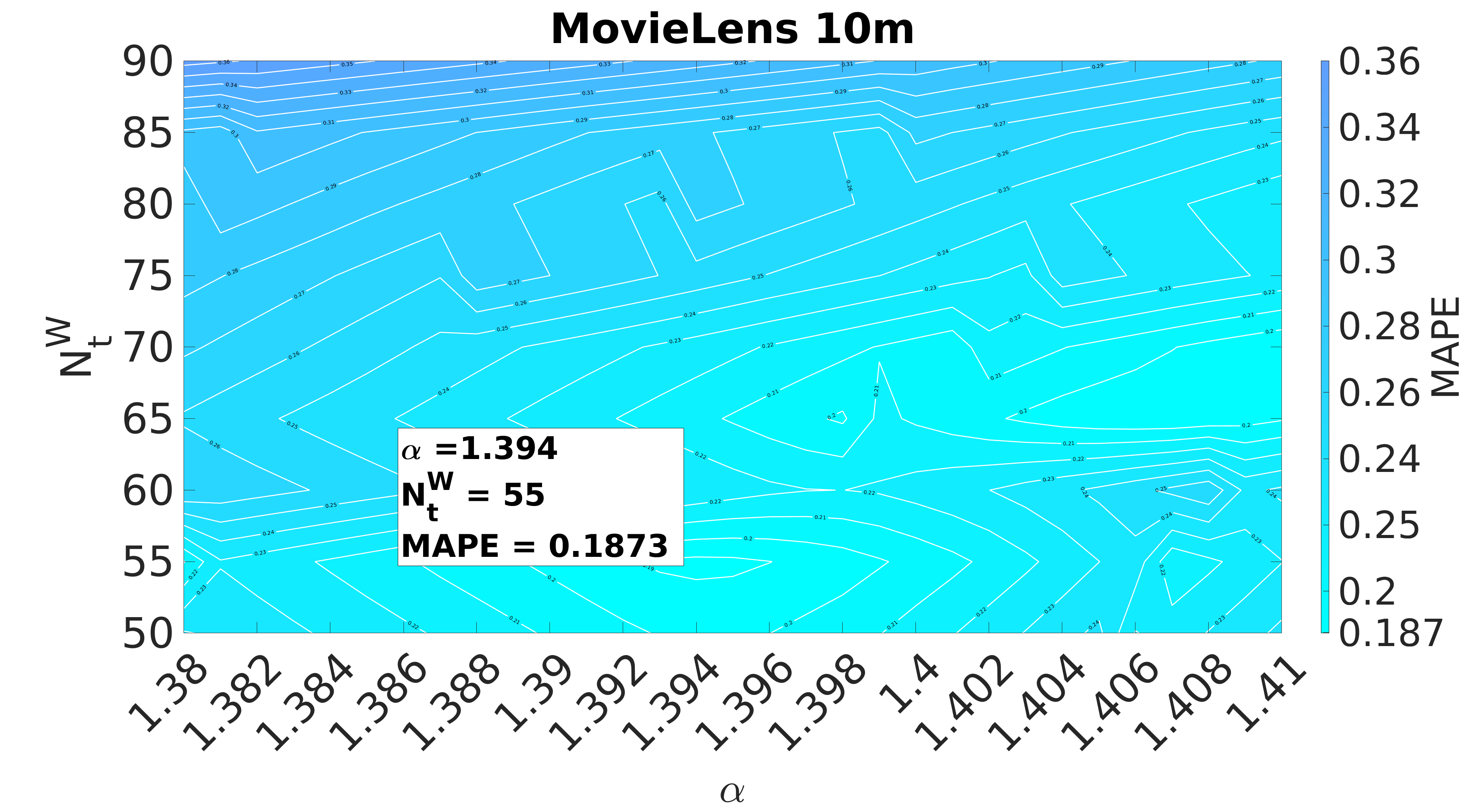}}
    \subfigure{\includegraphics[width=0.3\textwidth]{ 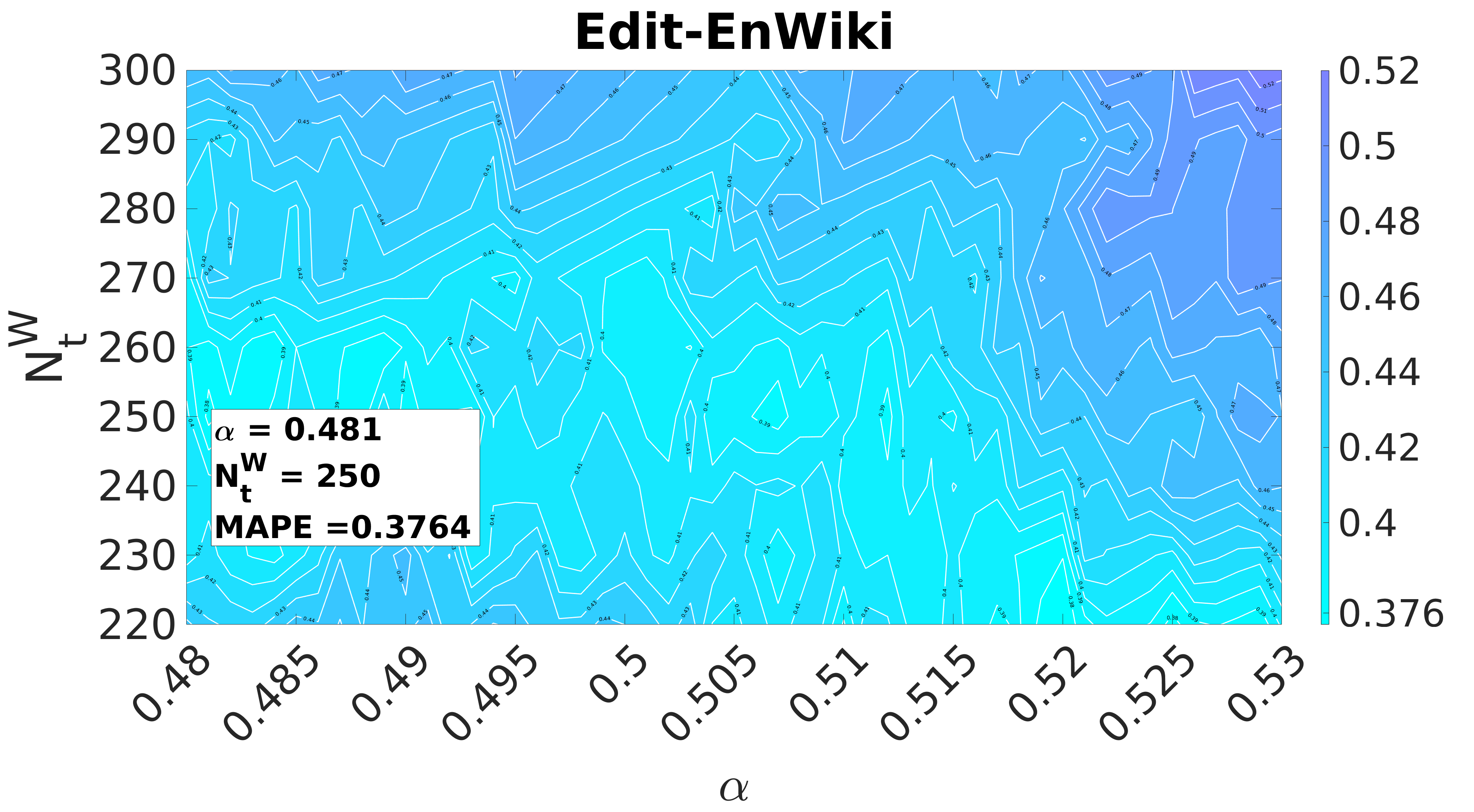}}
    \subfigure{\includegraphics[width=0.3\textwidth]{ 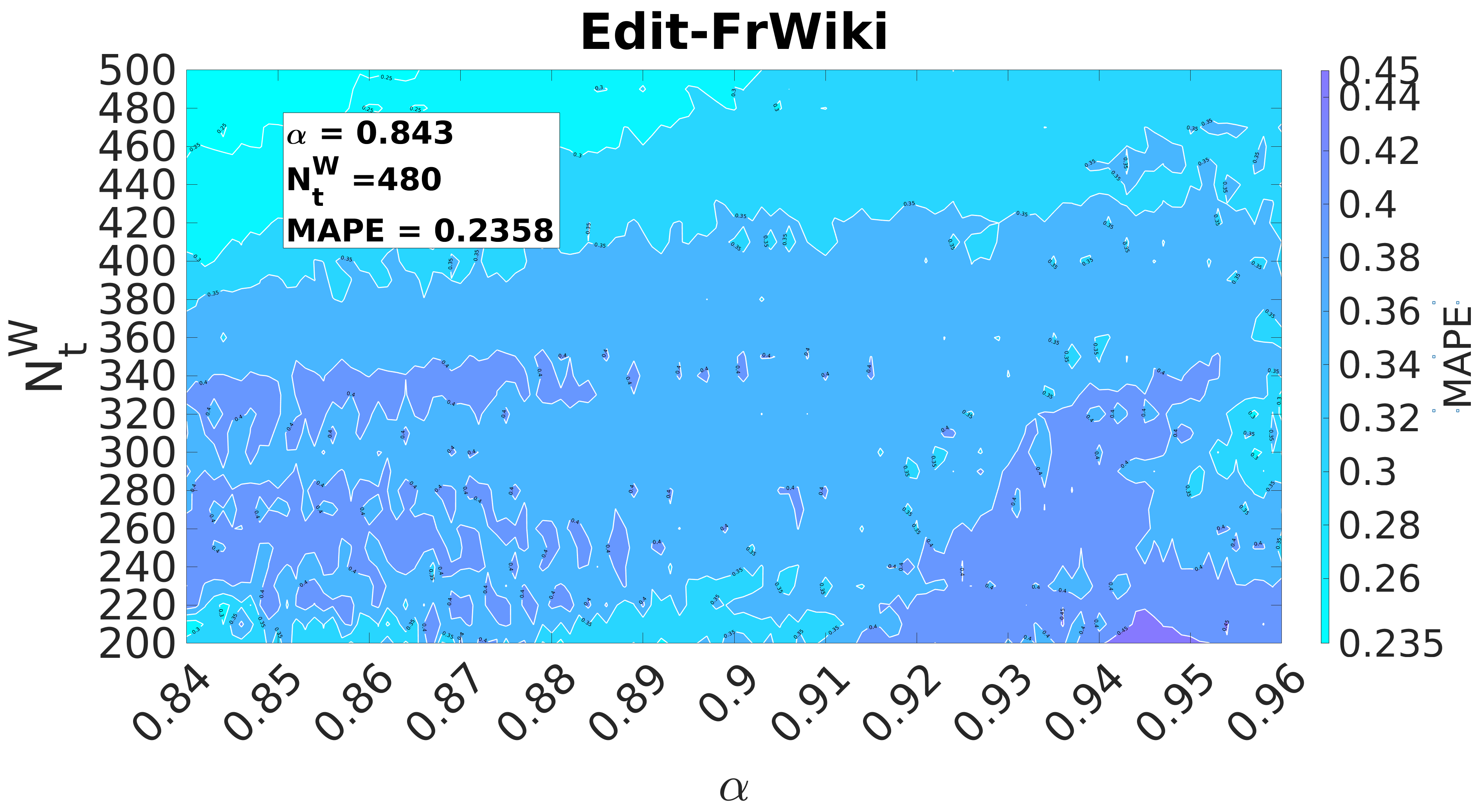}}
    \caption{[Best viewed in colored.] Accuracy of sGrapp-25 for different values of $\alpha$ and $N_t^W$.}
   \label{fig:mapes25}
\end{figure*}
\begin{figure*}[h]\centering
    \subfigure{\includegraphics[width=0.3\textwidth]{ 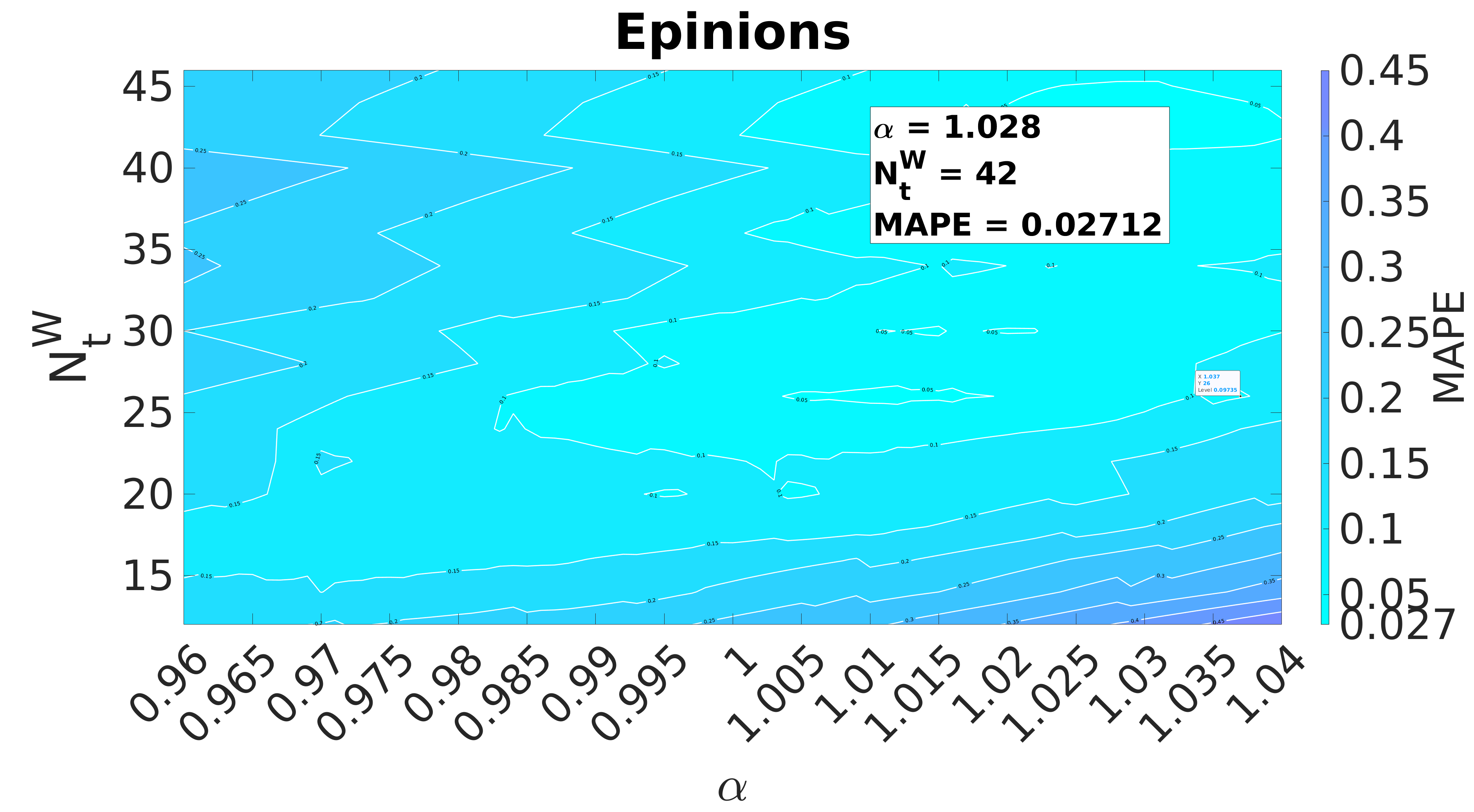}} 
    \subfigure{\includegraphics[width=0.3\textwidth]{ 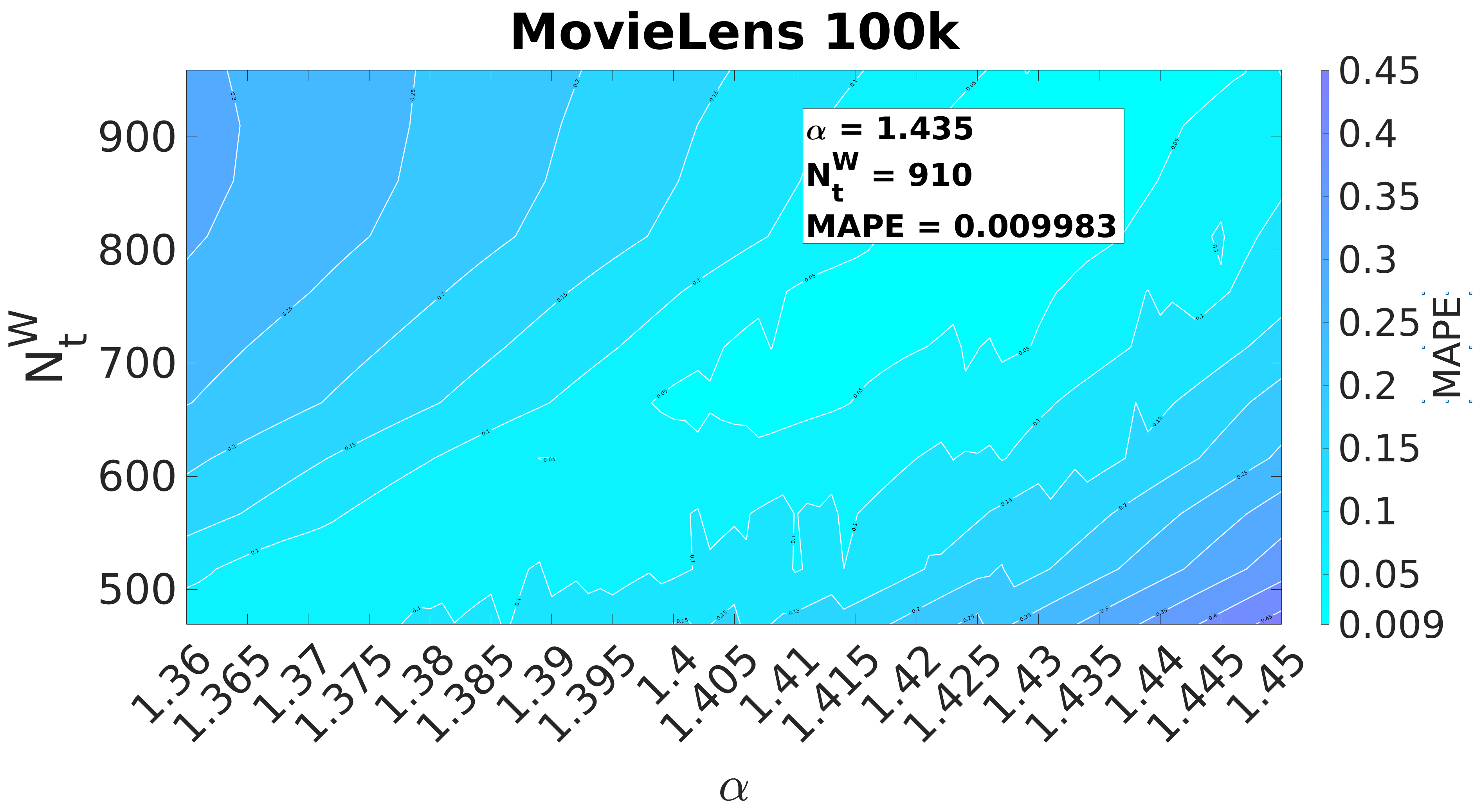}} 
    \subfigure{\includegraphics[width=0.3\textwidth]{ 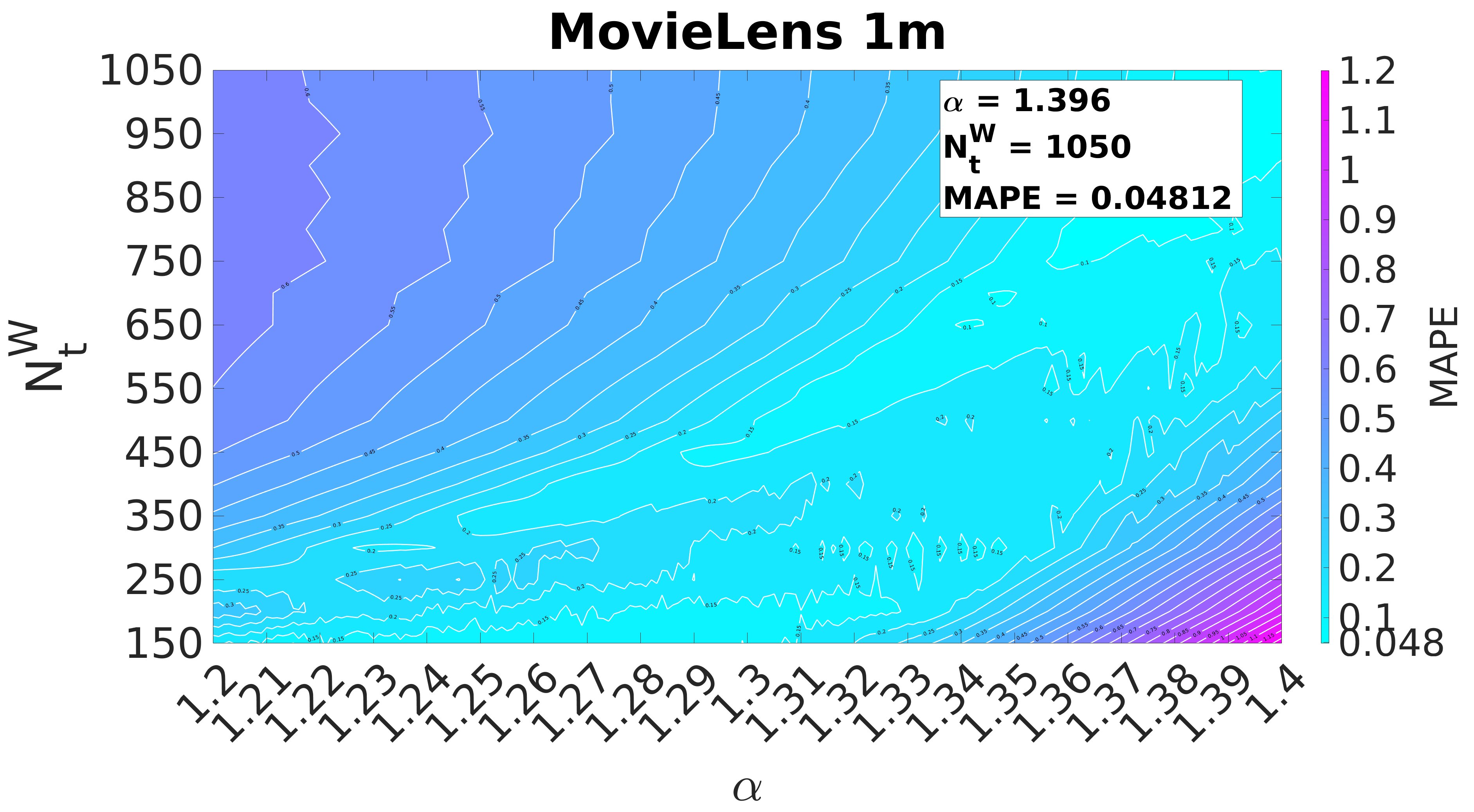}}
    \subfigure{\includegraphics[width=0.3\textwidth]{ 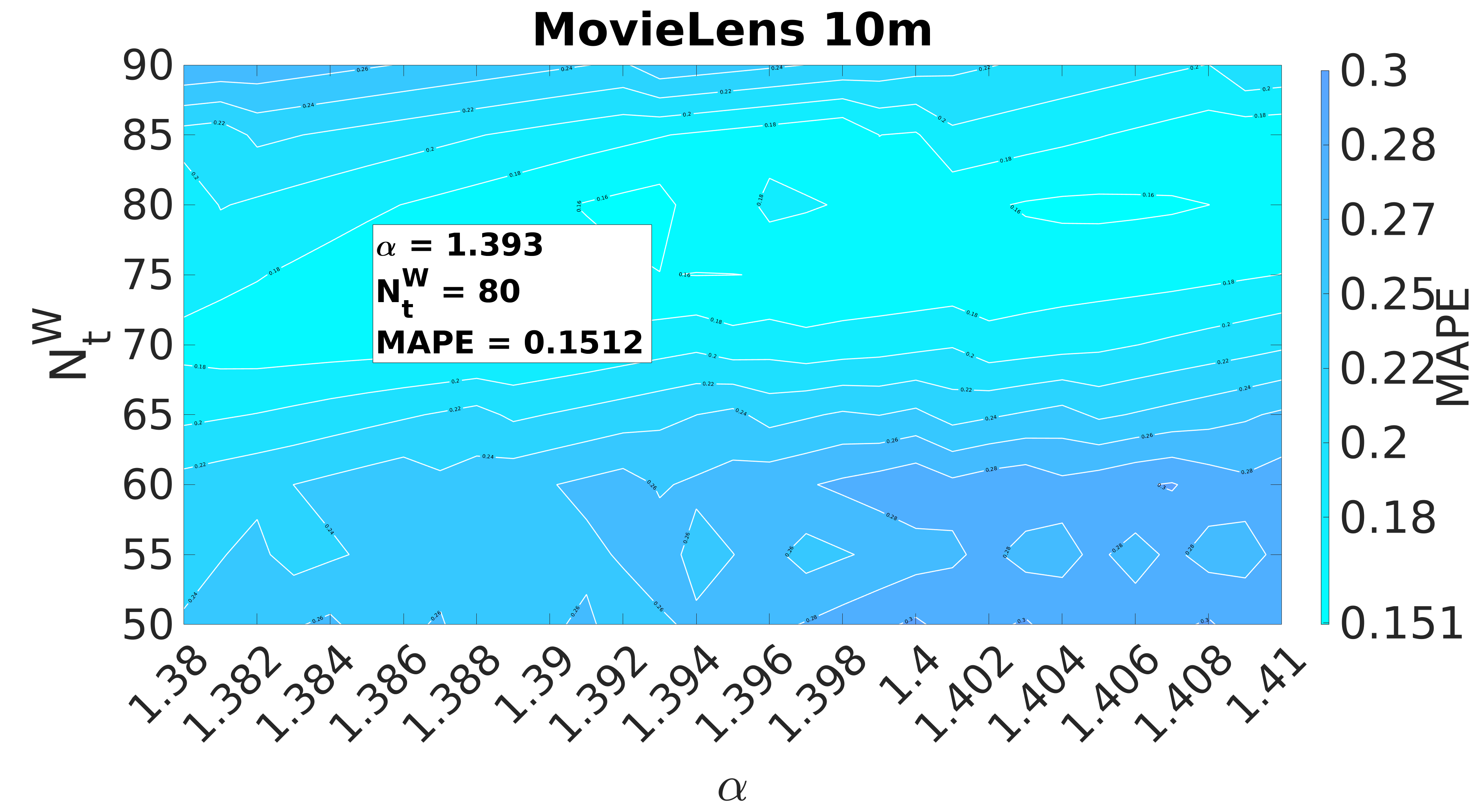}}
    \subfigure{\includegraphics[width=0.3\textwidth]{ 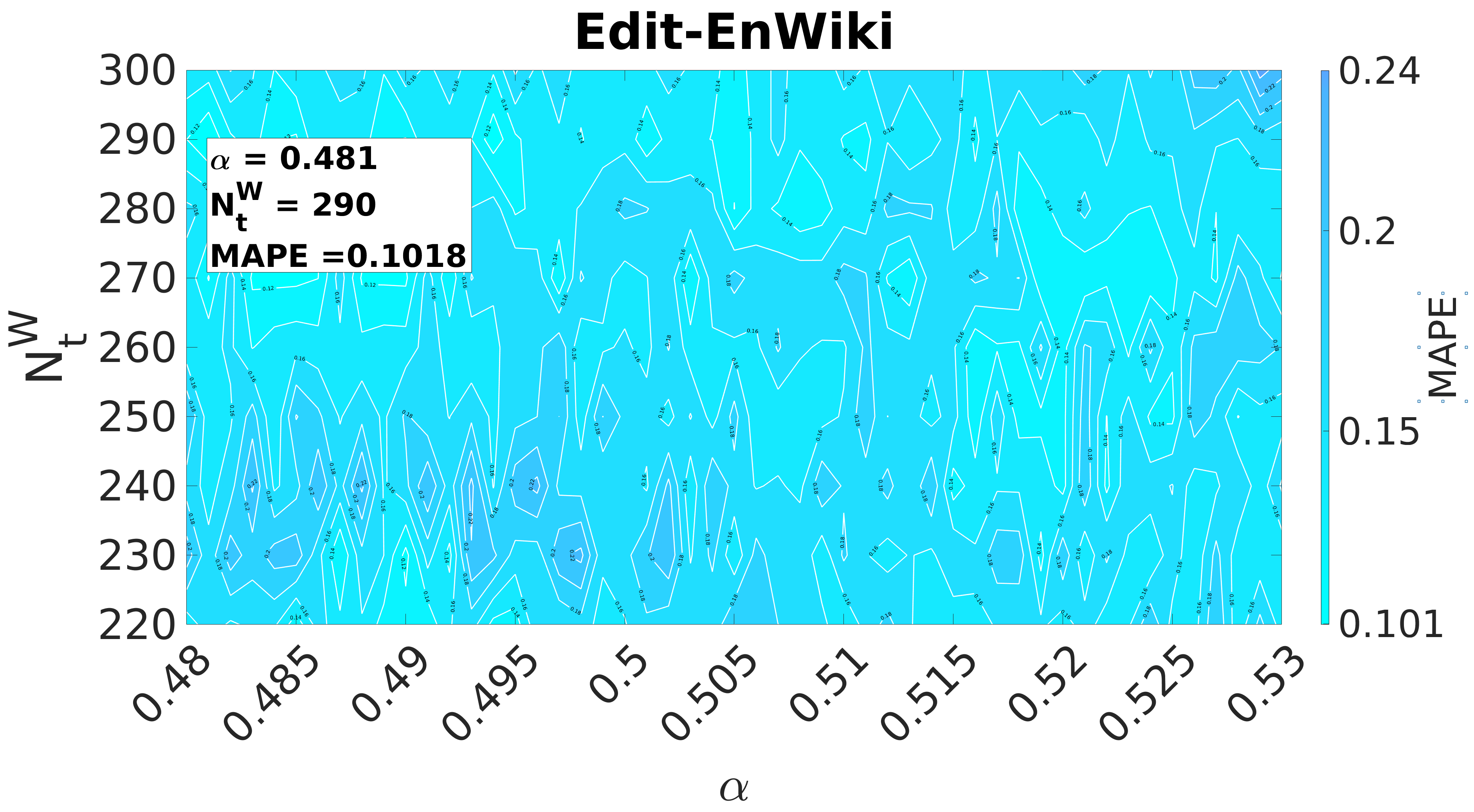}}
    \subfigure{\includegraphics[width=0.3\textwidth]{ 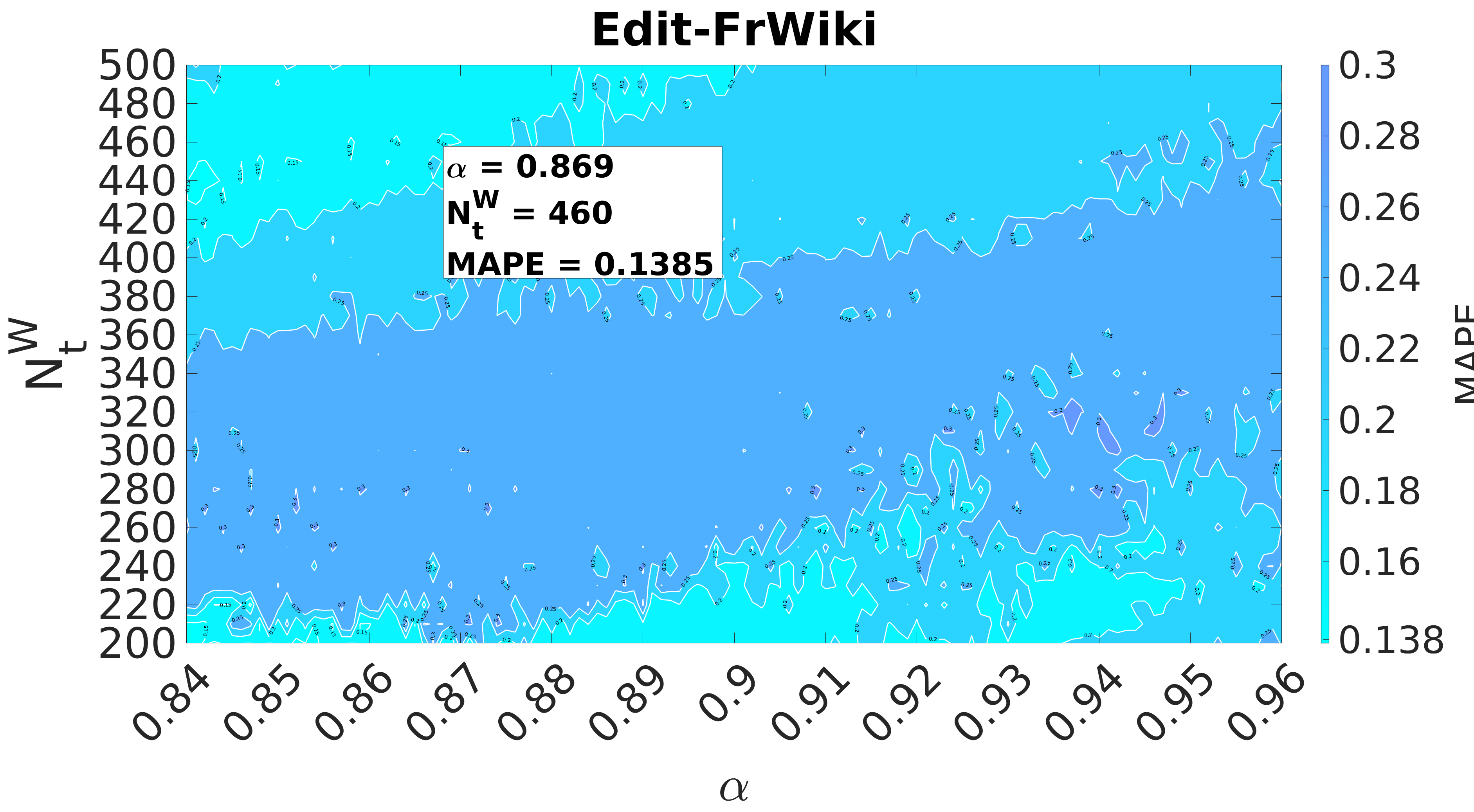}}
    \caption{[Best viewed in colored.] Accuracy of sGrapp-50 for different values of $\alpha$ and $N_t^W$ }
   \label{fig:mapes50}
\end{figure*}
\begin{figure*}[h]\centering
    \subfigure{\includegraphics[width=0.3\textwidth]{ 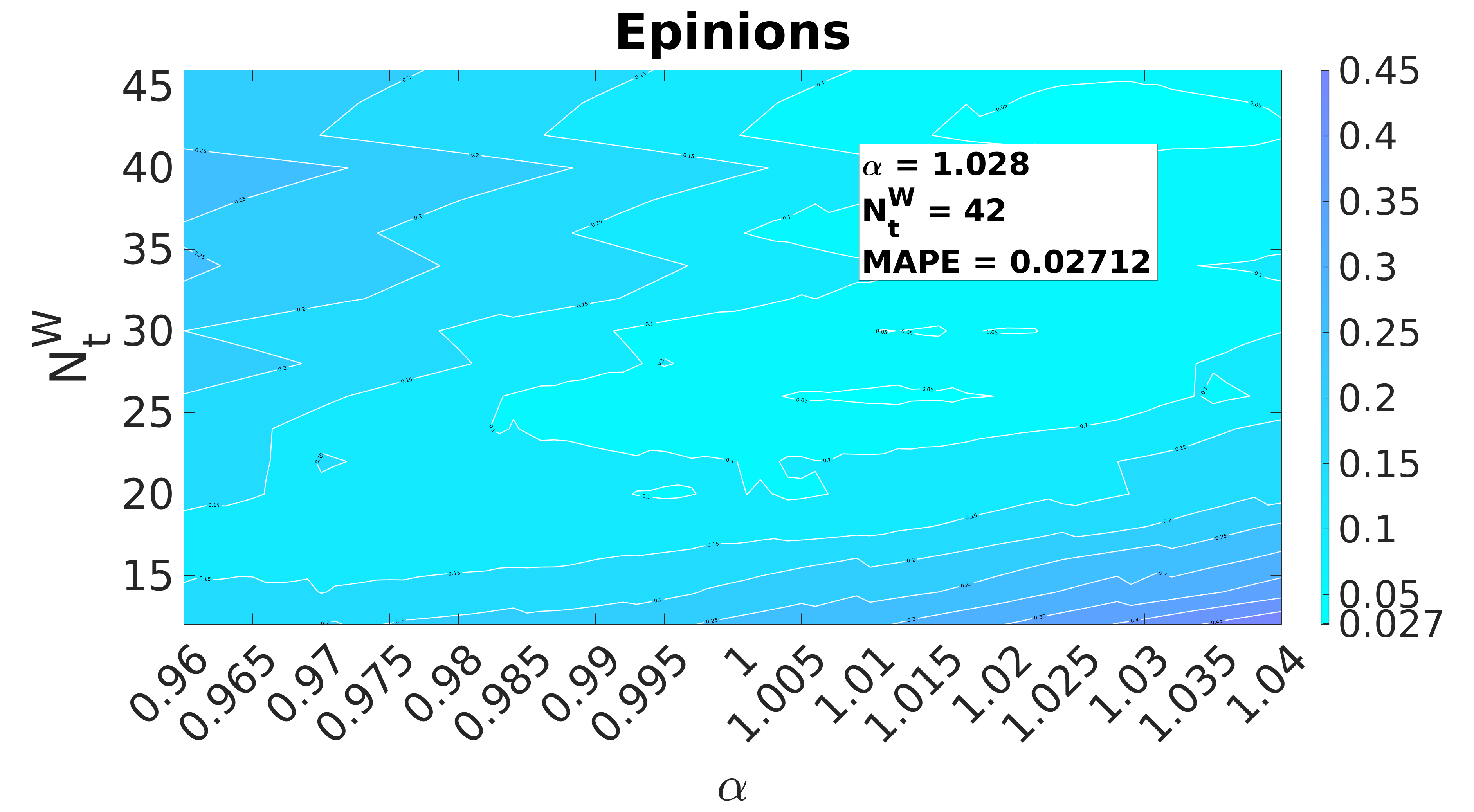}} 
    \subfigure{\includegraphics[width=0.3\textwidth]{ 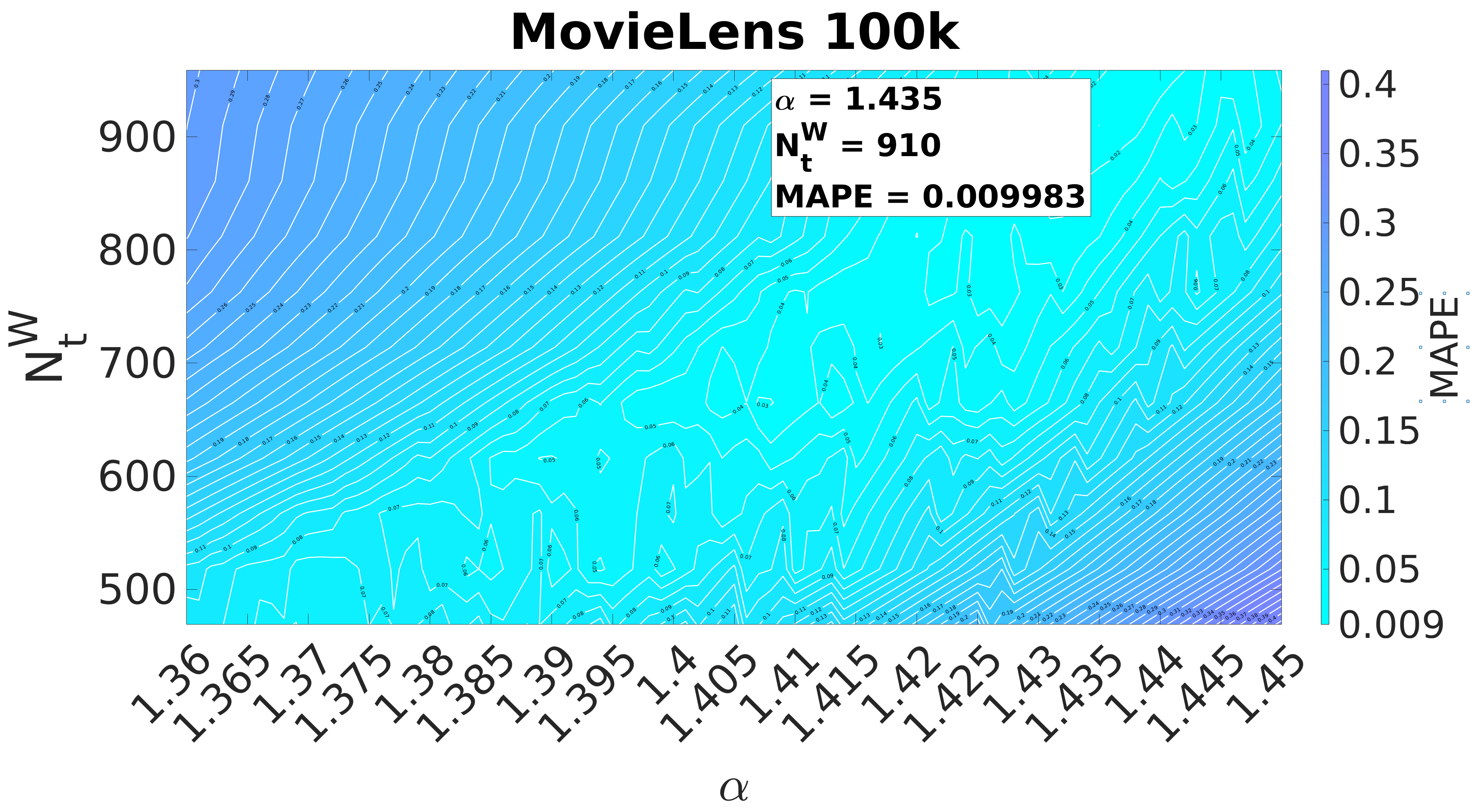}} 
    \subfigure{\includegraphics[width=0.3\textwidth]{ 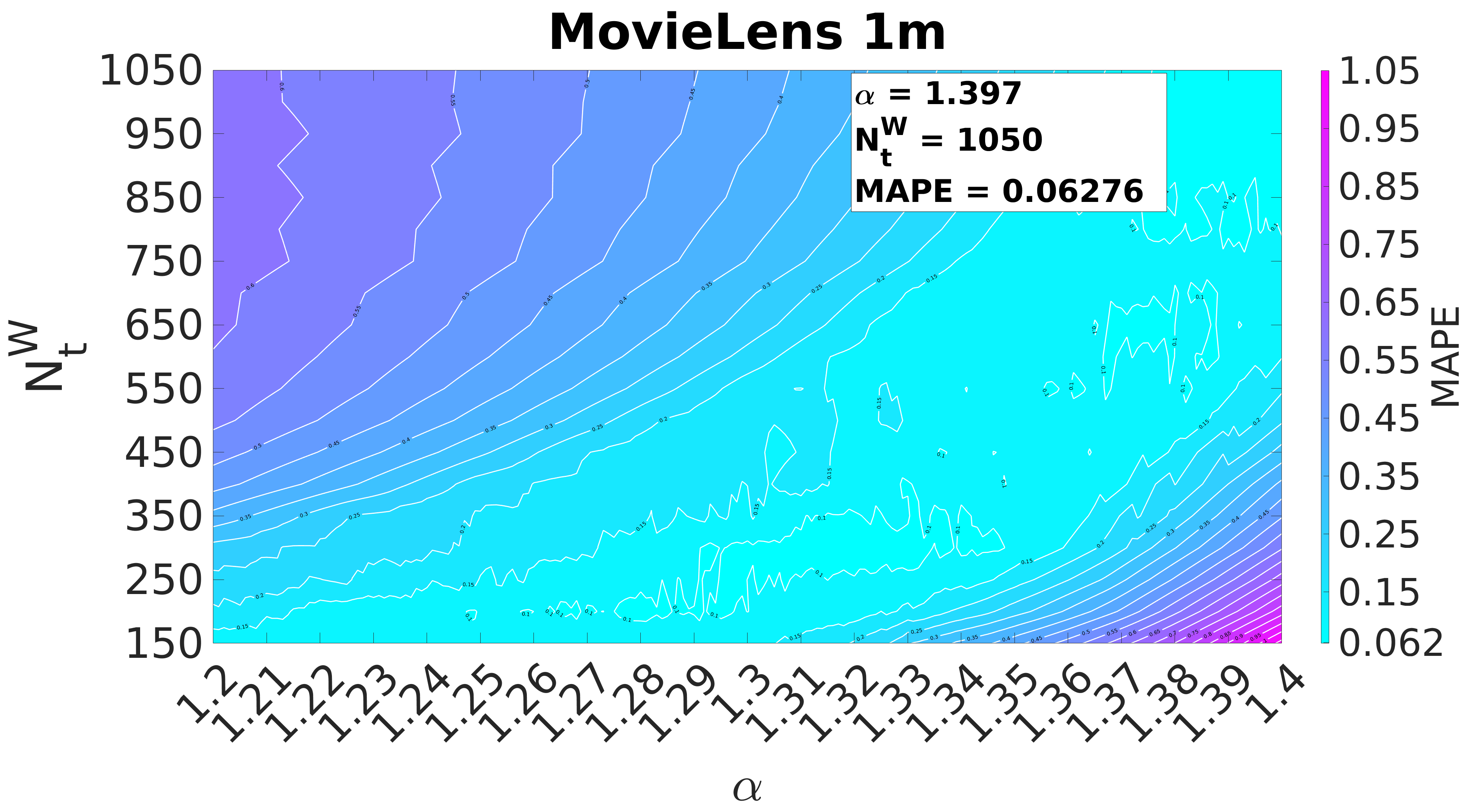}}
    \subfigure{\includegraphics[width=0.3\textwidth]{ 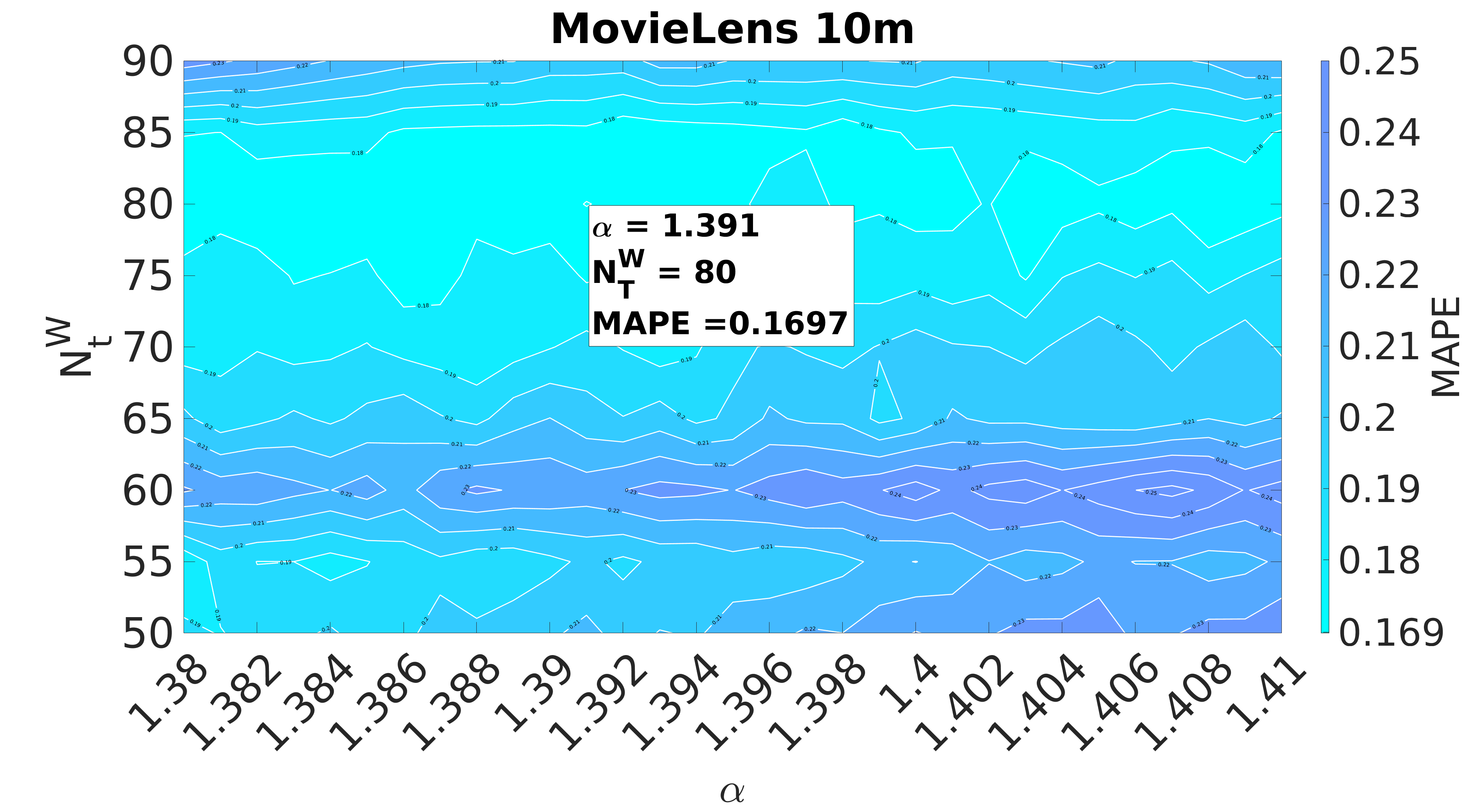}}
    \subfigure{\includegraphics[width=0.3\textwidth]{ 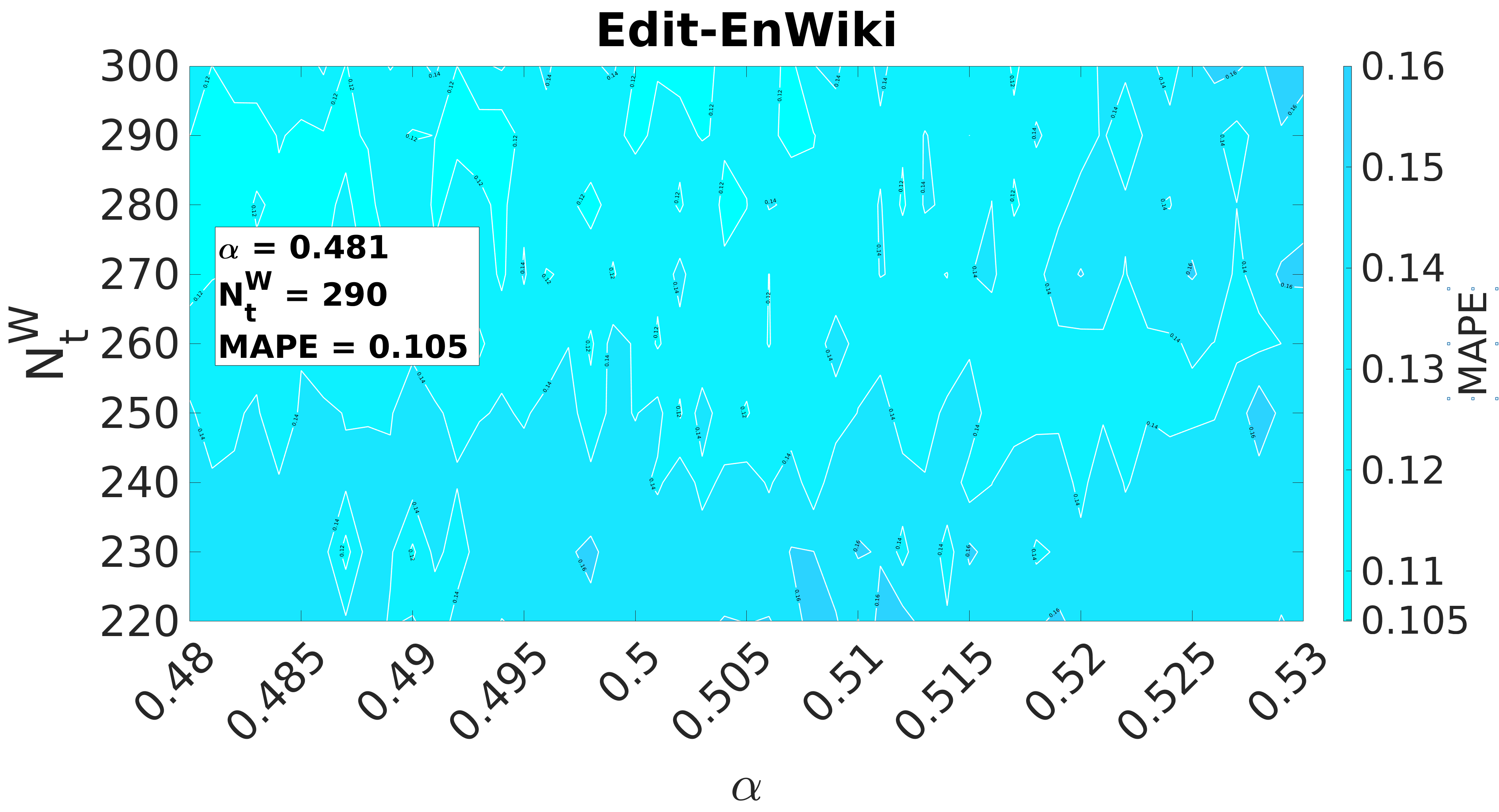}}
    \subfigure{\includegraphics[width=0.3\textwidth]{ 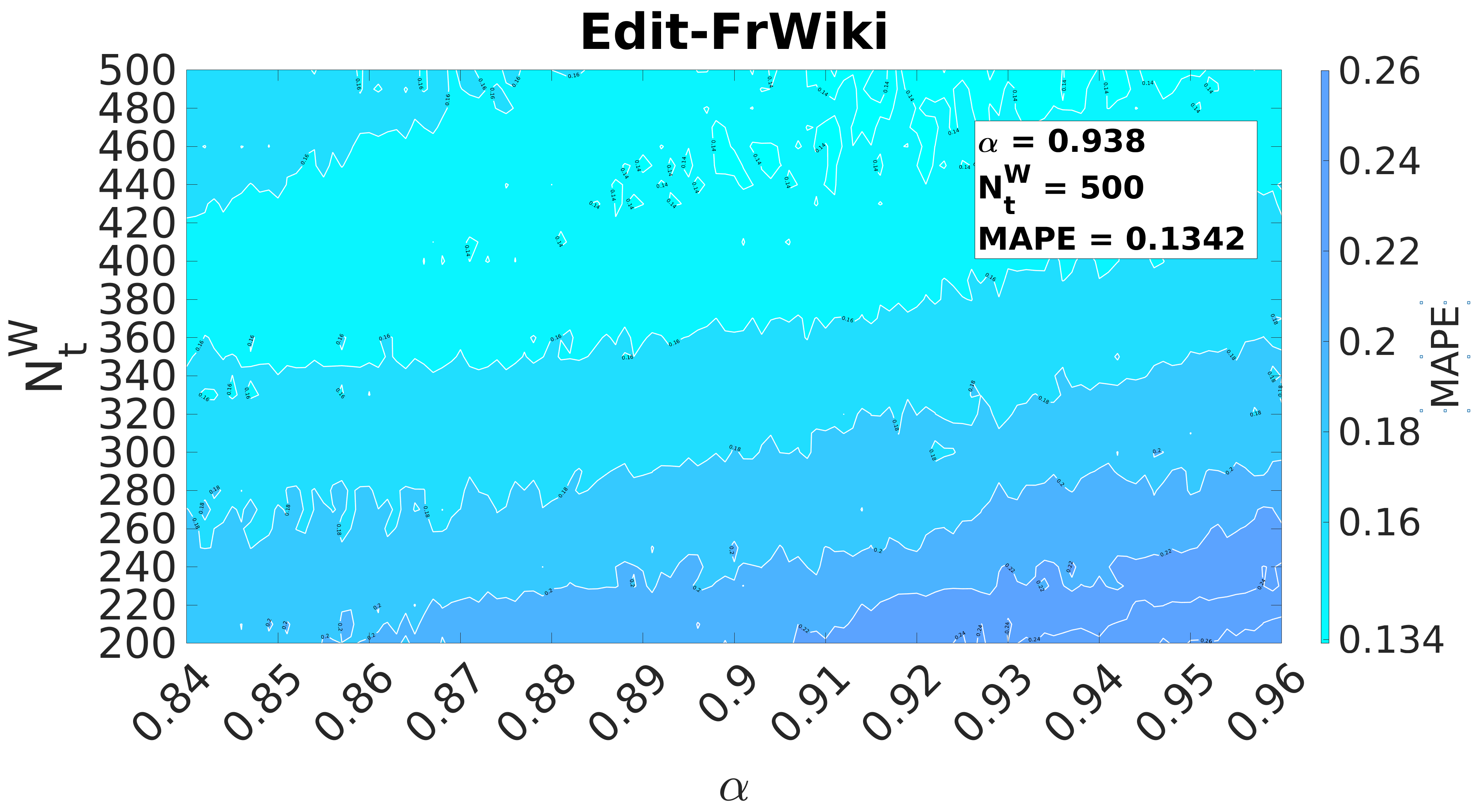}}
    \caption{[Best viewed in colored.] Accuracy of sGrapp-75 for different values of $\alpha$ and $N_t^W$.}
   \label{fig:mapes75}
\end{figure*}
\begin{figure*}[h]\centering
    \subfigure{\includegraphics[width=0.3\textwidth]{ 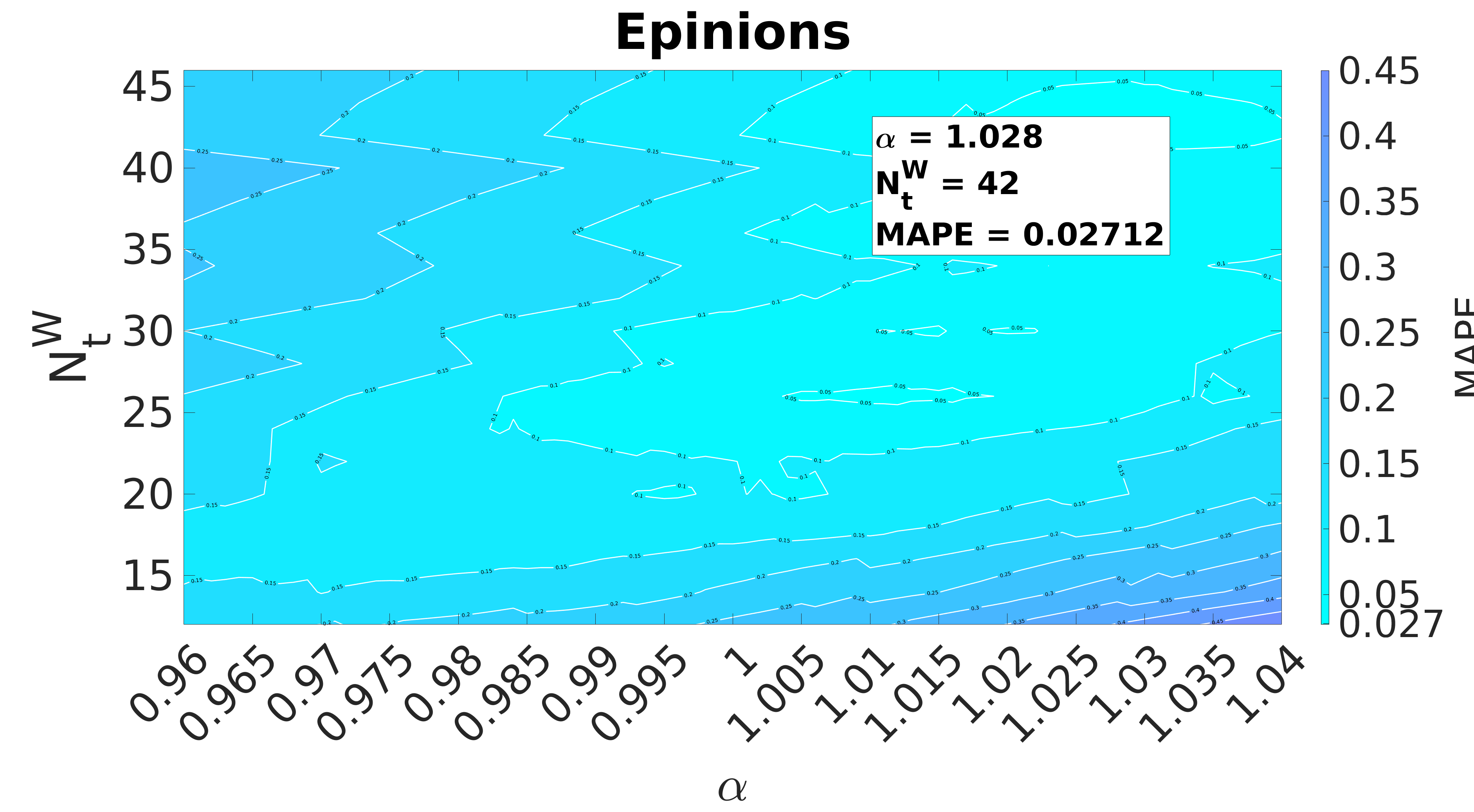}} 
    \subfigure{\includegraphics[width=0.3\textwidth]{ 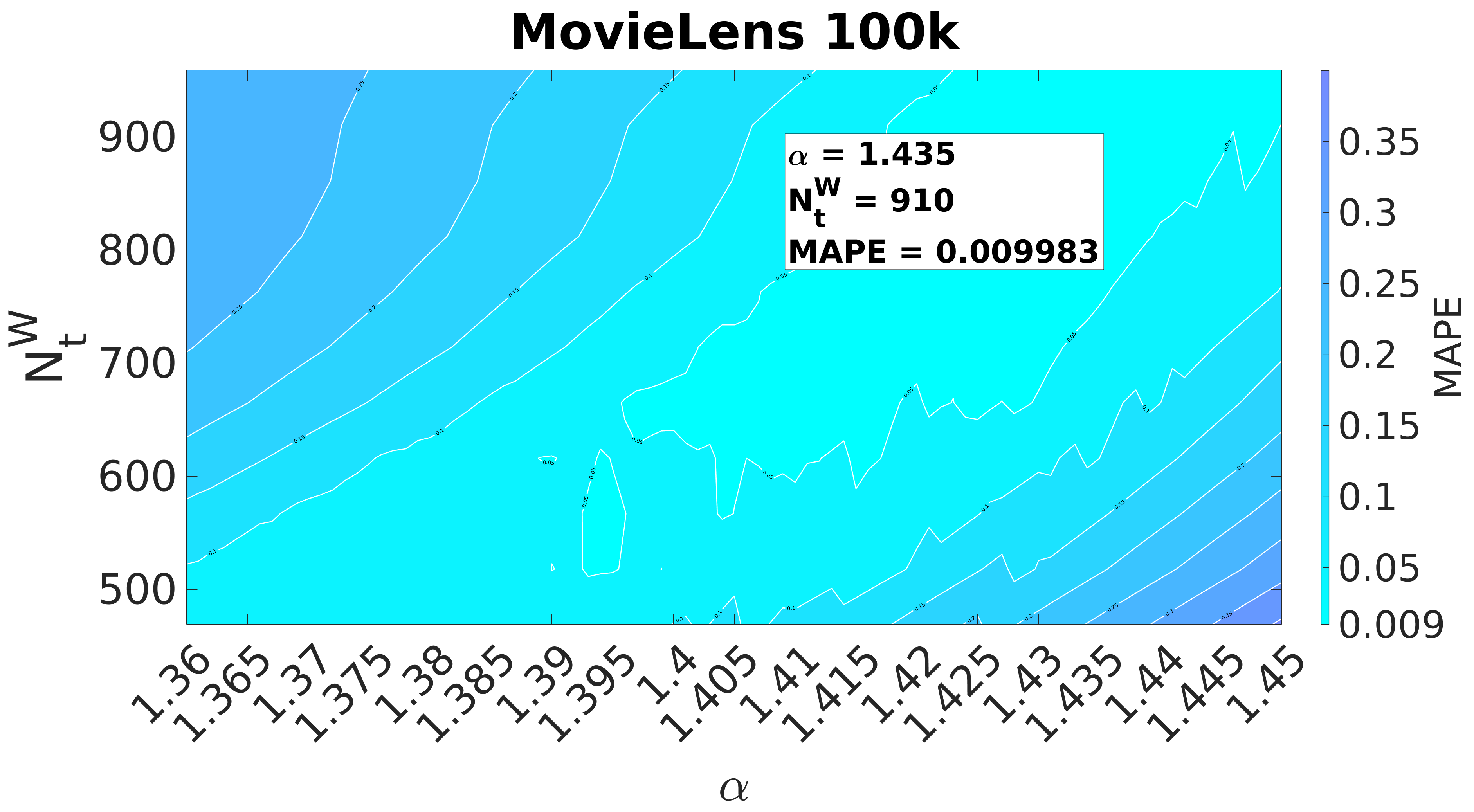}} 
    \subfigure{\includegraphics[width=0.3\textwidth]{ 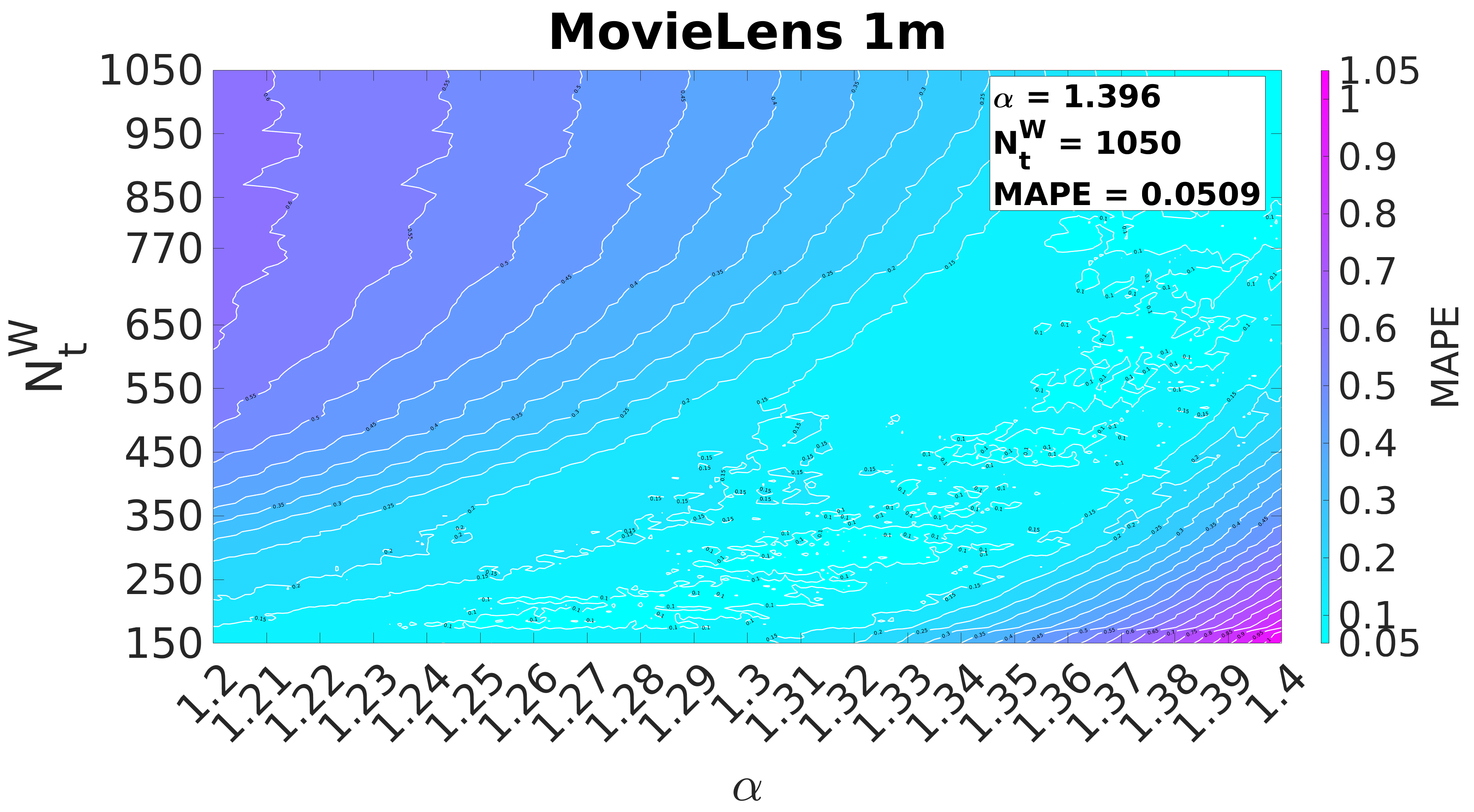}}
    \subfigure{\includegraphics[width=0.3\textwidth]{ 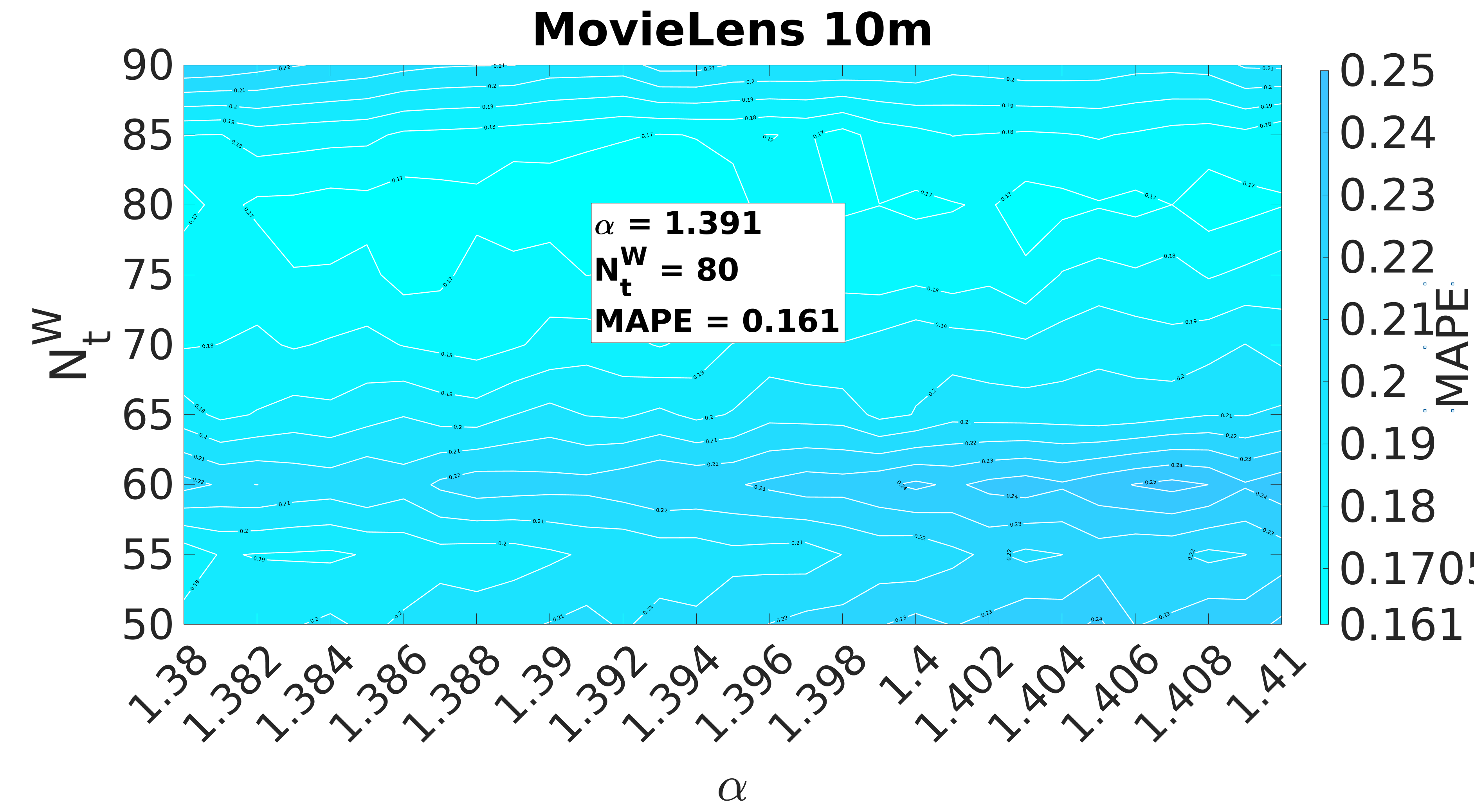}}
    \subfigure{\includegraphics[width=0.3\textwidth]{ 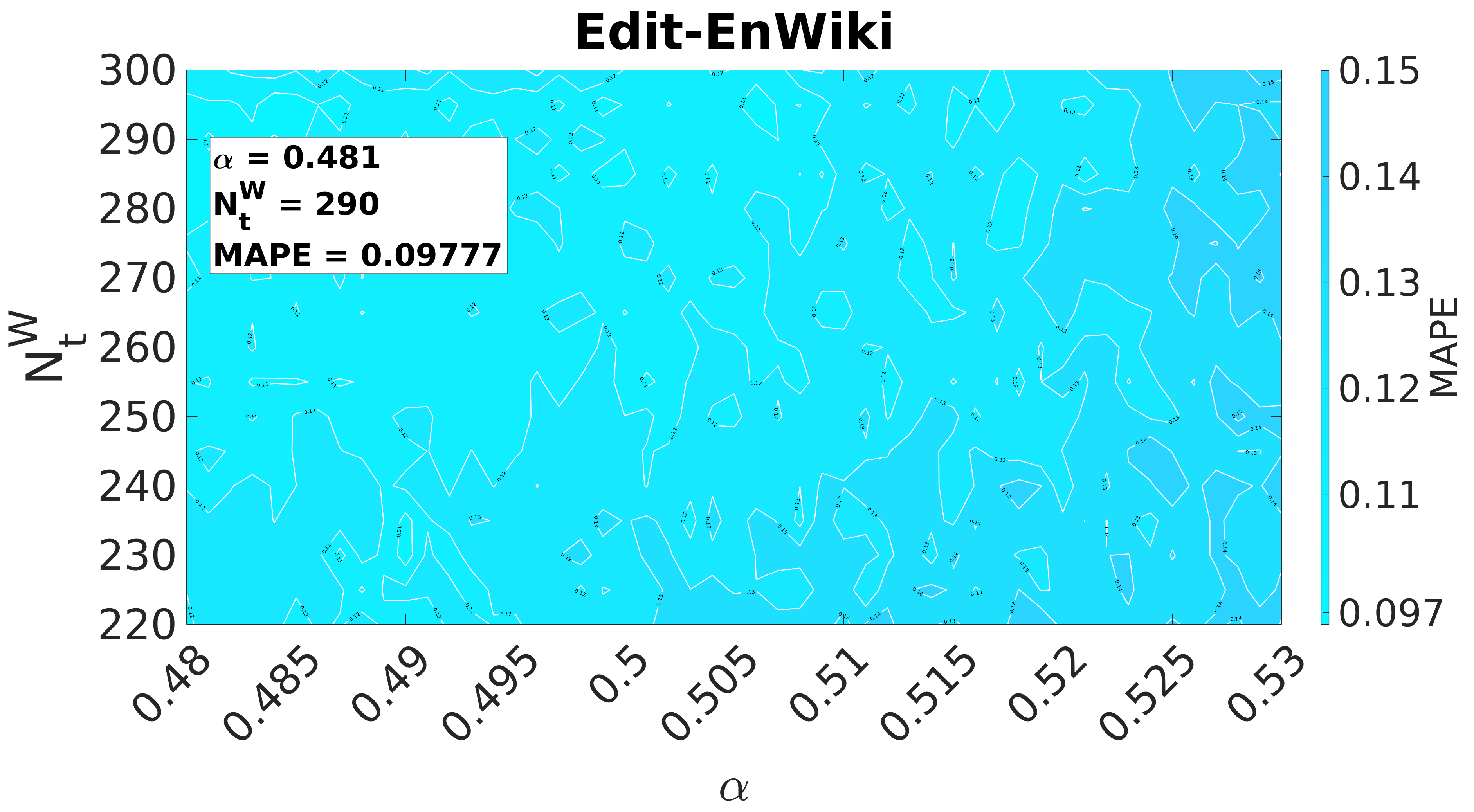}}
    \subfigure{\includegraphics[width=0.3\textwidth]{ 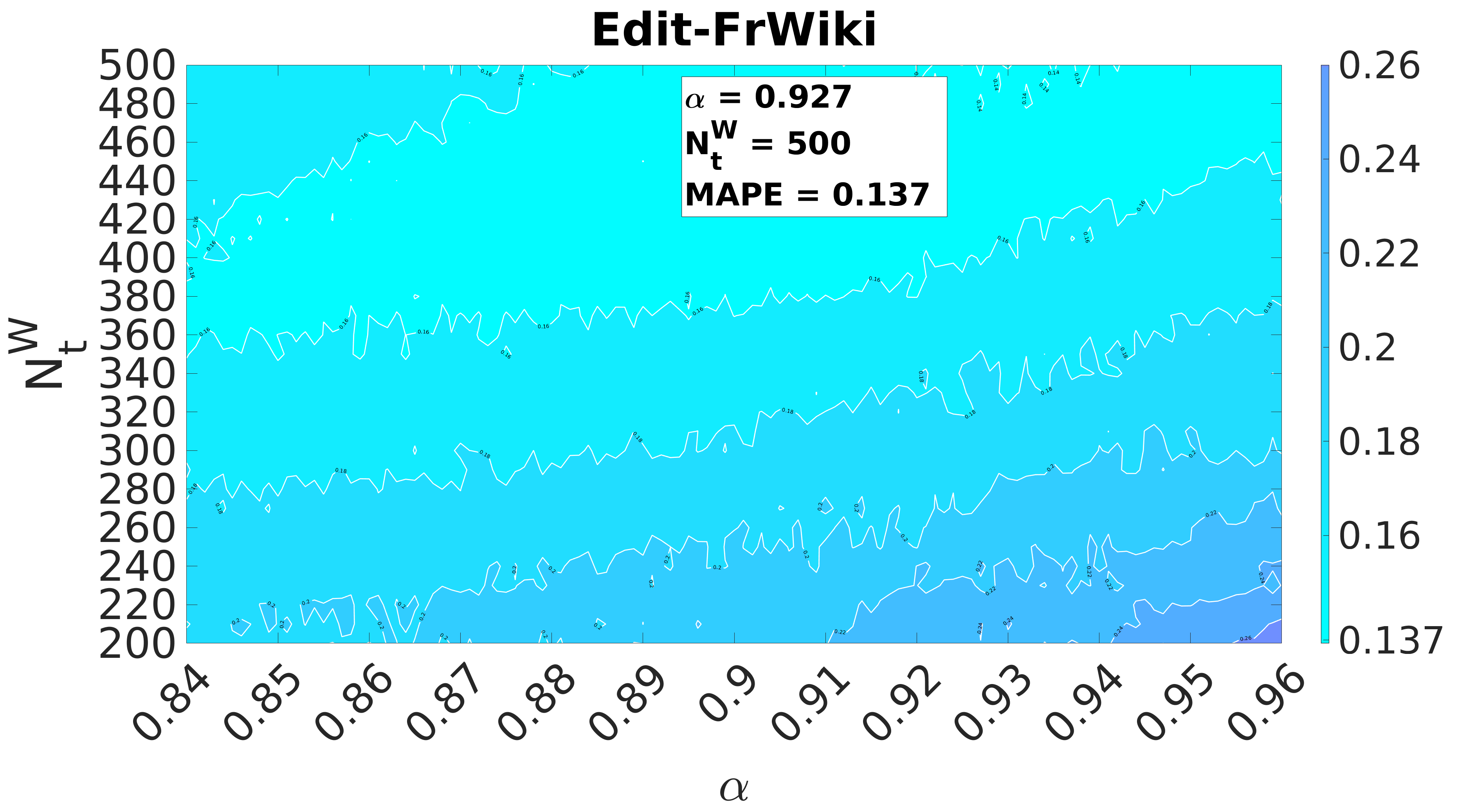}}
    \caption{[Best viewed in colored.] Accuracy of sGrapp-100 for different values of $\alpha$ and $N_t^W$. }
   \label{fig:mapes100}
\end{figure*}
\begin{figure*}[h]\centering
    \includegraphics[width=0.9\textwidth]{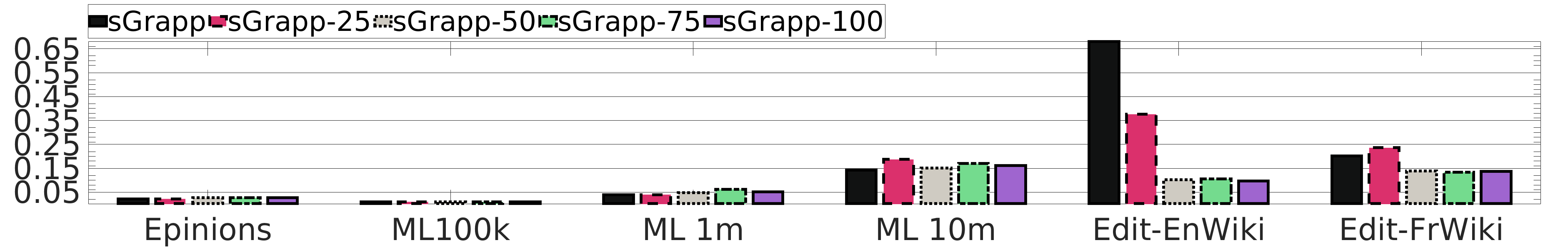}
\caption{Minimum approximation MAPE. }
   \label{fig:mapemin}
\end{figure*}
\begin{figure*}[h]\centering
    \includegraphics[width=0.9\textwidth]{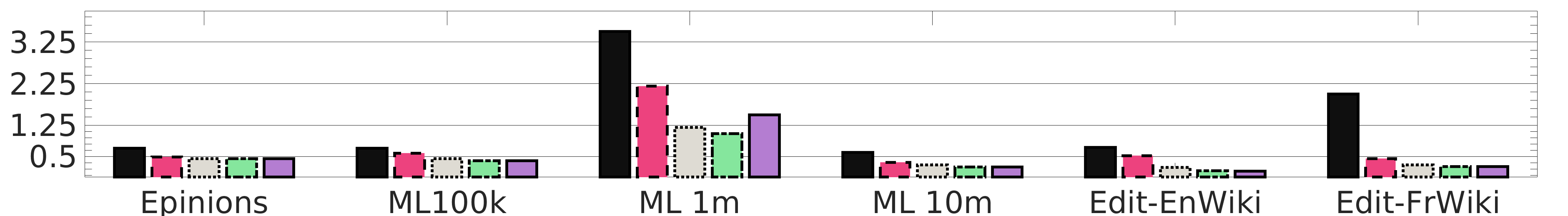}
\caption{Maximum approximation MAPE. }
   \label{fig:mapemax}
\end{figure*}
\begin{figure*}[h]\centering
    \includegraphics[width=0.9\textwidth]{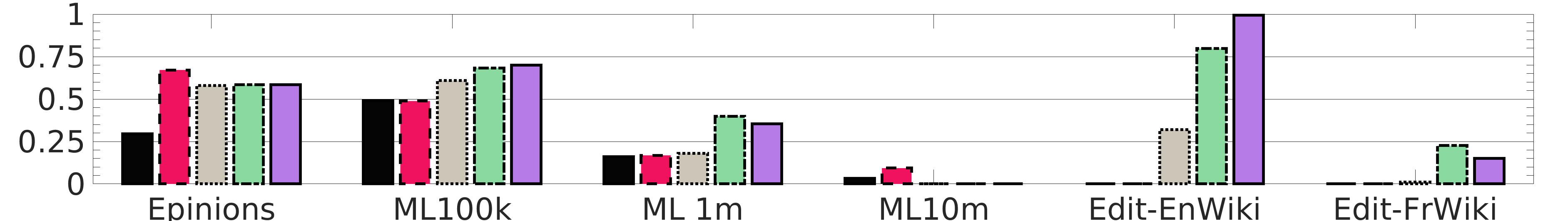}
\caption{Probability of approximation with MAPE less than equal 0.15. }
   \label{fig:mapeprob0.15}
\end{figure*}
\begin{figure*}[h]\centering
    \includegraphics[width=0.9\textwidth]{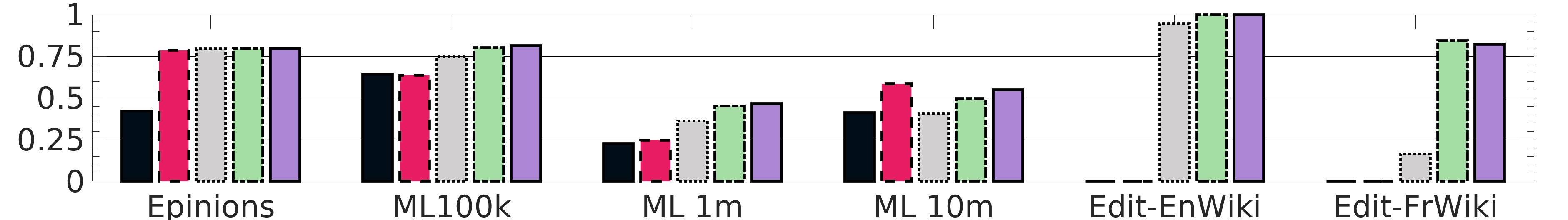}
\caption{Probability of approximation with MAPE less than equal 0.2. }
   \label{fig:mapeprob0.2}
\end{figure*}
\begin{figure*}[h]\centering
    \subfigure{\includegraphics[width=0.3\textwidth]{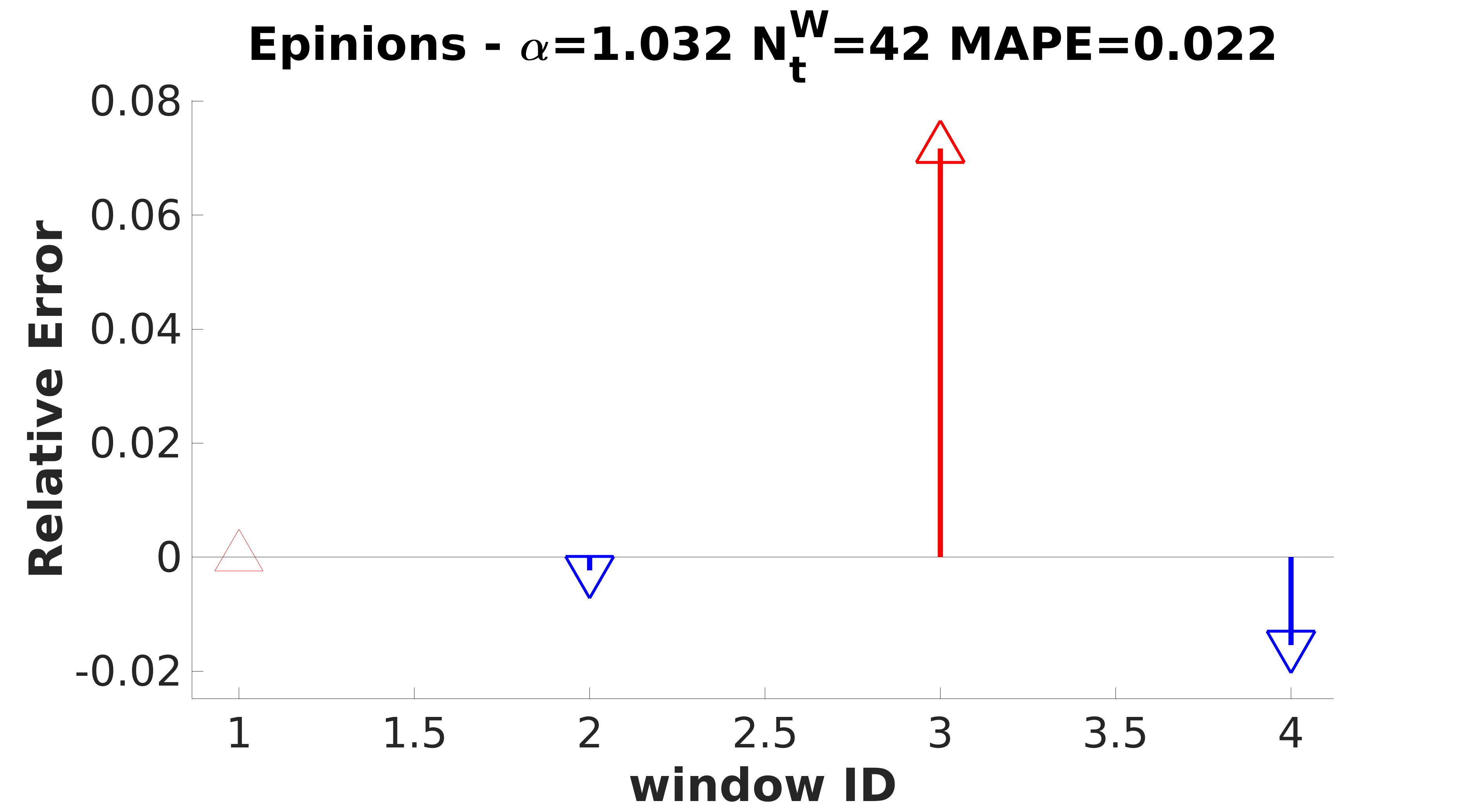}} 
    \subfigure{\includegraphics[width=0.3\textwidth]{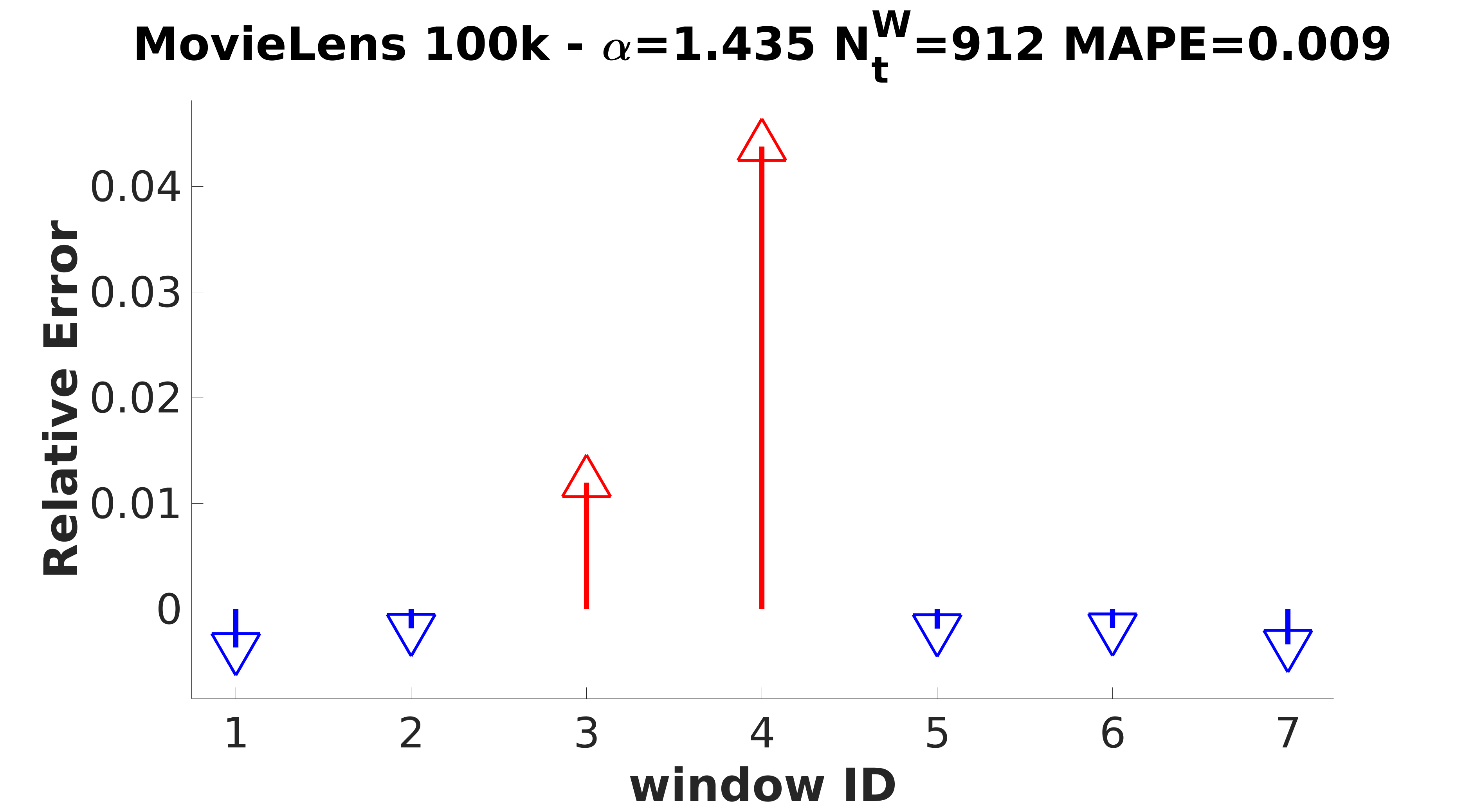}}
    \subfigure{\includegraphics[width=0.3\textwidth]{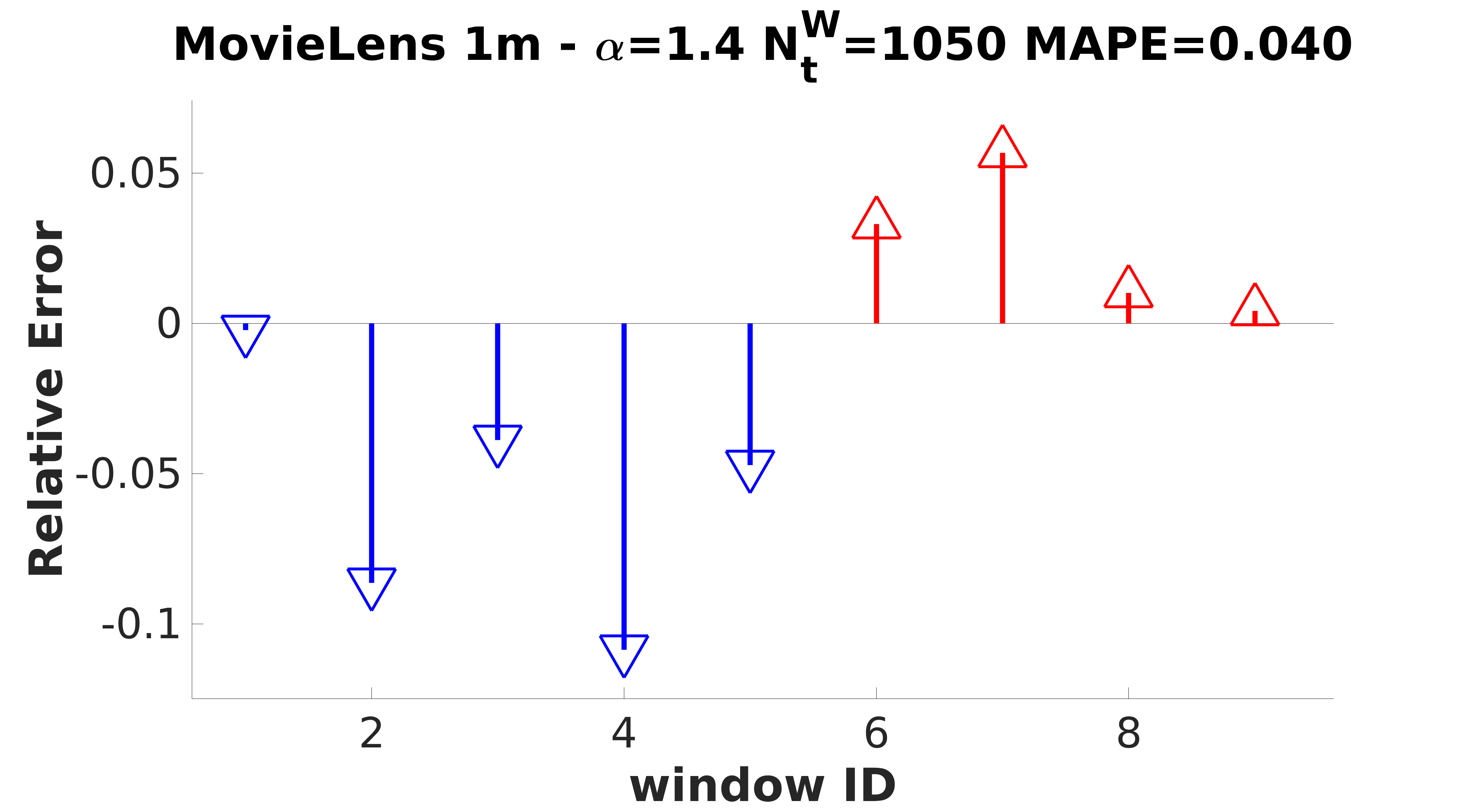}}
    \subfigure{\includegraphics[width=0.3\textwidth]{ml10m.png}}
    \subfigure{\includegraphics[width=0.3\textwidth]{enwiki.png}}
    \subfigure{\includegraphics[width=0.3\textwidth]{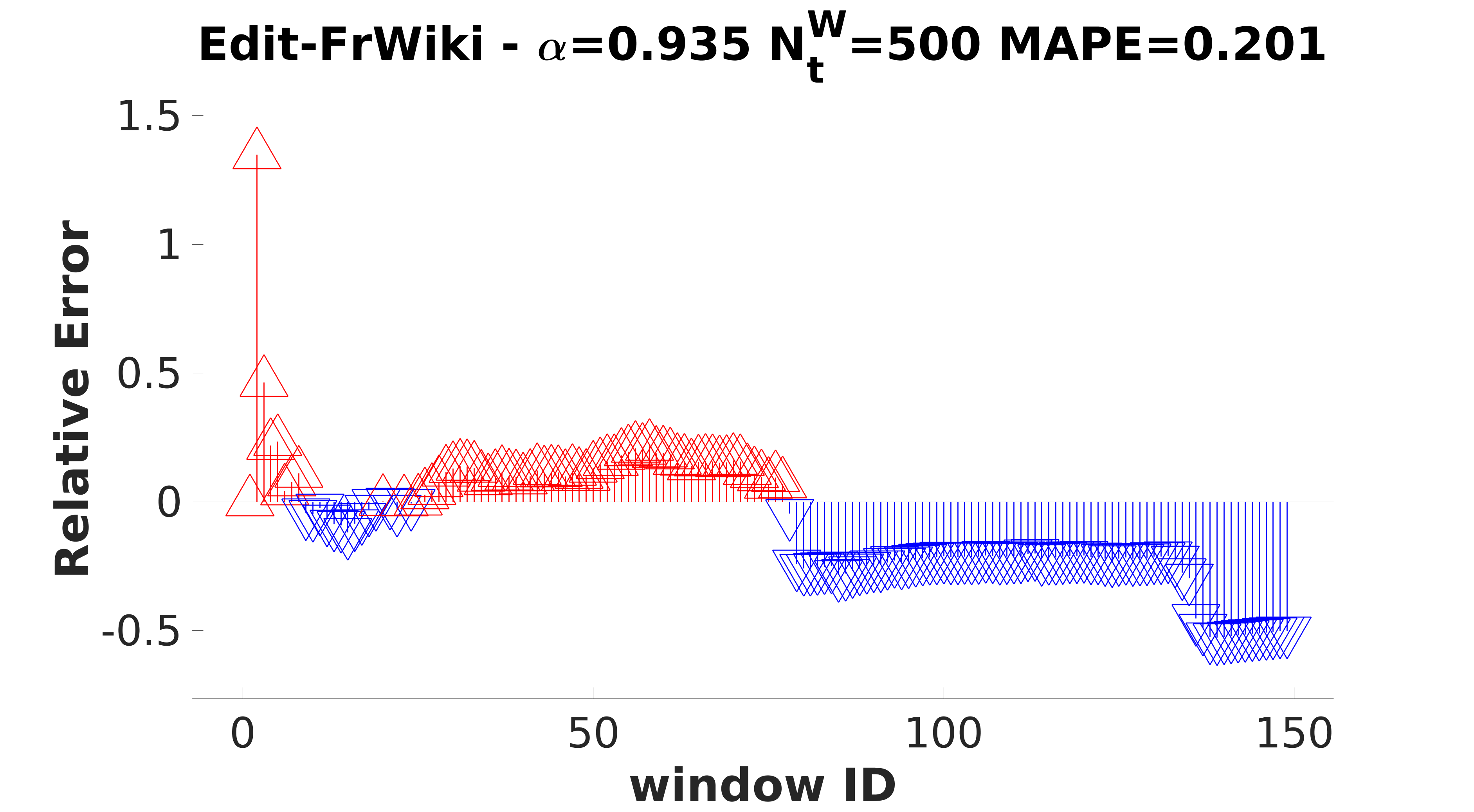}}
    \caption{Relative Error of sGrapp over windows for the best obtained MAPE.}\label{fig:realtiveerrors}
\end{figure*}
\begin{figure*}[h]\centering
    \subfigure{\includegraphics[width=0.3\textwidth]{ 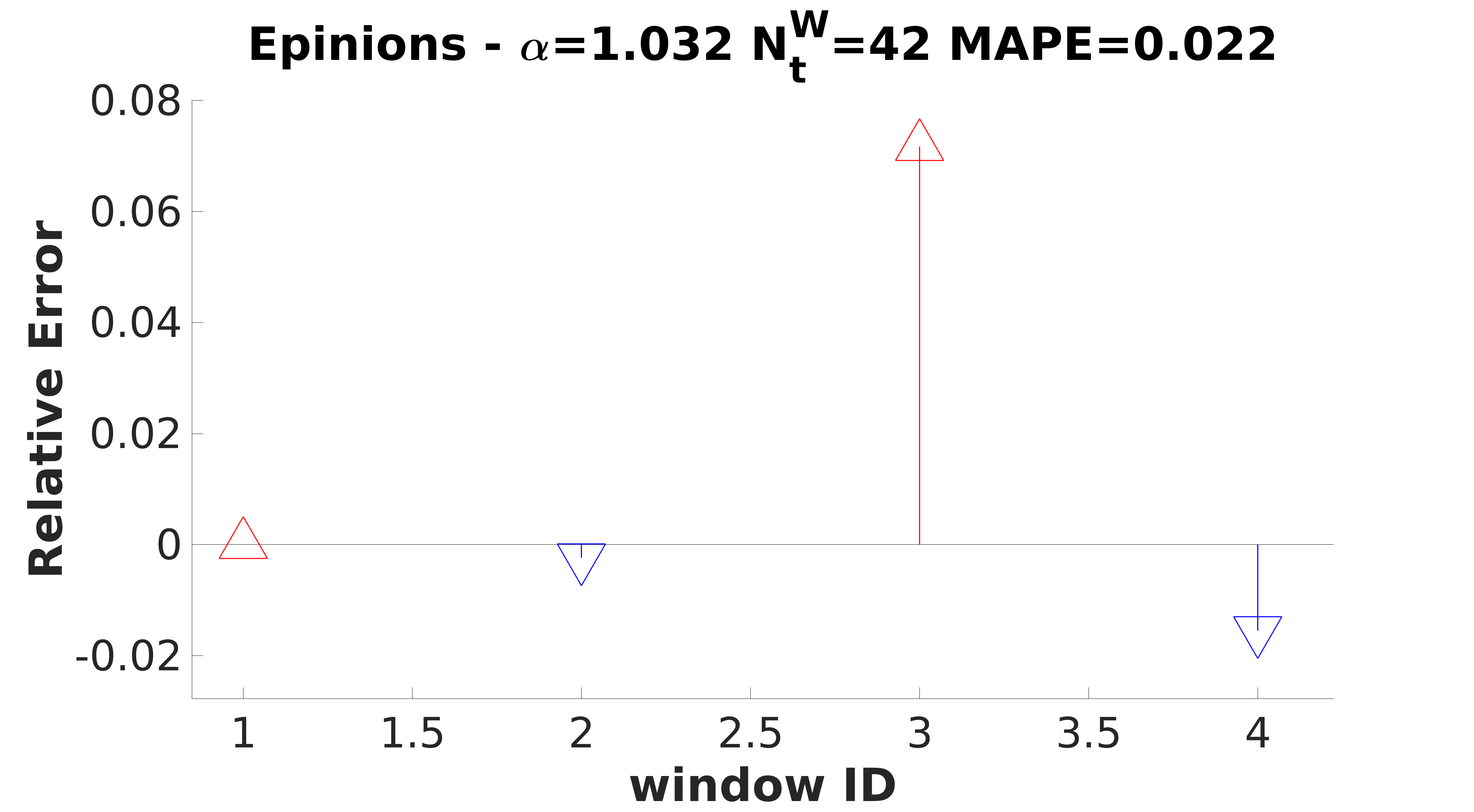}} \subfigure{\includegraphics[width=0.3\textwidth]{ 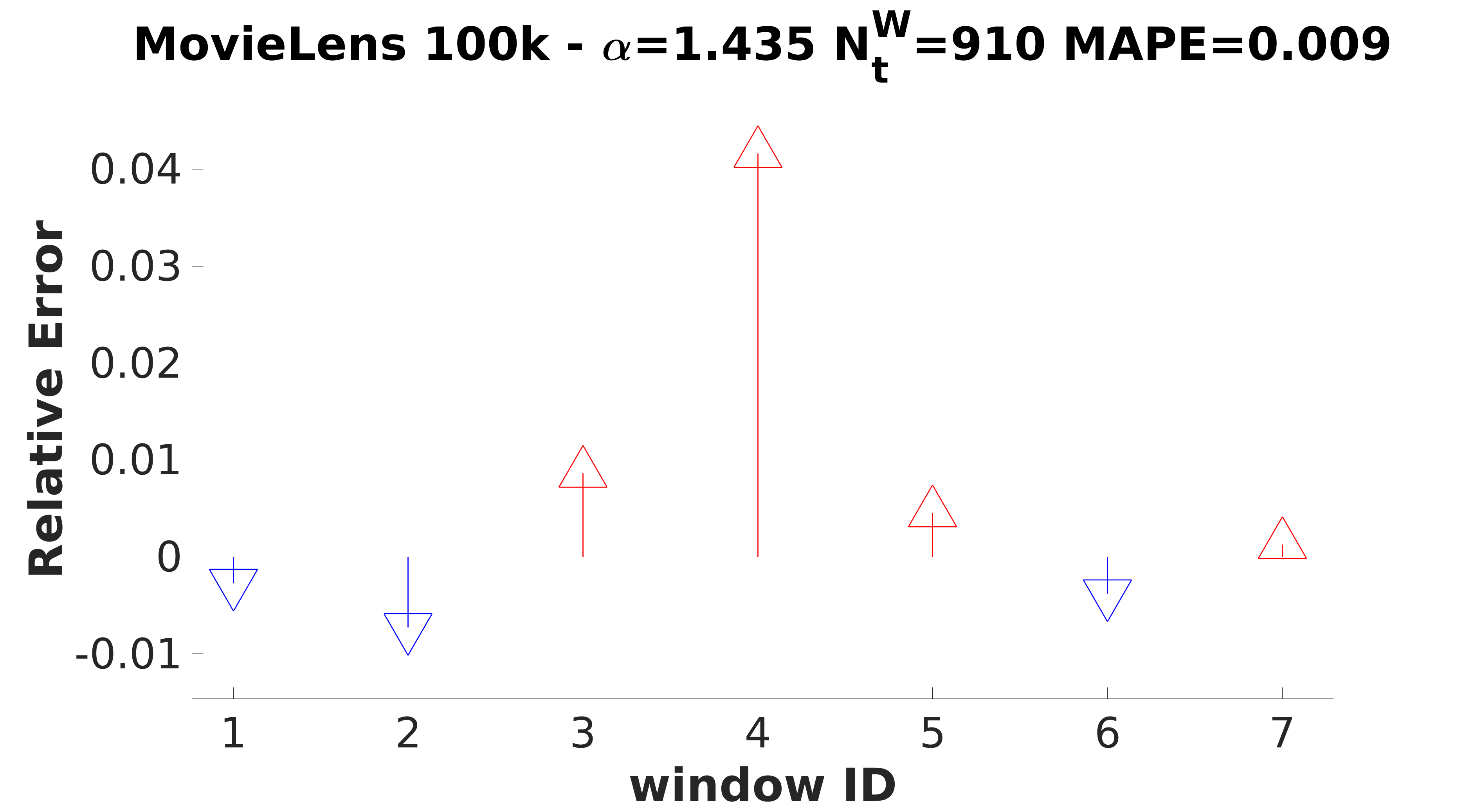}}
    \subfigure{\includegraphics[width=0.3\textwidth]{ 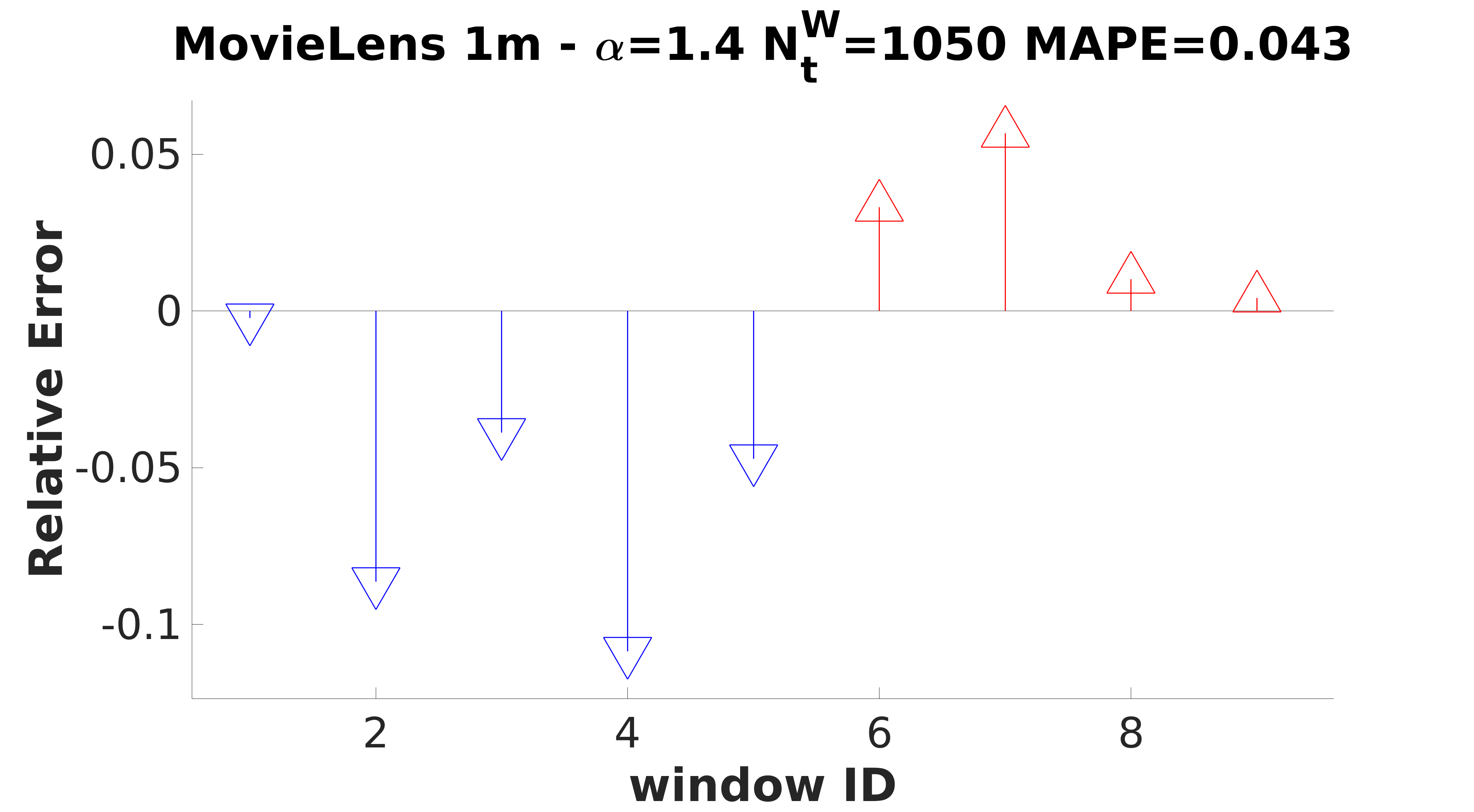}}
    \subfigure{\includegraphics[width=0.3\textwidth]{ 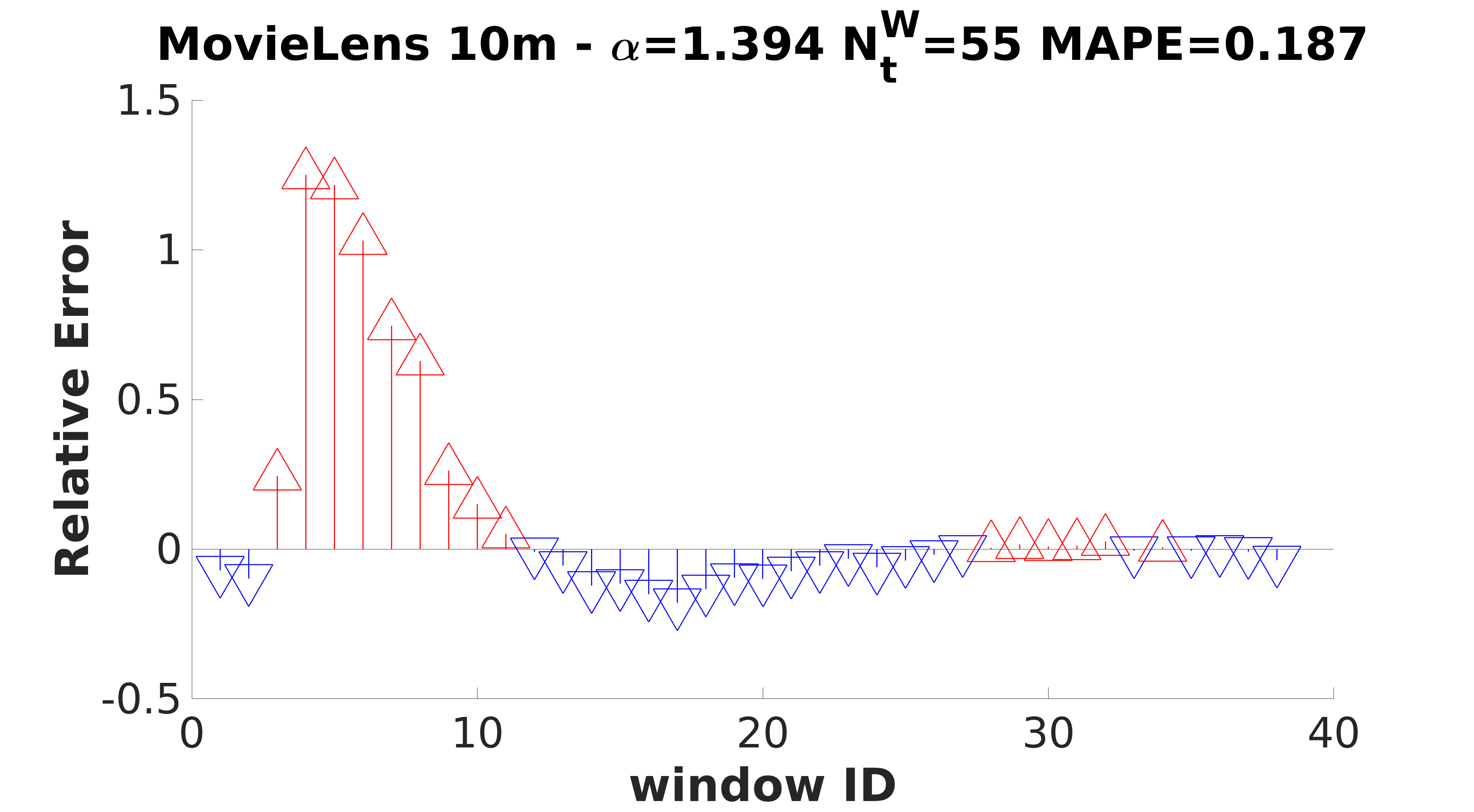}}
    \subfigure{\includegraphics[width=0.3\textwidth]{ 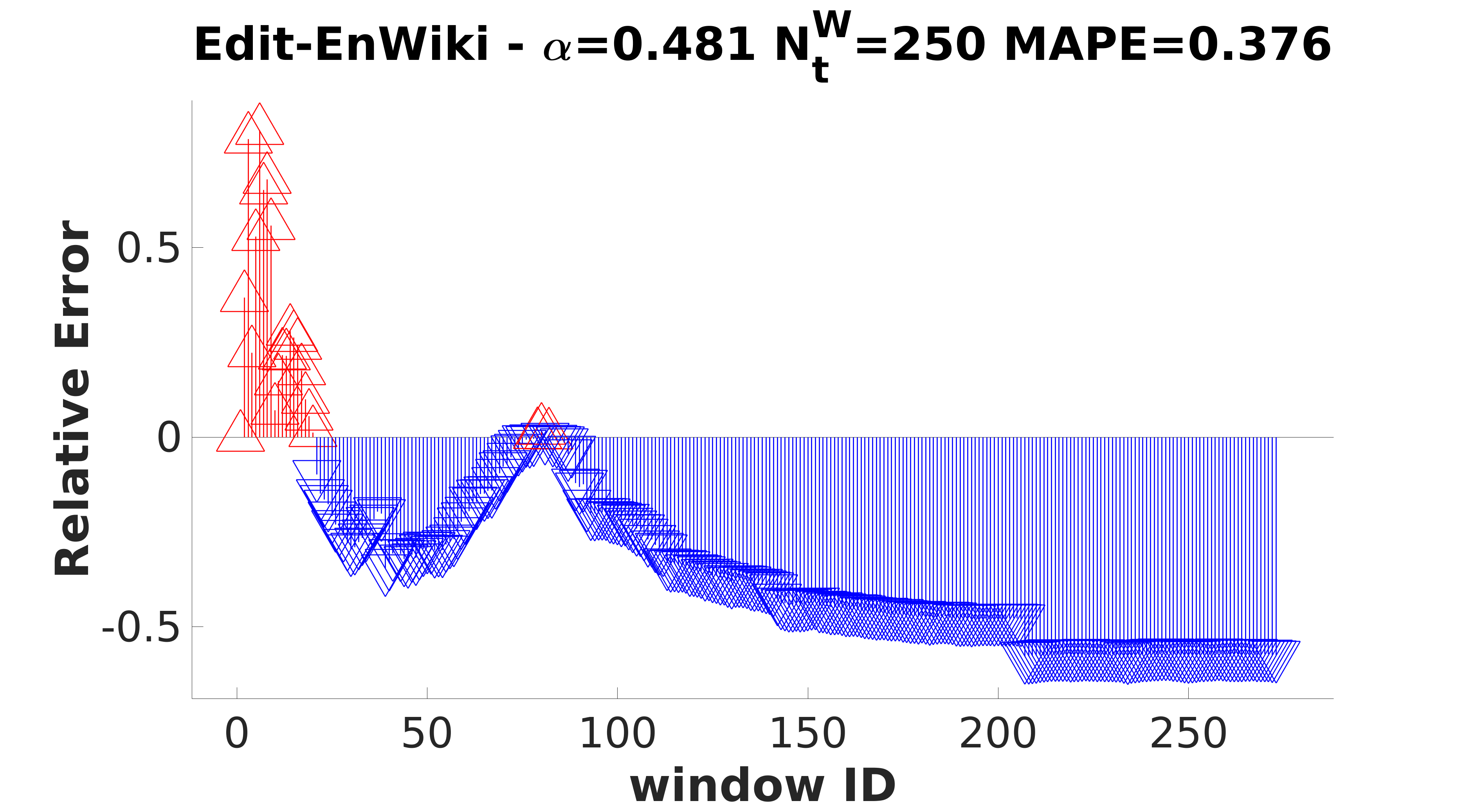}}
    \subfigure{\includegraphics[width=0.3\textwidth]{ 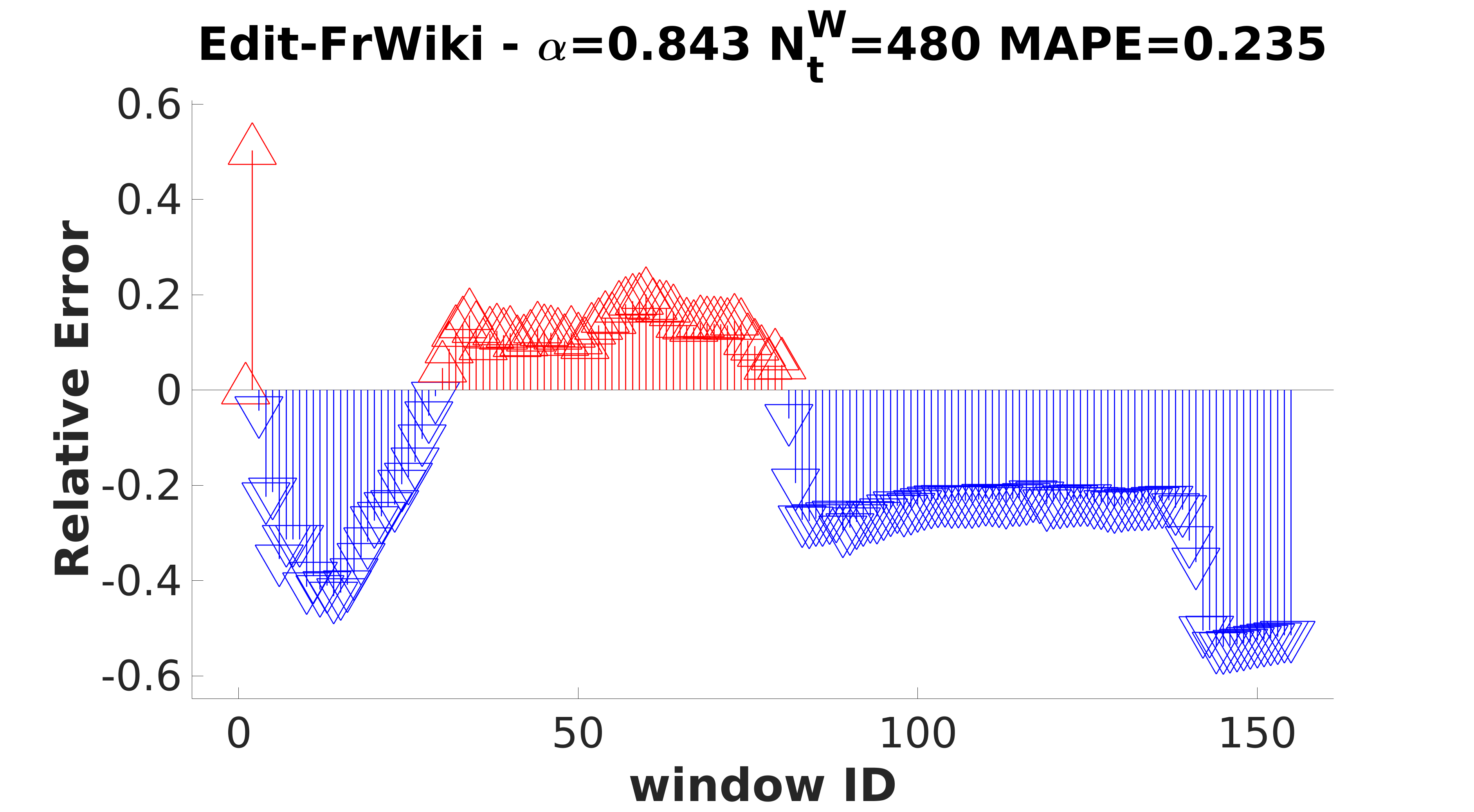}}
\caption{Relative Error of sGrapp-25 over windows for the best obtained MAPE.}\label{fig:realtiveerrors25}\end{figure*}
\begin{figure*}[h]\centering
   \subfigure{\includegraphics[width=0.3\textwidth]{ 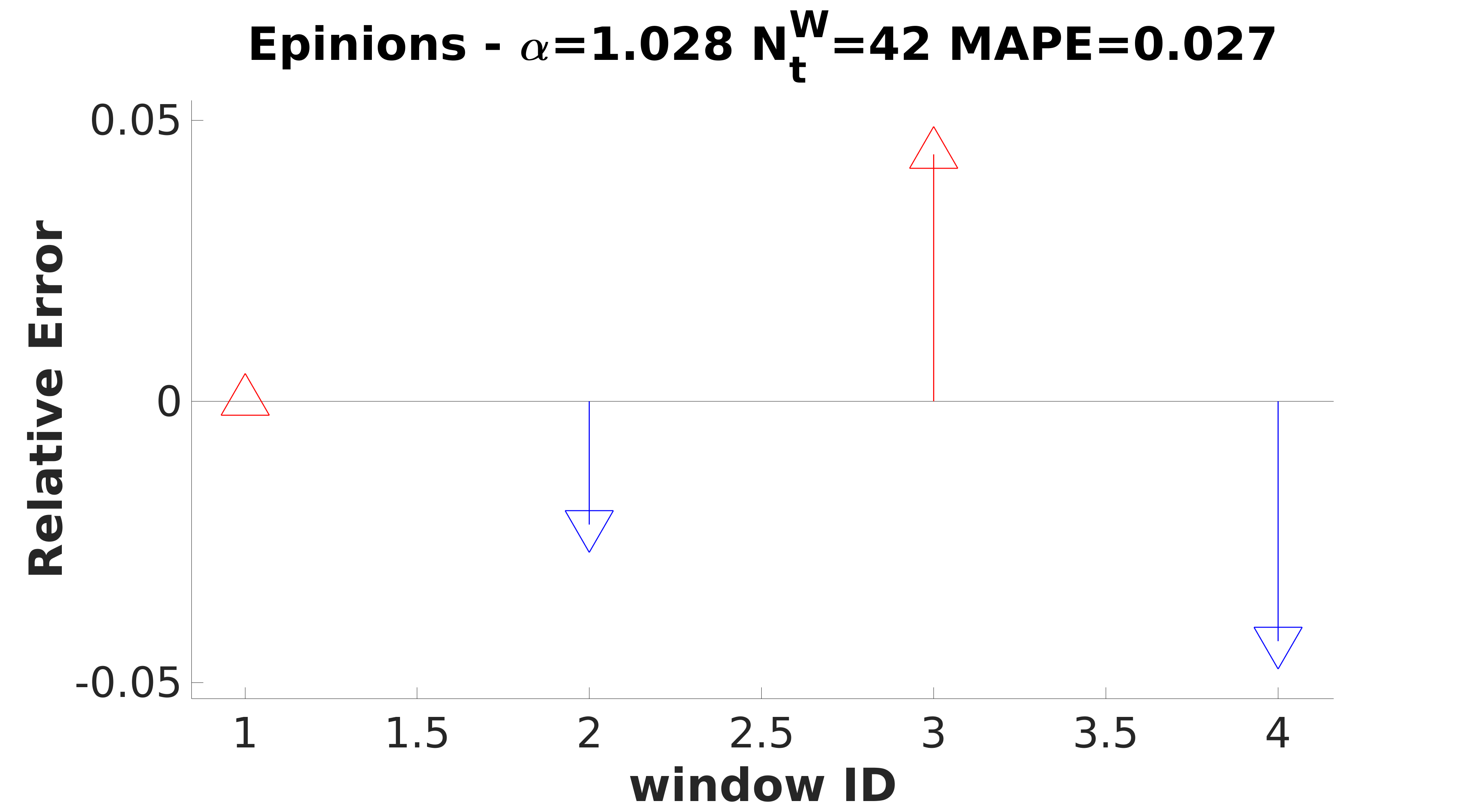}} \subfigure{\includegraphics[width=0.3\textwidth]{ 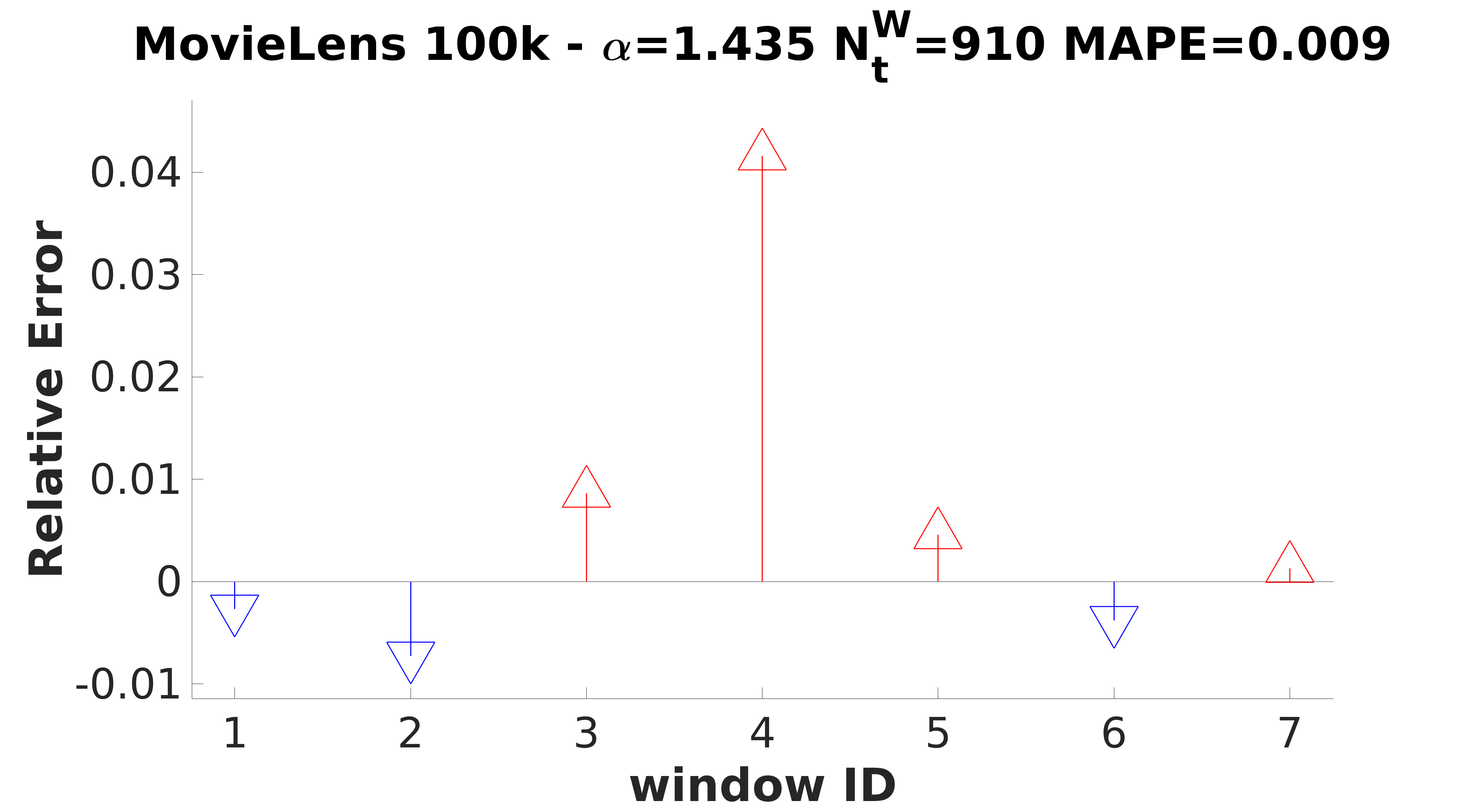}}
    \subfigure{\includegraphics[width=0.3\textwidth]{ 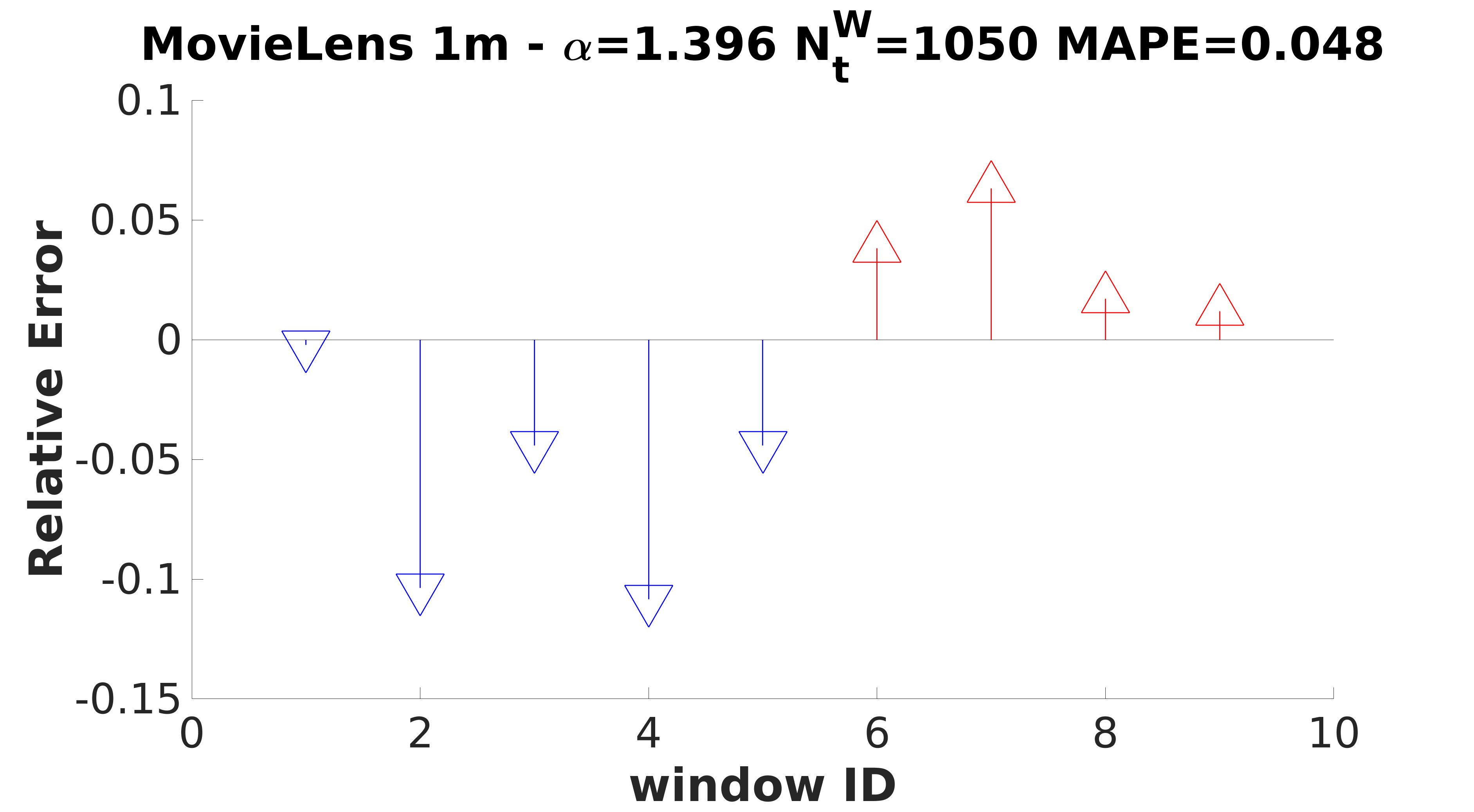}}
    \subfigure{\includegraphics[width=0.3\textwidth]{ 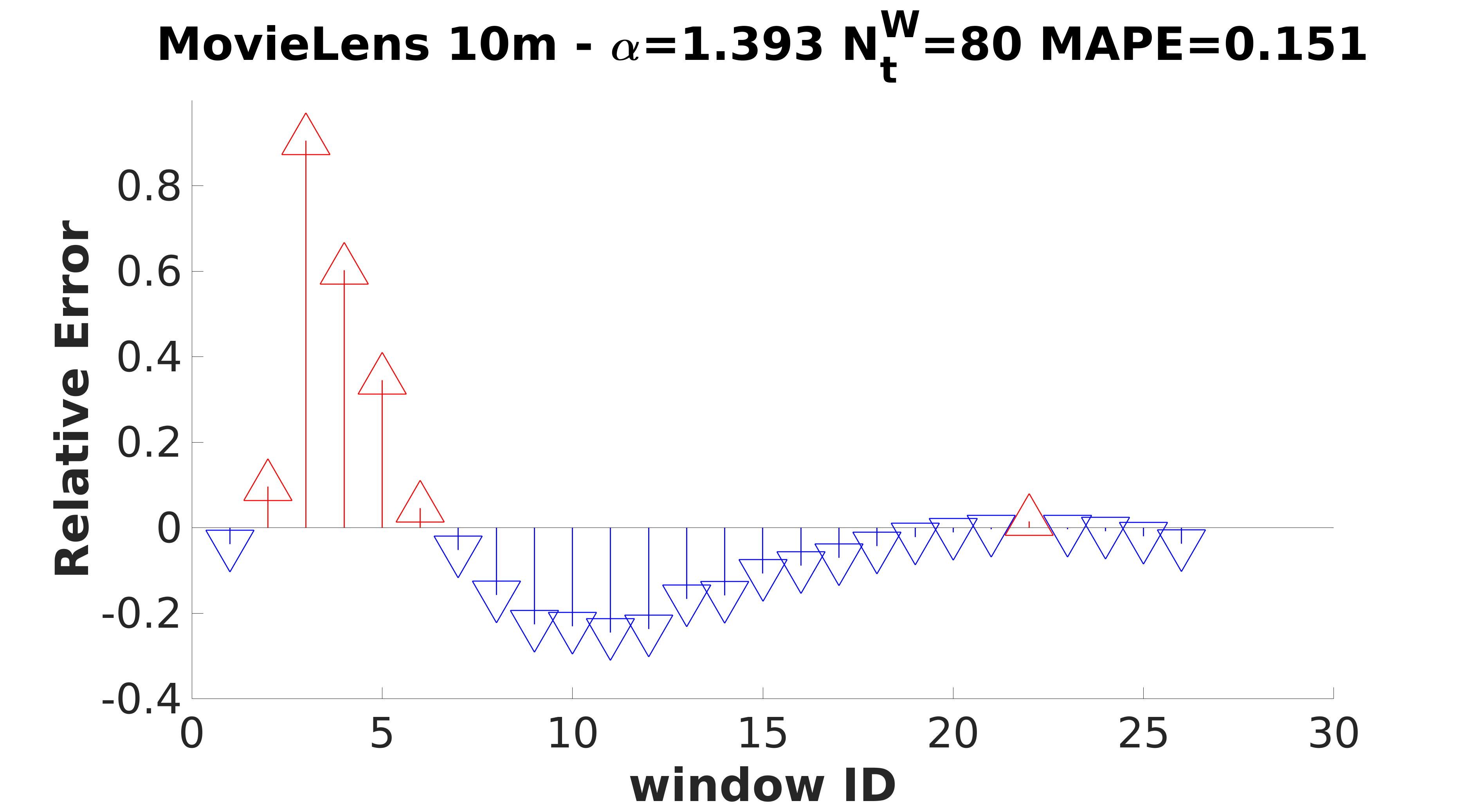}}
    \subfigure{\includegraphics[width=0.3\textwidth]{ 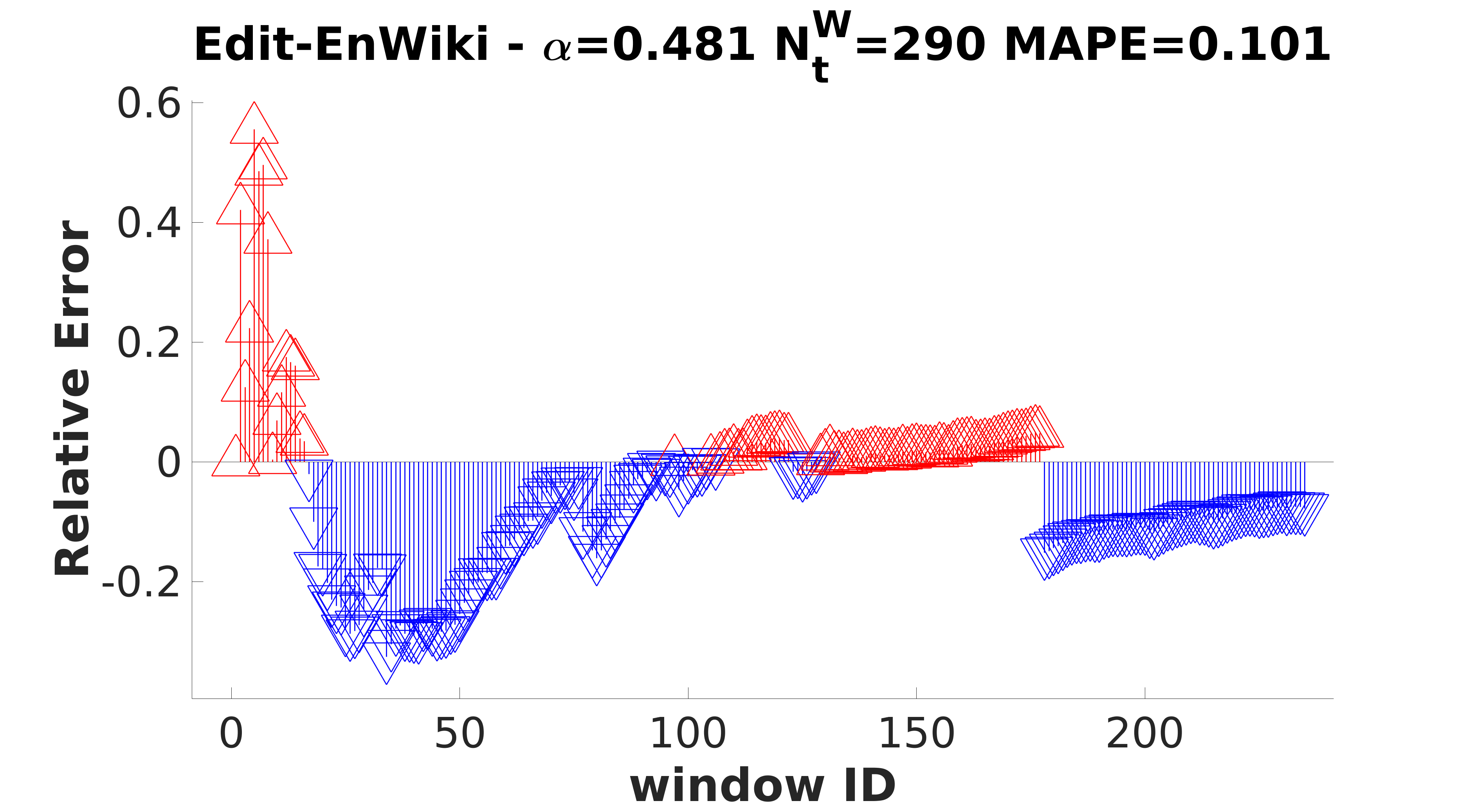}}
    \subfigure{\includegraphics[width=0.3\textwidth]{ 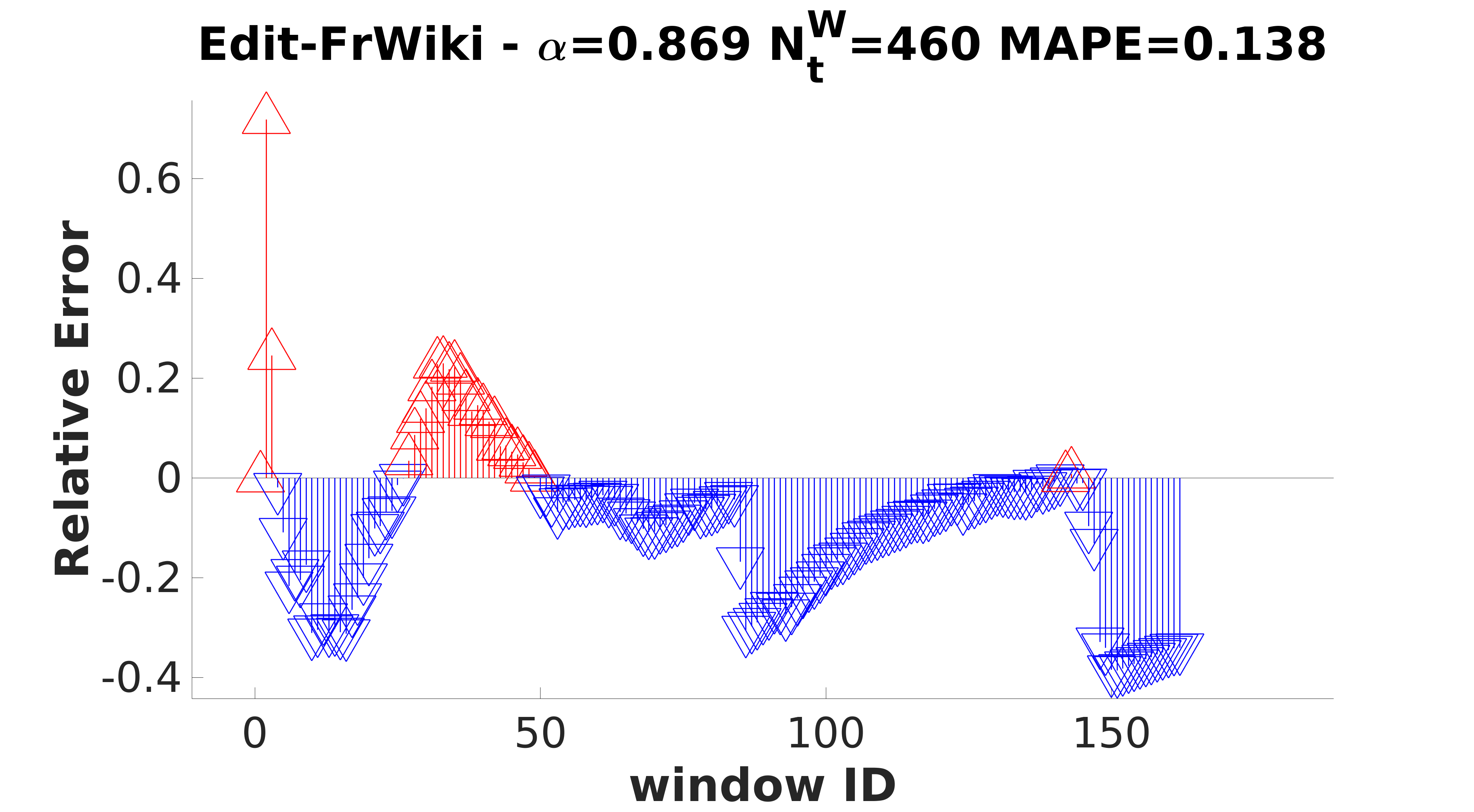}}
 \caption{Relative Error of sGrapp-50 over windows for the best obtained MAPE.}\label{fig:realtiveerrors50}\end{figure*}
\begin{figure*}[h]\centering
    \subfigure{\includegraphics[width=0.3\textwidth]{ 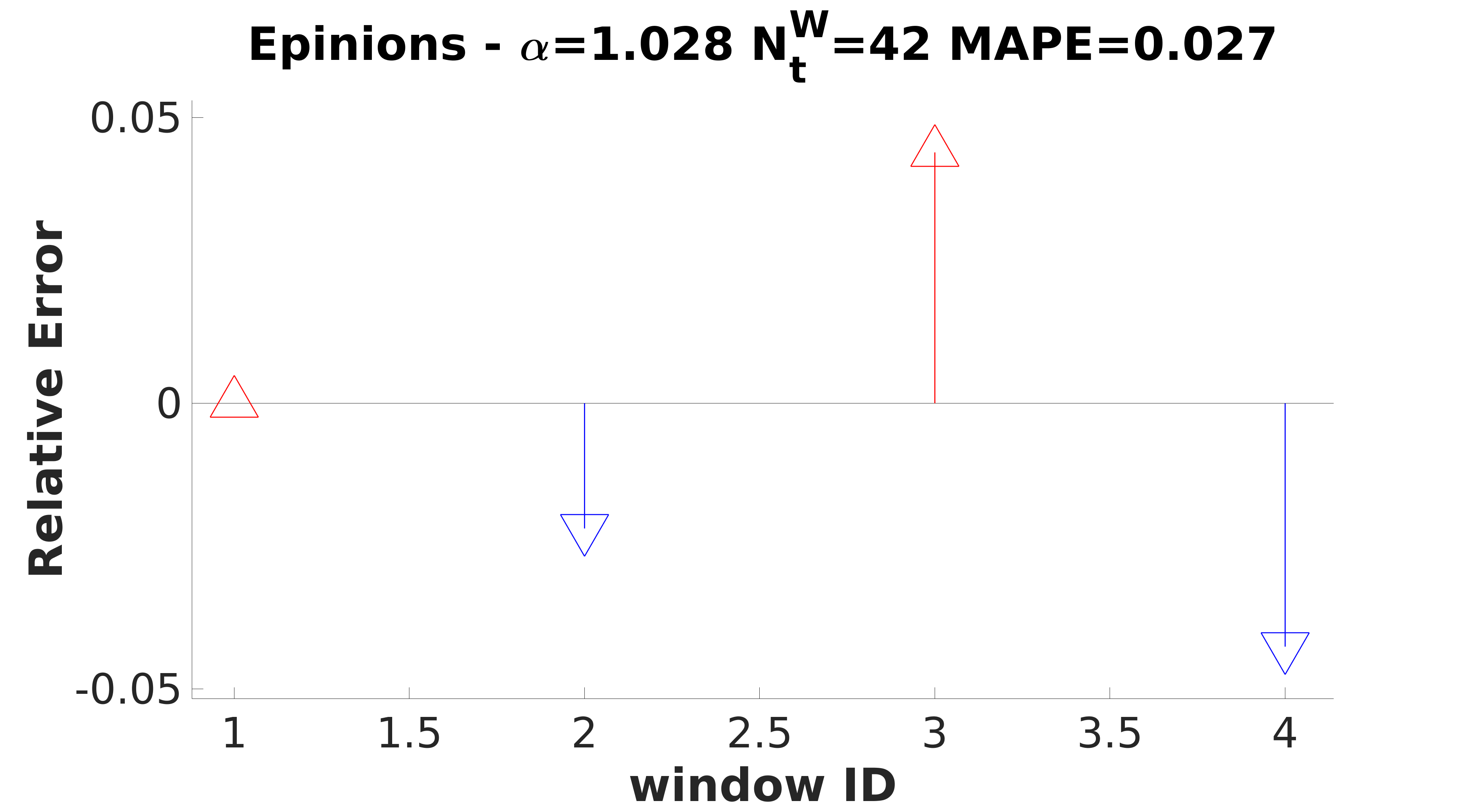}} \subfigure{\includegraphics[width=0.3\textwidth]{ 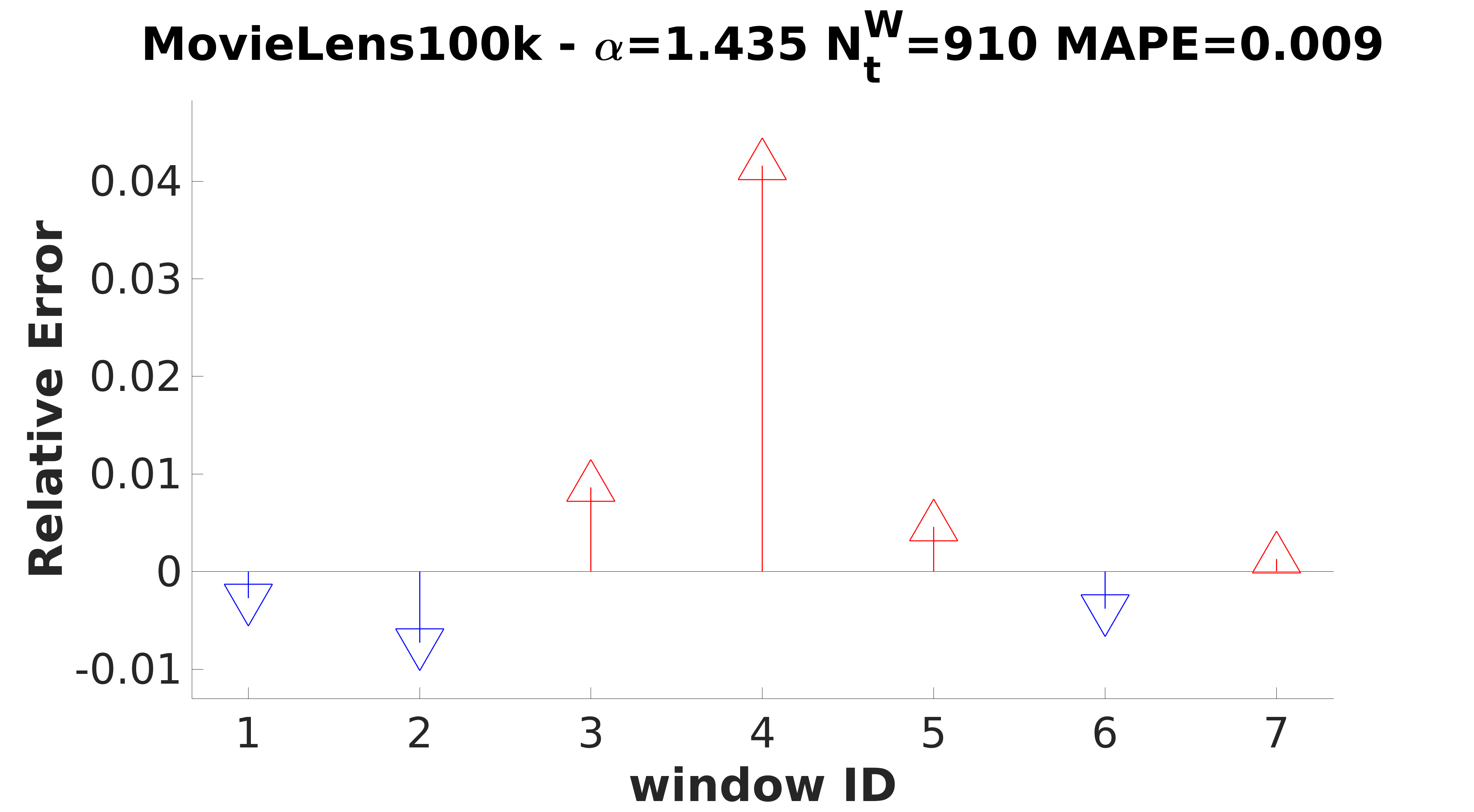}}
    \subfigure{\includegraphics[width=0.3\textwidth]{ 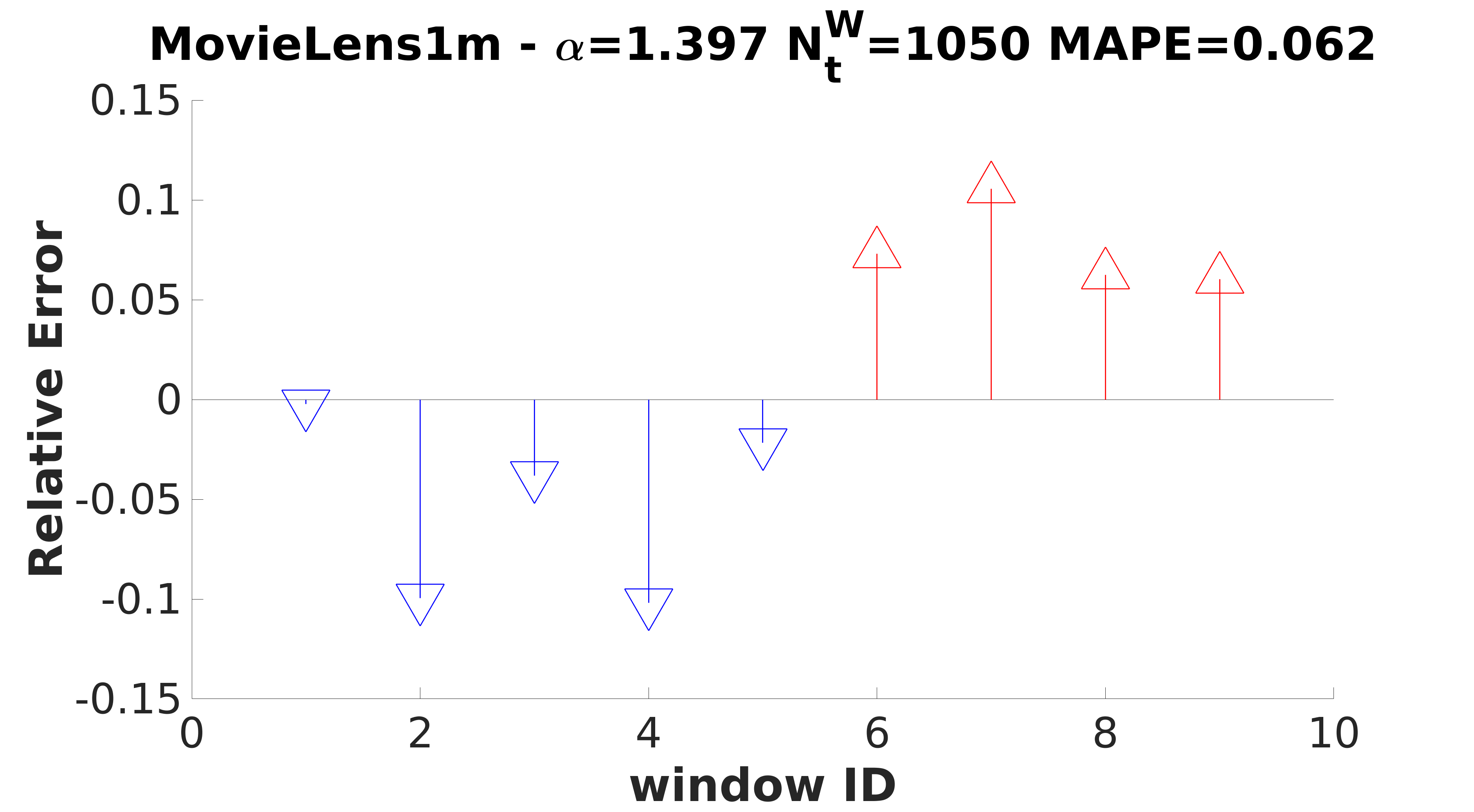}}
    \subfigure{\includegraphics[width=0.3\textwidth]{ 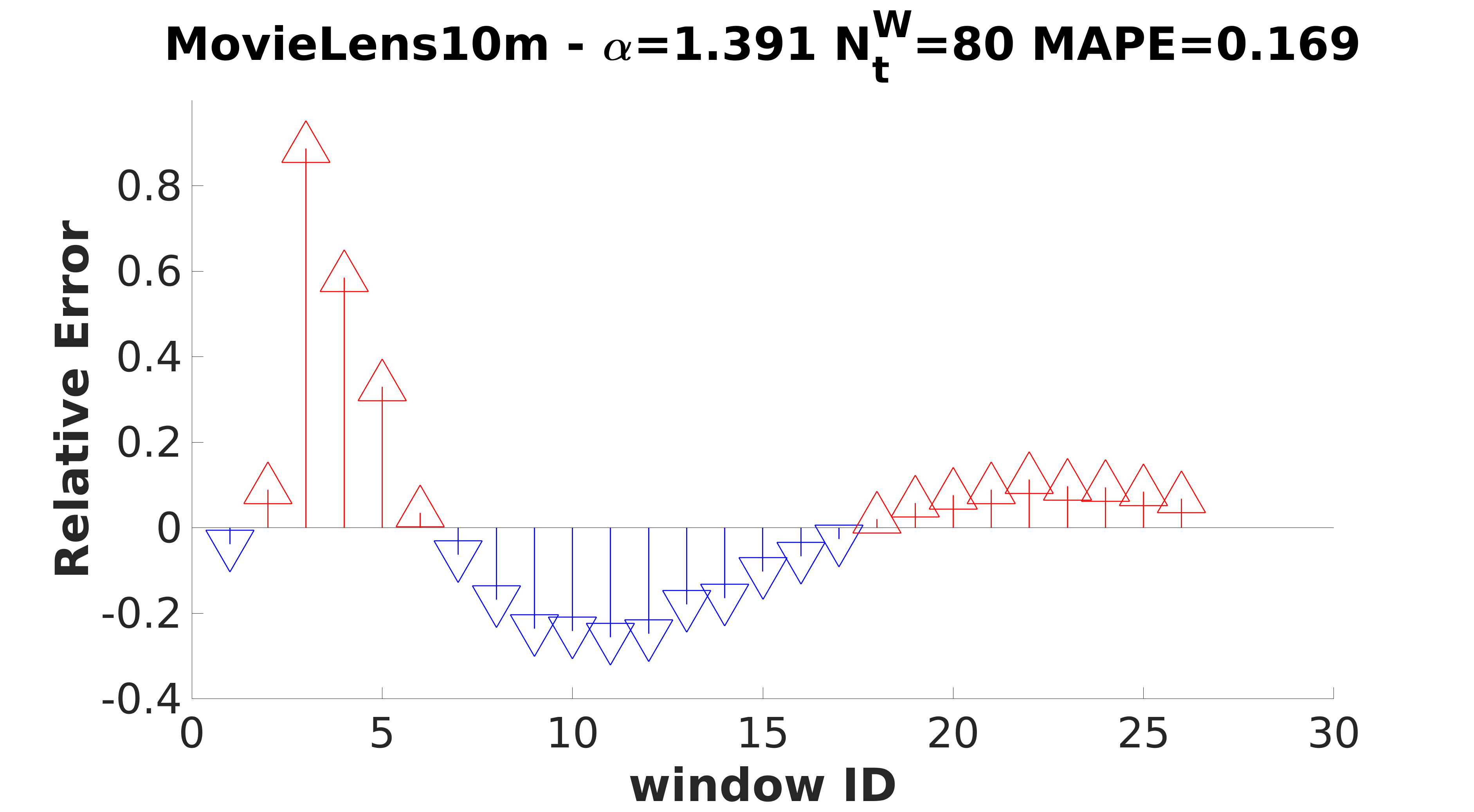}}
    \subfigure{\includegraphics[width=0.3\textwidth]{ 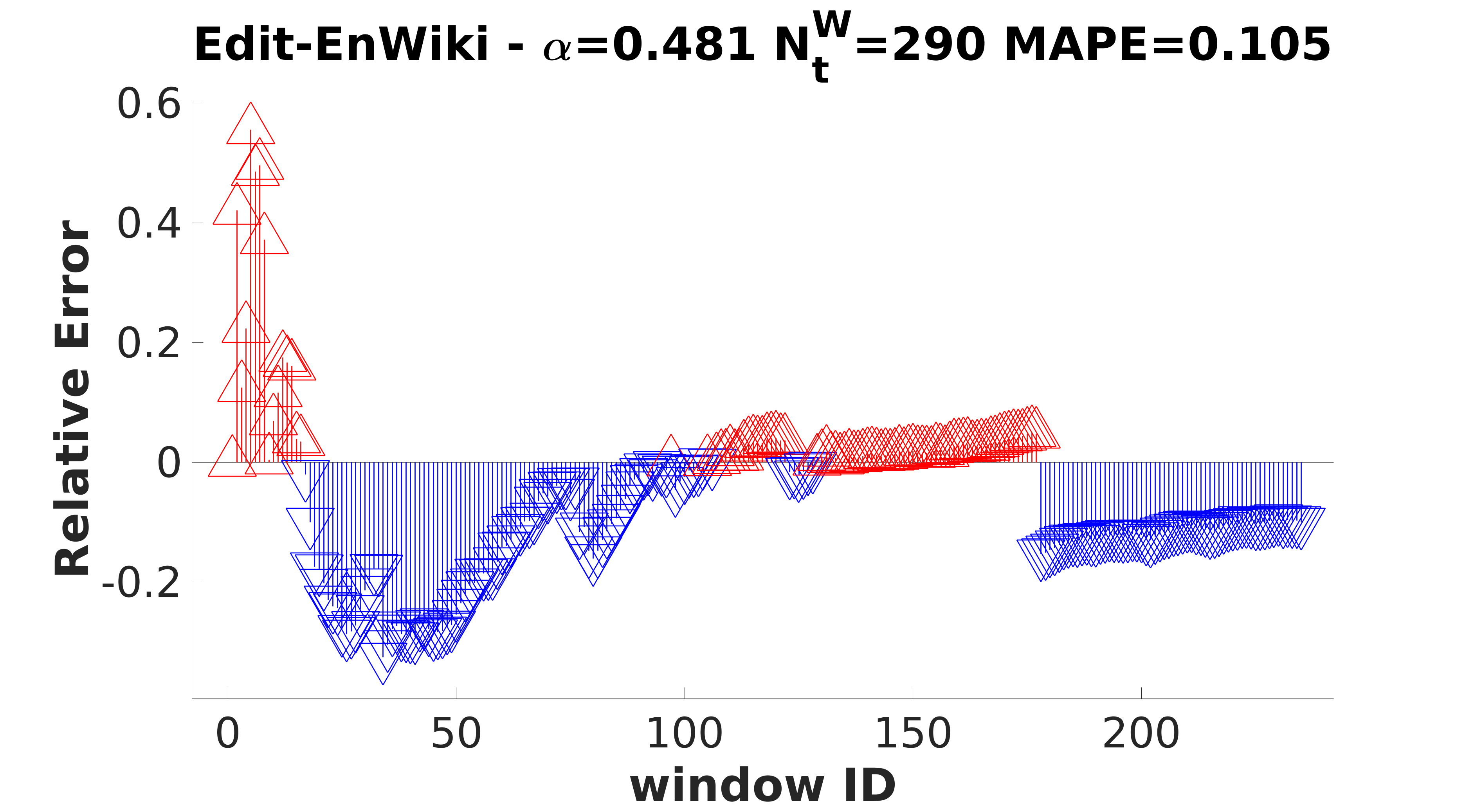}}
    \subfigure{\includegraphics[width=0.3\textwidth]{ 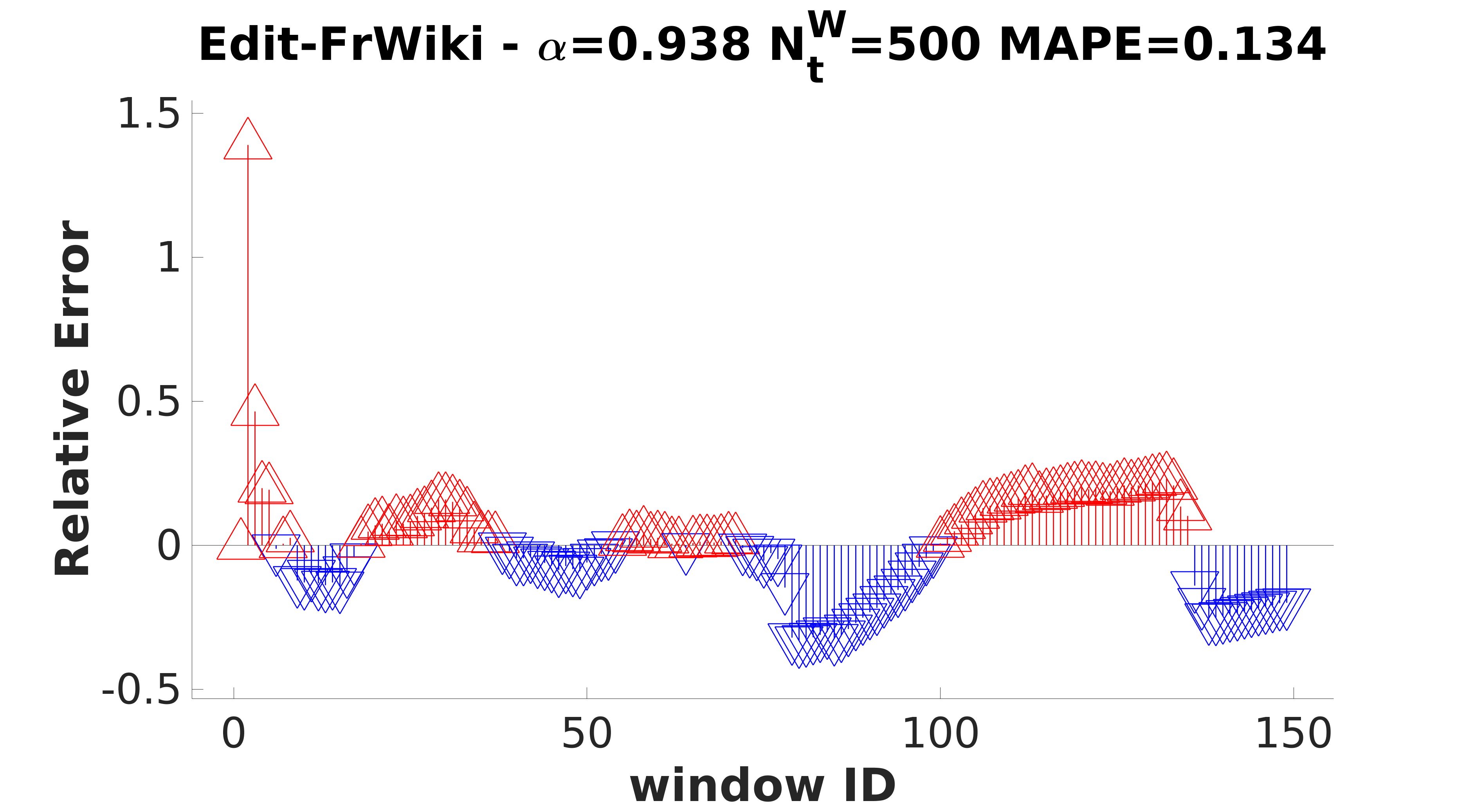}}
\caption{Relative Error of sGrapp-75 over windows for the best obtained MAPE.}\label{fig:realtiveerrors75}\end{figure*}
\begin{figure*}[h]\centering
    \subfigure{\includegraphics[width=0.3\textwidth]{ 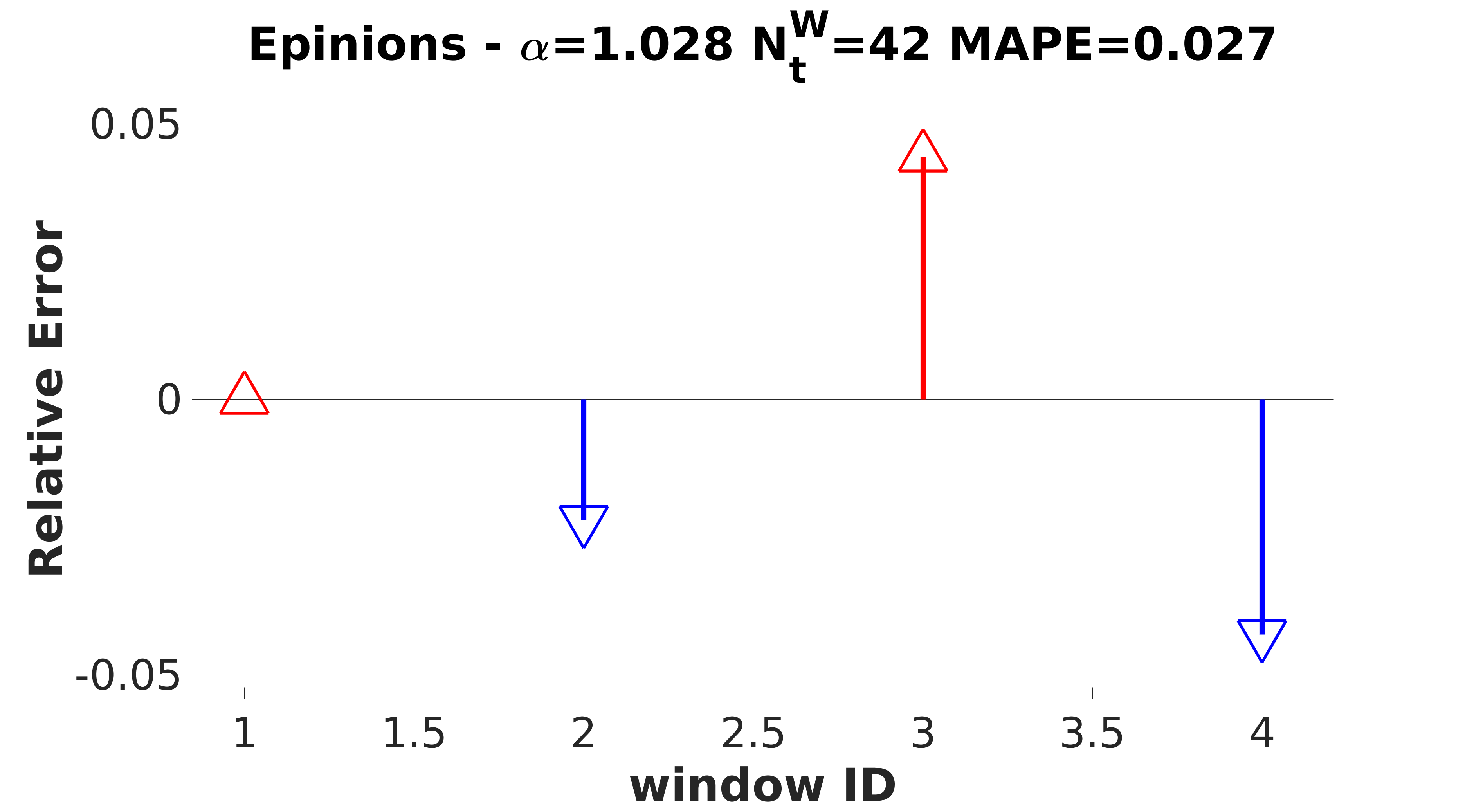}} \subfigure{\includegraphics[width=0.3\textwidth]{ 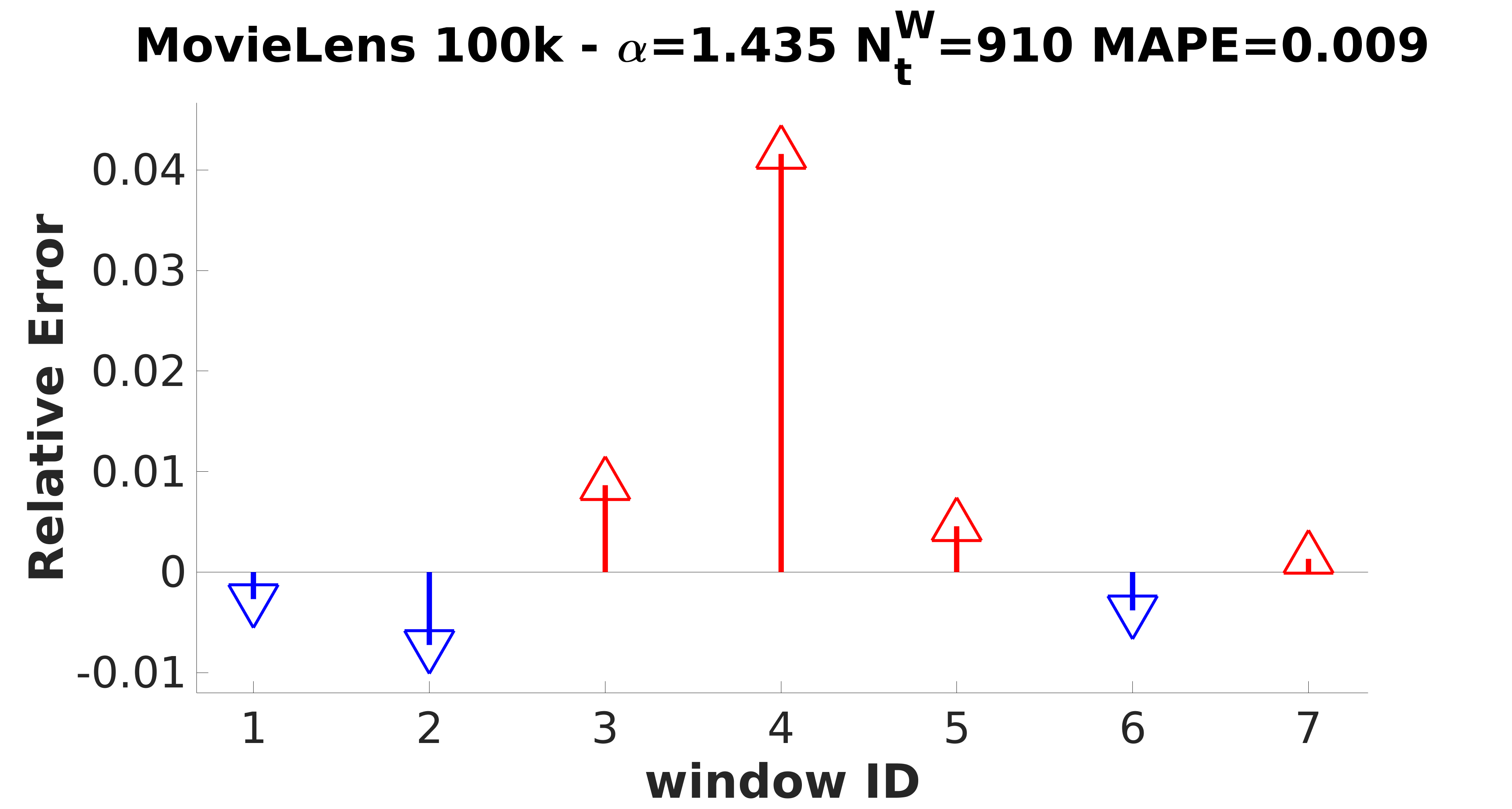}}
    \subfigure{\includegraphics[width=0.3\textwidth]{ 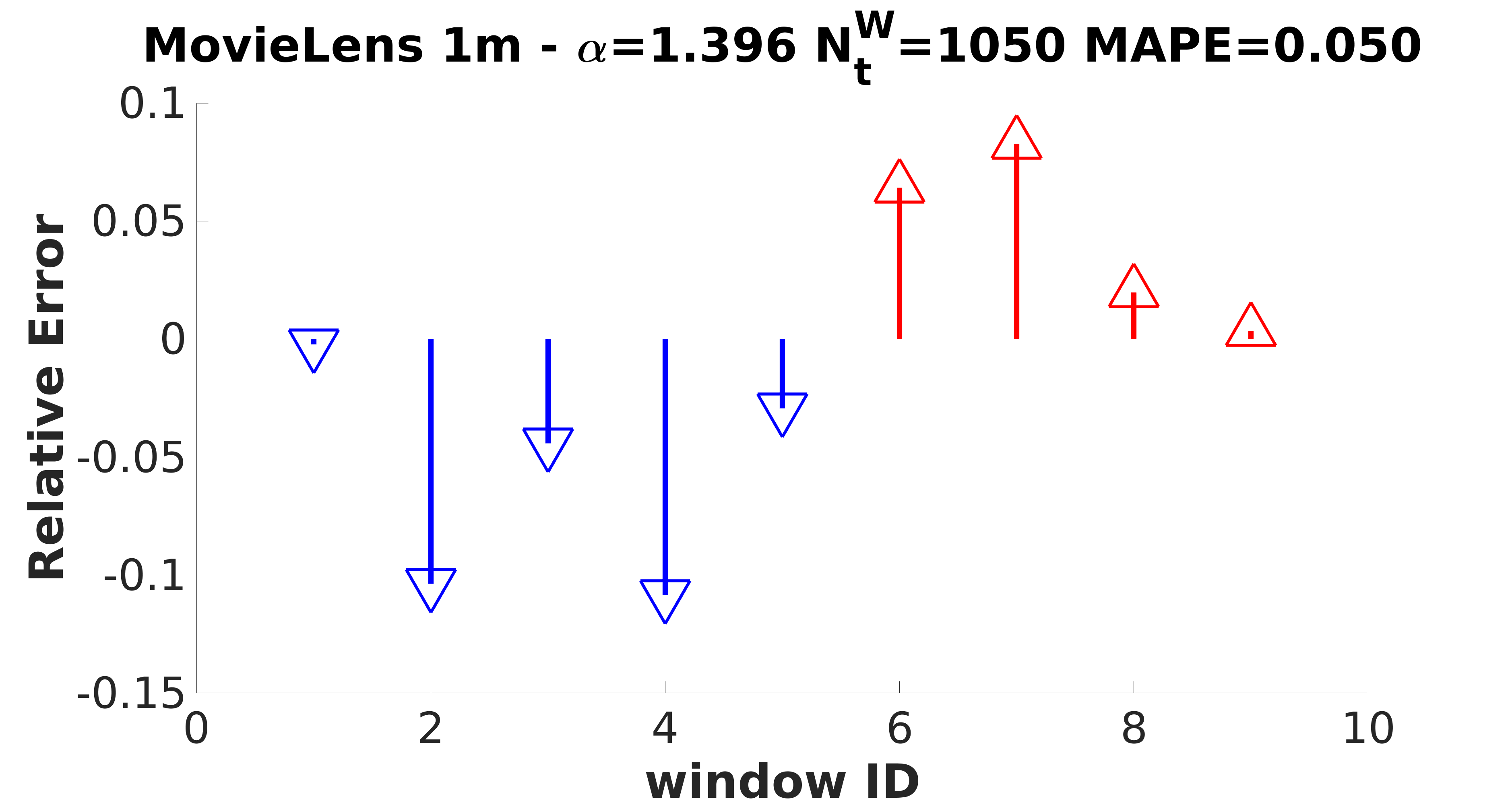}}
    \subfigure{\includegraphics[width=0.3\textwidth]{ 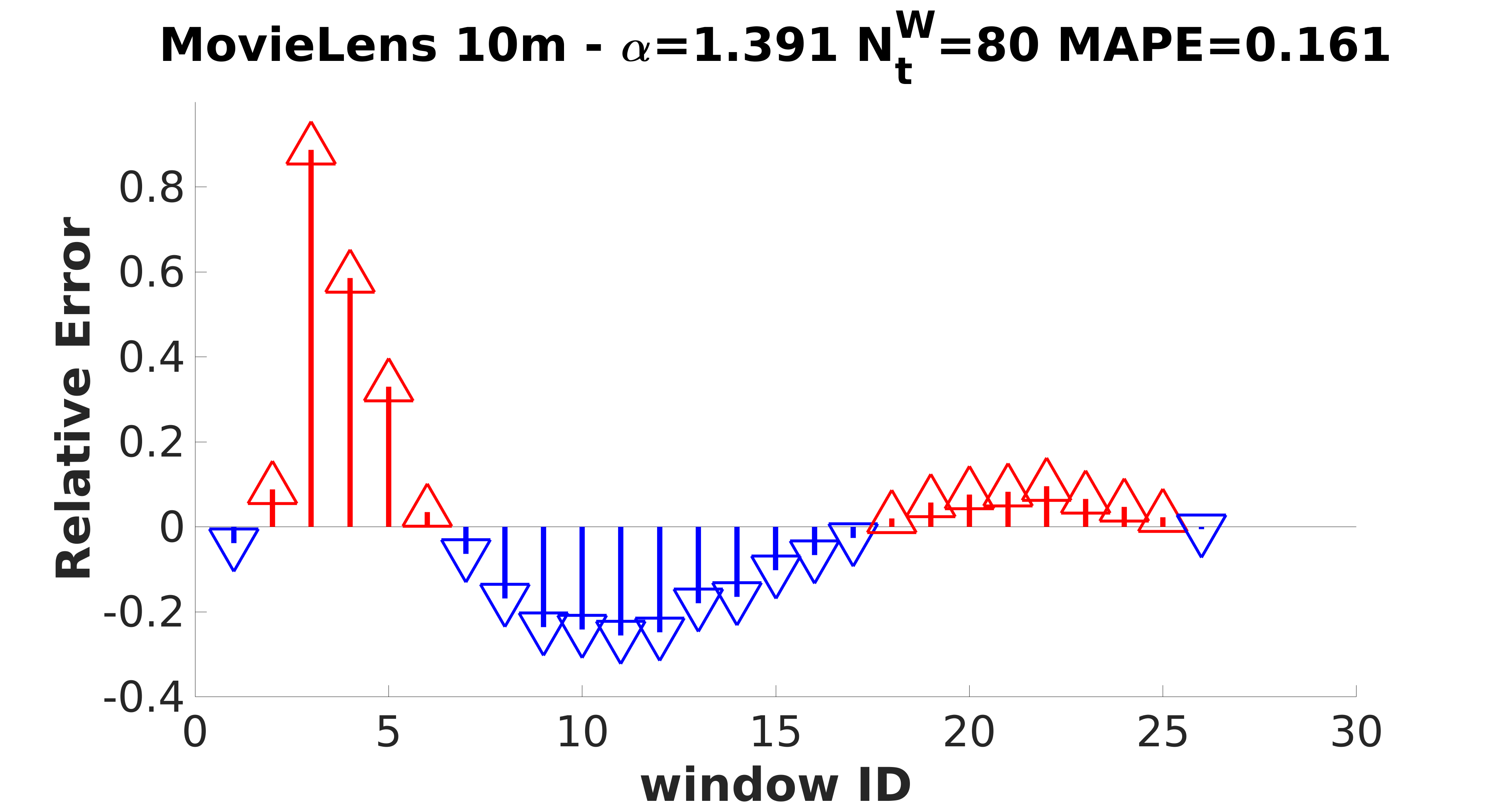}}
    \subfigure{\includegraphics[width=0.3\textwidth]{ 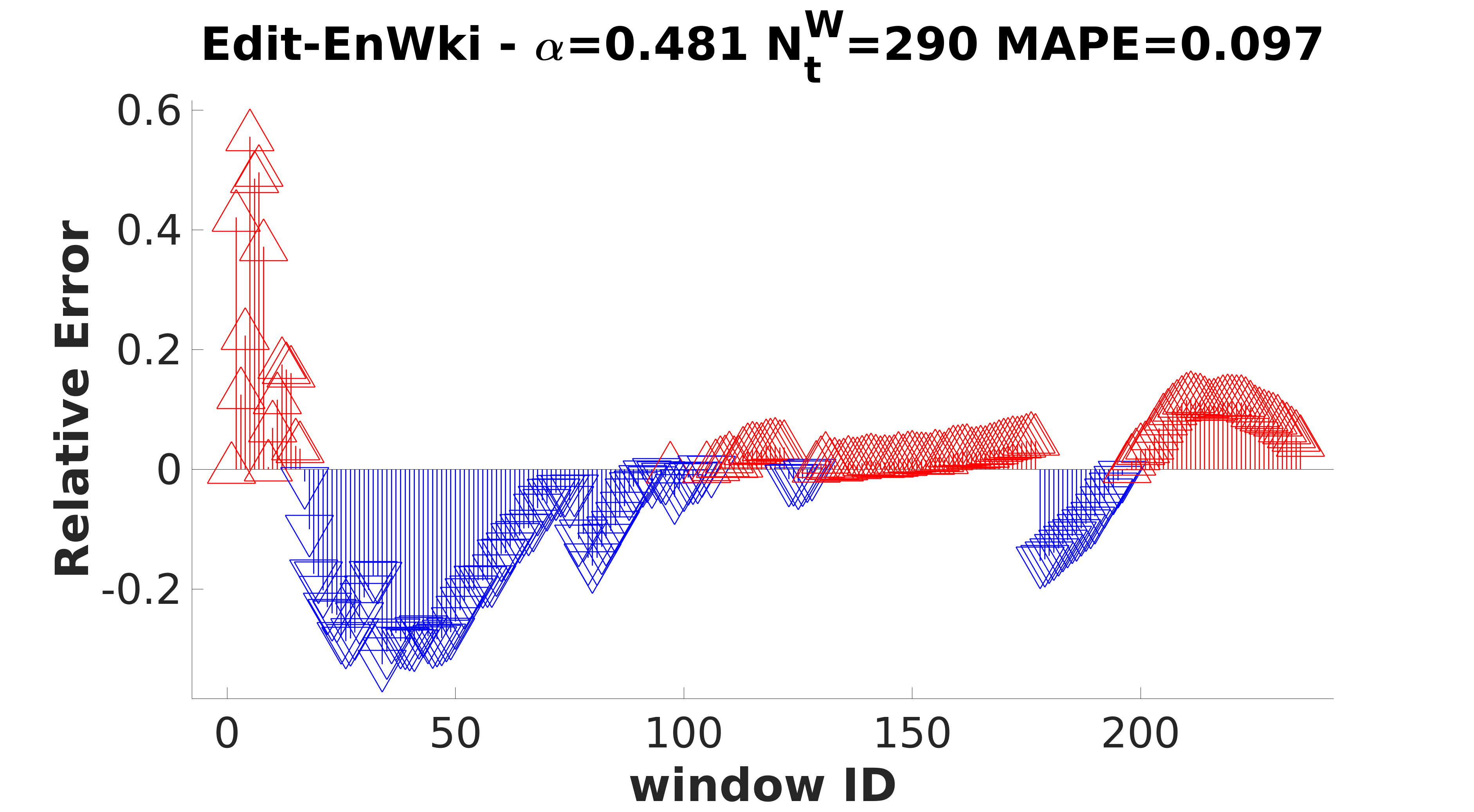}}
   \subfigure{\includegraphics[width=0.3\textwidth]{ 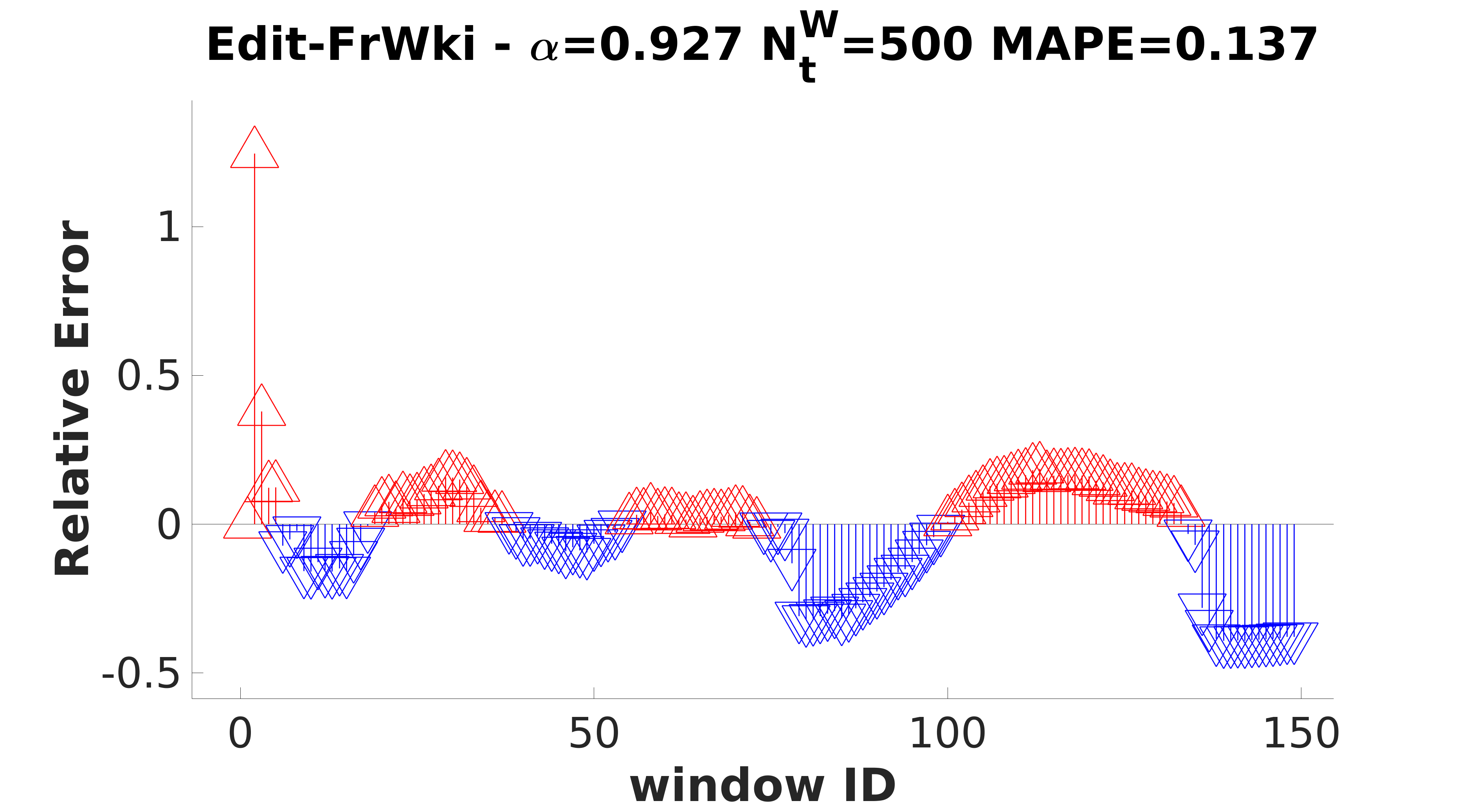}}
 \caption{Relative Error of sGrapp-100 over windows for the best obtained MAPE.}\label{fig:realtiveerrors100}\end{figure*}
\begin{figure*}[h]\centering
    \includegraphics[width=0.9\textwidth]{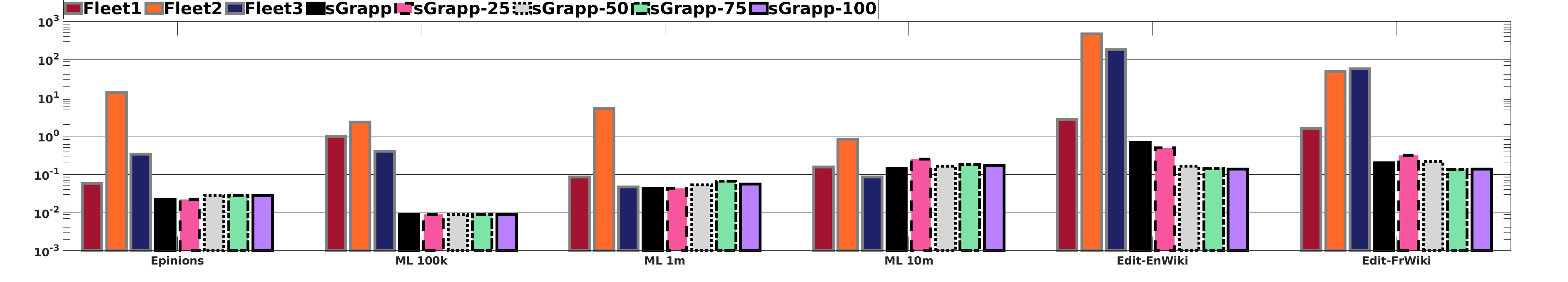}
\caption{MAPE of different algorithms.}
   \label{fig:mapecomparison}
\end{figure*}
\begin{figure*}[h]\centering
    \subfigure{\includegraphics[width=0.3\textwidth]{ 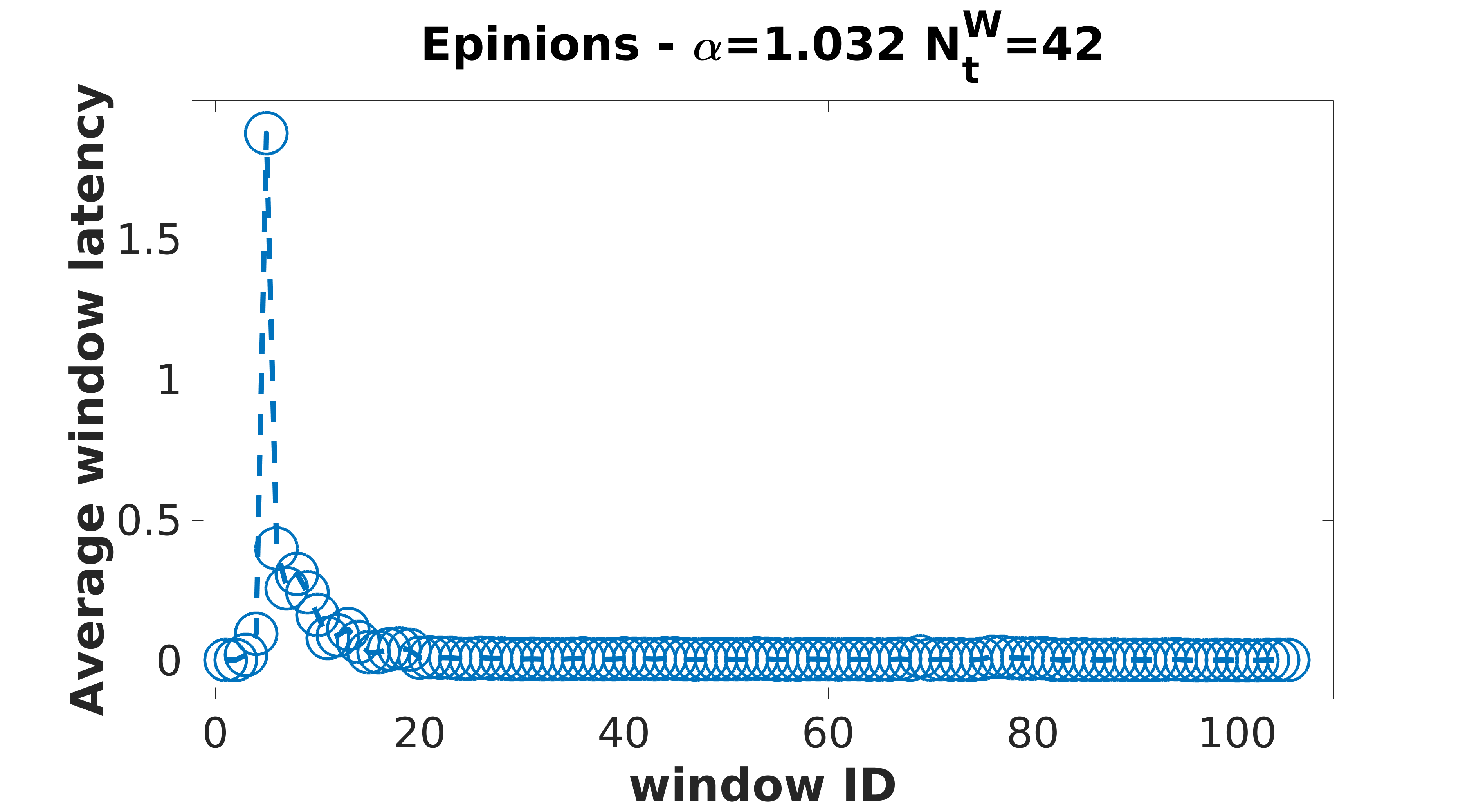}} 
    \subfigure{\includegraphics[width=0.3\textwidth]{ 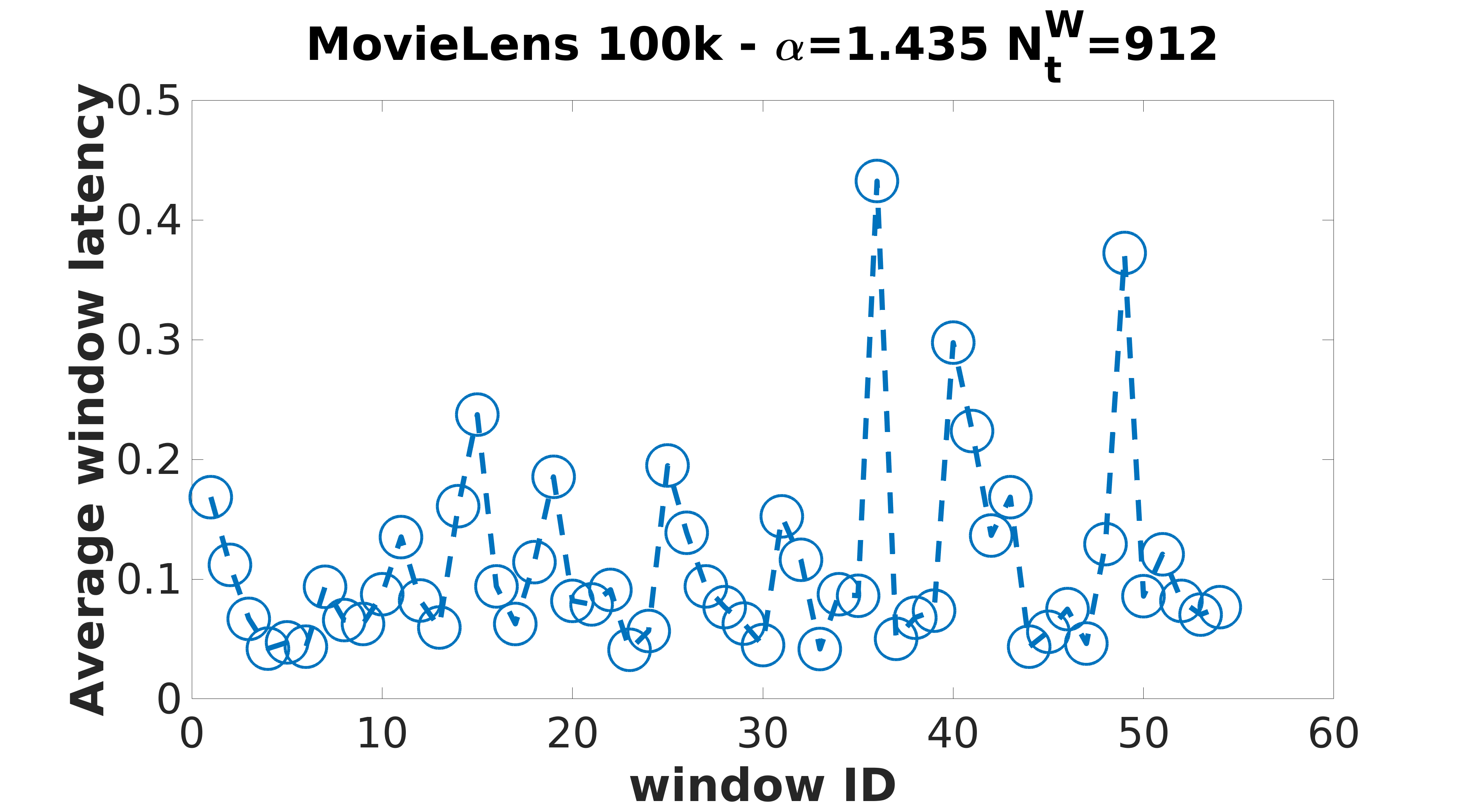}}
    \subfigure{\includegraphics[width=0.3\textwidth]{ 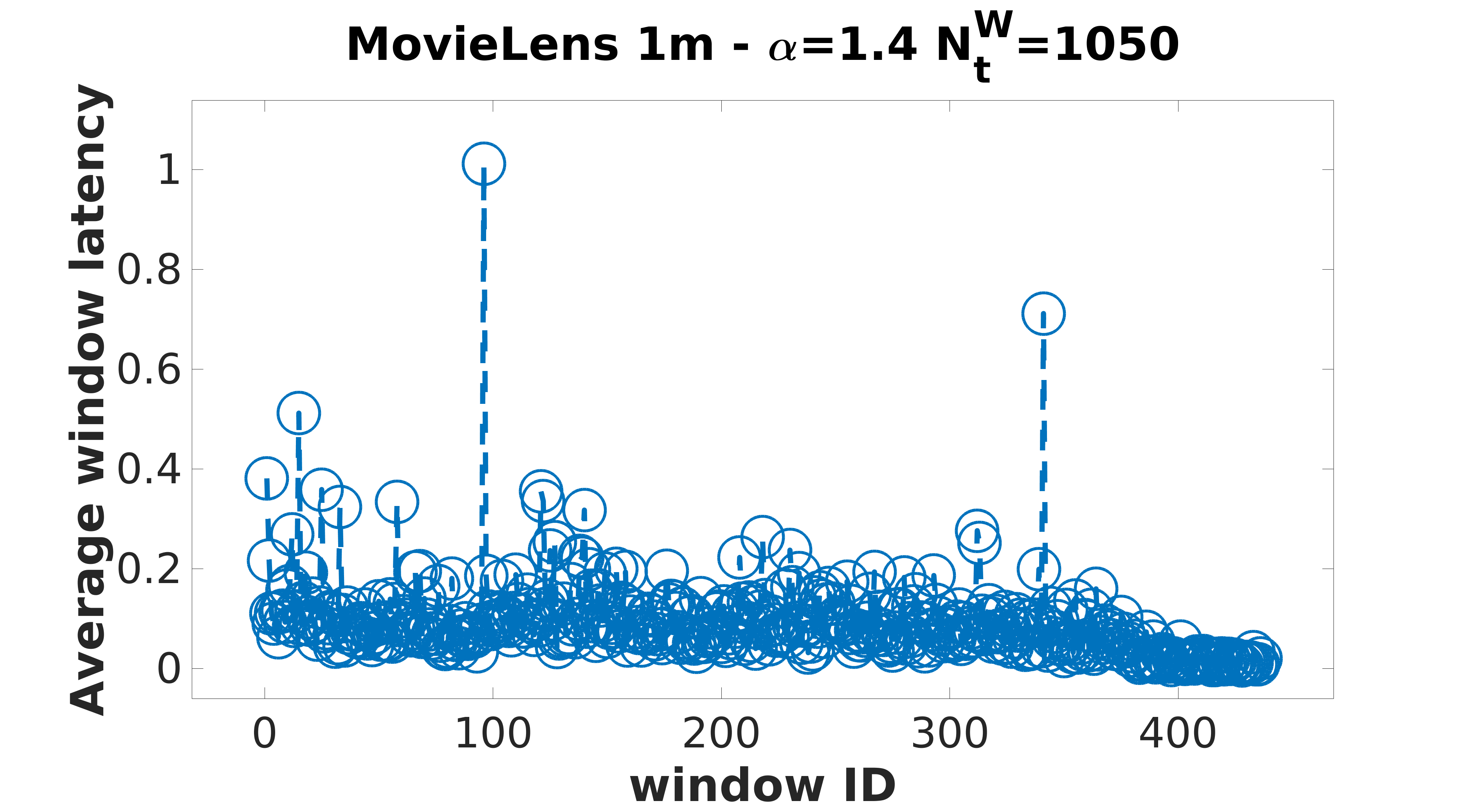}}
    \subfigure{\includegraphics[width=0.3\textwidth]{ 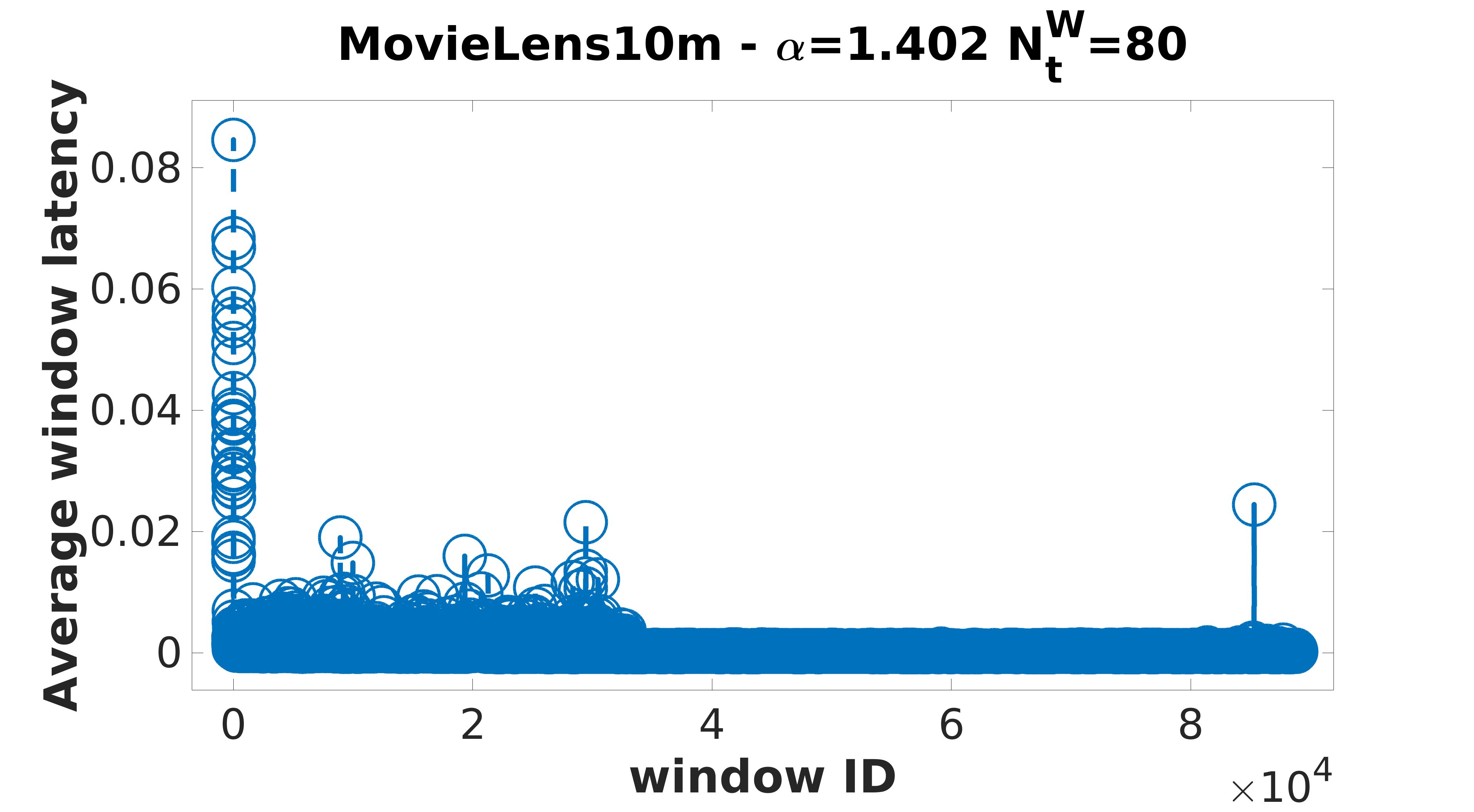}}
   \subfigure{\includegraphics[width=0.3\textwidth]{ 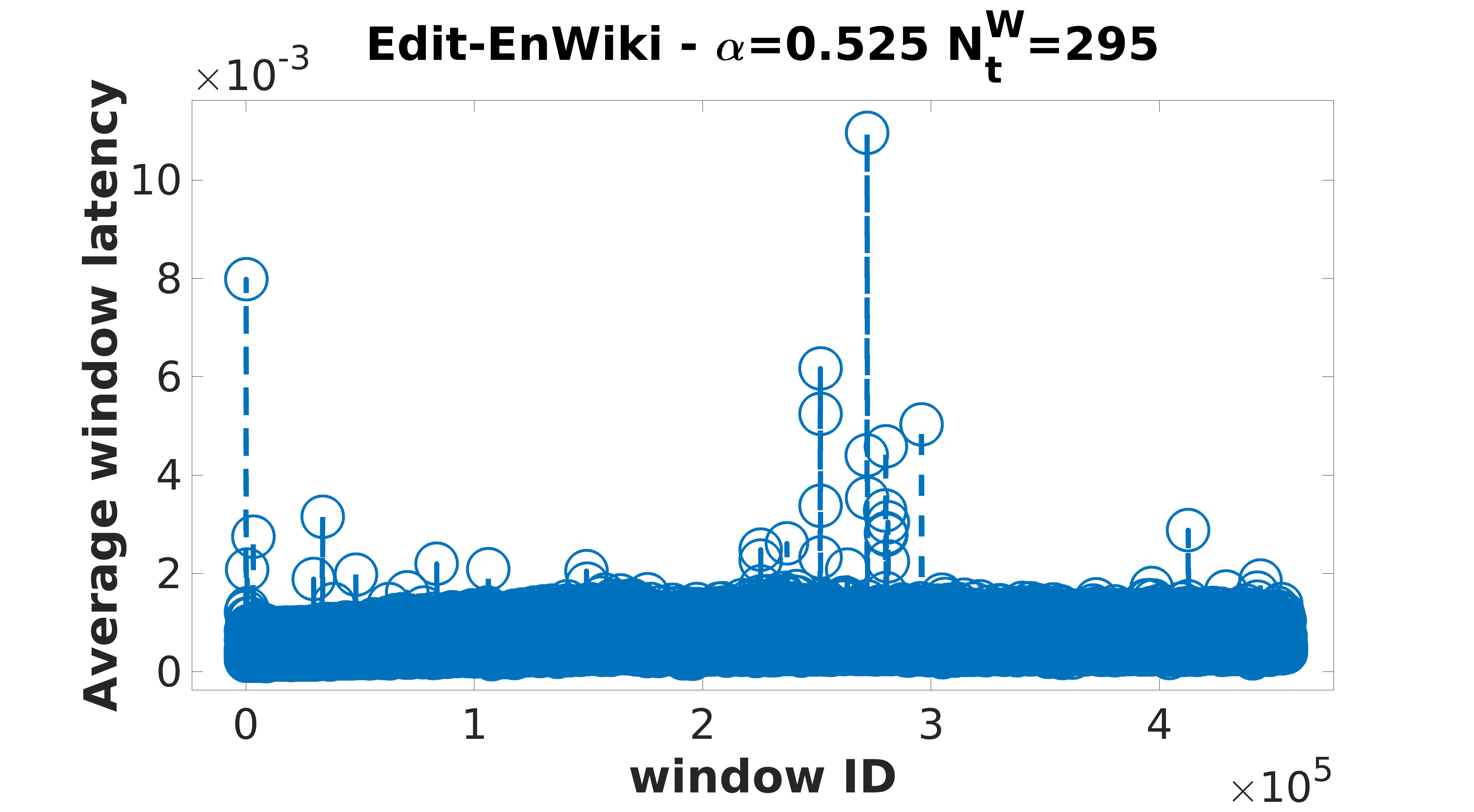}}
   \subfigure{\includegraphics[width=0.3\textwidth]{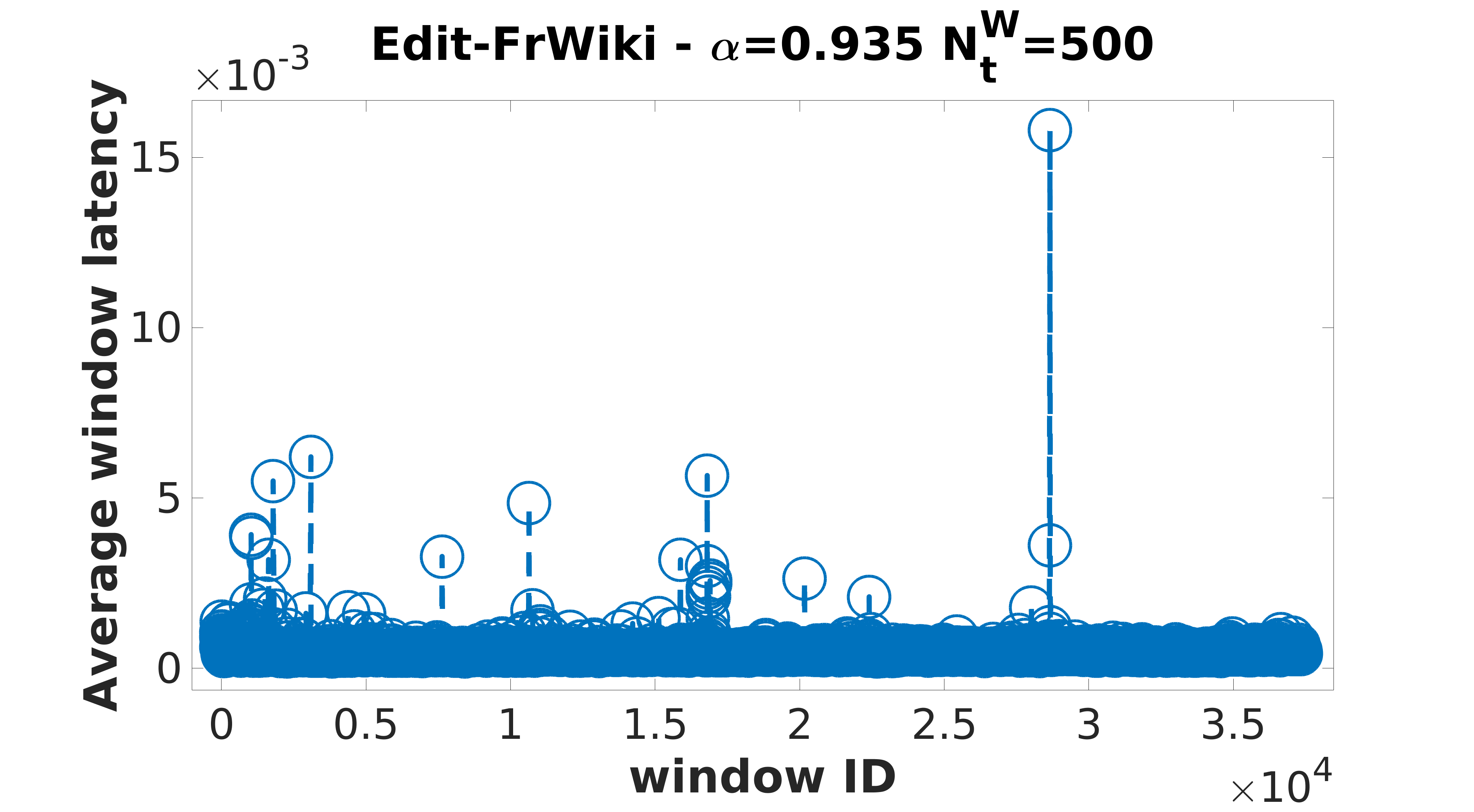}} 
 \caption{ Average window latency (s) of sGrapp.} \label{fig:wlatency}\end{figure*}
\begin{figure*}[h]\centering
    \subfigure{\includegraphics[width=0.3\textwidth]{ 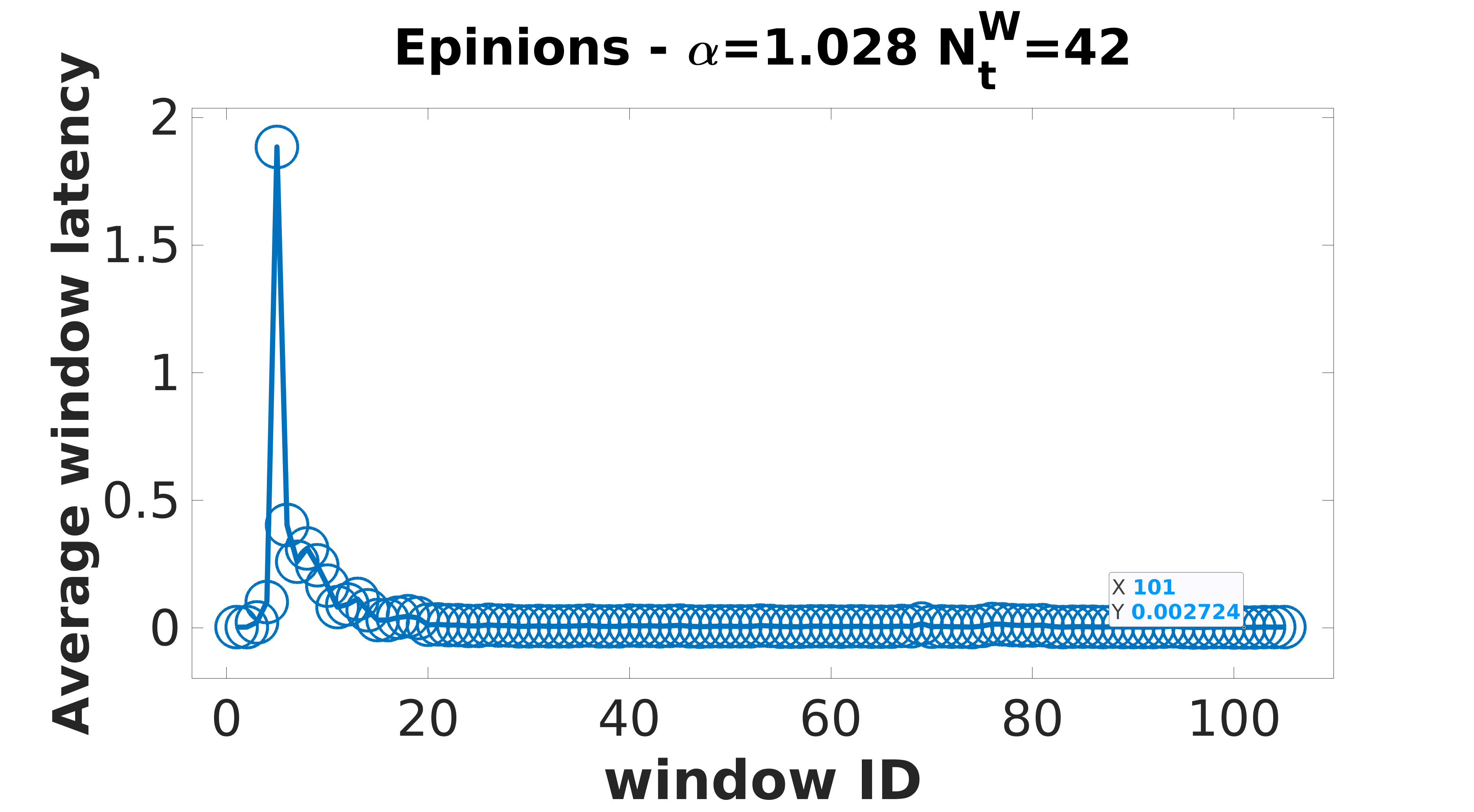}} 
    \subfigure{\includegraphics[width=0.3\textwidth]{ 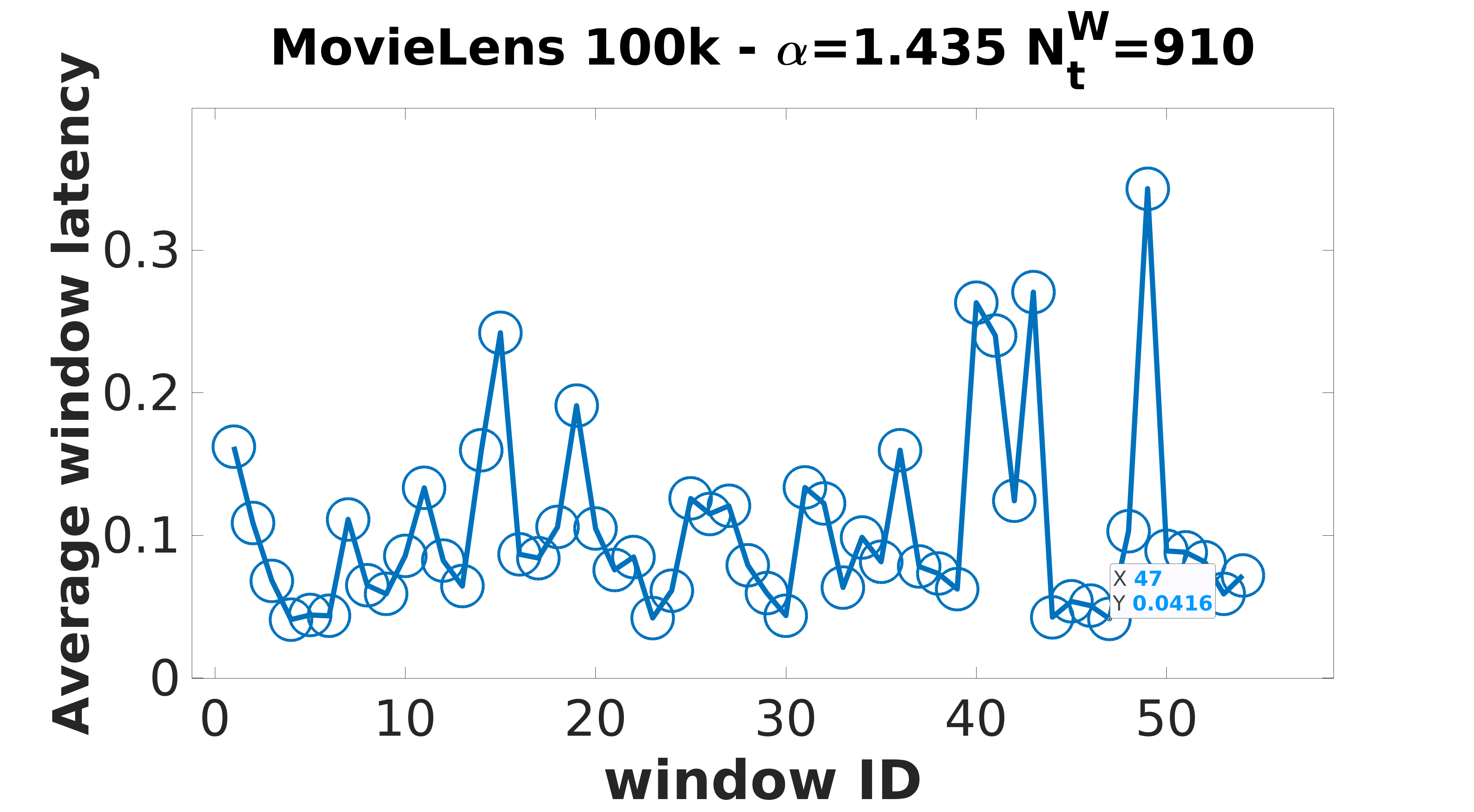}}
    \subfigure{\includegraphics[width=0.3\textwidth]{ 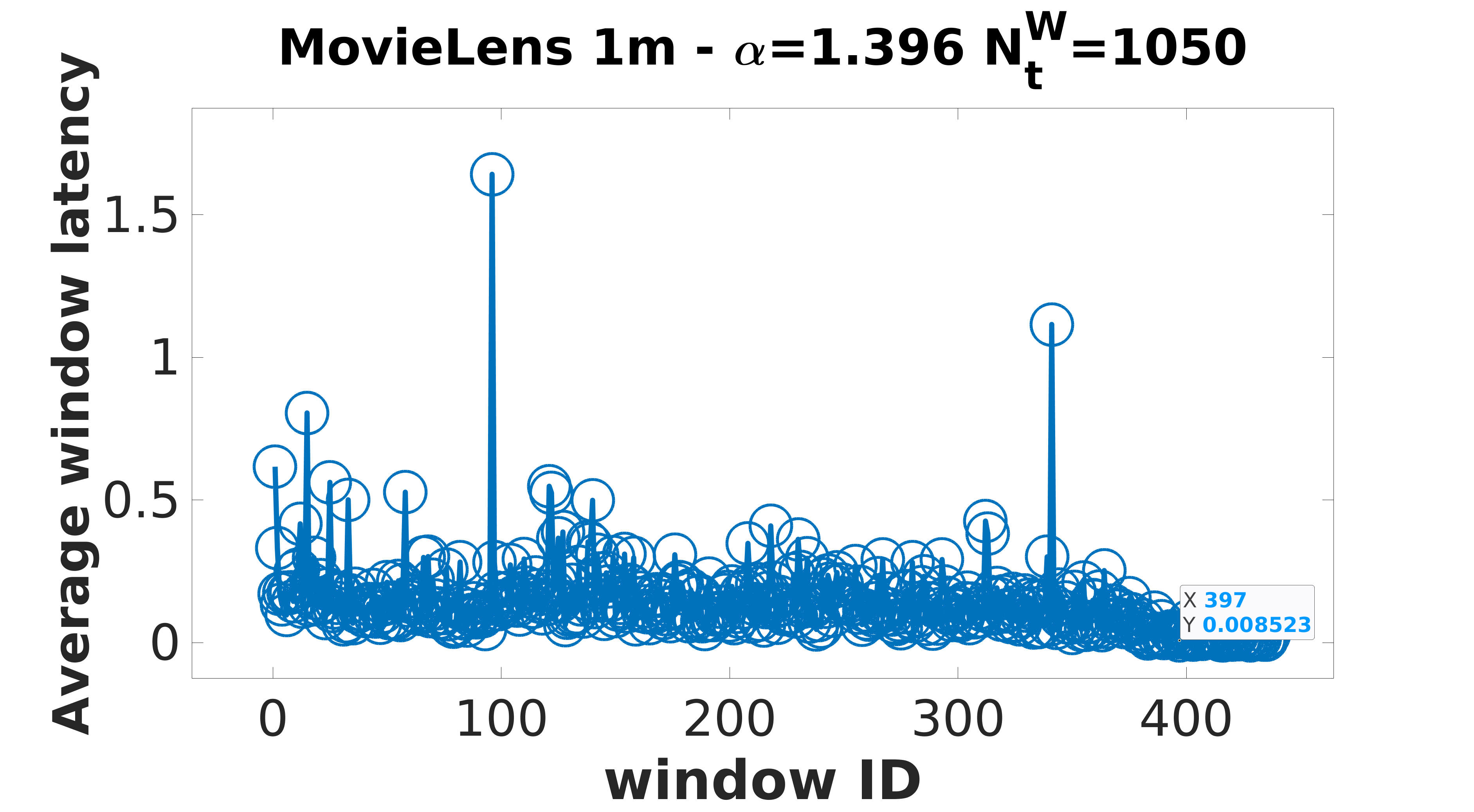}}
    \subfigure{\includegraphics[width=0.3\textwidth]{ 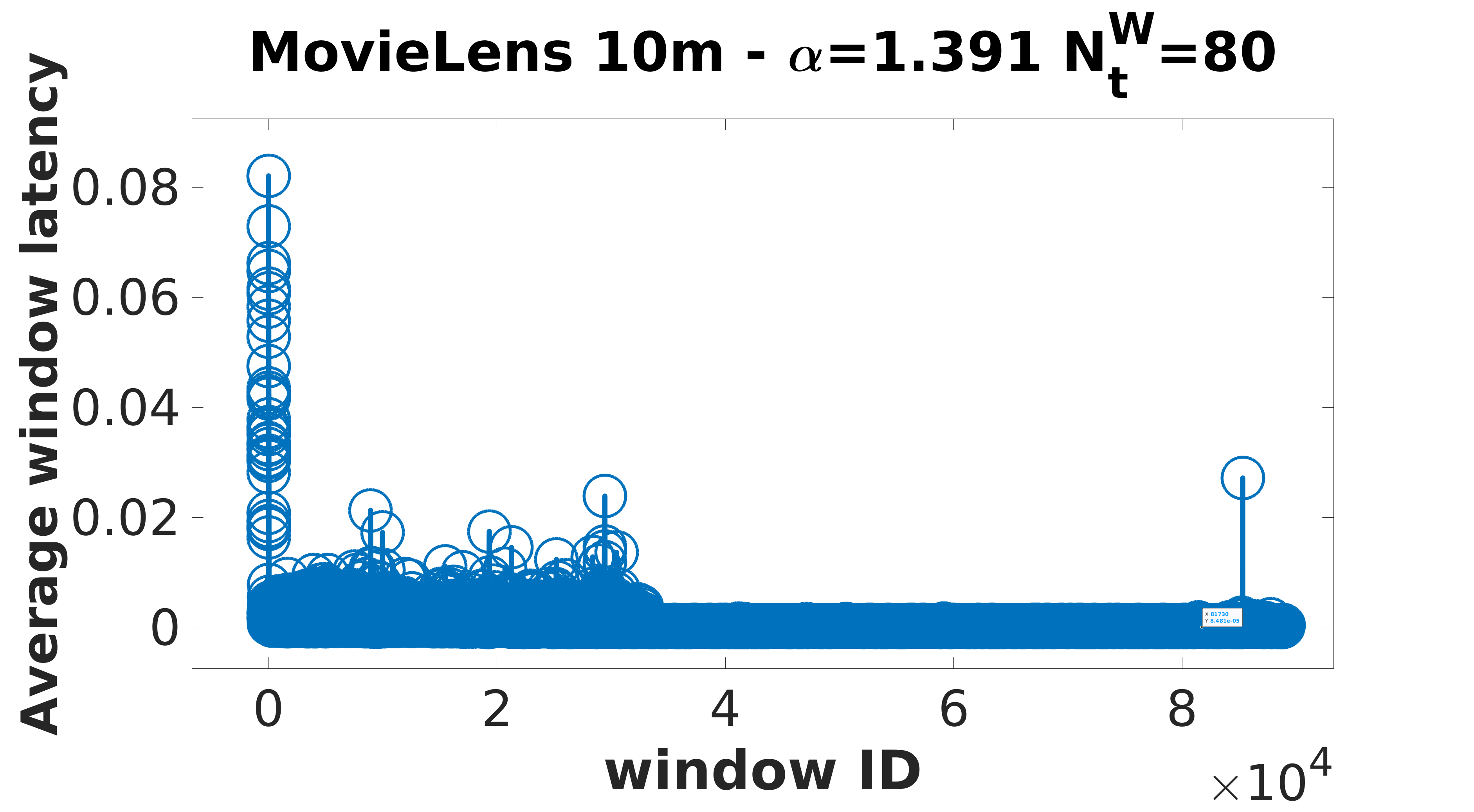}}
    \subfigure{\includegraphics[width=0.3\textwidth]{ 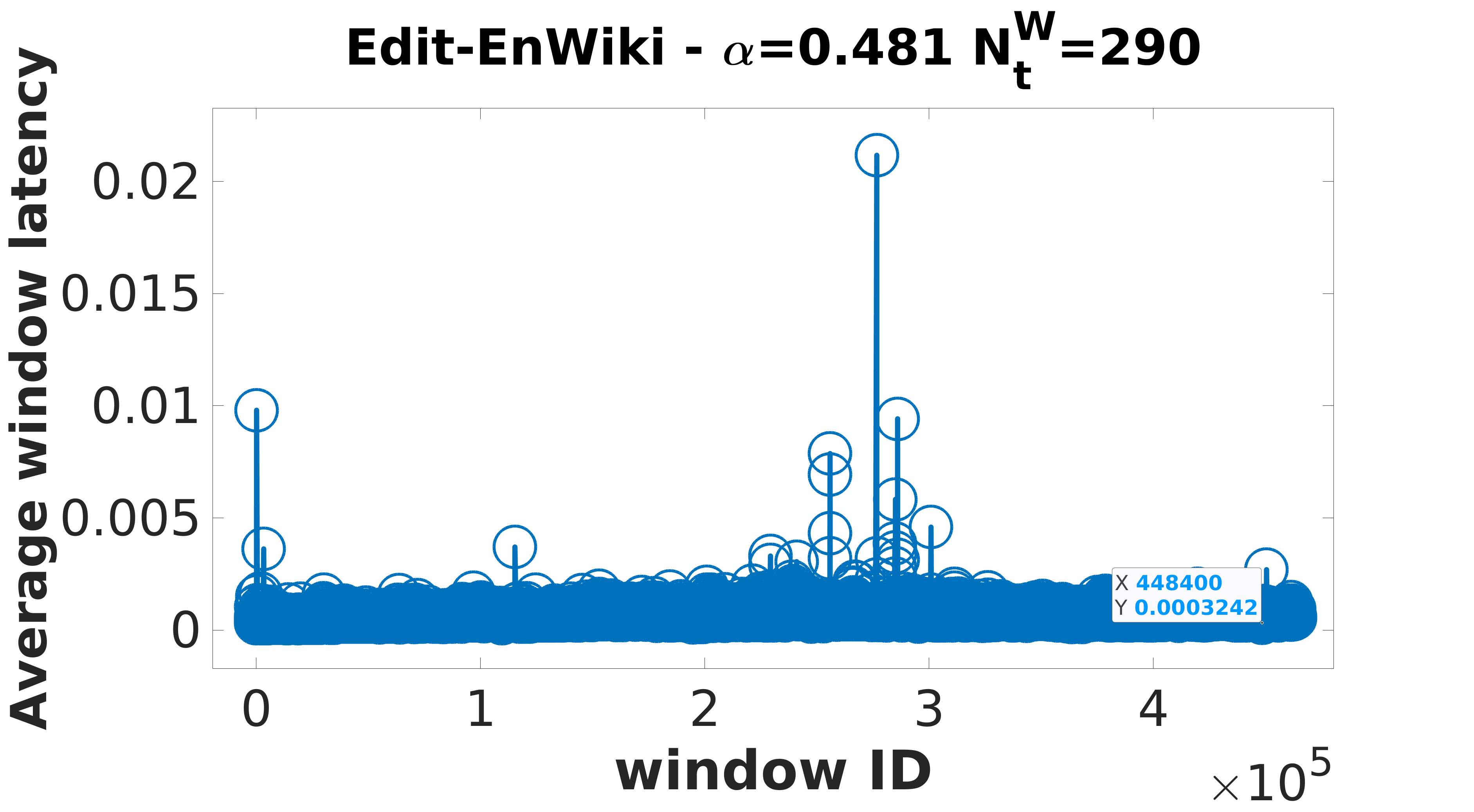}}
    \subfigure{\includegraphics[width=0.3\textwidth]{ 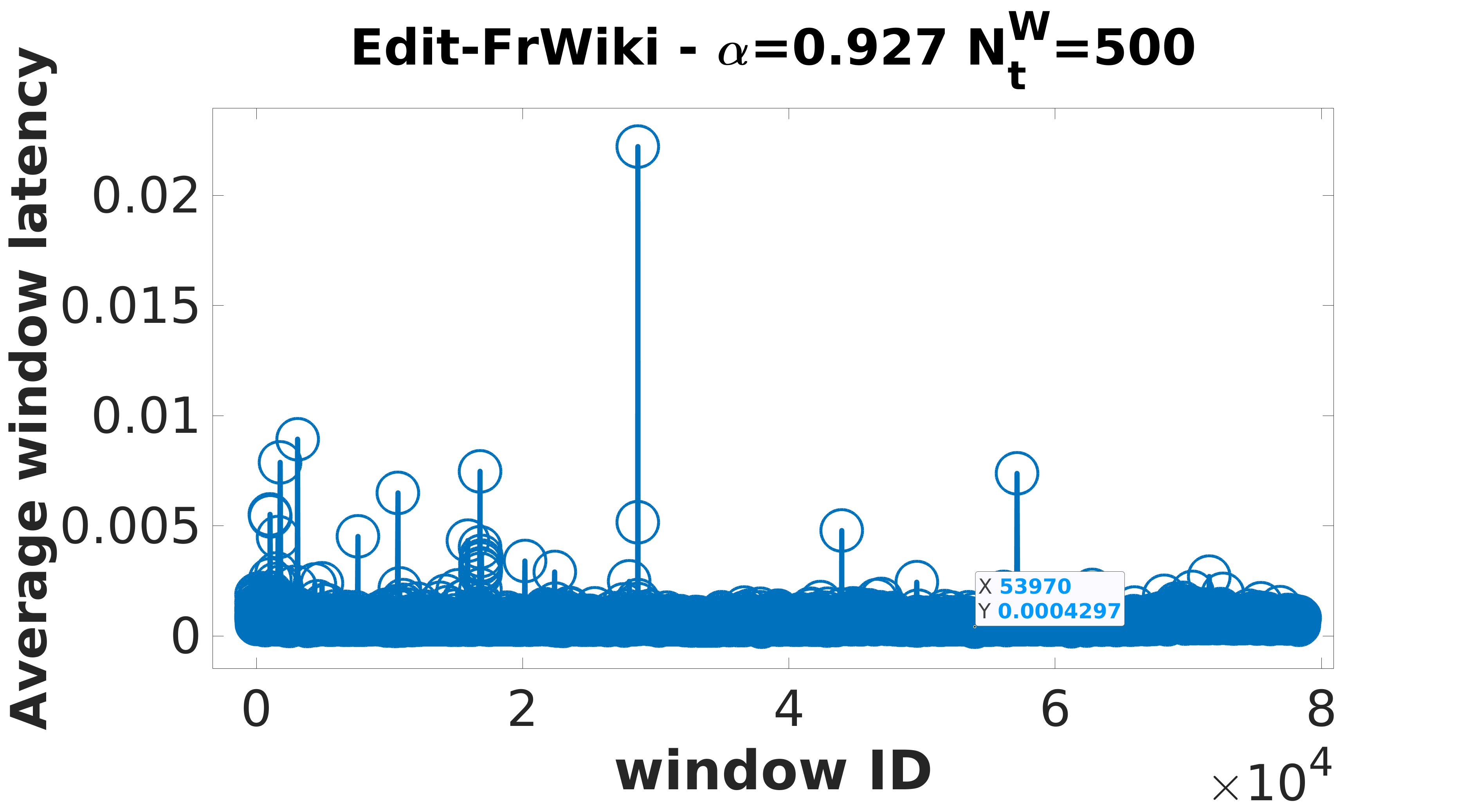}} 
\caption{ Average window latency (s) of sGrapp-100.} \label{fig:wlatency100}\end{figure*}
   
\begin{figure*}[h]\centering
    \subfigure{\includegraphics[width=0.3\textwidth]{ 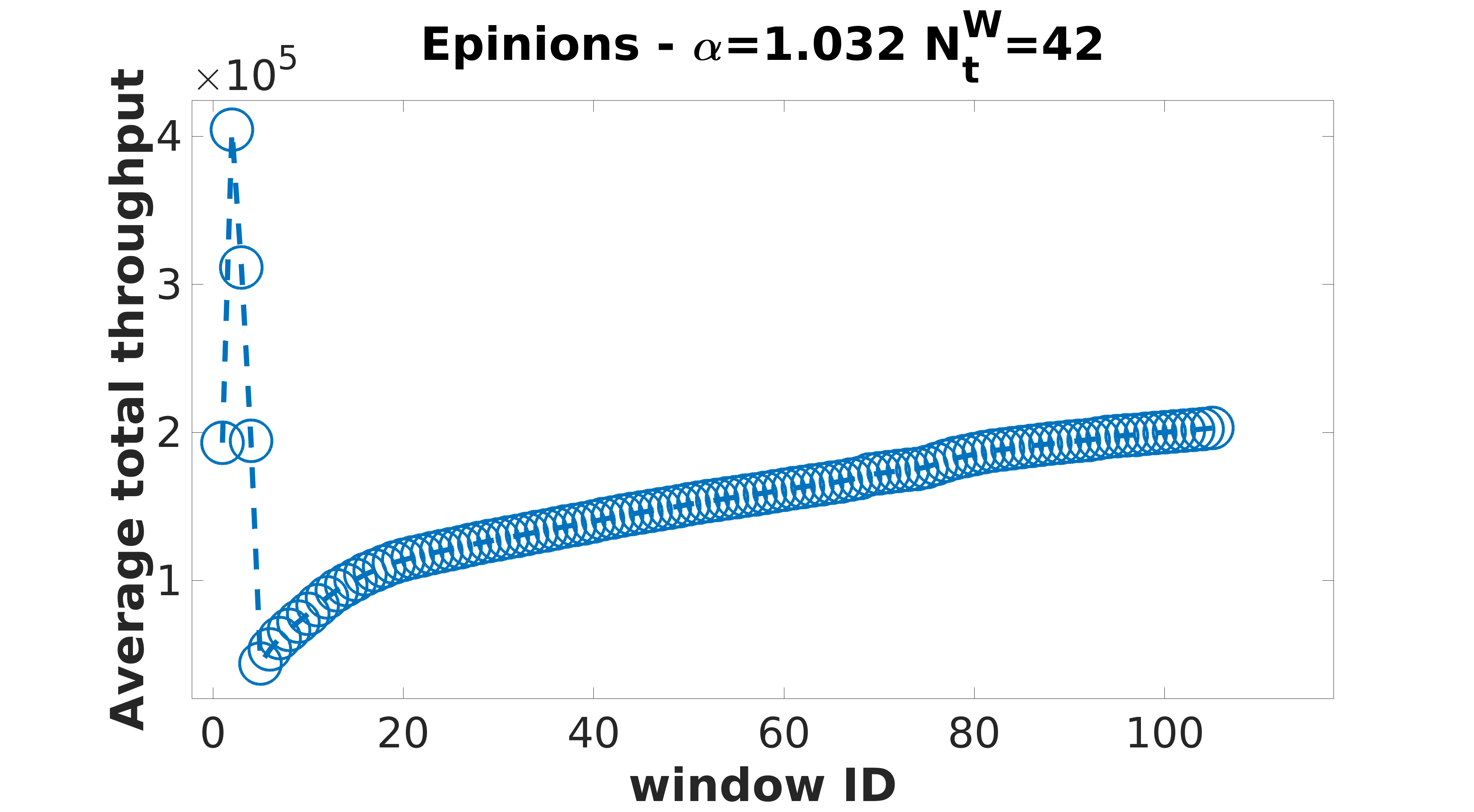}} 
    \subfigure{\includegraphics[width=0.3\textwidth]{ 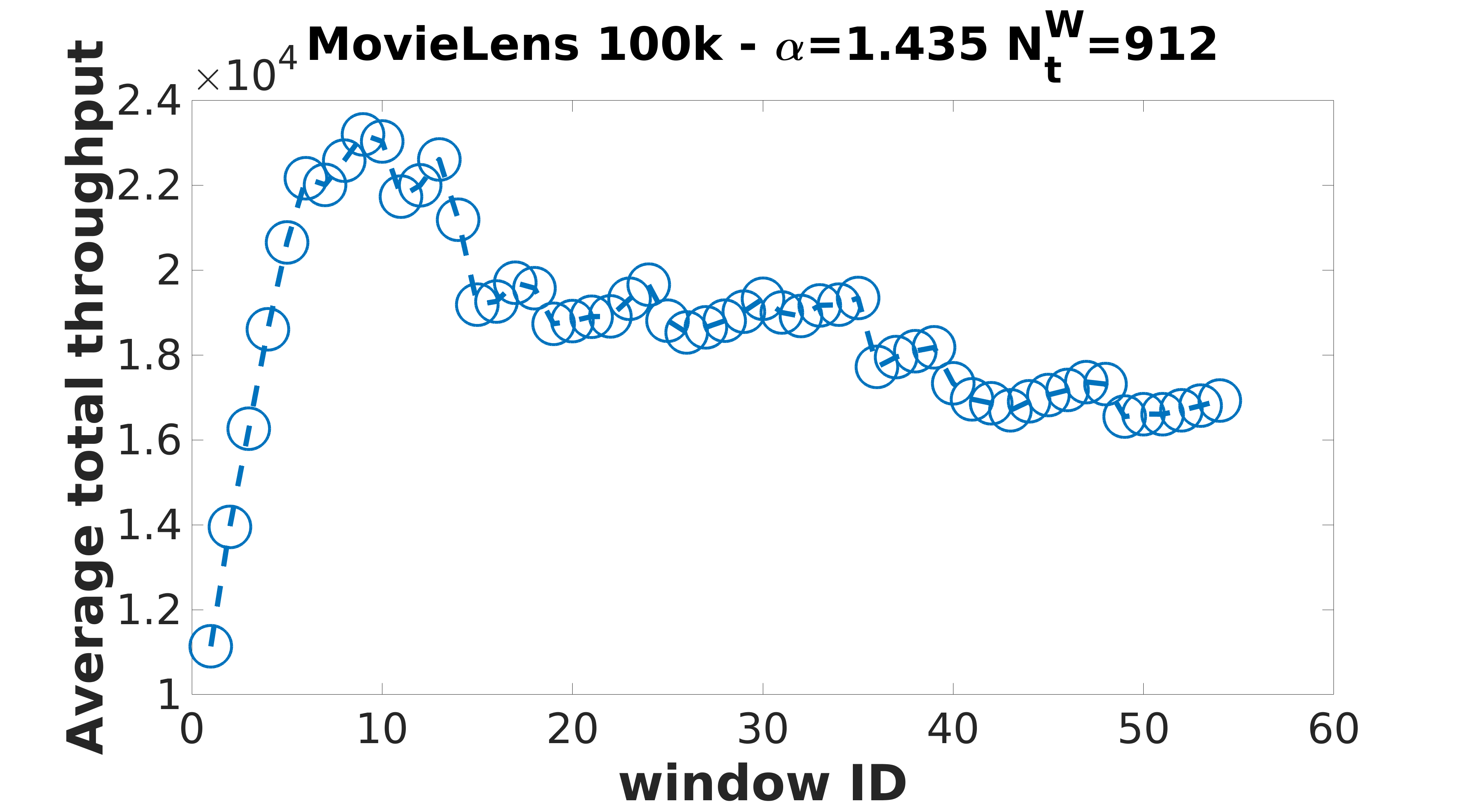}} 
    \subfigure{\includegraphics[width=0.3\textwidth]{ 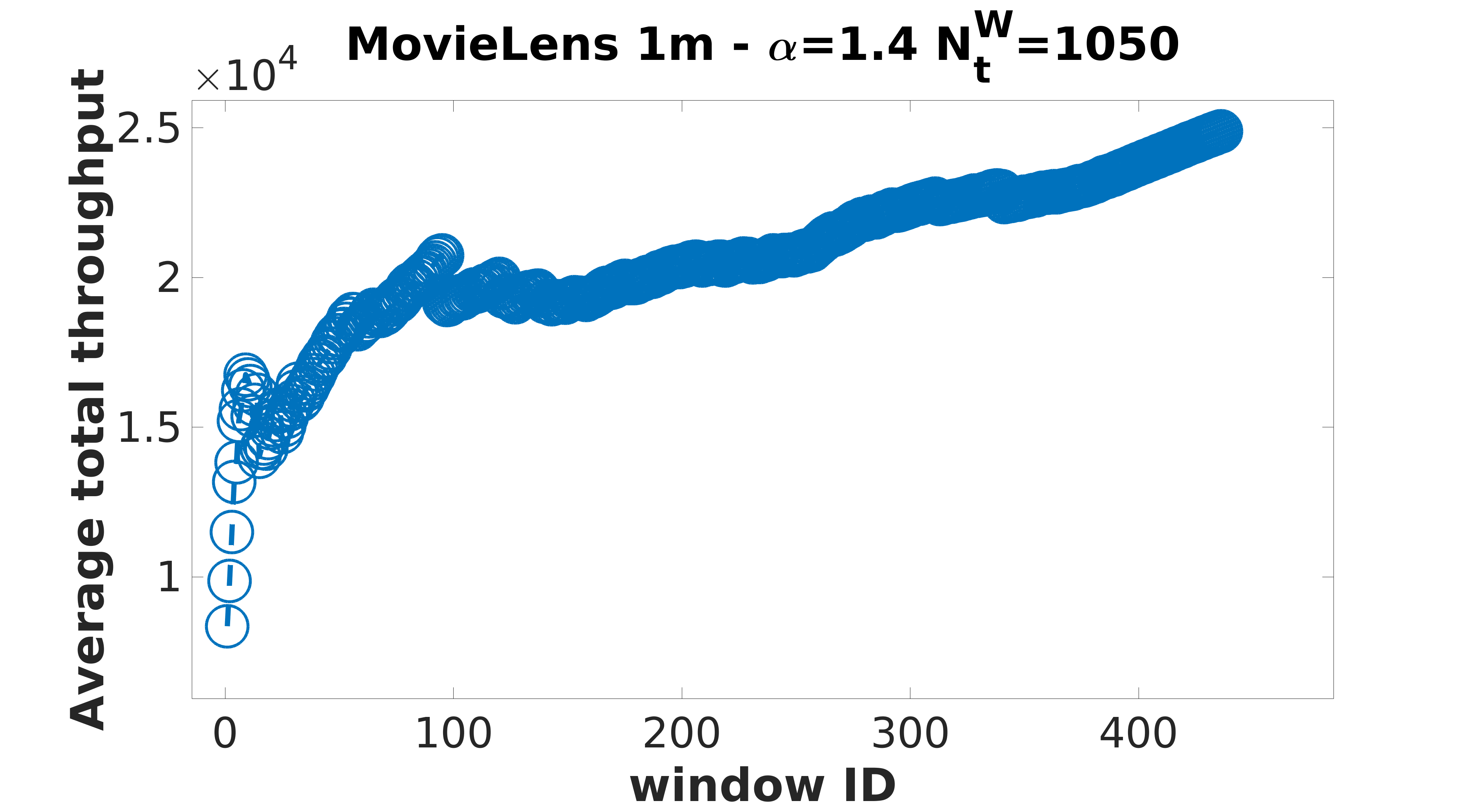}}
    \subfigure{\includegraphics[width=0.3\textwidth]{ 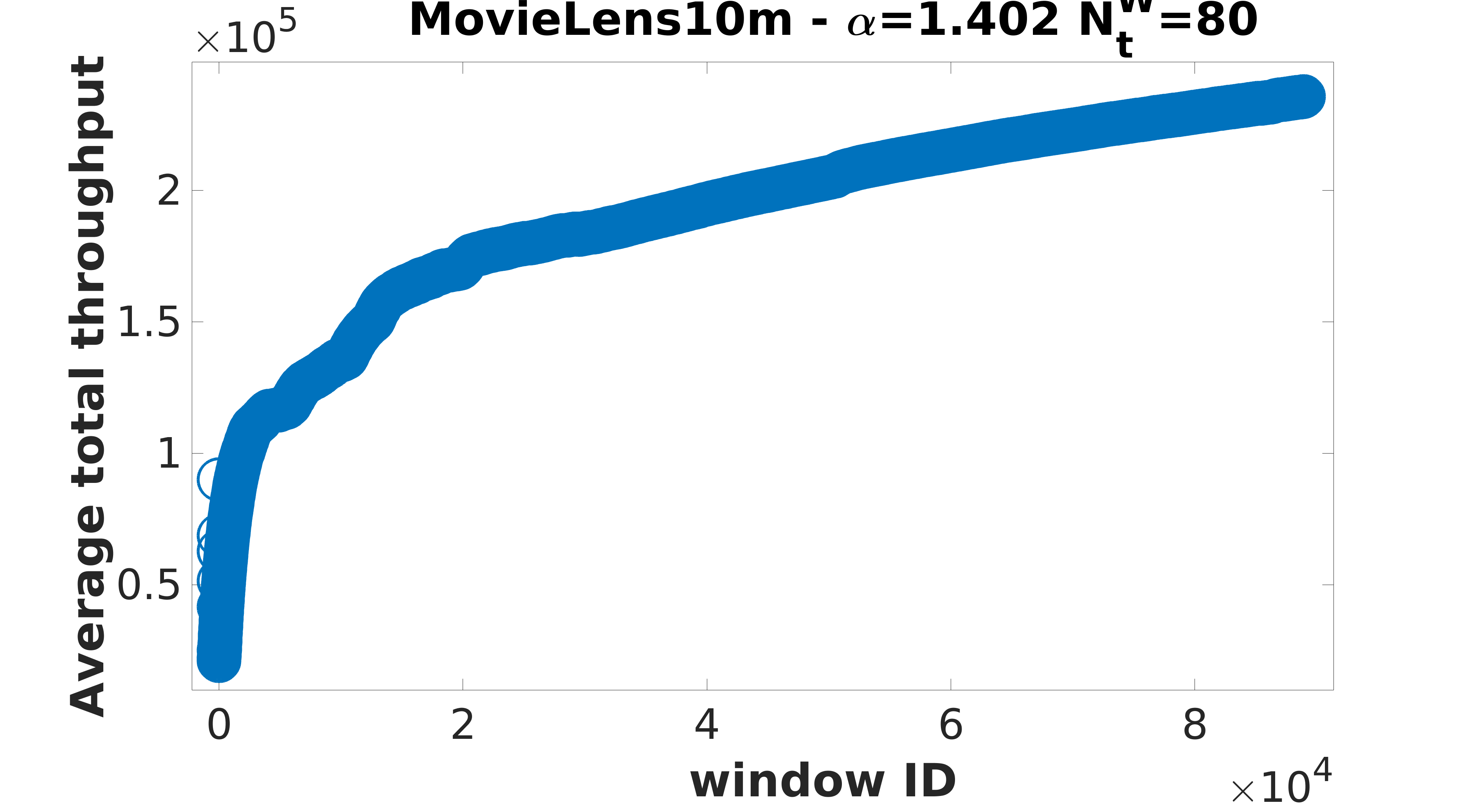}}
    \subfigure{\includegraphics[width=0.3\textwidth]{ 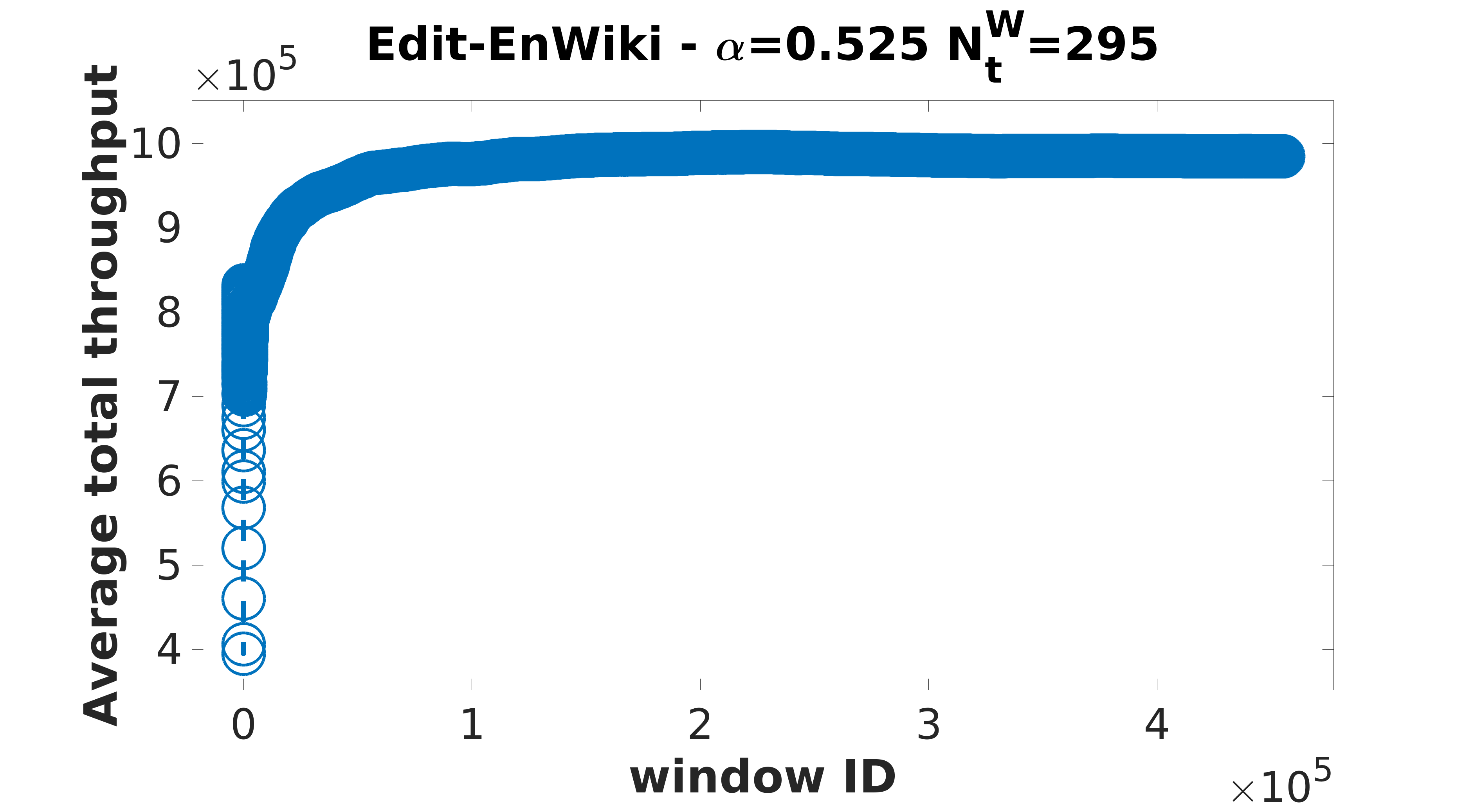}}
    \subfigure{\includegraphics[width=0.3\textwidth]{ 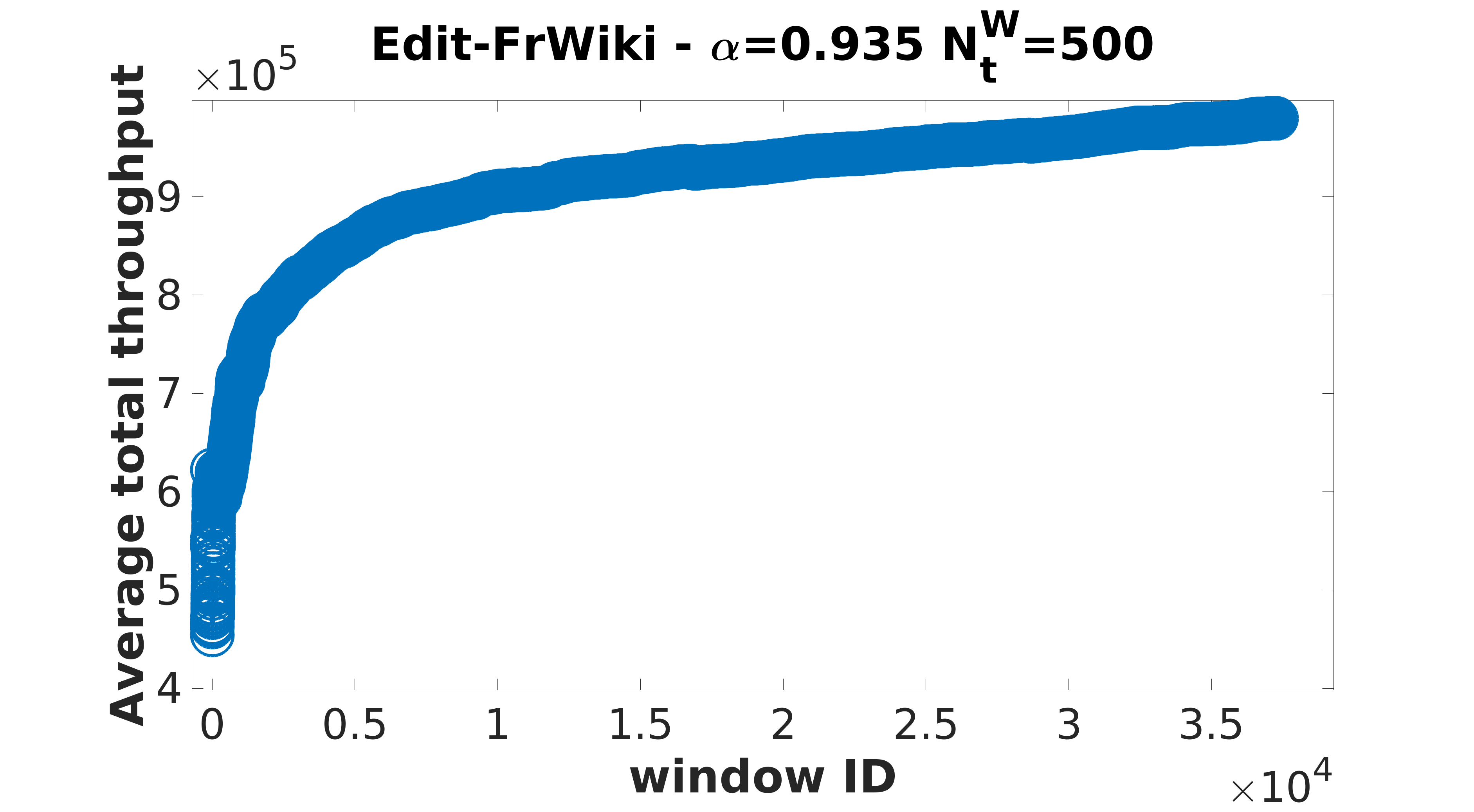}}
    \caption{ Average total throughput (edge/s) of sGrapp at the end of each window.}
   \label{fig:totalthroughput}
\end{figure*}
\begin{figure*}[h]\centering
     \subfigure{\includegraphics[width=0.3\textwidth]{ 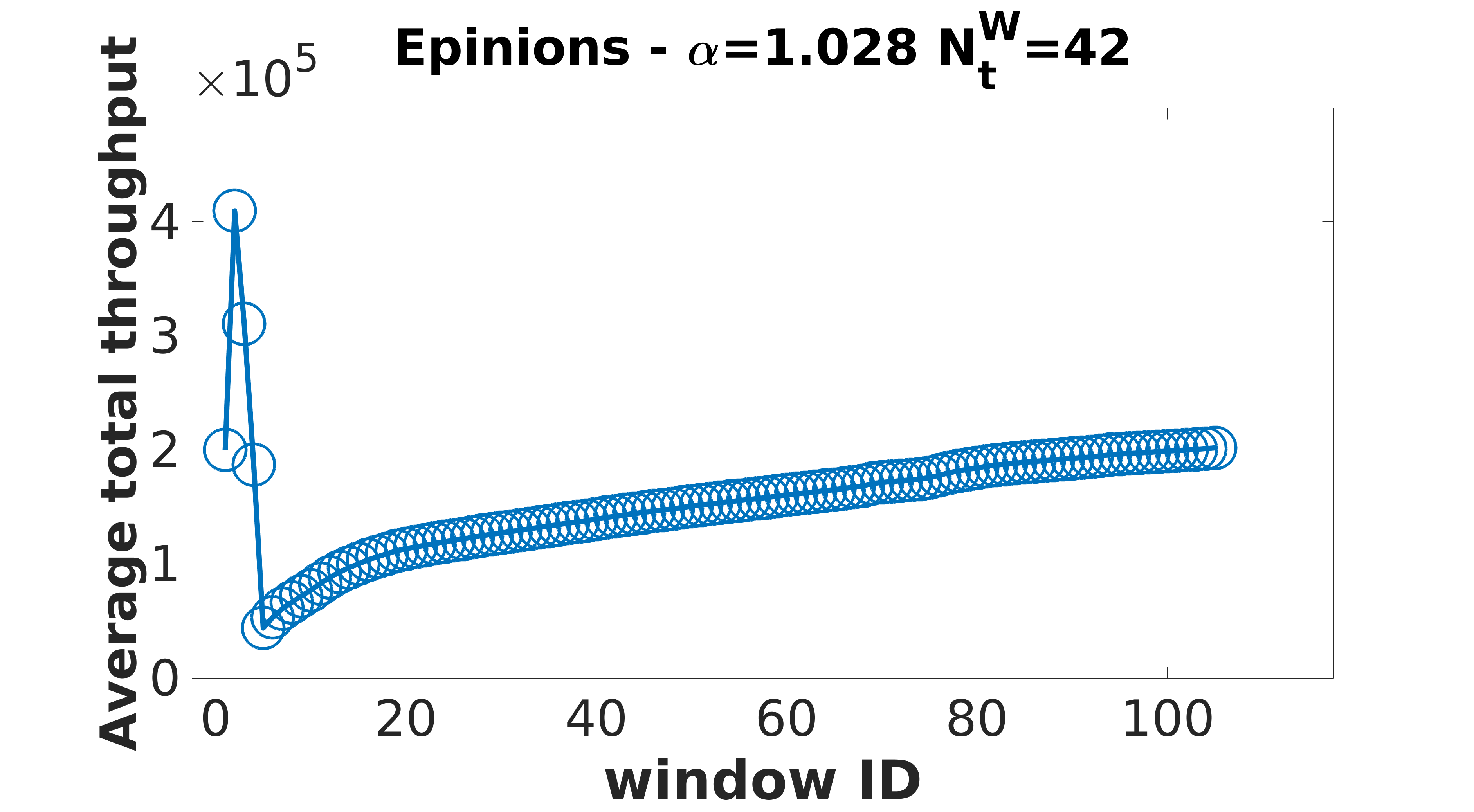}} 
    \subfigure{\includegraphics[width=0.3\textwidth]{ 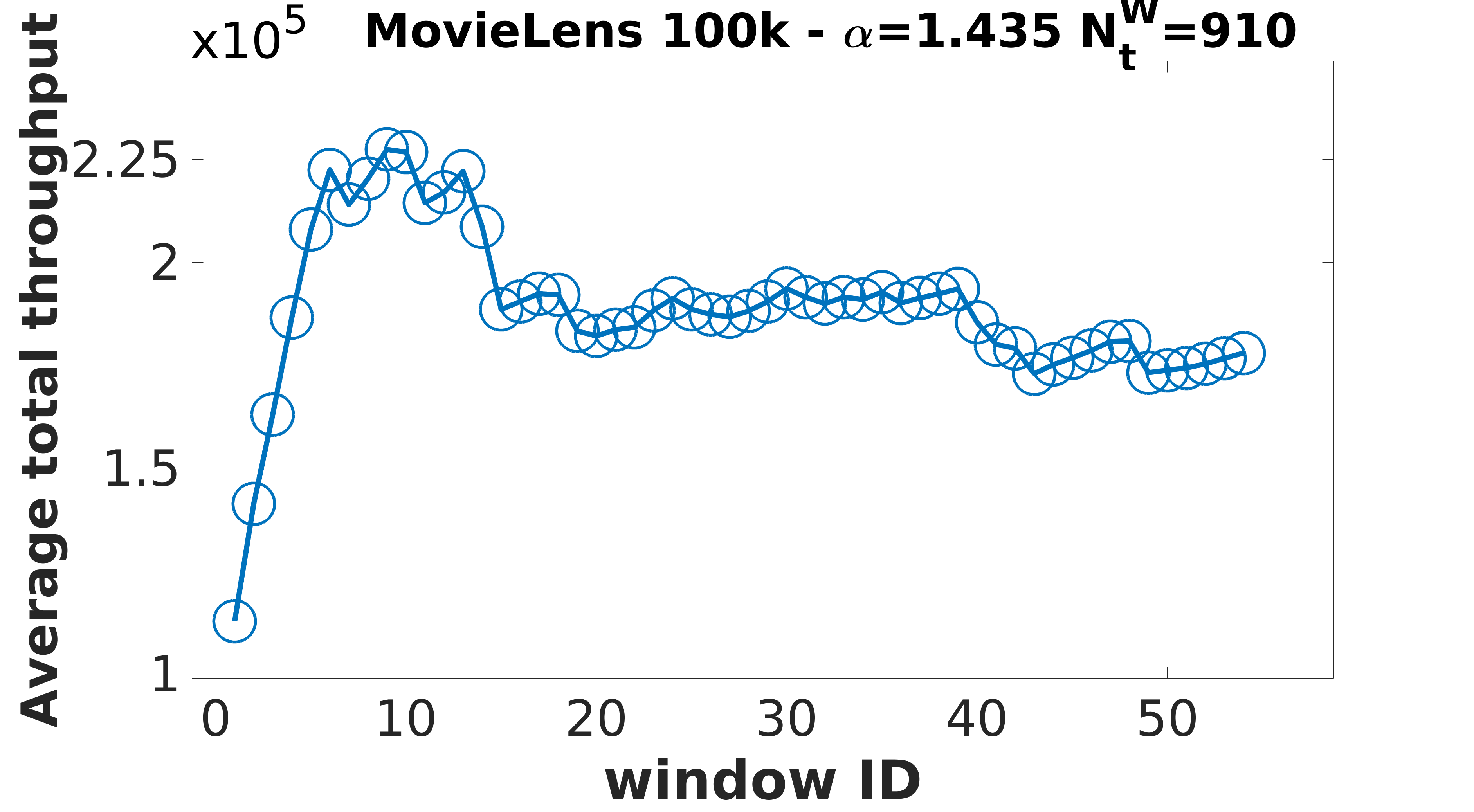}}
    \subfigure{\includegraphics[width=0.3\textwidth]{ 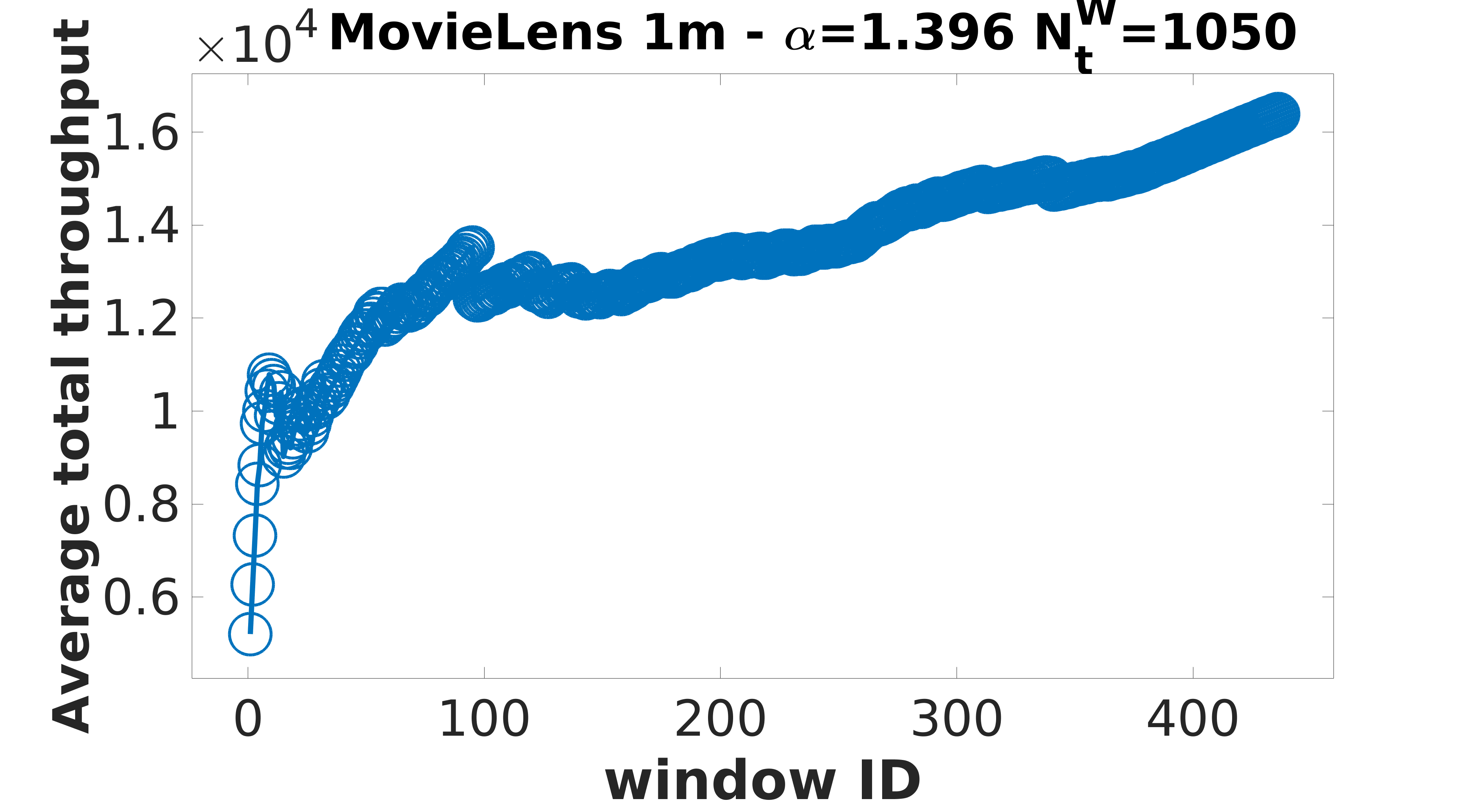}}
    \subfigure{\includegraphics[width=0.3\textwidth]{ 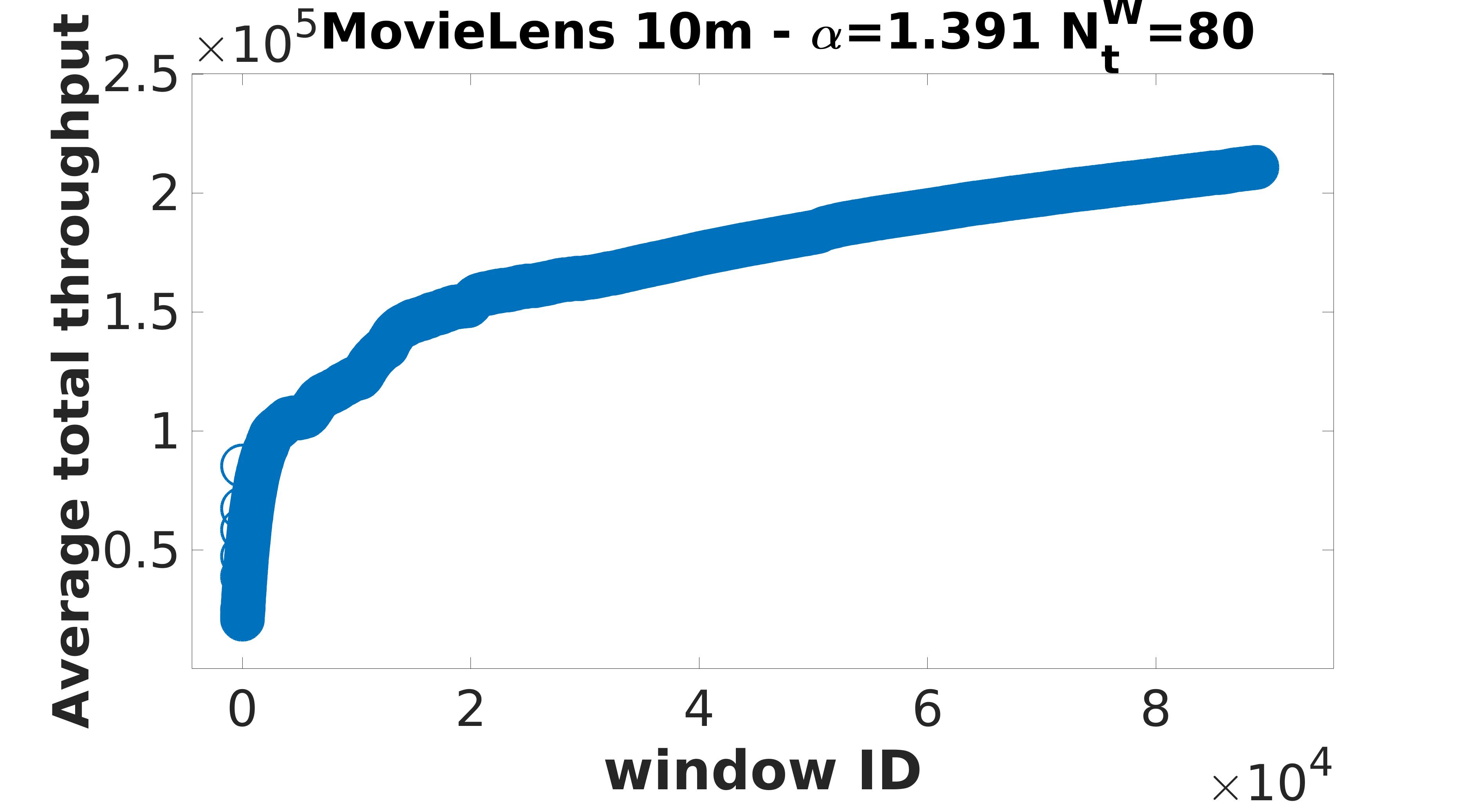}}
    \subfigure{\includegraphics[width=0.3\textwidth]{ 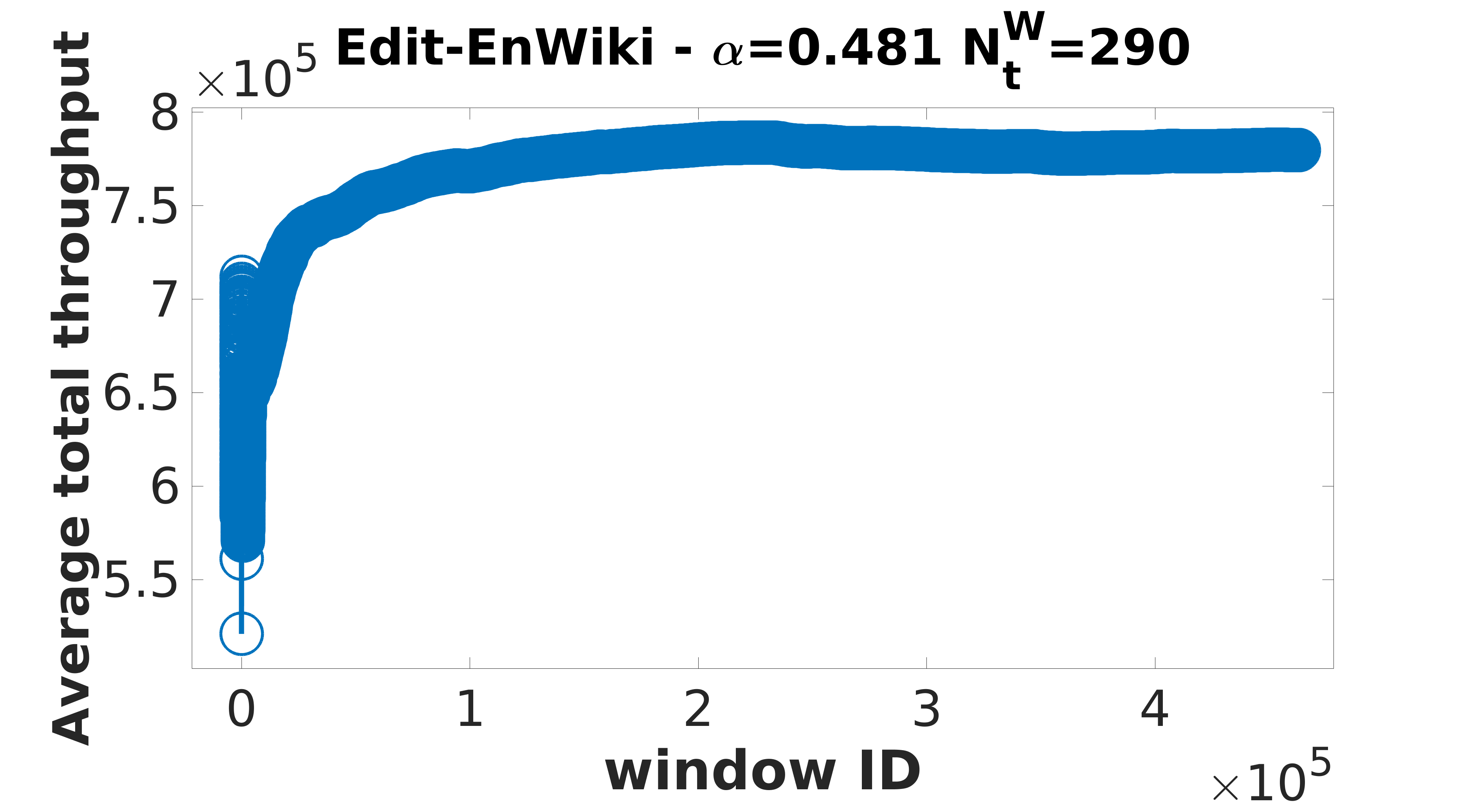}}
    \subfigure{\includegraphics[width=0.3\textwidth]{ 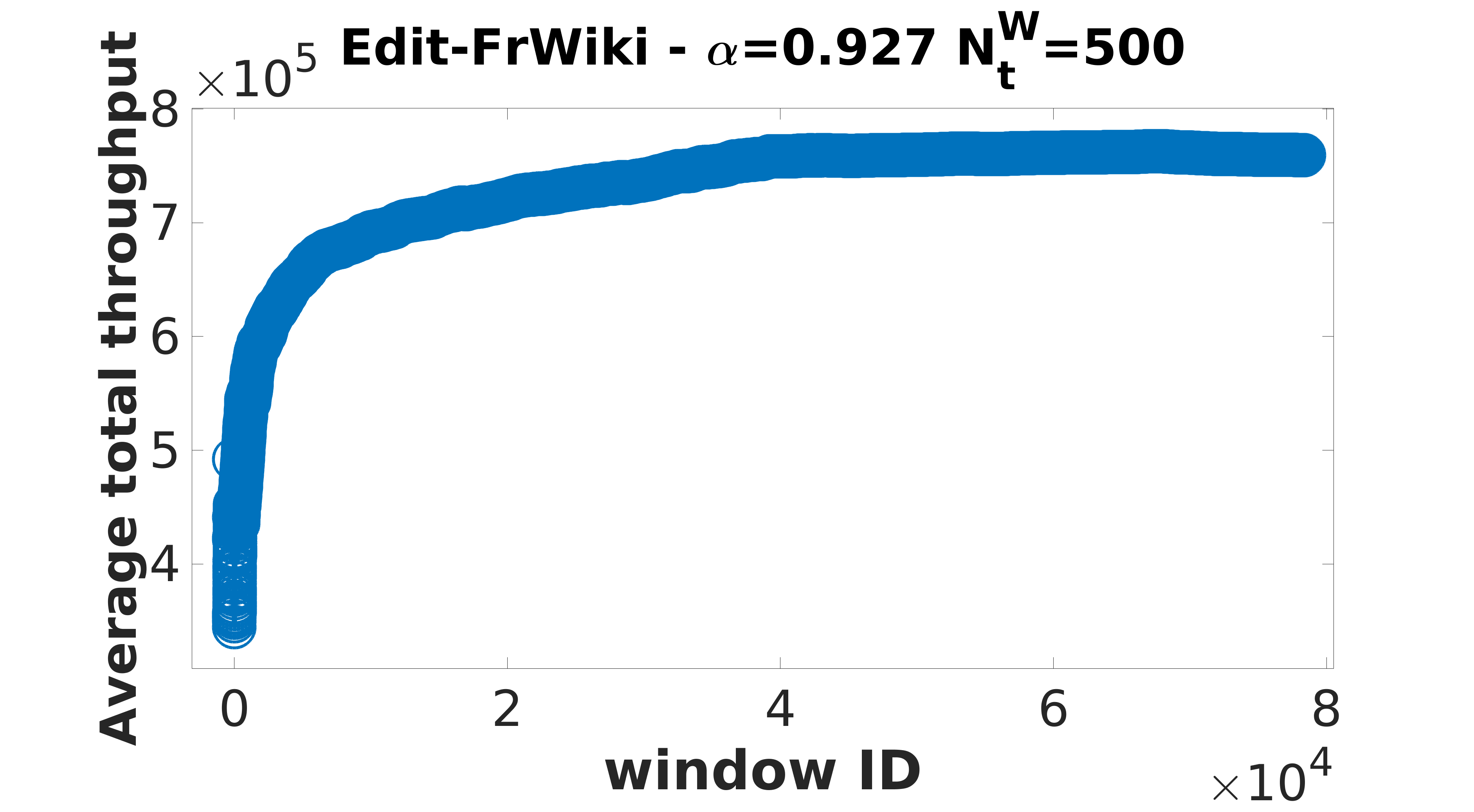}} 
    \caption{ Average total throughput (edge/s) of sGrapp-100 at the end of each window.}
   \label{fig:totalthroughput100}
\end{figure*}
\begin{figure*}[h]\centering
    \subfigure{\includegraphics[width=0.3\textwidth]{ 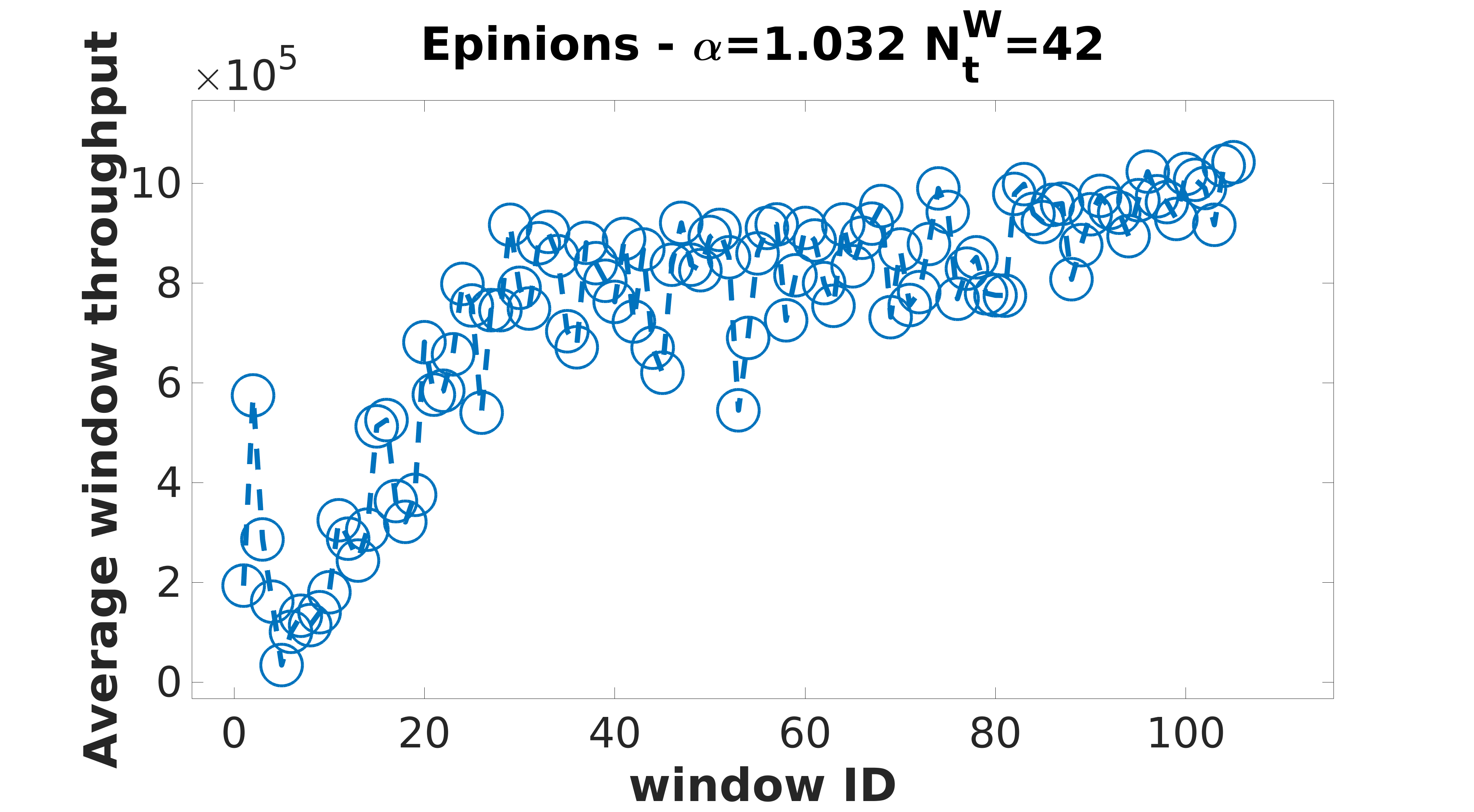}} 
    \subfigure{\includegraphics[width=0.3\textwidth]{ 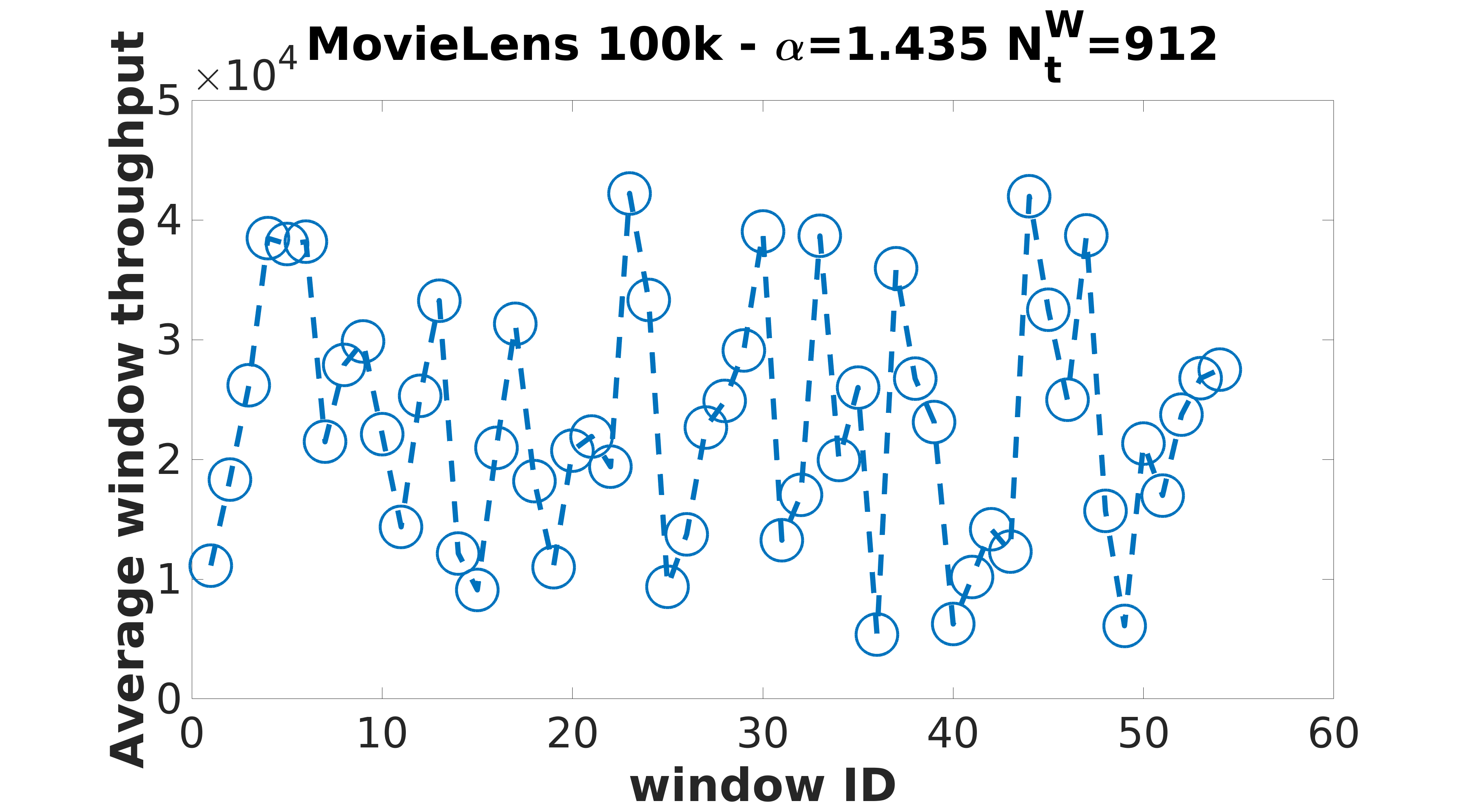}} 
    \subfigure{\includegraphics[width=0.3\textwidth]{ 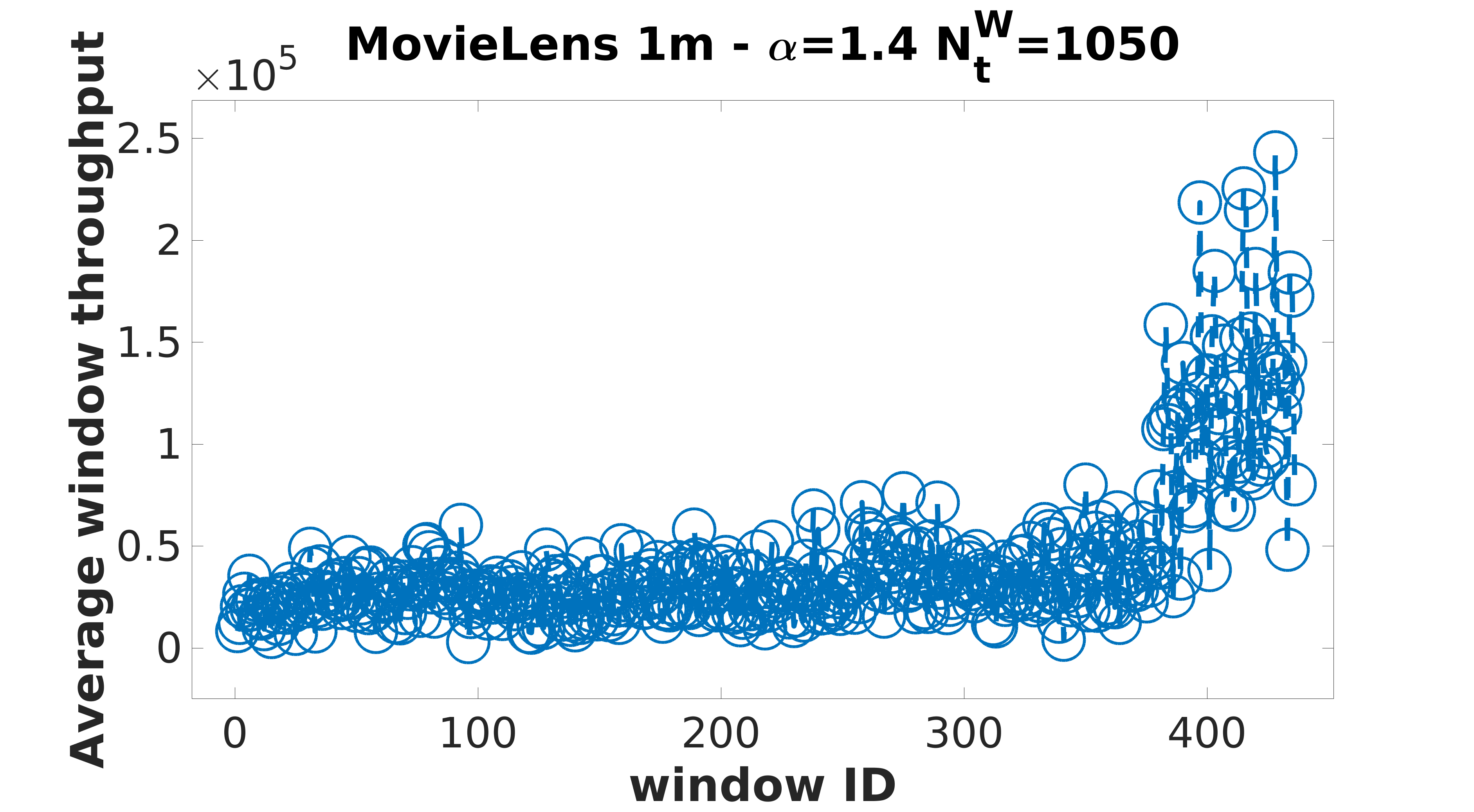}}
    \subfigure{\includegraphics[width=0.3\textwidth]{ 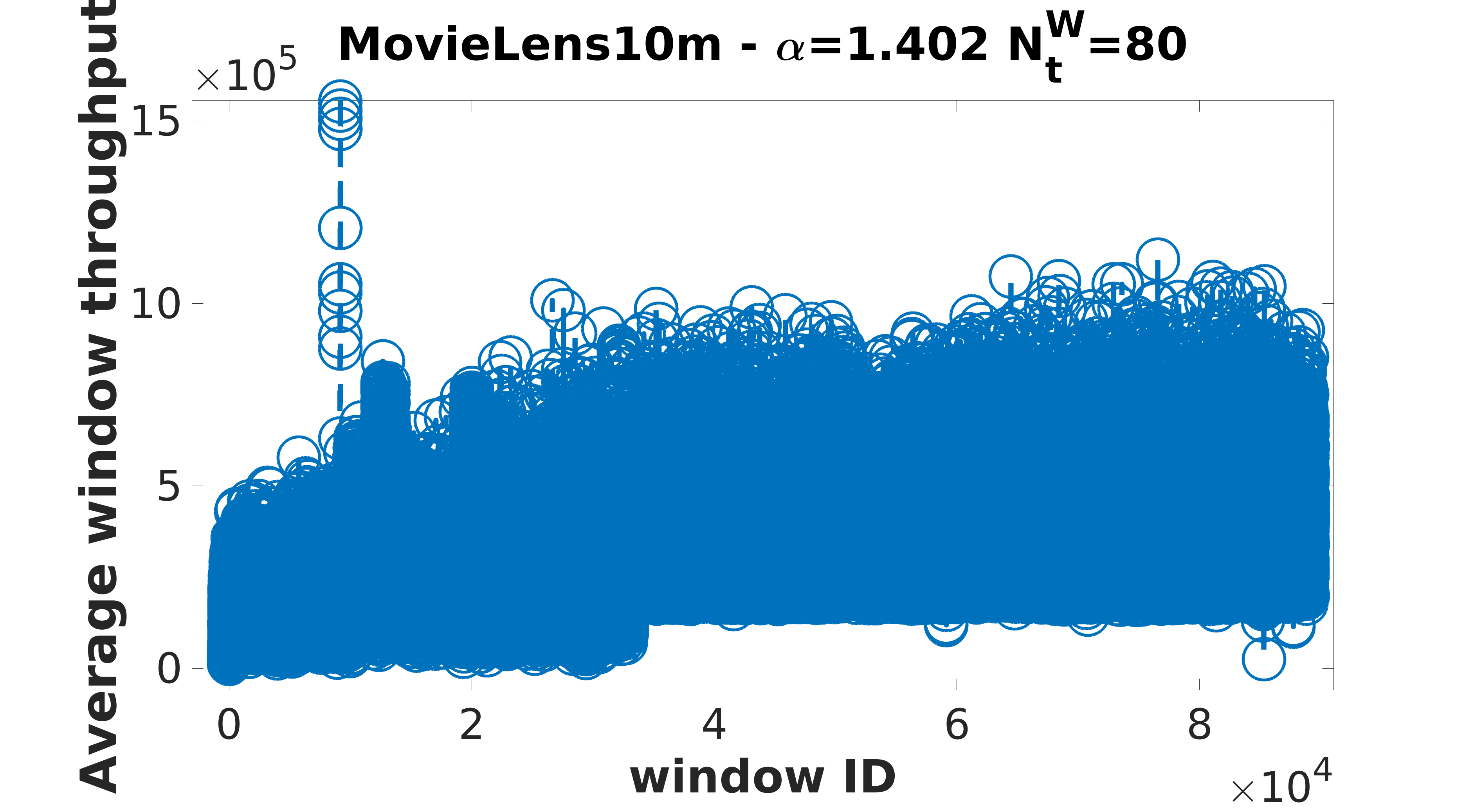}}
    \subfigure{\includegraphics[width=0.3\textwidth]{ 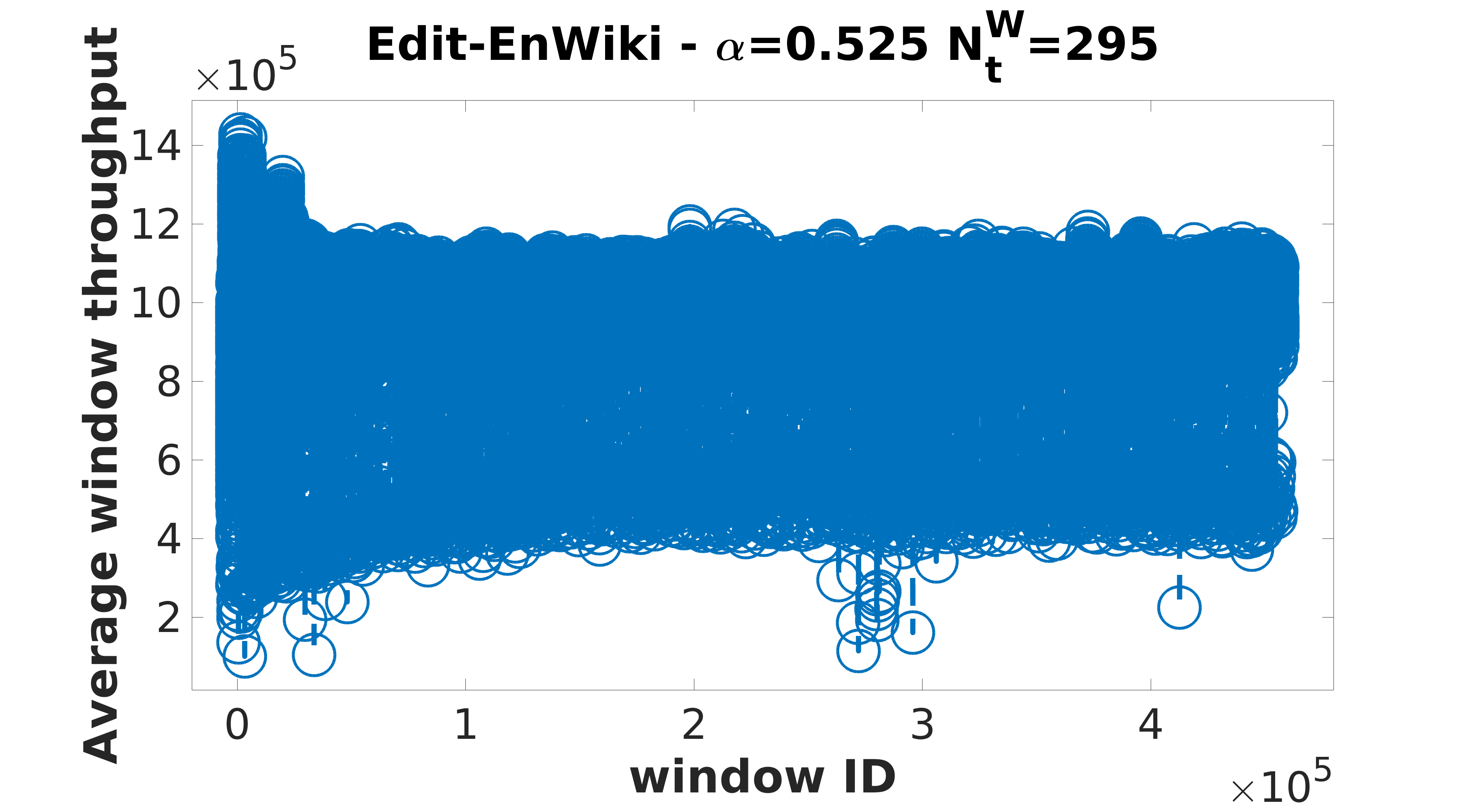}}
    \subfigure{\includegraphics[width=0.3\textwidth]{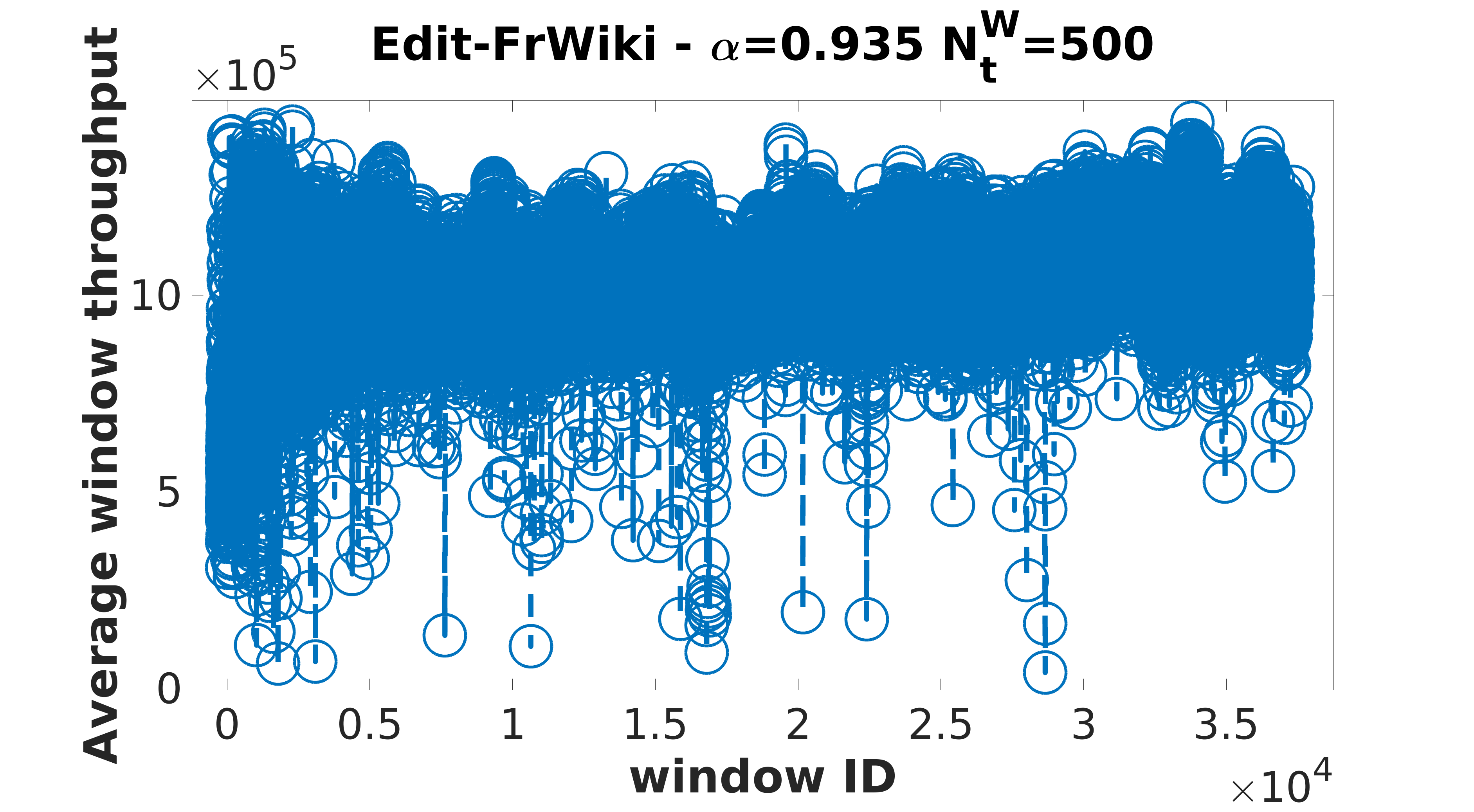}}
    \caption{Average window throughput (edge/s) of sGrapp at the end of each window.}
   \label{fig:wthroughput}
\end{figure*}
\begin{figure*}[h]\centering
    \subfigure{\includegraphics[width=0.3\textwidth]{ 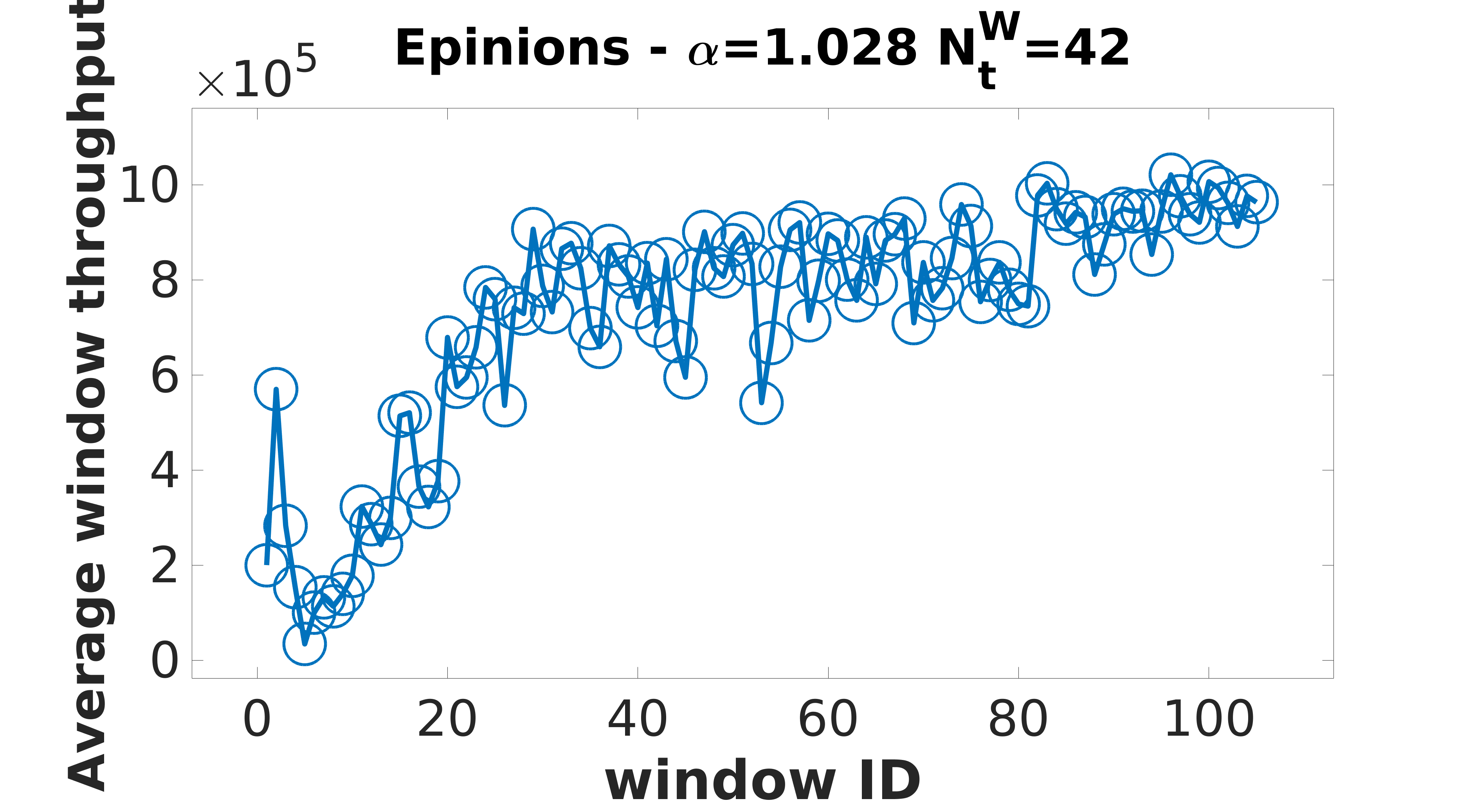}} \subfigure{\includegraphics[width=0.3\textwidth]{ 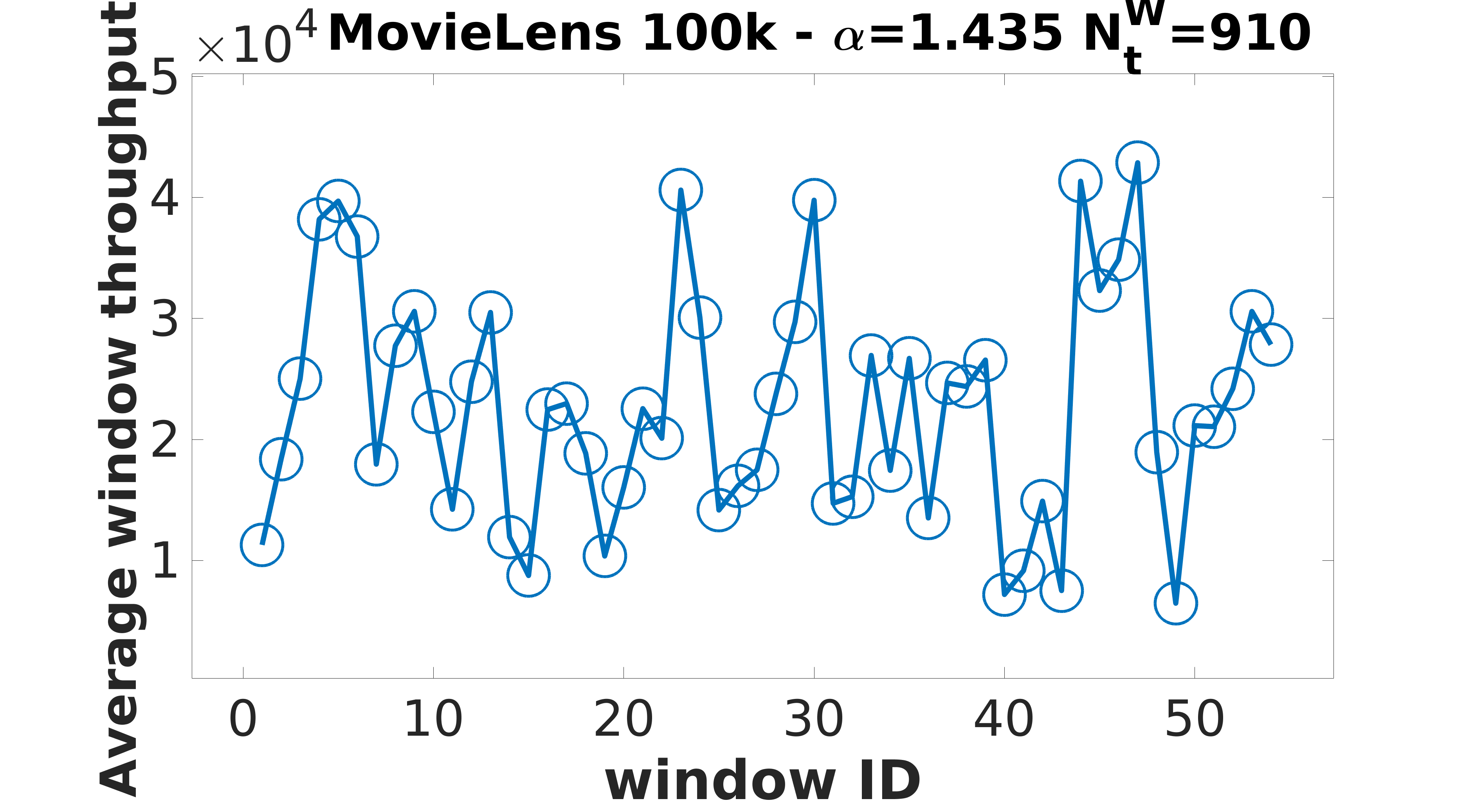}}
    \subfigure{\includegraphics[width=0.3\textwidth]{ 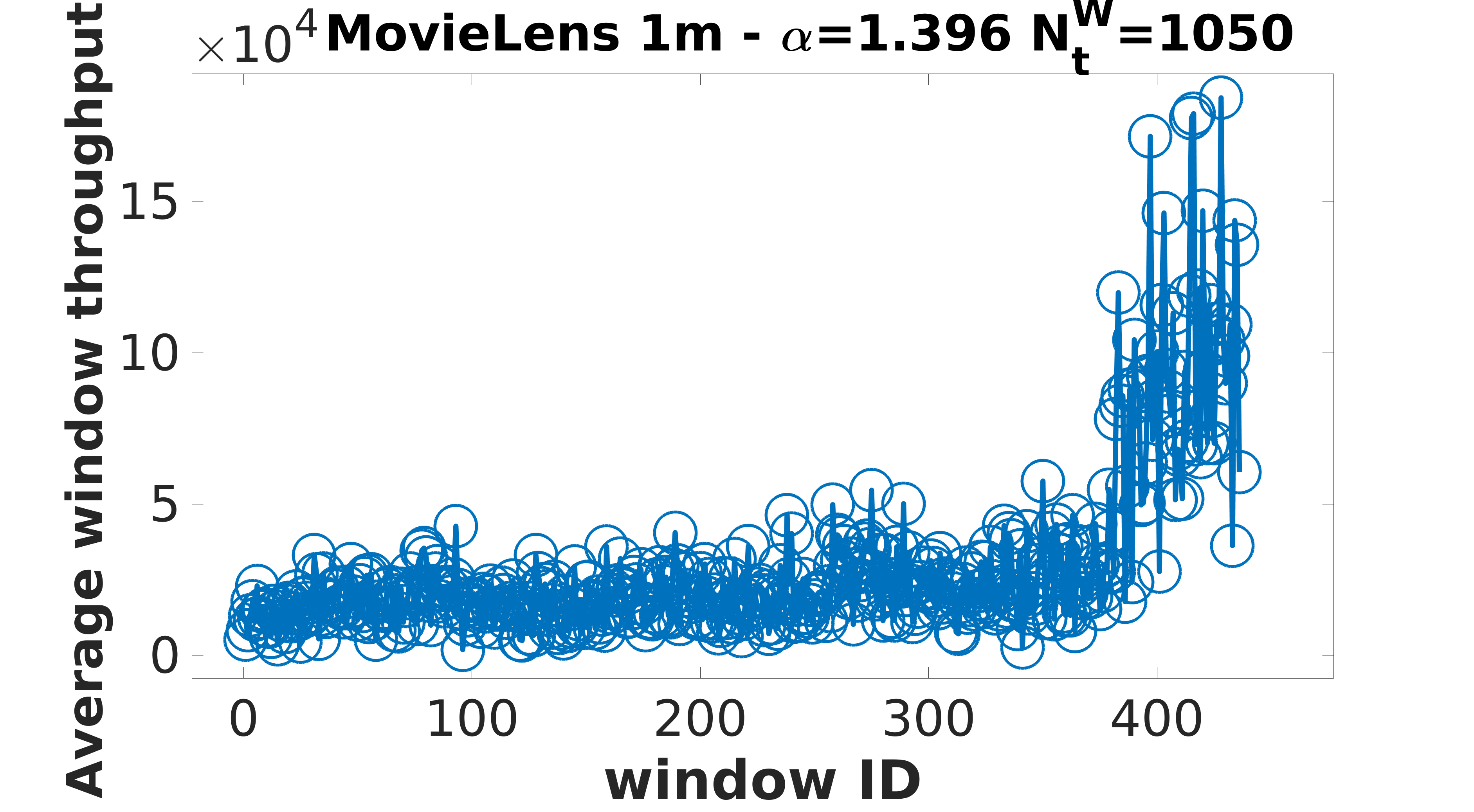}}
    \subfigure{\includegraphics[width=0.3\textwidth]{ 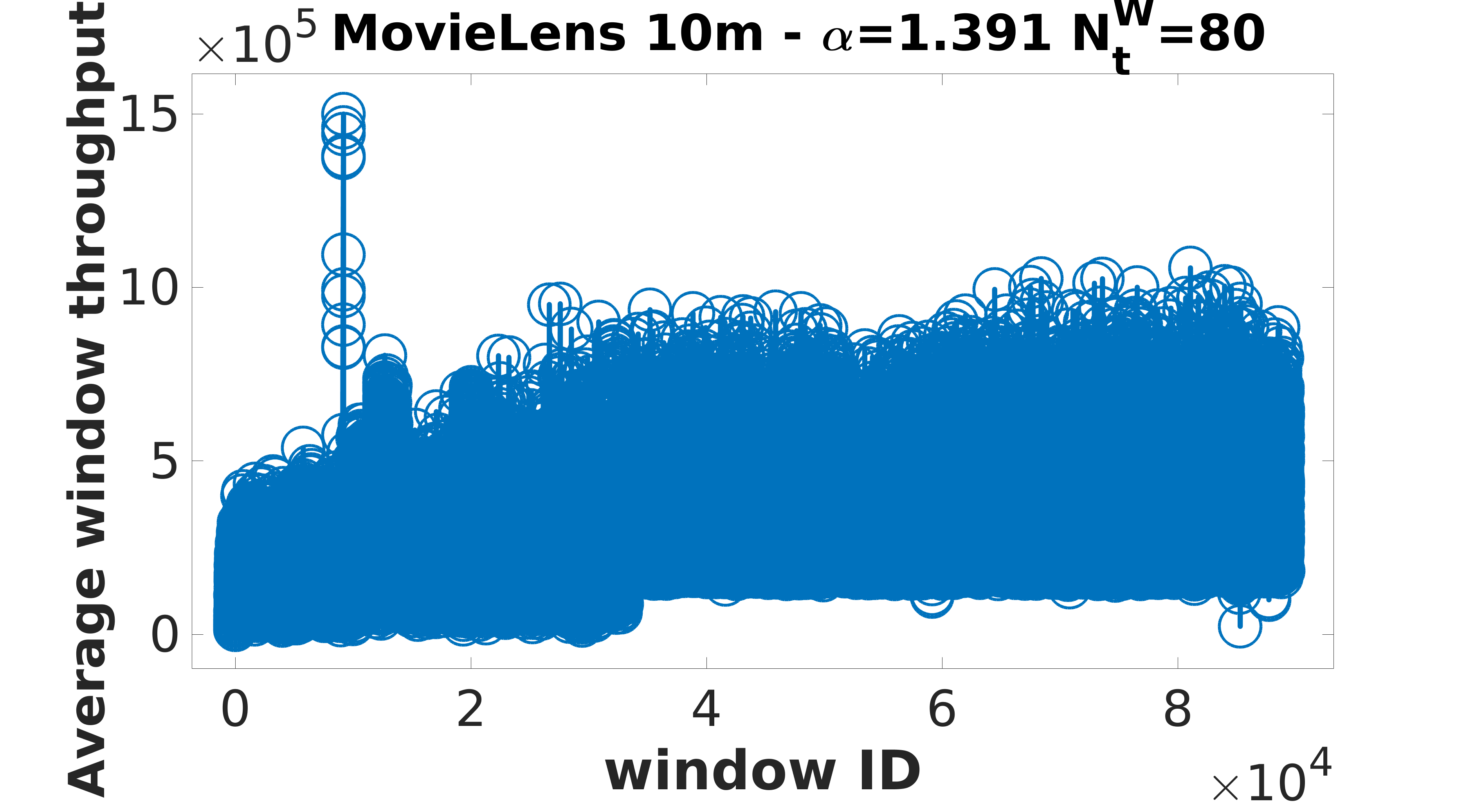}}
    \subfigure{\includegraphics[width=0.3\textwidth]{ 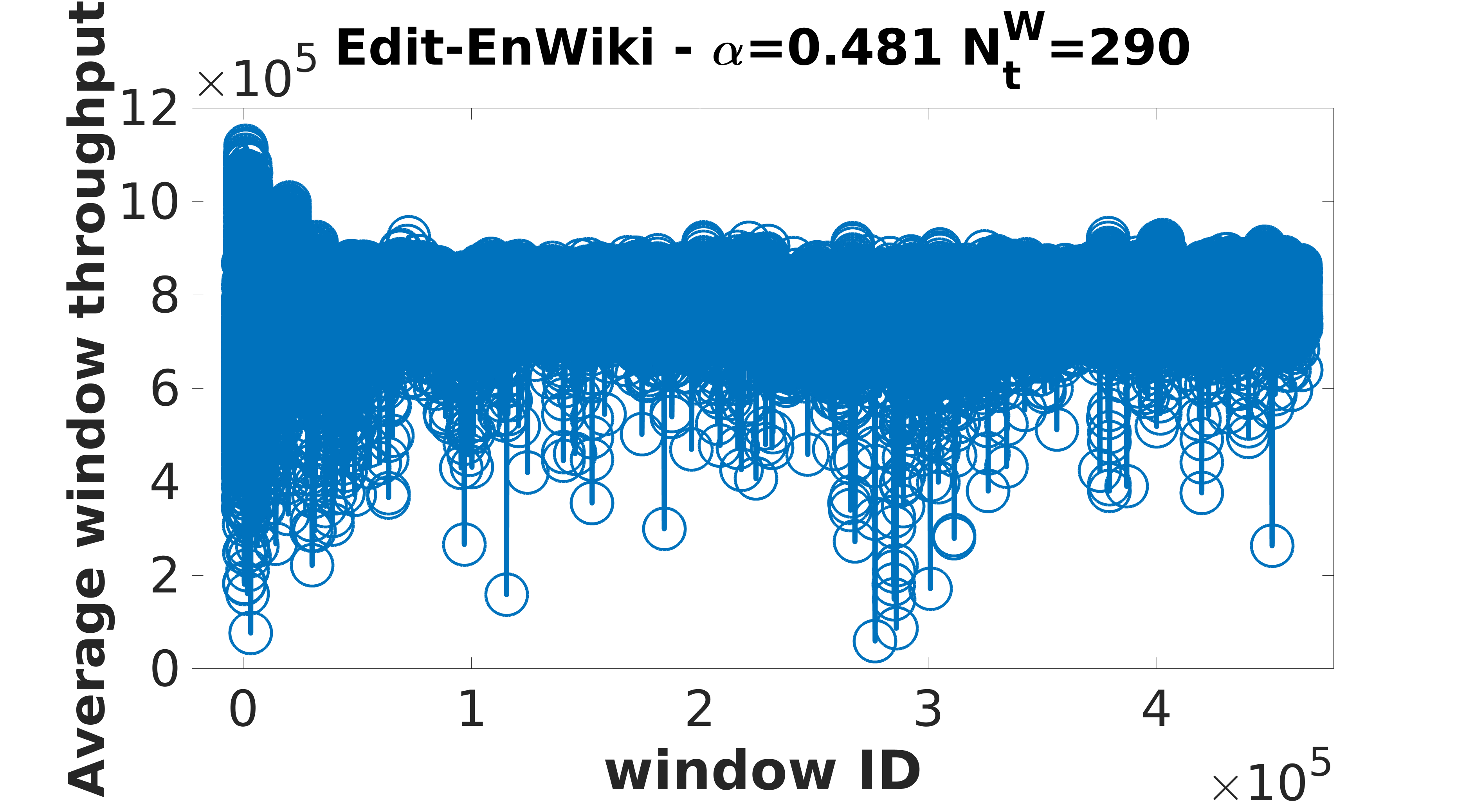}}
    \subfigure{\includegraphics[width=0.3\textwidth]{ 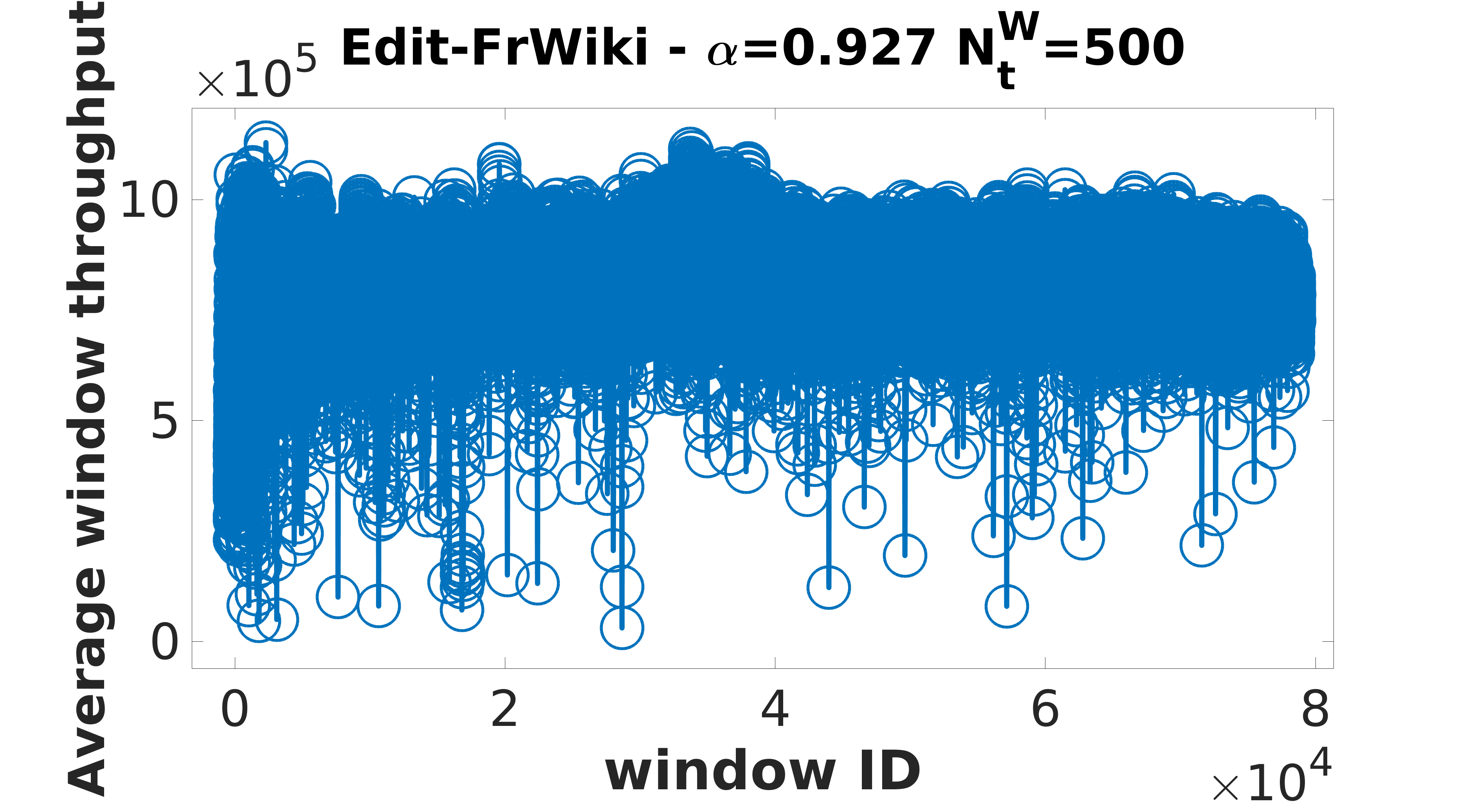}}
    \caption{Average window throughput (edge/s) of sGrapp-100 at the end of each window.}
   \label{fig:wthroughput100}
\end{figure*}

\section{Conclusion}
We studied the fundamental problem of dense bi-clique counting in streaming graphs. We introduced an effective and efficient framework for approximate butterfly counting, sGrapp. Following a data driven approach, we conducted extensive graph analysis to unveil the organizing principles of temporal butterflies in streaming graphs (the butterfly densification power law). These insights shed light on developing sGrapp algorithm. sGrapp utilizes a new exact counting core and a time-based windowing technique which adapts to the temporal distribution of the graph stream with no assumptions on the order and rate of stream, making it applicable to any real stream. 
sGrapp displays $MAPE<0.05$ in graph streams with almost uniform temporal distribution. The optimized version, called sGrapp-x, handles graph streams with non-uniform temporal distribution with MAPE below $0.14$. sGrapp-x lowers the minimum and maximum MAPE of sGrapp and also increases the probability of approximation error below $0.15$ and $0.2$, most notably in the densest graph streams. sGrapp variants perform much better than existing algorithms.

\clearpage

\bibliographystyle{plain}
\bibliography{main}

\end{document}